\newcommand{\instalper}{\sigma}
\mathchardef\mhyphen="2D
\newcommand{\expfromddd}[2]{{\mathsf{Exp}}_{{#1}, {#2}}}
\newcommand{\expfromdddall}[2]{{\mathsf{Exp}}_{{#1}}}
\newcommand{\expnot}[4]{\expfromddd{{\mathcal{D}_{#1}^{(#2, #3)}}}{#4}}
\newcommand{\failmes}{\top}
\newcommand{\morelocalcosetcount}{a(\lambda)}
\newcommand{\fclass}{\mathcal{F}_{clas}}
\newcommand{\fullkey}{\mathsf{FULL-KEY}}
\newcommand{\typekey}{\mathsf{TYPE}}
\newcommand{\punckey}{\mathsf{PUNC-KEY}}
\newcommand{\granlen}{a(\lambda)}
\newcommand{\C}{\mathds{C}}
\newcommand{\N}{\mathds{N}}
\newcommand{\F}{\mathds{F}}
\newcommand{\R}{\mathbb{R}}
\newcommand{\E}{\mathds{E}}
\newcommand{\eps}{\varepsilon}
\newcommand{\ibegame}{\mathsf{IBE-IND}}
\newcommand{\puncibegame}{\mathsf{PUN-IBE-IND}}
\newcommand{\puncfegame}{\mathsf{PUN-FE-IND}}
\newcommand{\unclonfegame}{\mathsf{FEAntiPiracy}}
\newcommand{\strongunclonfegame}{\mathsf{FEStrongAntiPiracy}}
\newcommand{\unlearngame}{\mathsf{LearningGame}}
\newcommand{\antipirgame}{\mathsf{AntiPiracyGame}}
\newcommand{\unclonepkegame}{\mathsf{PKEAntiPiracy}}
\newcommand{\unclonsiggame}{\mathsf{SignatureAntiPiracy}}
\newcommand{\strongunclonpkegame}{\mathsf{PKEStrongAntiPiracy}}
\newcommand{\unclonprfgame}{\mathsf{PRFAntiPiracy}}
\newcommand{\hiddentriggame}{\mathsf{HiddenTriggerExp}}
\newcommand{\nife}{\mathsf{FEAntiPiracyNI}}
\newcommand{\pke}{\mathsf{PKE}}
\newcommand{\dss}{\mathsf{DS}}
\newcommand{\hyb}{\mathsf{Hyb}}
\newcommand{\samp}{\leftarrow}
\newcommand{\reg}{\mathsf{R}}
\newcommand{\regi}[1]{\reg_{\mathsf{#1}}}
\newcommand{\io}{i\mathcal{O}}
\newcommand{\idset}{\mathsf{ID}}
\newtheorem{theorem}{Theorem}
 \newtheorem{claim}{Claim}
 \newtheorem{lemma}{Lemma}
 \newtheorem{corollary}{Corollary}
 \newtheorem{definition}{Definition}
 \newtheorem{remark}{Remark}
\newcommand{\poly}{\mathsf{poly}}
\newcommand{\Dec}{\mathsf{Dec}}
\newcommand{\Ver}{\mathsf{Ver}}
\newcommand{\zo}{\{0, 1\}}
\newcommand{\negl}{\mathsf{negl}}
\newcommand{\subexp}{\mathsf{subexp}}
\newcommand{\localcosetcount}{3\cdot\lambda^3}
\newcommand{\trd}[2]{\norm{#1 - #2}_{Tr}} 
\newcommand{\statdist}[2]{|#1 - #2|}
\newcommand{\circsize}{Q(\lambda)}
\newcommand{\idlen}{L}
\newcommand{\moeflip}{\mathsf{MoE-MultChal}}
\newcommand{\moe}{\mathsf{MoE}}
\newcommand{\moecoll}{\mathsf{MoE-Coll}}
\newcommand{\moecollsel}{\mathsf{MoE-Coll-Sel}}
\newcommand{\cosettcount}{c_L(\lambda)}
\newcommand{\cosetgenparam}{1^{L(\lambda) + \lambda}}
\newcommand{\Hilbert}{\mathcal{H}} 
\newcommand{\keygen}{\mathsf{KeyGen}}
\newcommand{\qkeygen}{\mathsf{QKeyGen}}
\newcommand{\enc}{\mathsf{Enc}}
\newcommand{\dec}{\mathsf{Dec}}
\newcommand{\Eval}{\mathsf{Eval}}
\newcommand{\qunivcla}{\mathsf{U}_{quantum}} 
\newcommand{\adve}{\mathcal{A}}
\newcommand{\constmoe}{C_{\mathsf{MoE}}}
\newcommand{\constmoeflip}{C_{\mathsf{MoE.MultChal}}}
\newcommand{\constmoecoll}{C_{\mathsf{MoE.Coll}}}
\newcommand{\constmoecollsel}{C_{\mathsf{MoE.Coll.Sel}}}
\newcommand{\moecollpunckey}{\mathsf{Moe-Coll-PuncKey}}
\newcommand{\true}{1}
\newcommand{\false}{0}
\newcommand{\Sign}{\mathsf{Sign}}
\newcommand{\fe}{\mathsf{FE}}
\newcommand{\shiftd}{\Delta_{\mathsf{Shift}}}
\let\originalleft\left
\let\originalright\right
\renewcommand{\left}{\mathopen{}\mathclose\bgroup\originalleft}
\renewcommand{\right}{\aftergroup\egroup\originalright}
\title{Unclonable Cryptography with Unbounded Collusions and Impossibility of Hyperefficient Shadow Tomography}
 \author{Alper \c{C}akan\\ Carnegie Mellon University\\ \texttt{acakan@cs.cmu.edu} \and Vipul Goyal\\NTT Research \& Carnegie Mellon University\\  \texttt{vipul@vipulgoyal.org}}
\date{}
\begin{document}
	
\maketitle

\begin{abstract}
Quantum no-cloning theorem gives rise to the intriguing possibility of quantum copy protection where we encode a program or functionality in a quantum state such that a user in possession of $k$ copies cannot create $k+1$ copies, for any $k$. Introduced by Aaronson (CCC'09) over a decade ago, copy protection has proven to be notoriously hard to achieve.  Previous work has been able to achieve copy-protection for various functionalities only in restricted models: (i) in the bounded collusion setting where $k \to k+1$ security is achieved for a-priori fixed collusion bound $k$ (in the plain model with the same computational assumptions as ours, by Liu, Liu, Qian, Zhandry [TCC'22]), or, (ii) only $k \to 2k$ security is achieved (relative to a structured quantum oracle, by Aaronson [CCC'09]).

In this work, we give the first \emph{unbounded} collusion-resistant (i.e. multiple-copy secure) copy-protection schemes, answering the long-standing open question of constructing such schemes, raised by multiple previous works starting with Aaronson (CCC'09). 

More specifically, we obtain the following results.
\begin{itemize}
    \item We construct (i) public-key encryption, (ii) public-key functional encryption, (iii) signature and (iv) pseudorandom function schemes whose keys are copy-protected against {unbounded collusions} in the plain model (i.e. without any idealized oracles), assuming (post-quantum) subexponentially secure $\io$ and LWE. 
    
    \item We show that any unlearnable functionality can be copy-protected against unbounded collusions, relative to a classical oracle.

        \item As a corollary of our results, we rule out the existence of \emph{hyperefficient quantum shadow tomography}, 
    \begin{itemize}
        \item even given non-black-box access to the measurements, assuming subexponentially secure $\io$ and LWE, or,
        \item unconditionally relative to a quantumly accessible \emph{classical} oracle,
    \end{itemize}
    and hence answer an open question by Aaronson (STOC'18). 
  
\end{itemize}

We obtain our results through a novel technique which uses identity-based encryption to construct multiple copy secure copy-protection schemes from $\mathsf{1\mhyphen copy} \to \mathsf{2\mhyphen copy}$ secure schemes. We believe our technique is of independent interest.

Along the way, we also obtain the following results.
\begin{itemize}
 \item We define and prove the security of new collusion-resistant monogamy-of-entanglement games for \emph{coset states}.
    \item We construct a classical puncturable functional encryption scheme whose master secret key can be punctured at all functions $f$ such that $f(m_0) \neq f(m_1)$. This might also be of independent interest.

\end{itemize}
\paragraph{Keywords:} Quantum cryptography, copy-protection, unclonable cryptography, shadow tomography
\end{abstract}
\newpage
\tableofcontents
\newpage
\section{Introduction}\label{sec:intro}
The no-cloning principle, a fundamental implication of quantum mechanics, shows that arbitrary unknown quantum states cannot be copied. This simple principle allows us to imagine applications that are classically impossible. Indeed, it has found a wide range of applications in cryptography, starting with the work of Wiener \cite{Wie83} where he puts forward the notion of \emph{quantum money}, where we imagine that there is a bank producing quantum states, called \emph{banknotes}, that are secure against counterfeiting: any (malicious) user in possession of $k$ banknotes for any $k$ cannot produce $k+1$ \emph{authentic} banknotes. The interesting notion of quantum banknotes (i.e., unclonable authenticatable quantum states) also led Aaronson \cite{Aar09} to pose the following question:
\begin{quote}
\begin{center}
    Can we use quantum information to \emph{copy-protect functionalities/programs}, where user(s) in possession of some number of copies of a program $P$ cannot produce more working copies?
\end{center}
\end{quote}
In more detail, we want to achieve the following. A vendor encodes a functionality\footnote{For example, a proprietary software or a decryption program/key of an encryption system that is used to distribute encrypted content} into a quantum state, and a user in possession of such a state can use it to evaluate the functionality any number of times, and we want to achieve $a \to b$ copy-protection: any malicious user(s) in possession of $a$ such copies of the program cannot produce $b$ working copies. Similar to quantum money, this is an impossible feat in a classical world since classical information can be readily copied any amount of times. Therefore, in a classical world, once you are given a single working copy of the program, you can make any number of copies of it. 

Perhaps surprisingly, \cite{Aar09} showed copy-protection using quantum information is indeed possible: relative to a \emph{structured\footnote{The oracle used in this construction takes as input a function and a value, evaluates the function on the value, or takes as input a function and outputs a Haar random state associated with it.} quantum oracle}, any \emph{unlearnable} program can be copy-protected in a way that is  $k \to k + r$ secure (for any [polynomial] $k$ and some $r > k$). That is, in the construction of \cite{Aar09}, the adversary is prevented from doubling their number of working copies. Later, Aaronson et al. \cite{ALLZZCONF20} showed that relative to a \emph{classical} structured oracle (that depends on the program being copy protected) model, any unlearnable program can be copy-protected, but this time only in the $\mathsf{1\mhyphen copy} \to \mathsf{2\mhyphen copy}$ setting.

\paragraph{Copy-Protecting Decryption Keys, PRFs and Signing Keys} In a related line of work, Georgiou and Zhandry \cite{GZ20} started the study of \emph{single-decryptor encryption}, that is, copy-protection for decryption functionality (i.e. secret keys) of a public-key encryption (PKE) scheme where an adversary tries to create $k + 1$ working decryption keys given only $k$ copy-protected keys. More formally, in this model, a \emph{pirate} adversary obtains the classical public key and $k$ copy-protected quantum secret keys of the scheme. Then, it produces $k + 1$ \emph{freeloader} adversaries that are possibly entangled but not communicating\footnote{If they were allowed to communicate, one freeloader could hold the secret key and all the other freeloaders would simply send their challenge ciphertexts to him to decrypt and send back the result.}, and these freeloaders are presented with classical challenge ciphertexts. We require that they cannot all succeed in decrypting simultaneously. \cite{GZ20} also gave a $1 \to 2$ secure copy-protection scheme relative to a structured oracle. Later, Coladangelo et al. \cite{CLLZ21} showed how to construct a $1 \to 2$ copy-protected public-key encryption using \emph{coset states}, this time in the plain model, assuming quantum hardness of LWE, (post-quantum) subexponentially secure indistinguishability obfuscation and one-way functions. They also construct $1 \to 2$ copy-protection schemes for pseudorandom functions (PRF), based on the same assumptions. Liu et al. \cite{LLQZ22} constructed \emph{bounded} collusion-resistant PKE and PRF schemes, by showing through an elegant proof that the $k$-way parallel repetitions of the schemes of \cite{CLLZ21} are bounded $k \to k+1$ copy-protection secure. Further, they also construct a \emph{bounded} $k \to k + 1$ copy-protection secure scheme for the signing keys of a signature scheme. However, for all schemes of \cite{LLQZ22}, the collusion-bound $k$ is fixed during setup, the sizes of the schemes grow (linearly) with the bound $k$ and the copy-protected key generation is stateful.

\paragraph{Collusion-Resistant Copy-Protection} Unfortunately, none of the previous work satisfy the most-general notion of unbounded collusion-resistant copy-protection where we require $k \to k + 1$ security for all polynomials $k$ (that is not known and hence the size of the scheme does not depend on it). 

In particular, all schemes of \cite{ALLZZCONF20}, \cite{CLLZ21} and \cite{LLQZ22} can easily be broken when the adversary obtains multiple copies. Any $2$ users (in case of the first two works) or $k + 1$ users (for the fixed $k$ value, in case of \cite{LLQZ22})\footnote{We re-emphasize that the size of the scheme (e.g. ciphertext and public-key sizes) grows with the set $k$ value, so it cannot be set arbitrarily large.} with copy-protected keys can create an \emph{anonymous classical} program/string (which can be copied/distributed any number of times) that can be used to decrypt any ciphertext in case of encryption schemes, or evaluate/sign any input in case of general programs, PRFs and signatures. The only other scheme, that of \cite{Aar09}, is only $k \to 2k$ secure rather than $k \to k + 1$, and more importantly, since it relies on structured quantum oracles, it cannot even be heuristically instantiated since we do not have any (even candidate) constructions of general-purpose quantum circuit obfuscation. In fact, \cite{LLQZ22} argues that even any extension of the scheme of \cite{Aar09} would require such obfuscation, since it uses Haar random states and there is evidence that these states cannot be \emph{classically verified} (\cite{LLQZ22, Kre21}).

We believe that the security guarantees of the previous work (\cite{ALLZZCONF20}, \cite{CLLZ21}, \cite{LLQZ22}) are very unrealistic in the age of the Internet: the users can actually mount the anonymous attacks described above through classical channels, by simply measuring their key and sending the classical measurement result to other parties or posting it online! 
\paragraph{Computational Complexity of Shadow Tomography} Lastly, aside from theoretical interest in the unbounded collusion setting in and of itself, we note that it is a theoretically important problem also due to its intimate connection to the computational complexity of another important problem, \emph{shadow tomography} \cite{Aar18} (see \cref{sec:ourresults} and \cref{sec:hypertom}).
 
The above state of affairs leaves open the following natural question also raised explicitly in several previous works \cite{Aar09, AC12, CLLZ21, LLQZ22}:
 \begin{quote}
\begin{center}
    Can we use quantum information to construct unbounded collusion-resistant copy-protection schemes?
\end{center}
\end{quote}

In this work, we answer the above question positively, in the plain model, with computational assumptions matching the previous work.

\subsection{Our Results}\label{sec:ourresults}
In this work, we resolve the long-standing open problem of constructing fully collusion-resistant copy-protection schemes by constructing such schemes for public-key encryption, public-key functional encryption, signatures and pseudorandom functions, all in the plain model.

\paragraph{Copy-Protecting Decryption Keys (\cref{sec:pke} and \cref{sec:mainfe})} We construct encryption schemes where the secret keys are copy-protected.
\begin{theorem}
    Assuming post-quantum subexponentially secure indistinguishability obfuscation and subexponentially secure LWE, there exists a public-key encryption scheme with fully collusion-resistant copy-protected secret keys.
\end{theorem}

Our computational assumptions above match\footnote{\label{footnote:note1} More specifically, our assumptions exactly match the assumptions made by  \cite{LLQZ22}, but \cite{CLLZ21} assumes polynomially secure LWE whereas we assume subexponentially secure LWE. We emphasize that \cite{CLLZ21} still assume subexponentially secure $\io$ and subexponentially secure one-way functions.} the assumptions made by \cite{CLLZ21} to achieve $1 \to 2$ copy-protection and those made by \cite{LLQZ22} to achieve $k \to k + 1$ bounded collusion-resistant copy-protected public-key encryption schemes.
 
\begin{theorem}
    Assuming post-quantum subexponentially secure indistinguishability obfuscation and subexponentially secure LWE, there exists a public-key \emph{functional} encryption scheme with fully collusion-resistant copy-protected secret keys.
\end{theorem}

Prior to our work, the only construction of functional encryption with copy-protected secret keys (given by Kitagawa and Nishimaki \cite{KN22}) was in the $1 \to 2$  copy-protection setting, based on assumptions same as ours, and in a weaker security model where no key queries were allowed after seeing the challenge ciphertext (see \cref{sec:relatedwork} for more details). Furthermore, on top of matching the assumptions previous work used for constructing copy-protected public-key encryption, the $\io$ assumption we make for our copy-protected FE scheme can be considered necessary since functional encryption is known to be equivalent to indistinguishability obfuscation (up to subexponential security loss) \cite{BV18}.

Since functional encryption can be used to construct identity-based encryption \cite{Sha85} and attribute-based encryption \cite{SW05, GPSW06} in a straightforward manner, our work also gives the first identity-based encryption and attribute-based encryption schemes with collusion-resistant copy-protected secret keys. Through copy-protected identity-based encryption, we can also obtain \emph{unclonable identity cards}, first suggested by \cite{Aar09}.
\paragraph{Copy-Protecting PRF and Signature Keys (\cref{sec:cpprf} and \cref{sec:unclondigsig})}
We also construct copy-protection schemes for a family of pseudorandom functions (PRF) and signing keys of a signature scheme. 
\begin{theorem}
    Assuming post-quantum subexponentially secure indistinguishability obfuscation and subexponentially secure LWE, there exists a PRF and a signature scheme with fully collusion-resistant copy-protected keys.
\end{theorem}

    \paragraph{Copy-Protecting All Unlearnable Functionalities}
We also show how to copy-protect any unlearnable functionality, relative to a classical oracle.
\begin{theorem}\label{thm:introallunlearnable}
    Assuming post-quantum subexponentially secure one-way functions\footnote{We can also achieve this result unconditionally, if we do not insist that the classical oracle is efficient.}, for any unlearnable functionality, there exists a fully collusion-resistant copy-protection scheme relative to an efficient \emph{classical} oracle. 
\end{theorem}
This supersedes\footnote{Note that \cref{thm:introallunlearnable} and similar results of \cite{Aar09, ALLZZ20} cannot be securely instantiated in the plain model for all unlearnable functionalities, since \cite{AP21} proves that there exists an unlearnable functionality that cannot be copy-protected in the plain model.} both \cite{Aar09}, which uses a structured quantum oracle and only satisfies $k \to 2k$ copy-protection, and \cite{ALLZZCONF20} which uses a structured classical oracle but only satisfies $1 \to 2$ copy-protection.

\paragraph{Impossibility of Hyperefficient Shadow Tomography (\cref{sec:hypertom})} Shadow tomography, introduced by Aaronson \cite{Aar18}, is the following task: Given many copies of a mixed state $\rho$ and a list of binary measurements $\{E_1, \dots, E_M\}$, estimate the acceptance probabilities $\Tr(E_i\rho)$ of measurements $E_i$ within additive error $\eps$, for all measurements $i \in [M]$. This task has important ramifications for quantum information theory, since it means that we can learn many properties of a quantum state without needing to do a full tomography of it, which necessarily requires exponentially many copies of the state \cite{o2016efficient}. It has also found many applications in cryptography, such as (i) \cite{Aar18} who shows that unconditional copy-protection is not possible (ii) \cite{mala} who shows that unconditional PKE cannot exist even if we allow public-keys to be quantum and (iii) \cite{dak} who shows that unconditional one-way state generators cannot exist. Lastly, shadow tomography also has connections to the question of \emph{classical vs. quantum advice}, and the related complexity classes $\mathsf{BQP/poly}$ and $\mathsf{BQP/qpoly}$. Note that in general, and in particular in all of these applications, the measurement set is indexed by all possible strings in some support and $M$ is exponential in the security parameter or in the number of qubits. In fact, the case $M = \poly(\lambda)$ can be trivially solved in polynomial ($M/\eps^2$) time with polynomially many copies, by estimating each $\Tr(E_i\rho)$ for $i \in [M]$ simply by actually performing the measurements $E_i$ multiple times on separate copies.

\cite{Aar18} showed that shadow tomography can be performed in a sample-efficient manner; using $\poly(n, \log M, \frac{1}{\eps})$ copies of an $n$-qubit state $\rho$, however, their scheme is not \emph{computationally efficient}, with time complexity $\Tilde{O}(M)$\footnote{As noted above, $M$ is exponential in the security parameter or in the number of qubits.}. In light of above, they posed the following as an open question: is \emph{hyperefficient} shadow tomography possible? That is, is it possible to perform shadow tomography with time complexity $\poly(n, \log M, \frac{1}{\eps})$? Note that in this case, we ask that the set of measurements $\{E_i\}_{i \in M}$ be implemented by a uniform quantum algorithm $E$ that on input $i, \tau$, applies the measurement $E_i$ to the state $\tau$. We will be given this quantum circuit $E$ as input and we are asked to output a quantum circuit $C$ such that $C(i)$ estimates $\Tr(E_i\rho)$ for all $i$.\footnote{Without these assumptions, even reading the descriptions of all measurements or outputting all the estimates would take $\Omega(M)$ time.}

Previously, hyperefficient shadow tomography was ruled out only relative to quantum oracles \cite{Aar18, aaronson2007quantum, Kre21}, where we only get oracle access to the measurement circuit $E$. Through a generic attack on copy-protection schemes using shadow tomography given by \cite{Aar18, sattath2022uncloneable}, a corollary of our results is the impossibility of hyperefficient shadow tomography, answering the open question of \cite{Aar18}.
\begin{corollary}
    Assuming post-quantum subexponentially secure indistinguishability obfuscation and LWE, there does not exist a hyperefficient shadow tomography algorithm.
\end{corollary}
\begin{corollary}
    Assuming post-quantum subexponentially secure one-way functions\footnote{Similar to before, we can achieve this result unconditionally if we do not insist on efficient oracles}, relative to an efficient \emph{classical} oracle, there cannot exist a hyperefficient shadow tomography algorithm.
\end{corollary}    
We note that making computational assumptions is necessary, since, hyperefficient shadow tomography is possible given access to $\mathsf{PP}$ oracle.\footnote{We thank an anonymous reviewer for pointing out this remark.}

\paragraph{Technical Contributions and Additional Results} An important contribution of our work is a novel technique to construct collusion-resistant copy-protection schemes which relies on using identity-based encryption (IBE). We use this technique in all of our constructions and we believe it to be of independent interest. Our technique could be considered an analogue of the technique of using digital signatures to construct full-fledged (i.e. collusion-resistant) quantum money from single banknote schemes \cite{LAF09, FGH12, AC12}. We also define and prove the security of new collusion-resistant \emph{monogamy-of-entanglement} games \cite{CLLZ21, CV22} for coset states to prove the security of our schemes. See \cref{sec:coset} for details.
 
Finally, using the techniques we employ to prove the security of our functional encryption scheme, we also give a construction of a classical functional encryption scheme where the master secret key can be punctured such that the resulting master key allows issuing keys only for functions $f$ that satisfy $f(m_0) = f(m_1)$. This allows us to remove the interaction/key queries after the challenge ciphertext in the usual functional encryption security game (\cref{defn:fe}), since the adversary can issue their own keys using the punctured master secret key. This might also be of independent interest. See \cref{sec:puncfecons} for details.
\begin{theorem}
    Assuming subexponentially secure indistinguishability obfuscation and one-way functions, there exists a functional encryption scheme whose master secret key can be punctured at all functions $f$ such that $f(m_0) \neq f(m_1)$.
\end{theorem}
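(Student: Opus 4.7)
The plan is to augment a Sahai--Waters style public-key functional encryption scheme (built from subexponentially secure $\io$ and puncturable PRFs) with a puncturing operation that wraps the key-issuance procedure together with the $f(m_0)=f(m_1)$ check inside an obfuscated program. Concretely, let $F$ be a puncturable PRF, $\prg$ a length-doubling PRG, and $K'$ a PRF key used to derandomize the inner obfuscator. Define $\pk := \io(E_K)$ with $E_K(m;r) = (\prg(r),\, F(K,\prg(r)) \oplus m)$, $\mathsf{msk} := (K, K')$, and the honest functional key as $sk_f := \io(D_{K,f};\, F(K',f))$ for $D_{K,f}(c_1,c_2) := f(c_2 \oplus F(K,c_1))$. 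The punctured master key is $\mathsf{msk}^*_{m_0,m_1} := \io(P_{K,K',m_0,m_1})$, where $P$ on input $f$ outputs $\bot$ if $f(m_0) \neq f(m_1)$ and otherwise returns $\io(D_{K,f};\, F(K',f))$. Correctness of both honest and punctured key issuance follows immediately from correctness of the obfuscator.

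I would prove security by showing $(\pk,\, \mathsf{msk}^*_{m_0,m_1},\, \enc(m_0)) \approx_c (\pk,\, \mathsf{msk}^*_{m_0,m_1},\, \enc(m_1))$ via the standard Waters-style hybrid chain adapted to the punctured setting. First, replace the challenge's $\prg(r^*)$ by a truly random $c_1^*$ outside $\mathrm{image}(\prg)$ using PRG pseudorandomness and image sparsity. Second, rewrite the outer program $P$ to an equivalent $P'$ whose outputs use the PRF punctured at $c_1^*$ and hardcode the value $u^* := F(K, c_1^*)$ used inside the inner decryption circuit. Third, replace $u^*$ by uniform randomness via punctured-PRF security. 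At this point the adversary's view depends on the challenge plaintext $m_b$ only through the evaluation $f(c_2^* \oplus u^*) = f(m_b)$ performed by any functional key they can issue, and the puncturing check inside $P'$ guarantees this quantity is independent of $b$, so the final switch $\enc(m_0) \to \enc(m_1)$ is undetectable.

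The principal obstacle is the $\io$-inside-$\io$ step that appears in the second hybrid: the outer obfuscation of $P$ outputs, on each input $f$, an \emph{inner} obfuscation of $D_{K,f}$, whereas the corresponding output of $P'$ is an obfuscation of the functionally equivalent $D'_{K^*,f,c_1^*,u^*}$. Since $\io$ outputs are only computationally indistinguishable and not bit-identical, $P$ and $P'$ are not functionally equivalent as string-valued functions, so plain $\io$ security on the outer obfuscation does not directly justify the swap. I would resolve this using subexponentially secure $\io$ together with a puncturable-programming hybrid on $P$ that replaces the inner obfuscation for one input $f^*$ at a time, with each step justified by combined inner and outer $\io$ invocations (using the derandomizing PRF $K'$ to fix a single bit-string output at $f^*$) and the super-polynomially many steps amortized against the subexponential $\io$ advantage; alternatively, the probabilistic-$\io$ framework could be applied to the outer obfuscation when the underlying assumptions permit.
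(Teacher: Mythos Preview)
Your construction is insecure, and the break corresponds exactly to the step you hand-wave as ``undetectable'' at the end. The functional key $sk_f=\io(D_{K,f})$ lets the adversary evaluate $f(c_2\oplus F(K,c_1))$ on \emph{any} pair $(c_1,c_2)$, not just the challenge ciphertext. Since your underlying PKE is additively malleable in the second component, feeding in $(c_1^*,\,c_2^*\oplus v)$ for arbitrary $v$ returns $f(m_b\oplus v)$. Concretely, take $m_0=00$, $m_1=11$, and $f(x)=[x=10]$; then $f(m_0)=f(m_1)=0$ so $f$ passes the puncturing check, yet $f(m_0\oplus 10)=1$ while $f(m_1\oplus 10)=0$, and a single evaluation of $sk_f$ on $(c_1^*,c_2^*\oplus 10)$ recovers $b$. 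Your claim that ``the view depends on $m_b$ only through $f(m_b)$'' is false precisely because the decryption circuit is not tied to a single ciphertext: it reveals the entire shifted function $f(\cdot\oplus u^*)$, and the bijection $u^*\mapsto u^*\oplus m_0\oplus m_1$ you would need for the final swap does \emph{not} preserve the functionality of the inner circuits $D'_{K^*,f,c_1^*,u^*}$.

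The $\io$-inside-$\io$ obstacle you identify is real, but it is secondary to this malleability problem; even with a perfect resolution of the nested-obfuscation step your scheme would still be broken. The paper's construction avoids both issues by routing through an identity-based encryption scheme with a puncturable master secret key: the ciphertext is an obfuscated program that on input $f$ outputs $\mathsf{IBE.Enc}(pk,f,f(m))$, and the functional key is simply the IBE identity key for $f$. This binds the decryptable payload to $f(m)$ rather than to a malleable one-time-pad, and the strong punctured-key correctness of the IBE makes the outputs of the punctured-master-key program \emph{bit-identical} (not merely indistinguishable) to the honest outputs on all $f$ with $f(m_0)=f(m_1)$, which dissolves the nested-$\io$ problem entirely. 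The security proof then proceeds by an exponential hybrid over all $f$, puncturing the IBE master key and the ciphertext-randomness PRF at one $f$ at a time and invoking subexponential IBE security for the single step where $f(m_0)\neq f(m_1)$.
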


\section{Technical Overview}
\subsection{Public-Key Encryption with Copy-Protected Secret Keys}\label{sec:intropke}
Let us first describe our security model, which is the same as previous work \cite{Aar09, GZ20, CLLZ21, LLQZ22}. We consider a public-key encryption scheme with classical ciphertexts, a classical public-key and an additional (quantum) algorithm $\mathsf{QKeyGen}$. The  copy-protected key generation algorithm $\mathsf{QKeyGen}$, on input the classical secret key, outputs a reusable quantum state that can be used to decrypt any number of times. For security, we will require that a user with $k$ copy-protected secret keys cannot create $k + 1$ keys. More formally, in an \emph{anti-piracy game} (\cref{defn:regularpkeantipir}) for public-key encryption, we have an adversary, called a \emph{pirate}. This adversary is given the public key $pk$, and then for any (polynomial) number of rounds, it queries for quantum copy-protected secret keys. After it is done, it outputs pairs of challenge messages $(m_\ell^0, m_\ell^1)_{\ell \in [k + 1]}$ and $k + 1$ (possibly entangled) \emph{freeloader} adversaries, where $k$ is the number of copy-protected keys it has queried. Then, the challenger samples challenge bits $b_\ell$, and presents each freeloader with $\mathsf{Enc}(pk, m^{b_\ell}_\ell)$. The freeloaders output their predictions $b'_\ell$, and the adversary wins if $b'_\ell = b_\ell$ for all $\ell \in [k + 1]$. We require that no efficient adversary can win with probability better than $1/2+\negl(\lambda)$. The baseline success probability is $1/2$, since the pirate adversary can output $k$ of its keys to the first $k$ freeloaders, and let the last freeloader randomly guess the challenge bit $b_{k + 1}$. 

\subsubsection*{$1\to2$ Copy-Protection Secure Construction of Coladangelo et al. \cite{CLLZ21}}
As a warm-up, we will recall the $1\to2$ copy-protection secure construction based on coset states, given by \cite{CLLZ21}, which also forms the base of our construction.

A coset state \cite{CLLZ21, VZ21} is a state of the form $\sum_{a \in A}(-1)^{\langle s', a \rangle}\ket{a + s} =: \ket{A_{s, s'}}$ where $A \subseteq \F_2^n$ is a subspace and $s, s' \in \F_2^n$. \cite{CLLZ21, CV22} showed that coset states satisfy a property called \emph{strong monogamy-of-entanglement (MoE)}, which is as follows. Consider the following game between an adversary tuple $\adve_0, \adve_1, \adve_2$ and a challenger. Challenger uniformly at random samples a subspace $A \subseteq \F_2^n$ of dimension $n/2$ and elements $s, s' \in \F_2^n$, and submits $\ket{A_{s,s'}}$ and the obfuscated programs\footnote{Here, we overload the notation to let $A+s$ also denote the program that takes as input a vector $v$ and outputs $1$ if $v \in A + s$, and $0$ if not, and similarly for $A^\perp+s$.} $\io(A+s), \io(A^\perp + s')$ to the adversary $\adve_0$. Then, the adversary $\adve_0$ outputs two (entangled) registers $\reg_1, \reg_2$, for $\adve_1, \adve_2$. Then, $\adve_1, \adve_2$ receive their registers and also the description of the subspace $A$ (but not the vectors $s, s'$ of course). Finally, $\adve_1$ is required to output a vector in $A + s$ and $\adve_2$ is required to output a vector in $A^\perp + s'$. Strong MoE property says that no efficient adversary can win this game with non-negligible probability. In a variation used implicitly by \cite{CLLZ21} and later formalized in a different context by \cite{CGLR23}, we present $\adve_0$  with multiple, say $c$ many, independent coset states (called a \emph{coset state tuple}) and the corresponding membership checking programs, and require that $\adve_1, \adve_2$ each output vectors in $A_i + s_i$ or $A^\perp_i + s'_i$ for all $i \in [c]$, depending on random challenge strings $r^1, r^2 \in \zo^{c}$ presented to them. By a reduction to the original version, it can be shown that no efficient adversary can win this game with non-negligible probability (\cref{defn:strmoe}). We call this variation the multi-challenge version.

Now, we move onto the copy-protected public-key encryption construction of \cite{CLLZ21}. During setup, we sample a coset tuple $(A_i, s_i, s'_i)_{i \in [c]}$. The coset state tuple $\ket{A_{i, s_i,s_i'}}_{i \in [c(\lambda)]}$ becomes the copy-protected quantum secret key, and we output $pk = (\io(A_i+s), \io(A_i^\perp + s_i'))_{i \in [c(\lambda)]}$  as the public key. Finally, to encrypt a message $m$, we sample a random string $r$ and an indistinguishability obfuscation $\mathsf{OP} \samp \io(\mathsf{PCt}_{pk, r, m})$, where $\mathsf{PCt}_{pk, r, m}$ is a program that takes is input vectors $(v_i)_{i \in [c]}$ and, checks if they are in correct cosets with respect to $r$. That is, we require $v \in A_i + s_i$ if the $i$-th bit of $r$ is $0$ and $v \in A^\perp_i + s'_i$ if it is $1$. The program $\mathsf{PCt}_{pk, r, m}$ outputs the message $m$ if and only if the vectors pass the test. We output $(\mathsf{OPCt}, r)$ as the ciphertext. To decrpyt a message, we simply apply QFT (quantum Fourier transform) to our coset state tuple at indices where $(r)_i = 1$. Then, it is easy to see that running $\mathsf{OPCt}$ coherently on our key and measuring the result gives us $m$ with probability $1$\footnote{By Gentle Measurement Lemma (\cref{lem:gentlemes}), this also means that we can revert the quantum key back to its original state after decrypting a ciphertext.}.

On a high level, the security follows by multi-challenge MoE game, since the two freeloaders, to decrypt their ciphertexts, must be querying the programs $\mathsf{PCt}^{(1)}, \mathsf{PCt}^{(2)}$ at the correct vectors with respect to $r_1, r_2$ respectively, which is exactly the challenge in the MoE game. The proof is more involved since (i) $\io$ is used rather than ideal oracles and (ii) the freeloaders can be entangled. We discuss this further in the upcoming sections.

\subsubsection*{Challenges for Collusion-Resistant Copy-Protection}
First, we note that the construction of \cite{CLLZ21} is trivially insecure when the adversary is given two copies of the secret key: The adversary can measure one copy of the state $\ket{A_{i, s_i, s_i'}}$ in the computational basis and the other copy in the Hadamard basis, thus obtaining vectors $v_i \in A_i + s_i$ and $w_i \in A^\perp_i + s'_i$ for all $i \in [c]$. Using these vectors, one can decrypt any ciphertext and since these vectors are classical information, the pirate adversary can indeed produce any number of working secret keys. Thus, the scheme only satisfies $1 \to 2$ unclonability.

One natural solution, argued by \cite{LLQZ22}, is to try and employ quantum states that already possess a collusion-resistant unclonability guarantee, such as Haar random states or their computational neighbor, pseudorandom states. This is indeed the approach employed by \cite{Aar09} to achieve $k \to 2k$ copy-protection relative to a structured quantum oracle. However, the problem is that there is no known way of verifying such states or employing these states to construct a copy-protection scheme without the use of quantum oracles, and there is evidence that this is an inherent property of such states \cite{LLQZ22, Kre21}.

Another natural solution, used by \cite{LLQZ22}, is to \emph{independently} sample a new coset state tuple $\ket{A^{\textcolor{blue}{(j)}}_{i, s_i, s_i'}}_{i \in [c(\lambda)]}$ whenever a copy-protected secret key is requested rather than giving out the same key state multiple times. In this case, the ciphertext program also takes as input the index $j$ of the key the decryption procedure is using, and verifies the input vectors with respect to that coset tuple. Therefore, they need to include the corresponding obfuscated membership checking programs for each possible key in the public-key, since otherwise the ciphertexts would not be decryptable by that key. Therefore, we can only have $k$ different key states for a fixed $k$ chosen during setup (which is when $pk$ is created). 
Therefore, the construction of \cite{LLQZ22} only achieves $k \to k + 1$ copy-protection where the collusion-bound $k$ needs to be known at the time of setup, and the size of the scheme (public key, ciphertexts) grows with $k$, since the scheme basically consists of $k$ independent instances of the $1 \to 2$ secure scheme of \cite{CLLZ21}. Furthermore, similar to the scheme of \cite{CLLZ21}, this scheme becomes trivially insecure once given $k + 1$ keys, since we will have obtained one of the coset state tuples twice.

\subsubsection*{Our Solution: Pseudorandom Coset States and Identity-Based Encryption}
As discussed above, if we are sampling independent coset states for each copy-protected key query, we need to have an a-priori bound on the number of different keys. In the unbounded setting, since there are exponentially many cosets, it is not possible to verify all possible cosets using a polynomial size public key $pk$.

Our solution to this is to \emph{compress} the public-key by using pseudorandom coset states rather than truly random ones. We sample a PRF key $K$ and include it in the classical secret key. Then, whenever we need to sample a copy-protected quantum secret key using our classical secret key, we sample a random identity string $id$ from $\zo^\lambda$ and then sample a coset state tuple using the randomness $F(K, id)$. Our public-key will be an obfuscated program $\mathsf{OPMem}_{K}$ (with PRF key $K$ embedded) that takes in an $id$, some vectors $(v_i)_{i \in [c]}$ and a \emph{basis} $r$, and verifies the vectors $(v_i)_{i \in [c]}$ with respect to $r$ and the coset tuple associated with $id$. We now have a polynomial size public-key that allows us to verify any possible (honest) coset state tuple.

A high level intuition for security is as follows, where for now we assume we use ideal oracles instead of $\io$. By PRF security, the adversary's view is indistinguishable from having obtained $k$ independent coset state tuples since for any efficient adversary that obtains any (polynomial) number of quantum secret keys, they will all have unique identity strings with overwhelming probability. Note that we still need to argue that one cannot produce $k + 1$ working keys from $k$ independent coset state tuples, which we discuss how to argue in \cref{sec:pkeintroproof}. 

However, in reality, we are using $\io$ and not ideal oracles. Now, the first problem is that, the coset state tuples that the adversary obtains during key query phase are no longer pseudorandom, since the adversary does not only have query access to the PRF but rather has the PRF key $K$ inside $pk$. A standard solution when using PRFs and indistinguishability obfuscation is to puncture the PRF key at some inputs. Let $id_1, \dots, id_k$ be the identity strings of the $k$ copy-protected keys obtained by the adversary. We can try to puncture the PRF key at $id_1, \dots, id_k$, but this would make the size of our public-key dependent on $k$. A much more important problem is that the adversary is not required to run $\mathsf{PCt}$ on only one of $id_i$, and in fact, $\mathsf{PCt}$ might be leaking\footnote{Since we are not using black-box obfuscation for $\mathsf{PCt}$.} $m$. Or, the adversary somehow might be obtaining the hidden message $m$ by running it on some unrelated identity $id$ and vectors that pass the verification of $\mathsf{PMem}$ for $id$\footnote{Since we are not using black-box obfuscation for $\mathsf{PMem}$.}. The latter is because the adversary has access to $K$ in some form (i.e. inside $\mathsf{PMem}$), therefore, it might be somehow obtaining $F(K, id)$ for some $id$. To rule this possibility out, we would need to puncture the PRF key at all strings in $\zo^\lambda$!

To solve this problem and to puncture the PRF key only at few points, we first want to make sure that the adversary can obtain the hidden message $m$ only by running $\mathsf{PCt}$ on an identity string associated with one of the copy-protected keys it did obtain. To ensure this, we use the following approach based on identity-based encryption (IBE)(\cref{defn:ibe}). When $\mathsf{PCt}$ is queried on some $id$ and some vectors $(u_j)_j$, after verifying that the vectors are in the correct cosets with respect to $id$ and $r$, the program $\mathsf{PCt}$ outputs an $\mathsf{IBE}$ encryption of $m$ under the identity $id$, rather than $m$ in the clear. We will also change our copy-protected key generation algorithm to output the $\mathsf{IBE}$ secret key associated with $id$. Now, we will be able to argue that if an adversary is able to decrypt a ciphertext and obtain $m$, then it must have obtained $\mathsf{IBE.Enc}(pk, id_i, m)$ for some $id_i$. This is because by the security of $\mathsf{IBE}$, the adversary cannot decrypt ciphertexts under identities other than $id_1, \dots, id_k$ - the only identities for which it has obtained the IBE secret keys. Above in turn means that the adversary must have run $\mathsf{PCt}$ on $id_i$ and the correct vectors for the coset tuple associated with $id_i$. In essence, we are forcing the adversary the clone one of the original copy-protected secret keys rather than coming up with a new key. Hence, we will eventually reduce to the MoE security of the coset state tuple associated with some $id_i$. Now, we need to only puncture the PRF key at (at most) $k$ points! This is still too many.\footnote{Remember that when obfuscating a program using $\io$, all programs that we will move between must be of the same size. Thus, if we are puncturing the PRF key at $k$ points, our initial obfuscated public-key program needs to be padded to a size that depends on $k$.} However, we observe the following: the adversary obtains $k$ secret keys $sk_{id_1}, \dots, sk_{id_k}$ of the IBE scheme while there are $k + 1$ freeloaders. Hence, by pigeonhole principle, two of the $k + 1$ freeloaders must be using the same key $sk_{id_i}$ for some $i \in [k]$, and hence, the same coset state tuple - the one associated with $id_i$. As a result, we will only need to puncture the PRF key $K$ at $id_i$. See \cref{sec:pkecons} for the full scheme.
\subsection{Proving Security}\label{sec:pkeintroproof}
In this section, we give a high-level overview of the security proof of our public-key encryption construction. Our goal is to reduce the security of our scheme to the monogamy-of-entanglement game (see \cref{sec:intropke} and \cref{defn:strmoecoll}), which we will do so by extracting coset vectors $(v_i)_{i \in [c]}$ from the freeloader adversaries. On a high level, our proof uses ideas from \cite{CLLZ21, ALLZZCONF20} for simultaneous extraction from entangled adversaries. The security proof of our functional encryption construction follows similarly and we refer the reader to \cref{sec:mainfe} for details. 

Note that in general, applying an extraction (which is essentially a measurement) on one of the freeloader adversaries might irreversibly damage the other ones since they are entangled. We will first make the testing of the freeloaders projective, which will allow us to argue that we can extract vectors from entangled adversaries since (i) repeating a projective measurement always gives the same outcome and does not change the state, (ii) acting (e.g. extracting) on some part of a state, informally, does not change the behaviour of projective measurements on the other part \emph{too much} (\cref{thm:simulproj}).
Now, let us briefly discuss \emph{projective implementations}, introduced by Zhandry \cite{Z20}. Let $\mathcal{E} = \{\mathcal{E}_1, \mathcal{E}_0 = I - \mathcal{E}\}$ be a binary POVM. \cite{Z20} shows that there is a \emph{projective} measurement (indexed by a finite subset of $\R_{0 \leq \cdot \leq 1}$) denoted $\mathsf{PI}(\mathcal{E})$ such that the following procedure has the same output distribution as applying $\mathcal{E}$ to $\rho$, for any state $\rho$.\footnote{We can equivalently say that the expected value of $\mathsf{PI}(\mathcal{E})\cdot\rho$ is $\Tr[\mathcal{E}_1\rho]$}
\begin{enumerate}
    \item Apply $\mathsf{PI}(\mathcal{E})$ to $\rho$ obtain a value $p \in [0, 1]$.
    \item Output $1$ with probability $p$.
\end{enumerate}
Essentially, the projective implementation estimates the probability that $\mathcal{E}$ \emph{accepts} $\rho$, and does so through a projective measurement. Note that $\mathsf{PI}(\mathcal{E})$ in general is inefficient, however, it can be approximated efficiently \cite{Z20}. We will ignore this issue in this section - see \cref{sec:pkeproof} for details.

In our anti-piracy game (\cref{defn:regularpkeantipir}), we assume that the pirate adversary outputs each freeloader as $(U, \sigma)$ where $U$ is a unitary and $\sigma$ is some quantum state. We interpret this as a quantum circuit\footnote{Note that while $U$ is a unitary, this definition is enough to capture general quantum circuits since the adversary can also include empty workspace qubits inside $\sigma$, along with some quantum information obtained from the copy-protected keys.} (with some hardwired quantum state) that takes in a challenge ciphertext and outputs a prediction $b'$. The challenger executes the freeloader using an appropriate universal quantum circuit. Now, let $\mathcal{D}$ be a ciphertext distribution and let $(U_i, \reg_i)$ be a freeloader output by the pirate adversary (where $\reg_i$ denotes the register containing the quantum part), and consider the following measurement on $\reg_i$.
\begin{enumerate}
    \item Sample $b \samp \zo$.
    \item Sample $ct \samp \mathcal{D}(m^b_i)$.
    \item Execute $U(ct_i, \reg_i)$, measure the first qubit of the output registers in computational basis to obtain $b'$.
    \item Output $1$ if $b' = b$.
\end{enumerate}
When we set $\mathcal{D}$ to be the honest ciphertext distribution where we encrypt $m$ as $\mathsf{PKE.Enc}(pk, m)$, we see that the above measurement exactly corresponds to the testing of the freeloader in the anti-piracy game. Now, consider a modified game (parameterized by some inverse polynomial $\gamma(\lambda)$) where instead of performing this measurement directly, the challenger performs its projective implementation $\mathsf{PI}_{\mathcal{D}}$, and the adversary is said to win if the output is $> 1/2+\gamma(\lambda)$ for all $k + 1$ freeloaders. Essentially, we are estimating the success probabilities of the freeloaders and comparing it to the baseline. Note that since $\mathsf{PI}_{\mathcal{D}}$ is projective, once we apply it and obtain a value $p$, the post-measurement state will again give $p$ when its tested again for $\mathcal{D}$. \cite{CLLZ21} proves that this modified game is stronger: it implies the security of the original anti-piracy game. Hence, we will prove security with respect to this stronger game.\footnote{There is a caveat here that we need to prove security with respect to this game for all inverse polynomial $\gamma(\lambda)$ so that it implies security with respect to the original game.}

Now, we move onto a sketch of the security proof of our scheme. The idea is to test the freeloaders with respect to multiple challenge ciphertext distributions to pinpoint two freeloaders that use the same coset state tuple, and then extracting coset vectors from them and violating its $1 \to 2$ MoE security. Let us assume that an adversary wins the (modified) anti-piracy game (with probability $1/p(\lambda)$ where $p(\cdot)$ is a polynomial), meaning that applying $\mathsf{PI}_{\mathcal{D}}$ yields $> 1/2+\gamma(\lambda)$ for all $k + 1$ freeloaders simultaneously with probability $> 1/p(\lambda)$. 
 We define ciphertext distributions $\mathcal{D}_j$, for all $j \in \{0, 1, \dots, 2^\lambda\}$, representing all possible identity strings in $\zo^\lambda$ (plus, the dummy upper bound $2^\lambda$). We define $\mathcal{D}_j$ so that an encryption of a message $m$ is $(\io(\mathsf{PCt}^j), r)$ where $\mathsf{PCt}^j$ is the program that works as the honest ciphertext program if the input $id$ satisfies $id \geq j$, and otherwise it replaces its hardcoded message $m$ with $\top$ at the beginning. Observe that $\mathcal{D}_0$ corresponds to the honest ciphertext distribution, since $id < 0$ is never satisfied. Similarly, $\mathcal{D}_{2^\lambda}$ corresponds to the dummy ciphertext distribution where the message is not actually contained in the ciphertext. 
 Now, consider the following thought experiment. We apply the measurements $\mathsf{PI}_{\mathcal{D}_i}$ sequentially from $j = 0$ to $j = 2^{\lambda}$, to all $k + 1$ freeloaders. Let $q_{\ell,j}$ denote the outcomes for each freeloader $\ell \in [k + 1]$. Intuitively, a non-negligible jump/gap between $q_{\ell,j}$ and $q_{\ell, j+1}$ for $j \in \{0,\dots,2^\lambda - 1\}$ will mean that the freeloader $\ell$ is querying the ciphertext program at some vectors that are correct for the coset tuple associated with $j$. Since $\mathcal{D}_0$ is the honest ciphertext distribution of this scheme, the step $j = 0$ corresponds to the original security game and hence we get $q_{\ell, \textcolor{red}{0}} > 1/2 + \gamma$ for all $\ell \in [k + 1]$ by assumption. We will also have $q_{\ell, \textcolor{red}{2^\lambda}} \leq 1/2$  for all $\ell \in [k + 1]$ since the step $j =2^\lambda$ corresponds to the ciphertext distribution $\mathcal{D}_{2^\lambda}$ that does not actually contain the message, and therefore no freeloader\footnote{Here, we are talking about any freeloader program/state, not necessarily the initial ones, since the state of the freeloaders has changed since we already applied the previous tests $\mathsf{PI}_{\mathcal{D}_j}$ for $j = 0, \dots, 2^\lambda - 1$} can succeed with probability better than $1/2$ against $\mathcal{D}_{2^\lambda}$. Previous works (\cite{AC12, LLQZ22}) use a pigeonhole principle to reduce $k \to k+1$ security to $1\to 2$ unclonability security, where they conclude that two freeloaders must have a large gap between $|q_{\ell, j} - q_{\ell, j + 1}|$ at the same jump point $j$, meaning that they are utilizing the same coset state tuple (\cite{LLQZ22}) or two quantum money banknotes must come from the same initial banknote (\cite{AC12}); where they randomly guess this critical index and place the $1\to 2$ challenge there. However, the problem in our case is that the possible jump points $j_{\ell}$ are in $\{0, 1, \dots, 2^\lambda - 1\}$, whereas we only have $k + 1$ freeloaders. This creates a multitude of problems: (i) we cannot conclude that there will be a non-negligible jump since the average step between $q_{\ell,0}$ and $q_{\ell, 2^\lambda}$ is $\gamma/2^\lambda$, which is negligible, (ii) even if there is a non-negligible jump, we cannot apply the pigeonhole principle to guarantee that there is a pair of freeloaders $\ell, \ell'$ that have the jump index $j_{\ell} = j_{\ell'}$ since we have $2^\lambda$ slots for $k + 1$ freeloaders. Further, note that even if both of the previous concerns worked out and the two freeloaders' non-negligible jump indices coincide, we cannot actually test the freeloaders with respect to all $\mathcal{D}_j$ to find it or randomly guess it since there are exponentially many possibilities. However, a careful reader might guess that thanks to the $\mathsf{IBE}$ security, the challenge ciphertext distributions above actually \emph{collapse} around $k$ points: $j = id_1, \dots, id_k$, the identity strings of the secret keys obtained by the adversary. That is, we claim that jumps can only happen at indices $j$ that correspond to some $id_i$. The reason is that, informally, the difference between $\mathcal{D}_j$ and $\mathcal{D}_{j+1}$ only occurs when the obfuscated ciphertext program is evaluated at $id = j$, in which case the output is $\mathsf{IBE}$ encryptions of $m$ and $\top$ respectively, both under the identity $j$. However, if $j$ is not one of $id_i$, then the different outputs of these programs will be $\mathsf{IBE}$ ciphertexts that are indistinguishable to the adversary, by the security of $\mathsf{IBE}$. Therefore, no freeloader can detect this change, and there cannot be a jump between $q_{\ell, \textcolor{red}{j}} $ and $q_{\ell, \textcolor{red}{j+1}} $. This (i) allows us to conclude that for each freeloader there must be a $\gamma/k$ jump (which is non-negligible) at one of $j = id_1, \dots, id_k$, and (ii) since the jump points are now all in $\{id_1, \dots, id_k\}$, we can apply a pigeonhole argument to say that there is two freeloaders have the same jump point since there are $k + 1$ freeloaders with $k$ jump slots. However, note that the ciphertext programs are only $\io$ programs and not ideal oracles, therefore, the above argument is only informal and needs to be proven. Overall, while the above intuitions are the crux of our technique, formalizing these requires care and the full proof delicately intertwines all these observations, whilst also dealing with further technical problems. We refer the reader to \cref{sec:pkeproof} for the full proof. We also need some new results on \emph{collusion-resistant MoE for pseudorandom coset states}, which we prove in \cref{sec:coset}.

\subsection{Public-Key Functional Encryption with Copy-Protected Functional Keys}
In the setting of functional encryption, we now have functional keys, where a functional key for a function $f$ allows one to obtain $f(m)$ given the encryption $\mathsf{Enc}(m)$, and nothing else. Similar to PKE (\cref{sec:intropke}), for functional encryption with copy-protected keys, we require that an adversary that obtains $k$ copy-protected (functional) keys cannot create $k + 1$ working keys (for any functions). We also allow the adversary to obtain classical functional keys. See \cref{sec:fedefnmain} for details of our model. We move onto our construction. The starting point is our public-key encryption scheme. To generate a quantum copy-protected key for a function $f$, we sample a random $id$ as before, but now we generate the coset tuple using the randomness $F(K, id || f)$ rather than $F(K, id)$. Basically, the coset states are now associated with both the function $f$ and a random $id$. We note that the random identity is still required, since we allow the adversary to query for multiple copy-protected keys for the same function $f$. We also change our ciphertexts so that they now output an $\mathsf{IBE}$ encryption of $f(m)$ under the identity $id || f$, rather than outputting an encryption of $m$. We refer the reader to \cref{sec:unclonfecons} for the full construction.

\subsubsection*{Proving Security: Building and Using Puncturable Functional Encryption}
Proof of security for our FE scheme will be similar to the proof of our PKE scheme (\cref{sec:pkeintroproof}). In particular, we will now identify identity string-function pairs (associated with the functional keys) with elements of $\{0, 1,\dots, 2^{\lambda}\cdot2^{\lambda}\}$, and have ciphertext distributions $\mathcal{D}_i$ for all such elements. While we have $2^{2\lambda}$ jump points, similar to PKE, we can argue that they can occur only at $k$ points: $j = id_1 || f_1, \dots, id_k || f_k$, where $f_1, \dots, f_k$ are the functions the pirate adversary has queried in copy-protected mode and $id_1, \dots, id_k$ are the associated identity strings in the same order. The reason is that, for other values, either (i) the adversary will not have the IBE secret key for the identity $id || g$ (meaning that it has not queried for the function $g$), or (ii) we will have $g(m_0) = g(m_1)$ (if it has queried for the function $g$ in the classical mode). Thus, $\mathcal{D}_{i}$ and $\mathcal{D}_{i+1}$ will be indistinguishable at points other than ones listed above; either by IBE security or since the $\mathsf{PCt}$ will output $g(m_0) = g(m_1)$ in both distributions.

There is one caveat left. As discussed before, in our copy-protection security proofs, we crucially rely on \emph{projective implementations} \cite{Z20} to estimate the success of the freeloader adversaries for the task where they are given an encryption of $m^b$ with random $b \samp \zo$ and they output a prediction $b'$ for it. This allows us to simultaneously extract vectors from two entangled freeloader adversaries. While projective implementations are in general inefficient, \cite{Z20} also gives an efficient algorithm (called \emph{approximated projective implementation}) that approximates it well, using a technique similar to the celebrated witness-preserving $\mathsf{QMA}$ amplification result of \cite{MW05}. Crucially, we note that above decryption process between the challenger and freeloader, for which we estimate the success probability, is non-interactive (i.e., not single round). However, in a copy-protected functional encryption security game, the freeloader adversaries will be allowed to query for more functional keys after they receive their challenge ciphertexts, for any polynomial number of rounds. Therefore, we will not able to use the approximated projective implementation as-is to estimate the success probability of a freeloader adversary for functional encryption. While one solution might be to try and generalize approximate projective implementations to interactive procedures, given that the original technique of \cite{MW05} also only applies to $\mathsf{QMA}$ (which is single round), this might be a challenging task. 

We side-step the issue above using a classical solution. We define a variation of our scheme where the challenger gives the freeloader adversaries a \emph{punctured master secret key} $pmsk$ along with their challange ciphertext. This punctured key has the challenge messages $m^0, m^1$ chosen by the adversary hardcoded, and it takes in a function $f$ and outputs the secret key for $f$ if $f(m^0) = f(m^1)$. Then, since the freeloader adversaries can simulate (using this punctured key $pmsk$) themselves any key queries that they want to make after seeing the challenge ciphertext, we remove the interaction between the freeloaders and the challenger. As a result, we are again able to use approximate projective implementations in our technique.

The only remaining challenge is making sure that our functional encryption construction is still secure when the adversaries obtain this punctured master secret key, which is an obfuscated program that contains the master secret key $msk$. While $pmsk$ will only answer the queries on functions that the adversary was allowed to query for anyways, the problem is that we are using indistinguishability obfuscation rather than black-box obfuscation to compute $psmk$. To resolve this issue, we upgrade our FE construction to use an identity-based encryption scheme with puncturable master secret keys (\cref{sec:puncibemain}). In such a scheme, we are able to produce a master secret key that can issue identity keys for any identity other than the identity it was punctured at. When we are proving the security of our functional encryption scheme, we will construct hybrids corresponding to all possible $id || f$. Moving between each hybrid, we only need to rely on the security of $\mathsf{IBE}$ at this identity. Therefore, in our security proof, we will not only use a puncturing argument inside our obfuscated ciphertext program $\mathsf{PCt}$, but we will also puncture the $\mathsf{IBE}$ master secret key inside $pmsk$ at $id || f$. Thus, we will be able to rely on the security of $\mathsf{IBE}$ even when the adversary has $pmsk$. We refer the reader to \cref{sec:feproof} for the full proof.

\subsection{PRFs and Signature Schemes with Copy-Protected Secret Keys}
Let us first describe the setting. In the case of PRFs, we imagine a quantum key generation algorithm that, given the PRF key $K$, can generate copy-protected keys that can be used to evaluate the PRF $F(K, \cdot)$ any number of times. For copy-protection, we require that given $k$ such keys, the adversary cannot create $k + 1$ freeloaders that can distinguish $F(K, x)$ versus a random string from the co-domain of $F$, given uniformly at random $x$.\footnote{Note that $x$ being randomized and being revealated after the splitting is required, since otherwise the pirate can evaluate the PRF before splitting into freeloaders, and it can simply give the classical evaluation results to the freeloaders.} See \cref{sec:cpprf} for more details. In the case of signatures, we have copy-protected re-usable signing keys that can sign any message. Similar to above, given $k$ such keys, pirate outputs $k+1$ freeloaders, and we ask the freeloaders to sign random messages.\footnote{As in the case of PRFs, known/deterministic messages can be signed before the split by the pirate, hence, random challenge messages are required.} See \cref{sec:cpprf} and \cref{sec:unclondigsig} for more details.

Our signature scheme will be the same as our PRF scheme, where the signature on $m$ will be the PRF evaluation $F(K, m)$, with the difference from the PRF scheme being that we will also have a verification key. Similar to the signature scheme construction of Sahai and Waters \cite{SW14}, the verification key will be an obfuscated program that verifies a message-signature pair $(m, \sigma)$ by checking $f(\sigma) = f(F(K, m))$ where $f$ is a one-way function. Due to these similarities, we only discuss our signature scheme here.

 In our signature scheme, a copy-protected signing key will consist of two parts: (i) a coset state tuple generated using the randomness $F(\textcolor{red}{K''}, id)$ for random $id$, similar to our PKE scheme; (ii) an obfuscated signing program $\mathsf{PSign}_K$. The program $\mathsf{PSign}_K$ will take as input a message $m$, along with $id$ and vectors $(v_i)_{i \in [c]}$. Similar to the ciphertext programs $\mathsf{PCt}$ in our PKE construction, $\mathsf{PSign}_K$ will verify that the vectors $(v_i)_{i \in [c]}$ are in the correct cosets $A_i + s_i$ or $A^\perp_i + s'_i$, depending on the $i$-th bit of $m$, where the tuple $(A_i, s_i, s'_i)$ is associated with $id$. Informally, since the challenge messages $m^1, m^2$ are random, similar to $r^1, r^2$ in the PKE case, we will be able to violate the monogamy-of-entanglement game, given freeloader adversaries that can sign these message - arriving at a contradiction. However, since we are using $\io$ and not black-box obfuscation to obfuscate $\mathsf{PSign}_K$, some information about $K$ might leaking, allowing the adversary to sign messages without querying the program with correct vectors. Similar to \cite{CLLZ21, LLQZ22}, we use the \emph{hidden trigger technique} of \cite{SW14} to solve this issue and reduce the security of our signature scheme to that of our copy-protected PKE scheme.

Hidden triggers, introduced by \cite{SW14} to construct deniable encryption, is a sparse set of inputs that can be efficiently sampled and are pseudorandom, even given a program that \emph{uses} these inputs. In the case of \cite{CLLZ21, LLQZ22}, their set of hidden trigger inputs are special encodings of the ciphertexts of their copy-protected PKE scheme. Using this technique, they embed a separate thread in $\mathsf{PSign}_K$ that detects if the message $m$ is a trigger input, and in that case, executes the embedded ciphertext program $\mathsf{PCt}$ (which is a PKE encryption of $F(K,m)$) in this input instead of normal execution. This allows them to reduce the task of finding the signature $F(K, m)$ for a message $m$ to the task of decrypting a PKE ciphertext encrypting $F(K,m)$ (hence reducing security to their PKE scheme), by undetectably replacing the random challenge messages to be signed with hidden triggers.

In our case, two new issues arise. First, as discussed, to achieve collusion-resistance, our PKE ciphertext programs crucially output $\mathsf{IBE}$ ciphertexts upon successful coset vector verification, meaning that they are randomized programs, which makes it more challenging to encode them as hidden triggers. We solve this issue as follows. Inside the ciphertext program, we expand the hidden signature $F(K, m)$ using a PRG (since there are various length requirements on the input-output size to be able to use the hidden trigger technique), and we use part of the expanded string as a PRF key to supply randomness to $\mathsf{IBE.Encrypt}$.

Secondly, the previous work \cite{CLLZ21, LLQZ22} crucially rely on puncturing the PRF key $K$ at all the challenge points $m_1, \dots, m_{k+1}$, to replace these challenge messages with hidden trigger inputs and utilize the hidden thread in $\mathsf{PSign}$. However, in our case, we would have to puncture PRF key at $k + 1$ points since we have $k + 1$ freeloaders/challenges, where $k$ is not a-priori bounded - this is not possible since the sizes of the punctured key and the obfuscated programs would need to grow with $k$. We solve this issue by making our hidden trigger inputs \emph{publicly generatable}, that is, by arguing that hidden trigger inputs are indistinguishable from uniform strings even given a program that generates these inputs (which needs to include the PRF key $K$). This allows us to only prove that a single challenge message is indistinguishable from a single hidden trigger input, and then we simply rely on the hybrid lemma to conclude the same result for any number of challenge messages, since the trigger inputs can now be generated by the adversary itself during the hybrid lemma argument. We use a new \emph{prefix-puncturing} (\cref{defn:prepuncpf}) argument for the PRF key $K$ to achieve publicly-generatable hidden triggers for our scheme. The full proof is technical, we refer the reader to  \cref{sec:digsigcpproof} and \cref{sec:prfcpproof} for the full proofs. We believe our technique might be of independent interest and might find applications in classical cryptography.

\subsection{Impossibility of Hyperefficient Shadow Tomography}
As discussed in the introduction, an important corollary of our result is the impossibility of hyperefficient shadow tomography. Suppose a shadow tomography procedure exists. We now describe a generic attack on copy-protection, given by \cite{Aar18} and adapted to the case of copy-protecting decryption keys by \cite{sattath2022uncloneable}, that uses shadow tomography. Let $s(\lambda)$ be the size of the ciphertexts of a public-key encryption scheme $\pke$ with collusion-resistant copy-protected secret keys, for $1$-bit messages. Then, define the set of measurements $\{E_{ct}\}_{ct \in \zo^{s(\lambda)}}$ as follows: $E_{ct}$ is the binary measurement $\pke.\Dec(\cdot, ct)$. That is, given a state (which will be a copy-protected secret key in our case), $E_{ct}$ is the binary measurement implemented by running $\pke.\dec$ on $\rho$ and accepting if it outputs $1$. Then, it is easy to see that once we obtain the estimates of the acceptance probabilities of all $E_{ct}$ for the state $\rho$ where $\rho$ is the copy-protected secret key, when we are given a ciphertext $ct$, we can simply use this estimate to tell if $ct$ is an encryption of $1$ or $0$, since $E_{ct}$ would \emph{accept} $\rho$ if $ct$ is an encryption of $1$, and reject it otherwise. Since these estimates are classical values, given some number of keys we can perform shadow tomography and then we can create any number of decryption programs.

The attack above is used by \cite{Aar18, sattath2022uncloneable} to conclude that \emph{unconditional} collusion-resistant copy-protection is impossible, since \cite{Aar18} gives a shadow tomography procedure that uses polynomially many copies of a state $\rho$, however, the procedure takes exponential time. Now, the question is, does there exist a \emph{hyperefficient} shadow tomography procedure? We observe that the measurement set $\{E_{ct}\}_{ct \in \zo^{s(\lambda)}}$ above is actually implemented by a uniform algorithm: $\pke.\Dec(\cdot, \cdot)$. Hence, if there exists a hyperefficient shadow tomography (\cref{defn:hypertom}) procedure, it would output (a classical description of) a quantum circuit that can estimate all $E_{ct}$, given time and number of copies that are both $\poly\left(|\rho|, \log (2^{s(\lambda)})\right) = \poly(\lambda)$. However, this would break the security of our collusion resistant copy-protected PKE scheme (\cref{sec:pke}), since we can query for sufficiently many keys, perform the shadow tomography and freely distribute the resulting classical information. Thus, we conclude that hyperefficient shadow tomography is not possible. We refer the reader to \cref{sec:hypertom} for more details.

\subsection{Related Works}\label{sec:relatedwork}
\paragraph{Copy-Protection} See \cref{sec:intro} for an overview of work on copy-protecting general functionalities \cite{Aar09, ALLZZCONF20}, secret keys of a PKE scheme \cite{GZ20, CLLZ21, LLQZ22}, PRF keys \cite{CLLZ21, LLQZ22}, and signing keys \cite{LLQZ22}. Kitagawa and Nishimaki \cite{KN22} defined functional encryption with copy-protected functional keys in the weaker $\mathsf{1\mhyphen copy} \to \mathsf{2\mhyphen copy}$ model where the adversary can only obtain one copy-protected functional key and the freeloaders cannot query for more functional keys after receiving their challenge ciphertexts. They showed how to construct secure schemes in this model from any public-key encryption scheme with copy-protected secret keys, using $\io$. Coladangelo, Majenz, and Poremba \cite{cmp20} and Ananth et al. \cite{AKL22} showed how to construct copy-protection for point functions and compute-and-compare functions in the quantum random oracle model. 
\paragraph{Secure Leasing}
Ananth and La Placa \cite{AP21} introduced \emph{secure software leasing}, which is a weaker version of copy-protection where the adversaries are only prevented from creating two copies of their program that can both be run using the \emph{honest} evaluation algorithm. \cite{AP21} also show that even this weaker notion is impossible to achieve for all unlearnable programs \emph{in the plain model}\footnote{Note that the results of \cite{Aar09} and \cite{ALLZZCONF20} do not contradict this since they are in the oracle model.}, based on some standard assumptions. Various work also define a variant where we require that the adversary cannot produce at the same time a working copy (now allowed to be run with any algorithm) and a valid \emph{deletion certificate} for a program that passes the verification of the software vendor. Note that both of these variants are implied by copy-protection \cite{KN22, CGLR23}: we can always let the deletion certificate to be the copy-protected program itself and the deletion certificate verification procedure can simply test the returned program on various inputs. Various works \cite{AP21, ALLZZCONF20, KNY21, KN22, BGG23} construct secure leasing schemes for various primitives such as functional encryption, PRFs, indistinguishability obfuscation, based on various assumptions. As discussed above, since our schemes are unclonable, they also give publicly verifiable securely leasable schemes for the same primitives, such as functional encryption.
\paragraph{Functional Encryption} \cite{CGJS15} construct delegatable functional encryption from hierarchical identity-based encryption (HIBE) and indistinguishability obfuscation where the ciphertext is an obfuscated program that outputs a HIBE ciphertext, similar to our FE construction. Bitansky and Vaikuntanathan \cite{BV18}, Kitagawa et al. \cite{Kitagawa2022} and Yang et al. \cite{YALXY19} construct what they call \emph{puncturable functional encryption}, however, their definitions are completely different from ours (and each other) and are incomparable to our model. In the first two, they construct symmetric-key functional encryption whose secret keys can be punctured at a message or a tag. The goal is to construct indistinguishability obfuscation and succinctness is an important property for their functional encryption schemes. In \cite{YALXY19}, they construct a scheme where a functional key can be punctured at a ciphertext. Different from both works, in our classical functional encryption scheme with puncturable master key (\cref{sec:mainfe}), we will have a \emph{public-key} scheme whose master secret key can be punctured at all functions that are not \emph{differentiating} $m_0, m_1$.

\subsection{Organization}
In \cref{sec:piti}, we recall some technical results and also show some new ones for projective implementations, which are needed in our copy-protection security proofs. 

In \cref{sec:coset}, we show some new collusion-resistant monogamy-of-entanglement results for coset states, which are again needed in our proofs. 

In \cref{sec:puncibemain}, we show how to construct a puncturable identity-based encryption scheme (which is needed for our copy-protected FE scheme) and a puncturable functional encryption scheme (which uses techniques similar to our copy-protected FE scheme and might be of independent interest).

In \cref{sec:pke}, \cref{sec:mainfe}, \cref{sec:unclondigsig}, \cref{sec:cpprf} we give our copy-protected public-key encryption, public-key functional encryption, signature and PRF schemes with collusion-resistant copy-protected keys, respectively, along with their security proofs.

In \cref{sec:allunlearn}, we give collusion-resistant copy-protection schemes for all unlearnable functionalities.

In \cref{sec:hypertom}, we show the impossibility of hyperefficient shadow tomography.

\section{Preliminaries}\label{sec:prelim}
\subsection{Notation}
All of our assumptions (e.g. existence of one-way functions) will be implicitly post-quantum.

We write $\lambda$ to denote the security parameter. We write $\poly(\cdot)$ to denote a polynomial function. We write $f(\lambda) \leq \negl(\lambda)$ or $f(\lambda) < \negl(\lambda)$ and say that $f(\cdot)$ is negligible if for any polynomial $p(\cdot)$, there exits $\lambda_0$ such that $f(\lambda) < \frac{1}{p(\lambda)}$ for all $\lambda > \lambda_0$. We will write $\subexp(\cdot)$ to mean a subexponential function, meaning, $f(n) = 2^{n^c}$ for some constant $0 < c < 1$ and all sufficiently large $n$.

We say that an algorithm is efficient if it is quantum polynomial time (QPT), that is, there exists a uniform family of polynomial size quantum circuits that computes it. Unless otherwise stated, we will consider non-uniform QPT adversaries. We use the term \emph{subexponentially secure} to mean either that the advantage of any \emph{QPT} or \emph{subexponential} time adversary is $\subexp(-\lambda)$, the distinction will be clear from context. In our constructions, we will rely on the subexponential security of the underlying primitives for specific subexponential functions, such as $2^{-\lambda^c}$-security. However (unless otherwise specified) this is equivalent to assuming subexponential security for any subexponential function, since we can scale the security parameter by a polynomial.

We write $\statdist{X}{Y}$ to denote the total variation distance between two classical random variables and we write $\trd{\rho}{\sigma}$ to denote the trace distance between two quantum random variables (i.e. density matrices) $\rho, \sigma$. For a sequence of (classical or quantum) random variables $X = \{X_\lambda\}_{\lambda}, Y = \{Y_\lambda\}_\lambda$, we write $X \approx_\eps Y$ to mean $\statdist{X}{Y} < \eps$ or $\trd{X}{Y} < \eps(\lambda)$; and we write $X \approx^c_\eps Y$ to mean $\left|\Pr[\adve(1^\lambda, X) = 1] - \Pr[\adve(1^\lambda, Y) = 1]\right| <  \eps(\lambda)$ for any appropriately (will be clear from context) bounded (i.e computational) adversary $\adve$. Both are only for all sufficiently large $\lambda$. In both cases we omit $\eps$ when $\eps = \negl(\lambda)$ and we will omit specifying the adversarial constraint when the constraint is that the adversary runs in polynomial time. 

We will write $\mathcal{M}$ to denote a message space (e.g., $\zo^{m(\lambda)}$).

For a string $x$, we will write $(x)_i$ to denote the $i$-th character/bit.

We assume that the reader is familiar with the basics of quantum information theory. We will use the quantum register model, where a register is an object that has a quantum state that evolves when we act  on it. We will usually write $\reg$ to denote a quantum register and $\Hilbert$ to denote a Hilbert space. We refer the reader to \cite{NC10} and \cite{tqi} for a comprehensive review of quantum information theory.

 Wherever we use indistinguishability obfuscation $\io$, we assume that the obfuscated circuits are appropriately padded.

\subsection{Puncturable Pseudorandom Functions}\label{sec:prf}
In this section, we introduce puncturable pseudorandom functions.
\begin{definition}[\cite{SW14}]\label{defn:puncprf}
    A puncturable pseudorandom function (PRF) is a family of functions $\{F: \zo^{c(
\lambda)} \times \zo^{m(\lambda)} \to \zo^{n(\lambda)}\}_{\lambda \in \N^+}$ with the following efficient algorithms.
    \begin{itemize}
        \item $F.\mathsf{Setup}(1^\lambda):$ Takes in a security parameter and outputs a key in $\zo^{c(\lambda)}$.
        \item $F(K, x):$\footnote{We overload the notation and write $F$ to both denote  the function itself and the evaluation algorithm.} Takes in a key and an input, outputs an evaluation of the PRF.
        \item $F.\mathsf{Puncture}(K, S):$ Takes as input a key and a set $S \subseteq \zo^{m(\lambda)}$, outputs a punctured key.
    \end{itemize}
    
    We require the following.
    \paragraph{Correctness.} For all efficient distributions $\mathcal{D}(1^\lambda)$ over the power set $2^{\zo^{m(\lambda)}}$, we require
    \begin{equation*}
        \Pr[\forall x \not\in S~~F(K_S, x) = F(K, x): \begin{array}{c}
              S \samp \mathcal{D}(1^\lambda) \\
              K \samp \keygen(1^\lambda) \\
              K_S \samp \mathsf{Puncture}(K, S)
        \end{array}] = 1.
    \end{equation*}
    \paragraph{Puncturing Security} We require that any stateful QPT adversary $\adve$ wins the following game with probability at most $1/2 + \negl(\lambda)$.
    \begin{enumerate}
        \item $\adve$ outputs a set $S$.
        \item The challenger samples $K \samp \keygen(1^\lambda)$ and $ K_S \samp \mathsf{Puncture}(K, S)$
        \item The challenger samples $b \samp \zo$. If $b = 0$, the challenger submits $K_S, \{F(K, x)\}_{x \in S}$ to the adversary. Otherwise, it submits $K_S, \{y_s\}_{s \in S}$ to  the adversary where $y_s \samp \zo^{n(\lambda)}$ for all $s \in S$.
        \item The adversary outputs a guess $b'$ and we say that the adversary has won if $b' = b$.
    \end{enumerate}
\end{definition}

\begin{definition}[Injective PRF \cite{SW14}]\label{defn:injectprf}
    A PRF family $F$ is said to be \emph{statistically injective} with failure probability $\eps(\lambda)$ if, with probability $1 - \eps(\lambda)$ over the sampling of the key $K$, the function $F(K, \cdot)$ is injective.
\end{definition}
\begin{definition}[Extracting PRF \cite{SW14}]\label{defn:extractprf}
    A PRF family $F$ with input space $\zo^{m(\lambda)}$ and output space $\zo^{n(\lambda)}$ is said to be extracting with error $\eps(\lambda)$ for min-entropy $k(\lambda)$ if for any distribution of $X$ over $\zo^{m(\lambda)}$, we have $(K, F(K, X)) \approx_{\eps(\lambda)} (K, U)$ where $K \samp F.\mathsf{Setup}(1^\lambda)$ and $U$ is sampled uniformly at random from $\zo^{n(\lambda)}$.
\end{definition}
\begin{theorem}[\cite{SW14, GGM86, Z12}]\label{thm:puncprfexists}
Let $n(\cdot), m(\cdot), e(\lambda), k(\lambda)$ be efficiently computable functions.
\begin{itemize}
    \item If (post-quantum) one-way functions exist, then there exists a (post-quantum) puncturable PRF with input space $\zo^{m(\lambda)}$ and output space $\zo^{n(\lambda)}$.

\item If we assume subexponentially-secure (post-quantum) one-way functions exist, then for any $c > 0$, there exists a (post-quantum) $2^{-\lambda^c}$-secure\footnote{While the original results are for negligible security against polynomial time adversaries, it is easy to see that they carry over to subexponential security. Further, by scaling the security parameter by a polynomial and simple input/output conversions, subexponentially secure (for any exponent $c'$) one-way functions is sufficient to construct for any $c$ a puncturable PRF that is $2^{-\lambda^c}$-secure.} puncturable PRF against subexponential time adversaries with input space $\zo^{m(\lambda)}$ and output space $\zo^{n(\lambda)}$.

\item If (post-quantum) one-way functions exist, then there exists a puncturable extracting PRF with error $2^{-e(\lambda)}$ for min-entropy $k(\lambda)$, with input space $\zo^{m(\lambda)}$ and output space $\zo^{n(\lambda)}$, if $m(\lambda) \geq k(\lambda) \geq n(\lambda) + 2\cdot e(\lambda) + 2$. The same result follows for the subexponential case as above.

\item  If (post-quantum) one-way functions exist, then there exists a puncturable statistically injective PRF with error $2^{-e(\lambda)}$, with input space $\zo^{m(\lambda)}$ and output space $\zo^{n(\lambda)}$, if $n(\lambda) \geq 2\cdot m(\lambda) + e(\lambda)$. The same result follows for the subexponential case as above.
\end{itemize}
\end{theorem}

We also introduce PRFs with prefix puncturing, similar to puncturable PRFs and prefix constrained PRF keys \cite{boneh2013constrained}\footnote{In a prefix constrained PRF key, one requires that given the constrained key, any input $x$ that starts with the prefix can be evaluated, and all other PRF output values remain pseudorandom. In our setting we will require the opposite: for any input that starts with the prefix, the output will remain pseudorandom, while other inputs can be evaluated using the punctured key.}.
\begin{definition}\label{defn:prepuncpf}
    A prefix puncturable pseudorandom function (PRF) is a PRF $\{F: \zo^{c(
\lambda)} \times \zo^{m(\lambda)} \to \zo^{n(\lambda)}\}_{\lambda \in \N^+}$ with the following additional algorithm.
    \begin{itemize}
        \item $F.\mathsf{Puncture}(K, pre):$ Takes as input a key and a prefix $pre$ of length at most $m(\lambda)$, outputs a punctured key.
    \end{itemize}
    
    We require the following.
    \paragraph{Correctness.} For all efficient distributions $\mathcal{D}(1^\lambda)$ over the set $\bigcup_{\ell \leq m(\lambda)}{\zo^{\ell}} \otimes \zo^{m(\lambda)}$, we require
    \begin{equation*}
        \Pr[pre\text{~is a prefix of~}x \bigvee F(K\{pre||\cdot\}, x) = F(K, x): \begin{array}{c}
              (pre, x) \samp \mathcal{D}(1^\lambda) \\
              K \samp \keygen(1^\lambda) \\
              K\{pre||\cdot\} \samp \mathsf{Puncture}(K, pre)
        \end{array}] = 1.
    \end{equation*}
    \paragraph{Puncturing Security} We require that any stateful QPT adversary $\adve$ wins the following game with probability at most $1/2 + \negl(\lambda)$.
    \begin{enumerate}
        \item $\adve$ outputs a prefix $pre$  of length at most $m(\lambda)$ and a string $x \in \zo^{m(\lambda)}$ such that $pre$ is a prefix of $x$.
        \item The challenger samples $K \samp \keygen(1^\lambda)$ and $ K\{pre || \cdot\} \samp \mathsf{Puncture}(K, pre)$.
        \item The challenger samples $b \samp \zo$. If $b = 0$, the challenger submits $K\{pre || \cdot\}, F(K, x)$ to the adversary. Otherwise, it submits $K\{pre || \cdot\}, y$ to  the adversary where $y \samp \zo^{n(\lambda)}$.
        \item The adversary outputs a guess $b'$ and we say that the adversary has won if $b' = b$.
    \end{enumerate}
\end{definition}
\begin{theorem}\label{thm:prepuncprfexists}
Let $n(\cdot), m(\cdot), e(\lambda), k(\lambda)$ be efficiently computable functions.
 If (post-quantum) one-way functions exist, then there exists a prefix puncturable extracting PRF with error $2^{-e(\lambda)}$ for min-entropy $k(\lambda)$, with input space $\zo^{m(\lambda)}$ and output space $\zo^{n(\lambda)}$, if $m(\lambda) \geq k(\lambda) \geq n(\lambda) + 2\cdot e(\lambda) + 2$. The same result follows for the subexponential case as above.
\end{theorem}
The above theorem follows in two steps. First, we can obtain a prefix puncturable PRF using the GGM construction \cite{GGM86} (which is post-quantum secure \cite{Z12}): we partially open the evaluation tree on the key $K$ according to $pre = b_1\dots b_\ell$, and then output the keys for leaves $\overline{b_1}, b_1\overline{b_2}, b_1b_2\overline{b_3}, \dots, b_1\cdots b_{\ell- 1}\overline{b_\ell}$. Then, by an application of \cite[Theorem~3]{SW14} we can make it extracting.
\subsection{Indistinguishability Obfuscation}
In this section, we introduce indistinguishability obfuscation.
\begin{definition}
    An indistinguishability obfuscation scheme $\io$ for a class of circuits $\mathcal{C} = \{\mathcal{C}_\lambda\}_\lambda$ satisfies the following.
    \paragraph{Correctness.} For all $\lambda, C \in \mathcal{C}_\lambda$ and inputs $x$,
    $\Pr[\Tilde{C}(x) = C(x): \Tilde{C} \samp \io(1^\lambda, C)] = 1$.

\end{definition}
    \paragraph{Security.} Let $\mathcal{B}$ be any QPT algorithm that outputs two circuits $C_0, C_1 \in \mathcal{C}$ of the same size, along with auxiliary information, such that $\Pr[\forall x ~ C_0(x)=C_1(x) : (C_0, C_1, \regi{aux}) \samp \mathcal{B}(1^\lambda)] \geq 1 - \negl(\lambda)$. Then, for any QPT adversary $\mathcal{A}$,
    \begin{align*}
      \bigg|&\Pr[\adve(\io(1^\lambda, C_0), \regi{aux}) = 1 :  (C_0, C_1, \regi{aux}) \samp \mathcal{B}(1^\lambda)] -\\ &\Pr[\adve(\io(1^\lambda, C_1), \regi{aux}) = 1 : (C_0, C_1, \regi{aux}) \samp \mathcal{B}(1^\lambda)]\bigg| \leq \negl(\lambda).  
    \end{align*}

\subsection{Functional Encryption}

In this section we introduce the basic definitions of functional encryption schemes.

\begin{definition}[Functional encryption]\label{defn:fe}
    A functional encryption scheme for a class of functions $\mathfrak{F}$ consists of the following algorithms that satisfy the correctness and security guarantees below.
    \begin{itemize}
        \item $\mathsf{Setup}(1^\lambda)$: Outputs a master secret key $msk$ and a public key $pk$.
        \item $\mathsf{KeyGen}(msk, f)$: Takes in the master secret key and a function $f \in \mathfrak{F}$, outputs a functional key for $f$.
        \item $\mathsf{Enc}(pk, m)$: Takes in the public key and a message $m$, outputs an encryption of $m$.
        \item $\mathsf{Dec}(sk, ct)$: Takes in a functional key $sk$ and a ciphertext, outputs a message or $\perp$.
    \end{itemize}
    \paragraph{Correctness}
    For all functions $f \in \mathfrak{F}$ and all messages $m$, we require the following.
    \begin{equation*}
        \Pr[\mathsf{Dec}(sk, ct) = f(m) : \begin{array}{c}
             msk, pk \samp \mathsf{Setup}(1) \\
             sk \samp \mathsf{KeyGen}(msk, f) \\
             ct \samp \mathsf{Enc}(pk, m)
        \end{array}] = 1.
    \end{equation*}
    \paragraph{Adaptive indistinguishability security}
    Consider the following game between a challenger and an adversary $\adve$.
    \paragraph{$\underline{\mathsf{FE-IND}(\lambda, \adve)}$}
    \begin{enumerate}
        \item Challenger samples the keys $msk, pk \samp \mathsf{Setup}(1)$.
        \item The adversary receives $pk$. It makes polynomially many queries by sending functions $f \in \mathcal{F}$ and receiving the corresponding functional key $sk_f \samp \mathsf{KeyGen}(msk, f)$.
        \item The adversary outputs challenge messages $m_0, m_1$.
        \item The challenger samples a challenge bit $b \samp \zo$ and prepares $ct \samp \mathsf{Enc}(pk, m_b)$.
        \item The adversary receives $ct$, and it makes polynomially many functional key queries.
        \item The adversary outputs a guess $b'$.
        \item The challenger checks if $f(m_0) = f(m_1)$ for all $f$ queried by the adversary. If not, it outputs $0$ and terminates.
        \item The challenger outputs $1$ if $b' = b$.
    \end{enumerate}
    We require that for any QPT adversary $\adve$,
    $$
    \Pr[\mathsf{FE-IND}(\lambda, \adve) = 1] \leq \frac{1}{2} + \negl(\lambda). 
    $$
    
    If the adversary outputs the challenge messages before the keys are sampled, we call it \emph{selective indistinguishability security}.
\end{definition}

\subsection{Quantum Information Theory}
In this section, we present various technical lemmas regarding quantum information theory.
\begin{lemma}[Almost As Good As New Lemma \cite{aarlemma}, verbatim]\label{lem:asgoodasnew}
    Let $\rho$ be a mixed state acting on $\C^{d}$. Let $U$ be a unitary and $(\Pi_0, \Pi_1 = I - \Pi_0)$ be projectors all acting on $\C^{d} \otimes \C^{d'}$. We interpret $(U, \Pi_0, \Pi_1)$ as a measurement performed by appending an ancillary system of dimension $d'$ in the state $\ketbra{0}{0}$, applying $U$ and then performing the projective measurement $\Pi_0, \Pi_1$ on the larger system. Assuming that the outcome corresponding to $\Pi_0$ has probability $1 - \eps$, we have
    \begin{equation*}
        \trd{\rho}{\rho'} \leq \sqrt{\eps}
    \end{equation*}
    where $\rho'$ is the state after performing the measurement, undoing the unitary $U$ and tracing out the ancillary system.
\end{lemma}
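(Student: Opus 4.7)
The plan is to reduce to pure initial states by convexity, compute the joint-system trace distance on the nose in the pure case, and finish by monotonicity of trace distance under the partial trace.

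For the reduction, write $\rho = \sum_i p_i \ketbra{\psi_i}{\psi_i}$, and let $\eps_i$ be the probability of the $\Pi_1$ outcome when the procedure is run on $\ket{\psi_i}$ and $\rho'_i$ the corresponding post-measurement-and-uncompute state (after tracing out the ancilla). By linearity $\rho' = \sum_i p_i \rho'_i$ and $\eps = \sum_i p_i \eps_i$. Combining the triangle inequality with Jensen's inequality for the concave function $\sqrt{\cdot}$,
\[
\trd{\rho}{\rho'} \;\leq\; \sum_i p_i \trd{\ketbra{\psi_i}{\psi_i}}{\rho'_i} \;\leq\; \sum_i p_i \sqrt{\eps_i} \;\leq\; \sqrt{\sum_i p_i \eps_i} \;=\; \sqrt{\eps},
\]
so it suffices to establish the bound assuming $\rho = \ketbra{\psi}{\psi}$ is pure.

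For the pure case, set $\ket{\varphi} := U(\ket{\psi}\otimes\ket{0})$ and decompose $\ket{\varphi} = \ket{\varphi_0} + \ket{\varphi_1}$ where $\ket{\varphi_b} := \Pi_b\ket{\varphi}$. Since $\Pi_0\Pi_1 = 0$, these two vectors are orthogonal, with squared norms $1-\eps$ and $\eps$ respectively. Let $\sigma := U^\dagger(\ketbra{\varphi_0}{\varphi_0} + \ketbra{\varphi_1}{\varphi_1})U$ denote the joint-system state after performing the measurement and undoing $U$, so that $\rho\otimes\ketbra{0}{0} = U^\dagger\ketbra{\varphi}{\varphi}U$ and the difference is $U^\dagger(\ketbra{\varphi_0}{\varphi_1} + \ketbra{\varphi_1}{\varphi_0})U$. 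This cross-term is a rank-two Hermitian operator supported on the two-dimensional subspace spanned by $\ket{\varphi_0},\ket{\varphi_1}$; in the orthonormal basis given by the normalizations of those vectors, it is exactly $\sqrt{\eps(1-\eps)}$ times the Pauli-$X$ matrix, so its eigenvalues are $\pm\sqrt{\eps(1-\eps)}$. Since unitary conjugation by $U^\dagger$ preserves the trace distance, this yields $\trd{\rho\otimes\ketbra{0}{0}}{\sigma} = \sqrt{\eps(1-\eps)} \leq \sqrt{\eps}$.

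To conclude, I would apply monotonicity of trace distance under the partial trace over the ancilla system, obtaining $\trd{\rho}{\rho'} \leq \trd{\rho\otimes\ketbra{0}{0}}{\sigma} \leq \sqrt{\eps}$. There is no serious obstacle; this is a textbook gentle-measurement calculation in the spirit of Winter. The two points requiring minor care are (i) the Jensen step, which needs the probabilities $\eps_i$ to average (linearly in $p_i$) to $\eps$, and (ii) the elementary diagonalization of the rank-two cross term, which is what gives the exact constant $\sqrt{\eps(1-\eps)}$ and hence the slack to weaken it to the stated bound $\sqrt{\eps}$.
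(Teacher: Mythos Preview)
The paper does not supply its own proof of this lemma; it is quoted verbatim as a cited result from \cite{aarlemma}. Your argument is correct and is essentially the standard proof: purify by convexity and Jensen, compute the joint-system trace distance exactly as $\sqrt{\eps(1-\eps)}$ via the rank-two cross term, and descend to the reduced state by monotonicity under partial trace.
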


We sometimes also use the following related result.
\begin{lemma}[Gentle Measurement Lemma \cite{Wildelec}]\label{lem:gentlemes}
    Let $E$ be a POVM element and $\rho$ be a state of appropriate dimension. Suppose the outcome $E$ has a high probability of occuring, that is, $\Tr{E\rho} \geq 1 - \eps$. Then, if we apply a canonical implementation, $\sqrt{E}$, of this measurement, the post-measurement state conditioned on this outcome is close to the original state:
    \begin{equation*}
        \trd{\rho}{\frac{\sqrt{E}\rho\sqrt{E}}{\Tr{E\rho}}} \leq \sqrt{\eps}.
   \end{equation*}
\end{lemma}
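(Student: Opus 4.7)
The plan is to first establish the inequality for pure states via a direct fidelity calculation, and then lift it to mixed states via purification combined with monotonicity of the trace distance under partial trace.

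For the pure-state case, I would take $\rho = \ket{\psi}\bra{\psi}$ and write $\alpha := \bra{\psi}E\ket{\psi} \geq 1 - \eps$, so that the normalized post-measurement state is $\ket{\phi} := \sqrt{E}\ket{\psi}/\sqrt{\alpha}$. The key ingredient is the operator inequality $\sqrt{E} \geq E$, which holds by functional calculus since $0 \leq E \leq I$ implies $\sqrt{x} \geq x$ pointwise on the spectrum of $E$. This yields $\bra{\psi}\sqrt{E}\ket{\psi} \geq \alpha$, and hence
\[
|\langle \psi | \phi \rangle|^2 \;=\; \frac{\bra{\psi}\sqrt{E}\ket{\psi}^2}{\alpha} \;\geq\; \alpha \;\geq\; 1 - \eps.
\]
Combined with the standard identity $\trd{\ket{\psi}\bra{\psi}}{\ket{\phi}\bra{\phi}} = \sqrt{1 - |\langle \psi | \phi \rangle|^2}$ for the trace distance between two pure states, this delivers the target bound of $\sqrt{\eps}$ in the pure case.

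For the mixed-state case, I would purify $\rho$ as $\rho = \Tr_R[\ket{\Psi}\bra{\Psi}]$ on a joint system $A \otimes R$, with $E$ acting only on $A$. Note $\bra{\Psi}(E \otimes I_R)\ket{\Psi} = \Tr[E \rho] \geq 1 - \eps$, and since $\sqrt{E \otimes I_R} = \sqrt{E} \otimes I_R$, applying the canonical implementation on the purification yields the state $(\sqrt{E}\otimes I_R)\ket{\Psi}/\sqrt{\Tr[E\rho]}$, whose reduced state on $A$ is exactly $\sqrt{E}\rho\sqrt{E}/\Tr[E\rho]$. Applying the pure-state bound to the joint system gives a trace distance of at most $\sqrt{\eps}$ between $\ket{\Psi}\bra{\Psi}$ and its post-measurement counterpart; tracing out $R$ then preserves this bound by monotonicity of the trace distance under CPTP maps, yielding the statement for $\rho$.

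The proof is essentially routine once one has the operator inequality $\sqrt{E} \geq E$. The only subtlety worth flagging is the normalization bookkeeping: one must carefully track the factor $\sqrt{\Tr[E \rho]}$ when passing from the unnormalized inner product $\bra{\psi}\sqrt{E}\ket{\psi}$ to the normalized fidelity $|\langle\psi|\phi\rangle|$, and ensure that the convention used for $\trd{\cdot}{\cdot}$ (namely $\tfrac{1}{2}\|\cdot\|_1$) lines up with the pure-state fidelity-to-trace-distance formula invoked above.
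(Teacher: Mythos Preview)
Your proof is correct and is essentially the standard textbook argument for the Gentle Measurement Lemma. Note, however, that the paper does not supply its own proof of this statement: it simply cites \cite{Wildelec} and uses the result as a black box, so there is no paper-side argument to compare against. Your write-up matches the usual proof one finds in the cited reference.
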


\begin{theorem}[Implementation Independence of Measurements on Bipartite States]\label{thm:impindep}
    Let $\Lambda = \{M_i\}_{i \in \mathcal{I}}, \Lambda' = \{E_i\}_{i \in \mathcal{I}}$ be two general measurements whose POVMs are equivalent, that is, $M_i^{\dagger}M_i = E_i^{\dagger}E_i$ for all $i \in \mathcal{I}$.

    Let $\rho$ be any bipartite state whose first register has the appropriate dimension for $\Lambda, \Lambda'$. Then, the post-measurement state of the second register conditioned on any outcome $i \in \mathcal{I}$ is the same when either $\Lambda$ or $\Lambda'$ is applied to the first register of $\rho$. That is, \begin{equation*}
        (\Tr\otimes I)\frac{(M_i\otimes I)\rho(M_i^\dagger\otimes I)}{\Tr{(M_i\otimes I)\rho(M_i^\dagger\otimes I)}} =  (\Tr\otimes I)\frac{(E_i\otimes I)\rho(E_i^\dagger\otimes I)}{\Tr{(E_i\otimes I)\rho(E_i^\dagger\otimes I)}}
    \end{equation*}
\end{theorem}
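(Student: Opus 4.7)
The plan is to prove this via polar decomposition. The hypothesis $M_i^\dagger M_i = E_i^\dagger E_i$ tells us that $M_i$ and $E_i$ have identical ``length'' but may differ by a unitary rotation on the image. Concretely, let $P_i := \sqrt{M_i^\dagger M_i} = \sqrt{E_i^\dagger E_i}$. By polar decomposition we can write $M_i = W_i P_i$ and $E_i = V_i P_i$, where $W_i, V_i$ are partial isometries that can be extended (by arbitrary action on the kernel of $P_i$) to unitaries on the first register. Setting $U_i := W_i V_i^\dagger$, which is a unitary, I would first argue that $M_i = U_i E_i$: on the support of $P_i$ the identity holds by construction, while on the kernel both sides vanish since $M_i$ and $E_i$ both annihilate $\ker(P_i)$.

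Once this identification is established, the rest of the argument is a direct calculation. The unnormalized post-measurement joint state on the first side is
\begin{equation*}
(M_i \otimes I)\rho(M_i^\dagger \otimes I) = (U_i \otimes I)(E_i \otimes I)\rho(E_i^\dagger \otimes I)(U_i^\dagger \otimes I).
\end{equation*}
Taking the partial trace over the first register and exploiting its cyclicity (which is valid because $U_i$ acts purely on the register being traced out) collapses the conjugation by $U_i \otimes I$, yielding
\begin{equation*}
(\Tr \otimes I)\left[(M_i \otimes I)\rho(M_i^\dagger \otimes I)\right] = (\Tr \otimes I)\left[(E_i \otimes I)\rho(E_i^\dagger \otimes I)\right].
\end{equation*}
The normalization factors agree as well, since both equal $\Tr\left[(M_i^\dagger M_i \otimes I)\rho\right] = \Tr\left[(E_i^\dagger E_i \otimes I)\rho\right]$ by the POVM equality. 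Dividing through gives the claimed equality of reduced states.

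The only subtle point, and the step I would treat most carefully, is the extension of the polar-decomposition isometries $W_i, V_i$ to genuine unitaries, since $P_i$ may be singular. In finite dimensions this is immediate by choosing any unitary extension on the kernel, and the particular choice does not matter because both $M_i$ and $E_i$ vanish identically there; the resulting relation $M_i = U_i E_i$ is independent of the extension. Everything else reduces to the cyclic property of the trace applied to the first tensor factor, which is what makes the second register oblivious to the implementation choice $\Lambda$ vs.\ $\Lambda'$.
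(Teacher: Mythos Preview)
Your proof is correct and takes a genuinely different route from the paper's. The paper argues by direct computation: it expands $\rho$ in a product basis, applies $M_i\otimes I$, takes the partial trace, and observes that the resulting expression on the second register depends on $M_i$ only through the combination $M_i^\dagger M_i$; the same calculation for $E_i$ then gives equality by hypothesis. Your argument instead uses polar decomposition to exhibit a unitary $U_i$ with $M_i = U_i E_i$, and then invokes invariance of the partial trace under a local unitary on the register being traced out. Your approach is more structural---it makes transparent that two implementations of the same POVM differ only by a unitary ``rotation'' invisible to the other register---while the paper's is a bare-hands coordinate calculation. Both are short; yours has the mild extra hypothesis that $M_i$ and $E_i$ are square (or can be embedded in a common output space) so that the polar isometries extend to unitaries, whereas the paper's trace computation goes through regardless of output dimension.
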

\begin{proof}
See \cref{appn:impindep}.
\end{proof}

\begin{lemma}[Trace Distance Conditioned on Measurement Outcome]\label{lem:postmesdistlem}
        Let $M = \{M_i\}_{i \in \mathcal{I}}$ be a general measurement and $\rho, \sigma$ be two states of appropriate dimension such that $\trd{\rho}{\sigma} \leq \eps$.  Let $p_i$ denote probability of outcome $i$ when $M$ is applied to $\rho$, that is $p_i = \Tr{M_i\rho}$. Then,
        \begin{equation*}
            \trd{\rho'}{\sigma'} \leq \frac{3\eps}{2p_i}
        \end{equation*}
        where $\rho', \sigma'$ are post-measurement states conditioned on outcome $i$ when the measurement $M$ is applied to $\rho, \sigma$, respectively. That is, $\rho' = \frac{M_i\rho M_i^\dagger}{\Tr{M_i\rho M_i^\dagger}}$ and $\sigma' = \frac{M_i\sigma M_i^\dagger}{\Tr{M_i\sigma M_i^\dagger}}$.
\end{lemma}
\begin{proof}
    See \cref{appn:postmesdistlem}.
\end{proof}

\begin{theorem}\label{thm:simulproj}
    Let $\rho$ be a bipartite state and $\Lambda = \{\Pi_1, \dots \}, \Lambda' = \{\Pi'_1, \dots \}$ be two projective measurements over each of these registers, respectively. Suppose \begin{equation*}
        \Tr{\Pi_1\otimes \Pi'_1 \rho} \geq 1 - \eps.
    \end{equation*}
    Let $M = \{M_i\}_{i \in \mathcal{I}}$ be a general measurement over the first register and fix any $i \in \mathcal{I}$. Let $\tau$ denote the post-measurement state of the second register after applying the measurement $M$ on the first register of $\rho$ and conditioned on obtaining outcome $i$. Let $p_i$ denote probability of outcome $i$, that is $p_i = \Tr{(M_i \otimes I)\rho(M_i^\dagger \otimes I)}$. Then, \begin{equation*}
                \Tr{\Pi'_1 \tau} \geq 1 - \frac{3\sqrt{\eps}}{2p_i}.
    \end{equation*}
\end{theorem}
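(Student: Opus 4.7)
The plan is to avoid any trace-distance or gentle-measurement overhead and exploit the hypothesis $\Tr{(\Pi_1 \otimes \Pi'_1)\rho} \geq 1 - \eps$ directly through an operator inequality. First I would rewrite the target quantity as a trace on the joint state $\rho$ itself: using the standard partial-trace identity $\Tr{Y \cdot (\Tr \otimes I)\sigma_{AB}} = \Tr{(I \otimes Y)\sigma_{AB}}$ together with cyclicity,
\begin{equation*}
p_i \cdot \Tr{\Pi'_1 \tau} \;=\; \Tr{(I \otimes \Pi'_1)(M_i \otimes I)\rho(M_i^\dagger \otimes I)} \;=\; \Tr{(M_i^\dagger M_i \otimes \Pi'_1)\rho}.
\end{equation*}
Subtracting this from $p_i = \Tr{(M_i^\dagger M_i \otimes I)\rho}$ would then give
\begin{equation*}
p_i\,(1 - \Tr{\Pi'_1 \tau}) \;=\; \Tr{\bigl(M_i^\dagger M_i \otimes (I - \Pi'_1)\bigr)\rho}.
\end{equation*}

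The key step is then an operator-order bound. Since $M$ is a measurement, $\sum_j M_j^\dagger M_j = I$ forces $M_i^\dagger M_i \preceq I$; tensoring with the psd operator $I - \Pi'_1$ preserves the order, so $M_i^\dagger M_i \otimes (I - \Pi'_1) \preceq I \otimes (I - \Pi'_1)$. Pairing this against the psd operator $\rho$ under the trace, and noting that $I \otimes \Pi'_1 \succeq \Pi_1 \otimes \Pi'_1$, the hypothesis gives
\begin{equation*}
p_i\,(1 - \Tr{\Pi'_1 \tau}) \;\leq\; 1 - \Tr{(I \otimes \Pi'_1)\rho} \;\leq\; 1 - \Tr{(\Pi_1 \otimes \Pi'_1)\rho} \;\leq\; \eps.
\end{equation*}
Rearranging yields $\Tr{\Pi'_1 \tau} \geq 1 - \eps/p_i$, which is in fact strictly stronger than the claimed $1 - \sqrt{\eps}/p_i$ whenever $\eps \in [0,1]$, so the statement follows a fortiori.

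I do not foresee a real obstacle; the only step needing a moment of care is the tensor-order inequality $A \otimes C \preceq B \otimes C$ for $A \preceq B$ and $C \succeq 0$, which is immediate since $B \otimes C - A \otimes C = (B-A) \otimes C$ is a tensor of two psd operators. A conceptually smoother but looser alternative plan would be to apply Lemma~\ref{lem:gentlemes} to the projector $\Pi_1 \otimes \Pi'_1$ to produce a state $\bar\rho$, supported in the range of $\Pi_1 \otimes \Pi'_1$, with $\trd{\rho}{\bar\rho} \leq \sqrt{\eps}$, and then transfer this closeness to the post-measurement states via contractivity of trace distance under the CP map $\sigma \mapsto (M_i \otimes I)\sigma(M_i^\dagger \otimes I)$; this recovers exactly the stated $\sqrt{\eps}/p_i$ bound but wastes a square root compared with the direct route.
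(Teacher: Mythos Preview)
Your proof is correct and in fact strictly sharper than what the paper proves: you obtain $\Tr{\Pi'_1 \tau} \geq 1 - \eps/p_i$, whereas the paper only establishes $1 - \sqrt{\eps}/p_i$. The route is also genuinely different. The paper first handles the $\eps = 0$ case by a basis argument for pure states (writing $\ket{\psi}$ in the joint eigenbasis of $\Pi_1$ and $\Pi'_1$ and observing that any $M_i$ on the first register leaves the second register supported on the range of $\Pi'_1$), and then reduces the general case to this via the Gentle Measurement Lemma: collapse onto $\Pi_1\otimes\Pi'_1$, incur a $\sqrt{\eps}$ trace-distance penalty, and divide by $p_i$ when conditioning. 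This is exactly the ``looser alternative'' you sketch at the end of your proposal.

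Your direct operator-inequality argument bypasses both the purification step and the gentle-measurement detour. The gain is not merely cosmetic: by never passing through trace distance you avoid the square root entirely, and you also never use the projector $\Pi_1$ on the first register except through the trivial bound $\Pi_1\otimes\Pi'_1 \preceq I\otimes\Pi'_1$ --- so the hypothesis could be weakened to $\Tr{(I\otimes\Pi'_1)\rho}\geq 1-\eps$ with no change. The paper's approach, by contrast, makes the two-step structure (exact case plus perturbation) explicit, which is perhaps pedagogically clearer but quantitatively wasteful here.
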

\begin{proof}
See \cref{appn:simulproj}.
\end{proof}

\begin{theorem}[Quantum Union Bound for Commuting Projectors]\label{thm:qub}
    Let $\Pi_1, \dots, \Pi_n$ be a set of commuting projectors. Then, for any state $\rho$ of appropriate dimension,
    \begin{equation*}
        \Tr[(I - \Pi_1\dots\Pi_n) \rho] \leq \sum_{i \in [n]}\Tr[(I - \Pi_i)\rho].
    \end{equation*}
\end{theorem}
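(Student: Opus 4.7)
The plan is to first establish the operator inequality
\begin{equation*}
I - \Pi_1 \Pi_2 \cdots \Pi_n \;\preceq\; \sum_{i=1}^n (I - \Pi_i),
\end{equation*}
and then derive the stated trace inequality by taking the trace against $\rho$ and using linearity together with the fact that $\rho$ is positive semidefinite (so $\Tr[A\rho] \leq \Tr[B\rho]$ whenever $A \preceq B$).

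To prove the operator inequality, I would use the fact that a family of commuting projectors is simultaneously diagonalizable. Pick an orthonormal basis in which every $\Pi_i$ is diagonal with $0$/$1$ entries. In this basis, $\Pi_1 \cdots \Pi_n$ is the diagonal projector whose entry at position $k$ is $1$ iff every $\Pi_i$ has entry $1$ at position $k$, so $I - \Pi_1 \cdots \Pi_n$ has entry $1$ at position $k$ iff at least one $\Pi_i$ has a $0$ there. On the other hand, the diagonal of $\sum_i (I - \Pi_i)$ at position $k$ counts exactly how many $\Pi_i$ have a $0$ there, so it is at least $1$ whenever the left-hand side entry is $1$ and $0$ otherwise. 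Hence the operator inequality holds entrywise in this common eigenbasis, and therefore as an inequality of self-adjoint operators.

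Alternatively (and perhaps cleaner to write out), the same inequality follows by a short induction on $n$. The base case $n = 1$ is immediate. For the inductive step, write
\begin{equation*}
I - \Pi_1\cdots\Pi_n \;=\; (I - \Pi_n) + \Pi_n\bigl(I - \Pi_1\cdots\Pi_{n-1}\bigr),
\end{equation*}
which uses that all $\Pi_i$ commute (so $\Pi_n$ commutes with $\Pi_1\cdots\Pi_{n-1}$ and the right-hand term is a product of commuting positive operators). Since $0 \preceq \Pi_n \preceq I$ and $\Pi_n$ commutes with $I - \Pi_1\cdots\Pi_{n-1} \succeq 0$, we get $\Pi_n(I - \Pi_1\cdots\Pi_{n-1}) \preceq I - \Pi_1\cdots\Pi_{n-1}$. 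Applying the inductive hypothesis to this second term gives the desired bound $I - \Pi_1\cdots\Pi_n \preceq \sum_{i=1}^n (I - \Pi_i)$. Tracing against $\rho$ completes the proof.

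The only subtle point to get right is the use of commutativity when writing $\Pi_n(I - \Pi_1\cdots\Pi_{n-1})$ as a positive operator bounded by $I - \Pi_1\cdots\Pi_{n-1}$: without commutativity, the product of two positive operators need not be positive, and the bound can fail. Apart from this, the argument is routine, so no further obstacle is expected.
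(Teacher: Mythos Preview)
Your proposal is correct and essentially matches the paper's proof: the paper also uses simultaneous diagonalization of the commuting projectors (working out the $n=2$ case explicitly in a common eigenbasis) and then reduces the general case by induction, exactly your two suggested routes. The only cosmetic difference is that the paper phrases the argument as a trace inequality for pure states (extending by convexity) rather than as the state-independent operator inequality $I - \Pi_1\cdots\Pi_n \preceq \sum_i (I - \Pi_i)$; your formulation is arguably cleaner.
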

\begin{proof}
    While this is a folklore result, we give a proof in \cref{appn:qubproof} for completeness.
\end{proof}

\begin{definition}[Query Algorithm]
    Let $\mathcal{O}$ be a function. A query algorithm $\adve$ with oracle access to $\mathcal{O}$ is defined by an evolution unitary $U_{\adve}$ of $\adve$, and we also define the oracle unitary $U_{\mathcal{O}}$ as $U_{\mathcal{O}}: \ket{w,x,y} \to \ket{w,x,\mathcal{O}(x)\oplus y}$, with the registers ordered as the working register of $\adve$, the query input register and the query output register. $\adve$ is executed by applying $U_\adve$ and then $U_{\mathcal{O}}$ in sequence, e.g., the final state of the algorithm is $(U_{\adve}U_{\mathcal{O}})^T \ket{\psi}$ for an algorithm with initial state $\ket{\psi}$ that makes $T$ queries. If the algorithm has classical output, the output is obtained by measuring (a part of) the working register at the end.
\end{definition}

\begin{theorem}[\cite{bbbv97}]\label{bbbv}

Let $\adve$ be a quantum algorithm making queries to an oracle $\mathcal{O}$. Let $\ket{\psi_t} = \sum_{w,x,y}\alpha_{w,x,y}\ket{w,x,y}$ denote the joint state of the working register, the query input register and the query output register of the algorithm right before the $t$-th query. For a subset $S$ of the domain of $\mathcal{O}$, let $q_S(\ket{\psi_t}) = \sum_{x \in S} |\alpha_{w,x,y}|^2$ and $q_S = \sum_{t} q_s(\ket{\psi_t})$, and call $q_S$ the query weight of $S$.
Let $\mathcal{O}'$ be another oracle whose output differs from $\mathcal{O}$ only on points $x \in S$. Then, if $\adve$ makes $T$ queries to the oracle $\mathcal{O}$ and $S$ is a subset such that $q_S \leq \eps^2/T$, we have $\trd{\ketbra{\psi}{\psi}}{\ketbra{\psi'}{\psi'}} \leq \eps$ where $\ket{\psi},\ket{\psi'}$ denote the final state of the algorithm $\adve$ when given access to the oracles $\mathcal{O}, \mathcal{O}'$ respectively.
\end{theorem}

\subsection{Compute-and-Compare Obfuscation}
In this section, we introduce compute-and-compare obfuscation.
\begin{definition}[Compute-and-compare program]
     Let $f: \zo^{a(\lambda)} \to \zo^{b(\lambda)}$ be a function, $y \in \zo^{b(\lambda)}$ be a target value and $z$ a hidden message. The following program $P$, described by $(f, y, z)$, is called a \emph{compute-and-compare program.}

    \paragraph{$P(x):$}
    Compute $f(x)$ and compare it to $y$. If they are equal, output $z$. Otherwise, output $\perp$.
\end{definition}

We say that a distribution $\mathcal{D}$ of such programs (along with quantum auxiliary information $\regi{aux}$) is sub-exponentially unpredictable if for any QPT adversary, given the auxiliary information $\regi{aux}$ and the description of $f$, the adversary can predict the target value $y$ with at most subexponential probability.

\begin{definition}\label{defn:ccobf}
A compute-and-compare obfuscation scheme for a class of distributions  consists of efficient algorithms $\mathsf{CCObf.Obf}$ and $\mathsf{CCObf.Sim}$ that satisfy the following. Consider any distribution $\mathcal{D}$ over compute-and-compare programs, along with quantum auxiliary input, in this class.

\paragraph{Correctness.}
For any function $(f, y, z)$ in the support of $\mathcal{D}$,
$\Pr[\forall x ~ D'(x) = D(x) : D' \samp \mathsf{CCObf.Obf}(f, y, z)] \geq 1 - \negl(\lambda)$.

\paragraph{Security}
$(\mathsf{CCObf.Obf}(f, y, z), \regi{aux}) \approx (\mathsf{CCObf.Sim}(1^\lambda, |f|, |y|, |z|), \regi{aux})$ where $(f, y, z), \regi{aux} \samp \mathcal{D}(1^\lambda)$.
\end{definition}

\begin{theorem}[\cite{CLLZ21, WZ17}]\label{thm:ccobf}
    Assuming the existence of post-quantum $\io$ and LWE, there exists compute-and-compare obfuscation for any class of sub-exponentially unpredictable distributions.

    Assuming the existence of subexponentially secure $\io$ and LWE against subexponential time quantum adversaries, there exists subexponentially secure compute-and-compare obfuscation against subexponential time adversaries for any class of sub-exponentially unpredictable distributions.\footnote{The original result is only for polynomial hardness against QPT adversaries, but it is easy to see that it also holds in the subexponential setting.}
\end{theorem}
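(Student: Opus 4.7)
The statement consists of two parts: the first asserting polynomial-time compute-and-compare obfuscation from post-quantum $\io$ and LWE, and the second asserting a subexponential-security upgrade. Since this is a citation-type theorem, the first part is established by direct appeal to the constructions of Wichs-Zirdelis and the post-quantum analysis of Coladangelo et al.\ cited in the statement. My plan would therefore consist of (i) invoking these prior constructions as a black box for the polynomial case, and (ii) verifying that the same construction and its analysis lift to the subexponential regime, which is the content of the footnote.

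For part (i), the plan is to recall the $\mathsf{CCObf.Obf}$ construction of Wichs-Zirdelis, which obfuscates a compute-and-compare program by combining an indistinguishability-obfuscated shell with a lockable-obfuscation based lock primitive built from LWE. The correctness guarantee from \Cref{defn:ccobf} is immediate from the correctness of the two underlying primitives. The simulator $\mathsf{CCObf.Sim}$ ignores $(f,y,z)$ and simply obfuscates a canonical ``always-reject'' program of the same padded size.

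For part (ii), my approach is to inspect the security reduction of Wichs-Zirdelis and count the number of hybrid steps together with the loss incurred at each step. The hybrid bridging $(\mathsf{CCObf.Obf}(f,y,z),\regi{aux})$ to $(\mathsf{CCObf.Sim}(1^\lambda,|f|,|y|,|z|),\regi{aux})$ proceeds through a polynomial (in fact constant up to padding arguments) number of intermediate distributions, where each transition is justified either by the indistinguishability of two functionally-equivalent obfuscated circuits (hence reduces to $\io$ security) or by the subexponential unpredictability of the distribution (hence to LWE via lockable obfuscation). Each individual reduction has polynomial blowup in runtime. Consequently, if each underlying primitive admits $2^{-\lambda^{c}}$ security against $2^{\lambda^{c}}$-time quantum adversaries for some constant $c>0$, then by rescaling the security parameter polynomially we inherit the same subexponential security against subexponential-time distinguishers for the compute-and-compare obfuscation. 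The unpredictability hypothesis on the distribution class remains the same subexponential one as in the polynomial statement, since the reduction to LWE already demands subexponential unpredictability to extract.

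The only genuine obstacle is verifying that none of the hybrid steps secretly incurs a superpolynomial loss; in particular one must check that the lockable-obfuscation reduction from Wichs-Zirdelis does not enumerate the lock space (which would cost exponentially in $|y|$). Inspection of their reduction shows it instead uses the pseudorandomness of the lock under LWE directly, with polynomial loss, so this obstacle resolves. Hence the second part follows by replacing every invocation of polynomial-security $\io$ and LWE in the Wichs-Zirdelis reduction with its $2^{-\lambda^{c}}$-secure counterpart and re-applying the same hybrid chain.
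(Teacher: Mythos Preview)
Your proposal is correct and aligned with the paper's treatment: the paper provides no proof of this theorem at all, treating it purely as a citation of \cite{CLLZ21,WZ17} with the footnote remark that the subexponential upgrade ``is easy to see.'' Your sketch of how to verify the subexponential lift (counting hybrid steps, checking polynomial reduction blowup, rescaling the security parameter) goes further than the paper does and is a reasonable way to justify the footnote.
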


\section{Projective and Threshold Implementations}\label{sec:piti}
In this section, we introduce the notion of projective and threshold implementations \cite{Z20, ALLZZCONF20}, along with their efficient versions; which are tools we use in our security proofs. Then, we recall some properties from previous work and also show some new technical results that will be needed in the security proofs of our schemes.
\begin{definition}[Shift Distance \cite{Z20}]
Let $\mathcal{D}_0, \mathcal{D}_1$ be two distributions over $\R_{\geq 0}$. The shift distance with parameter $\eps \geq 0$ between $\mathcal{D}_0, \mathcal{D}_1$, denoted $\shiftd^\eps(\mathcal{D}_0, \mathcal{D}_1)$, is defined to be
\begin{equation*}
\inf\left\{\delta\in \R_{\geq 0}:  \forall x \in \R_{\geq 0} \Pr_{a \samp \mathcal{D}_0}[ a \leq x] \leq \Pr_{a \samp \mathcal{D}_1}[a \leq x + \eps] + \delta. \right\}
\end{equation*}

We define the shift distance between two measurements $\mathcal{M}_0, \mathcal{M}_1$ over the same space $\Hilbert$ to be
$$
\shiftd^\eps(\mathcal{M}_0, \mathcal{M}_1) = \sup_{\ket{\psi} \in \Hilbert} \shiftd(\mathcal{M}_0\ket{\psi}, \mathcal{M}_1\ket{\psi}).
$$
\end{definition}
\begin{definition}[$(\eps,\delta)$-Almost Projective \cite{Z20}]
    Let $\Lambda$ be a measurement with index set $\mathcal{I} \subseteq \R$. $\Lambda$ is called \emph{$(\eps,\delta)$-almost projective} if the following is satisfied for all states $\rho$ of appropriate dimension. Apply $\Lambda$ to $\rho$ to obtain an outcome $x$ and then apply $\Lambda$ again to the post-measurement state to obtain an outcome $x'$. Then, $\Pr[|x - x'| \leq \eps] \geq 1 -\delta$.
\end{definition}

\begin{theorem}[Projective Implementation \cite{Z20}]\label{thm:piprop}
Let $\mathcal{E} = \{\mathcal{E}_1, \mathcal{E}_0 = I - \mathcal{E}_1\}$ be a binary POVM. Then, there exists a projective measurement, called \emph{projective implementation} of $\mathcal{E}$ and denoted $\mathsf{PI}(\mathcal{E})$, indexed by a finite set consisting of elements in $[0, 1]$ and it satisfies the following. 
For any state $\rho$ of appropriate dimension, the following experiment has the same distribution as the outcome of applying $\mathcal{E}$ to $\rho$.
\begin{enumerate}
    \item Apply $\mathsf{PI}(\mathcal{E})$ to $\rho$ to obtain an outcome $p$.
    \item Output $1$ with probability $p$ and $0$ otherwise.
\end{enumerate}

Since $\mathsf{PI}(\mathcal{E})$ is projective, if the outcome of applying it to a state is $p$, then applying it again to the post-measurement state gives outcome $p$ with probability $1$.
\end{theorem}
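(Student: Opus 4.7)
The plan is to obtain $\mathsf{PI}(\mathcal{E})$ directly from the spectral decomposition of the POVM element $\mathcal{E}_1$. Since $\mathcal{E}_1$ is a Hermitian operator on a finite-dimensional Hilbert space with $0 \preceq \mathcal{E}_1 \preceq I$, its spectrum is a finite subset $S \subseteq [0,1]$, and we can write
\begin{equation*}
\mathcal{E}_1 = \sum_{p \in S} p\, \Pi_p,
\end{equation*}
where $\{\Pi_p\}_{p \in S}$ are mutually orthogonal projectors summing to the identity. The first step is to simply \emph{define} $\mathsf{PI}(\mathcal{E})$ to be this projective measurement $\{\Pi_p\}_{p \in S}$, with outcome label $p \in S \subseteq [0,1]$. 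By construction the index set is finite and contained in $[0,1]$, and $\mathsf{PI}(\mathcal{E})$ is projective since the $\Pi_p$ are orthogonal projectors that resolve the identity.

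Next I would verify the distributional equivalence. Applying $\mathsf{PI}(\mathcal{E})$ to $\rho$ yields outcome $p$ with probability $\Tr[\Pi_p \rho]$, and conditioned on that outcome the two-step procedure outputs $1$ with probability $p$. Hence the overall probability of outputting $1$ is
\begin{equation*}
\sum_{p \in S} p \cdot \Tr[\Pi_p \rho] \;=\; \Tr\!\left[\left(\sum_{p \in S} p\,\Pi_p\right)\rho\right] \;=\; \Tr[\mathcal{E}_1\, \rho],
\end{equation*}
which matches the probability of outcome $1$ under the POVM $\mathcal{E}$. Since both procedures output a single bit with the same probability of being $1$, they have the same distribution on any input state $\rho$, establishing the main claim.

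Finally, the closing sentence about re-applying $\mathsf{PI}(\mathcal{E})$ follows immediately from the fact that $\{\Pi_p\}_p$ is a genuine projective measurement: after obtaining outcome $p$ the post-measurement state lies in the range of $\Pi_p$, and projective measurements are idempotent on their own post-measurement states, so the second application returns $p$ with probability $1$. I do not anticipate any real obstacle here; the only subtlety worth flagging is the finite-dimensionality assumption that ensures the spectrum of $\mathcal{E}_1$ is finite (in the infinite-dimensional setting one would have to replace the discrete spectral sum by a projection-valued measure and index outcomes by Borel subsets of $[0,1]$, but this is not needed for the applications in the paper).
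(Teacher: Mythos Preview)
Your proposal is correct and is exactly the standard spectral-decomposition argument. Note that the paper does not actually give its own proof of this statement: it is quoted as a known result from \cite{Z20}, so there is nothing further to compare against.
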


Below, we will consider measurements that are defined as mixtures of projective measurements. For a collection of binary projective measurements $\mathcal{P} = \{P_i, I-P_i\}_{i \in \mathcal{I}}$ and a distribution $\mathcal{D}$ over $\mathcal{I}$, we will write $\mathcal{P}_{\mathcal{D}}$ to denote the measurement where we sample $i \samp \mathcal{D}$ and apply the projective measurement $\{P_i, I-P_i\}$. In general, projective implementation of a mixture of projective measurements can be of exponential size, but it can be efficiently approximated.

\begin{theorem}[Approximate Projective Implementation \cite{Z20}]\label{thm:apiprop}
Let $\mathcal{P} = \{P_i, I-P_i\}_{i \in \mathcal{I}}$ be a collection of binary projective measurements with index set $\mathcal{I}$ and $\mathcal{D}$ be a distribution over $\mathcal{I}$. Suppose we can efficiently implement the measurement $$\Lambda = \left\{\sum_{i \in \mathcal{I}} \ketbra{i}{i}\otimes P_i, I - {\sum_{i \in \mathcal{I}} \ketbra{i}{i}\otimes P_i}\right\}.$$
Then, for $0 < \eps, \delta \leq 1$, there exists a measurement, called \emph{approximate projective implementation} of $\mathcal{P}_{\mathcal{D}}$ and denoted $\mathsf{API}^{\eps,\delta}_{\mathcal{P}, \mathcal{D}}$, that satisfies the following.
\begin{itemize}
    \item $\mathsf{API}^{\eps,\delta}_{\mathcal{P}, \mathcal{D}}$ is $(\eps,\delta)$-almost projective.
    \item $\shiftd^\eps(\mathsf{API}^{\eps,\delta}_{\mathcal{P}, \mathcal{D}}, \mathsf{PI}(\mathcal{P}_{\mathcal{D}})) \leq \delta$.
    \item Expected run time of $\mathsf{API}^{\eps,\delta}_{\mathcal{P}, \mathcal{D}}$ is polynomial in $1/\eps, \log(1/\delta)$ and the runtimes of $\{P_i, I - P_i\}$, $\mathcal{D}$ and the procedure mapping $i$ to $\{P_i, I - P_i\}$.
 \end{itemize}

\end{theorem}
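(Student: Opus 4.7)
The plan is to adapt the Marriott--Watrous amplification technique \cite{MW05}, following the approach of \cite{Z20}. The key insight is that $\mathsf{PI}(\mathcal{P}_{\mathcal{D}})$ is a projective measurement onto eigenspaces of the operator $A = \sum_{i \in \mathcal{I}} \Pr_{\mathcal{D}}[i] \, P_i$, with eigenvalues in $[0,1]$ giving exactly the success probabilities that one wishes to read off. Since $A$ is (in general) a mixture of exponentially many projectors and hence cannot be diagonalized efficiently by brute force, I would approximate its eigenvalues efficiently by phase estimation on a suitable block encoding of $A$ built from the efficient implementation of $\Lambda$ assumed in the hypothesis.

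First, I would introduce an ancilla register and define the isometry $V : \ket{\psi} \mapsto \ket{D} \otimes \ket{\psi}$, where $\ket{D} = \sum_i \sqrt{\Pr_{\mathcal{D}}[i]} \, \ket{i}$ is the coherent version of $\mathcal{D}$ (efficiently prepared given the efficient sampler for $\mathcal{D}$). On the extended space consider the two projectors $\Pi_A = \ketbra{D}{D} \otimes I$ and $\Pi_B = \sum_i \ketbra{i}{i} \otimes P_i$; note that $\Pi_B$ is exactly the ``accept'' projector of $\Lambda$, which is efficiently implementable by hypothesis. A direct computation gives $\Pi_A \Pi_B \Pi_A = \ketbra{D}{D} \otimes A$, so on the image of $V$ the operator $\Pi_A \Pi_B \Pi_A$ has exactly the spectrum of $A$. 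By Jordan's lemma applied to the pair $(\Pi_A, \Pi_B)$, the extended space decomposes into two-dimensional invariant subspaces on each of which the unitary $U = (2\Pi_A - I)(2\Pi_B - I)$ acts as a rotation by an angle determined by the corresponding eigenvalue of $A$.

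Next I would apply quantum phase estimation to $U$ on $V \ket{\psi}$, using $O(\log(1/\eps) + \log(1/\delta))$ ancilla qubits, to produce an estimate $\tilde{p}$ of the eigenvalue $p$ with $|\tilde{p} - p| \le \eps$ except with probability at most $\delta$; this is the standard accuracy bound for phase estimation. The efficiency claim follows because $\Lambda$, $V$, and hence controlled-$U$ are all efficiently implementable, and the gate complexity is polynomial in $1/\eps$ and $\log(1/\delta)$. The value $\tilde{p}$ is returned as the outcome of $\mathsf{API}^{\eps,\delta}_{\mathcal{P}, \mathcal{D}}$.

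The main obstacle is verifying the two structural properties: $(\eps, \delta)$-almost projectivity and the shift-distance bound $\shiftd^{\eps}(\mathsf{API}^{\eps,\delta}_{\mathcal{P}, \mathcal{D}}, \mathsf{PI}(\mathcal{P}_{\mathcal{D}})) \le \delta$. For almost-projectivity, one argues that phase estimation approximately projects onto a Jordan block; conditioned on outcome $\tilde{p}$, the post-measurement state is close (in trace distance depending on $\delta$) to a true eigenstate of $U$, so a repeat invocation returns a value within $\eps$ of $\tilde{p}$ with probability at least $1 - \delta$. For the shift-distance bound, one compares the two distributions over outcomes on each Jordan block separately: $\mathsf{PI}(\mathcal{P}_{\mathcal{D}})$ returns the true eigenvalue $p$ while $\mathsf{API}^{\eps,\delta}$ returns a value within $\eps$ of $p$ with probability $1 - \delta$, so the CDFs are pointwise within the allowed $(\eps, \delta)$ window, and linearity in the decomposition gives the bound on arbitrary input states. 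Since \cite{Z20} carries out this spectral analysis in full, the theorem follows by direct invocation.
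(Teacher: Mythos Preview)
The paper does not give its own proof of this theorem; it is stated as a black-box citation to \cite{Z20}. Your sketch is a reasonable outline and would yield the stated guarantees, so there is no genuine gap.

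That said, your route differs from the actual construction in \cite{Z20}. You propose quantum phase estimation on the reflection unitary $U=(2\Pi_A-I)(2\Pi_B-I)$, whereas \cite{Z20} (and the Marriott--Watrous technique you cite) instead performs $2T$ \emph{alternating} projective measurements of $\Pi_A$ and $\Pi_B$ and post-processes the resulting bit string. The paper in fact relies on this specific form later: the proof of \cref{lem:apiequiv} in the appendix explicitly uses that the output distribution of $\mathsf{API}^{\eps,\delta}$ is equivalent to (i) sampling $p\leftarrow\mathsf{PI}(\mathcal{P}_{\mathcal{D}})\rho$, (ii) flipping $2T$ independent $p$-biased coins, (iii) outputting a deterministic function of the coin flips. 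Your phase-estimation variant does not obviously admit this clean ``biased coin flip'' description, so if you plan to use \cref{lem:apiequiv} downstream you would either need to switch to the alternating-measurement construction or re-prove that lemma for your version. Otherwise both approaches deliver the three bullets of the theorem.
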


\begin{theorem}[\protect{\cite[Theorem~6.5]{Z20}}]
        Let $\mathcal{D}_b$ for $b \in \zo$ be efficient distributions over the same support with classical output and $\rho$ be an efficiently constructible state. Let $\mathcal{P}$ be a collection of projective measurements indexed by the support of $\mathcal{D}_b$, and consider the mixture of measurements $\mathcal{P}_{\mathcal{D}_b}$ where we sample a measurement according to $\mathcal{D}_b$ and apply it.
    Suppose $\mathcal{D}_0 \approx \mathcal{D}_1$. Then, for any inverse polynomial $\eps$,
    \begin{equation*}
        \shiftd^{\eps}(\mathsf{PI}(\mathcal{P}_{\mathcal{D}_0})\cdot\rho, 
\mathsf{PI}(\mathcal{P}_{\mathcal{D}_1})\cdot\rho) \leq \negl(\lambda).
 \end{equation*}
\end{theorem}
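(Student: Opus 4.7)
The plan is to argue by contraposition: if the stated shift-distance bound fails, then one can construct a QPT distinguisher for $\mathcal{D}_0$ versus $\mathcal{D}_1$, contradicting $\mathcal{D}_0 \approx \mathcal{D}_1$. Fix any inverse polynomial $\eps$ and suppose towards contradiction that $\shiftd^{\eps}(\mathsf{PI}(\mathcal{P}_{\mathcal{D}_0})\cdot\rho, \mathsf{PI}(\mathcal{P}_{\mathcal{D}_1})\cdot\rho) > \delta(\lambda)$ for some non-negligible $\delta$.

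My first step is to replace the (generally inefficient) projective implementations by the efficient approximations from \cref{thm:apiprop}. Choosing parameters $\eps' = \eps/3$ and $\delta' = 2^{-\lambda}$ and writing $\mathsf{API}_b := \mathsf{API}^{\eps',\delta'}_{\mathcal{P}, \mathcal{D}_b}$, the guarantee $\shiftd^{\eps'}(\mathsf{API}_b, \mathsf{PI}(\mathcal{P}_{\mathcal{D}_b})) \leq \delta'$ together with a triangle-type chaining of shift distances (using $\eps' + (\eps-2\eps') + \eps' = \eps$) transfers the assumed gap to the APIs: $\shiftd^{\eps/3}(\mathsf{API}_0 \cdot\rho, \mathsf{API}_1 \cdot\rho) \geq \delta - 2\delta'$, which is still non-negligible. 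By the runtime clause of \cref{thm:apiprop}, $\mathsf{API}_b$ is QPT whenever $\mathcal{D}_b$, the $P_i$, and the mapping $i \mapsto P_i$ are efficient, which holds here by hypothesis.

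Next, I convert this API-level gap into an efficient classical-sample distinguisher. A shift-distance gap of $(\eps/3, \delta - 2\delta')$ means there is a threshold $t \in [0,1]$ for which $\Pr[\mathsf{API}_0 \cdot\rho \leq t] - \Pr[\mathsf{API}_1 \cdot\rho \leq t + \eps/3] > \delta - 2\delta'$. Since $t$ is not known a priori, discretize $[0,1]$ into $\lceil 3/\eps \rceil$ candidate thresholds $\{j\cdot\eps/3\}_j$ and guess uniformly, which costs only a $\poly(\lambda)$ factor. The distinguisher $\mathcal{B}$ on $q = \poly(\lambda)$ challenge samples $(i_1,\dots,i_q)$, where $q$ is the number of $\mathcal{D}_b$-draws the API consumes, constructs $\rho$ (efficient by hypothesis), runs $\mathsf{API}_b$ on $\rho$ while feeding $(i_1,\dots,i_q)$ in place of its internal samples from $\mathcal{D}_b$, and outputs $0$ iff the returned value is at most the guessed threshold. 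Its advantage is at least $\Omega(\eps)\cdot(\delta-2\delta') - \negl(\lambda)$, which is non-negligible. Finally, a standard $q$-fold hybrid argument promotes the one-sample indistinguishability $\mathcal{D}_0 \approx \mathcal{D}_1$ to $\mathcal{D}_0^{\otimes q} \approx \mathcal{D}_1^{\otimes q}$, contradicting $\mathcal{B}$.

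The two technical care points I expect to handle are: (i) making precise the triangle-type composition of shift distance so that the gap truly transfers from the PIs to the APIs with controlled slack in both the shift and the mass parameters; and (ii) verifying that the API can be coherently invoked on externally supplied samples from $\mathcal{D}_b$ rather than fresh internal ones, so the $q$-fold hybrid applies cleanly. The latter is comfortable because $\mathcal{D}_b$ has classical output by hypothesis, so every use of $\mathcal{D}_b$ inside the API can be viewed as drawing a classical $i$ and then invoking the $i \mapsto P_i$ circuit, with no coherence across the $q$ samples; everything else reduces to standard bookkeeping.
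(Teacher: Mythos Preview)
Your plan has a genuine gap at exactly the point you flagged as ``comfortable,'' namely item~(ii). The approximate projective implementation does \emph{not} work by drawing a bounded number~$q$ of classical samples from~$\mathcal{D}_b$. Inspecting its construction (Marriott--Watrous style amplification, cf.\ \cref{thm:apiprop}), $\mathsf{API}^{\eps',\delta'}_{\mathcal{P},\mathcal{D}_b}$ prepares a uniform superposition over the entire random-coin register~$\mathcal{R}$ of~$\mathcal{D}_b$ and then alternates between the controlled projection $\sum_{r}\ketbra{r}{r}\otimes P_{\mathcal{D}_b(r)}$ and the projection onto that uniform superposition. At no stage is a classical index~$i$ drawn; the random-coin register stays coherent throughout, so there is no polynomial~$q$ such that the API's behavior is a function of~$q$ externally supplied classical samples, and your $q$-fold hybrid simply does not apply. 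The paper's own proof of the generalization (\cref{thm:distti}, \cref{appn:proofdistti}) opens with precisely this warning: ``the result does not directly follow from the efficiency of $\mathsf{API}^{\eps,\delta}$\ldots it obtains superpositions of (exponentially many) outputs from the underlying distributions.''

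The actual argument (in \cite{Z20} and reproduced for the generalization in \cref{appn:proofdistti}) repairs this via a small-range-distribution trick: first replace the random permutation on~$\mathcal{R}$ by $G\circ F$ with $F:\mathcal{R}\to[s]$ and $G:[s]\to\mathcal{R}$ random (\cref{lem:smallrange}), then replace~$F$ by a $2Q$-wise independent function that can be evaluated in superposition without any randomness (\cref{lem:qindep}). After these two moves the API's behavior depends only on the~$s$ classical values $G(1),\dots,G(s)$, which are genuinely i.i.d.\ samples from~$\mathcal{D}_b$, and \emph{now} the $s$-fold hybrid (\cref{lem:multsamp}) goes through. Your step~(i) and the threshold-guessing idea are fine and are essentially how one finishes once the sample-based reduction is in hand, but the core difficulty is the coherence of the API, which your proposal skips.
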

We give the following generalization which differs from the previous theorem in a couple of aspects. First, we consider measurements over multiple registers. Second, we allow the measured state and the measurement to be correlated, which will be needed in our copy-protection proofs. Finally, we give more fine-grained results in terms of adversary's advantage and runtime, which will again be needed in our proofs.
\begin{theorem}\label{thm:distti}
    Let $\lambda$ denote the security parameter and let $k(\lambda)$ be a polynomial, $\eps(\lambda)$ an inverse polynomial and $\delta(\lambda)$ be an inverse exponential. 
    
    Let $\mathcal{S}^b$ and $\{\mathcal{B}^b_\ell\}_{\ell \in [k(\lambda)]}$ for each $b \in \zo$ be efficient distributions as follows.  $\mathcal{S}^b$ outputs a $k$-partite state and a classical string $pp$, while $\mathcal{B}^b_\ell$ take in $pp$ and are classical. For each $\ell \in [k(\lambda)]$, consider the output distribution of the following experiment, denoted by $(\mathcal{S}^b, \mathcal{B}^b_\ell)$.
    \begin{enumerate}
        \item $\rho, pp \samp \mathcal{S}^b(1^\lambda)$.
        \item Sample $s \samp \mathcal{B}^b_{\ell}(pp)$.
        \item Output $(\rho, s, pp)$.
    \end{enumerate}

Let $\mathcal{P}_\ell$ for each $\ell \in [k]$ be a collection of binary projective measurements indexed by output space of $\mathcal{B}_\ell^b$. For each fixed value of $pp$, consider the mixture of measurements, denoted $\mathcal{P}_{\ell, \mathcal{B}^b_\ell(pp)}$, where we sample a measurement $s$ from $\mathcal{P}_\ell$ as $s = \mathcal{B}^b_\ell(pp; r)$ where $r \samp \mathcal{R}$ and apply it. Suppose we can efficiently apply the above measurement for arbitrary given superpositions of $r$ values. Let $\mathsf{API}^{\eps, \delta}(\mathcal{P}_{\ell, \mathcal{B}^b_\ell(pp)})$ denote the approximate projective implementation of this mixture and let $\Vec{p}_b$ be a tuple consisting of the outcomes of the following experiment.

    \begin{enumerate}
        \item $\rho, pp \samp \mathcal{S}^b(1^\lambda)$.
        \item Apply $\otimes_{\ell\in[k(\lambda)]}\mathsf{API}^{\eps, \delta}(\mathcal{P}_{\ell, \mathcal{B}^b_\ell(pp)})$ on $\rho$.
    \end{enumerate}

Then,
\begin{itemize}
    \item Suppose $(\mathcal{S}^0, \mathcal{B}_\ell^0) \approx (\mathcal{S}^1, \mathcal{B}_\ell^1)$ for each $\ell \in [k]$. Then, 
    \begin{equation*}
        \statdist{\Vec{p}_0}{\Vec{p}_1} \leq \negl(\lambda).    
 \end{equation*}
 \item Suppose $(\mathcal{S}^0, \mathcal{B}_\ell^0) \approx^c_{\nu(\lambda)} (\mathcal{S}^1, \mathcal{B}_\ell^1)$ for all $(\frac{k(\lambda)}{\mu^2(\lambda)}\cdot\poly(\lambda))$-time adversaries for each $\ell \in [k]$ for some $\nu, \mu$ satisfying $\nu(\lambda) < \mu^2(\lambda)\poly(\lambda)$. Then, 
    \begin{equation*}
        \statdist{\Vec{p}_0}{\Vec{p}_1} \leq \mu(\lambda).    
 \end{equation*}
\end{itemize}

\end{theorem}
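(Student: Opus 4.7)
The plan is to generalize \cite[Theorem~6.5]{Z20} (the unnumbered theorem immediately preceding) to accommodate (a) multiple registers, (b) correlation between the measured state $\rho$ and the measurement distributions via the shared classical string $pp$, and (c) working with the efficient approximate projective implementation $\mathsf{API}^{\eps,\delta}$ in place of the inefficient $\mathsf{PI}$. The proof combines a hybrid argument over registers with a conditional-on-$pp$ invocation of Theorem~6.5 and the $\shiftd^\eps(\mathsf{API}^{\eps,\delta}, \mathsf{PI}) \leq \delta$ bound from Theorem~\ref{thm:apiprop}.

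First, I would reduce the multi-register case to a chain of single-register transitions. Marginalizing any of the $k$ joint indistinguishabilities on $(\rho, s, pp)$ gives $\mathcal{S}^0 \approx \mathcal{S}^1$ on the $(\rho, pp)$ marginal; combining this via transitivity with the assumed joint closeness yields the fixed-state indistinguishability $(\mathcal{S}^b, \mathcal{B}^0_\ell) \approx (\mathcal{S}^b, \mathcal{B}^1_\ell)$ for each $b, \ell$. I would then define a chain of hybrids starting at $(\mathcal{S}^0, \{\mathcal{B}^0_\ell\}_\ell)$ that flips $\mathcal{B}^0_\ell \to \mathcal{B}^1_\ell$ one $\ell$ at a time (state fixed at $\mathcal{S}^0$), and finally switches the state marginal from $\mathcal{S}^0$ to $\mathcal{S}^1$ to arrive at $(\mathcal{S}^1, \{\mathcal{B}^1_\ell\}_\ell)$. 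Each transition isolates a single register's measurement change (or a single state-marginal change).

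For each single-register transition I would apply a conditional-on-$pp$ version of Theorem~6.5: for each fixed $pp$, the conditional state and the distributions $\mathcal{B}^0_\ell(pp), \mathcal{B}^1_\ell(pp)$ over $\mathcal{P}_\ell$ fit the hypotheses of Theorem~6.5, yielding a negligible shift distance between $\mathsf{PI}$ outcomes on register $\ell$; averaging over $pp$ lifts this to the unconditional distribution, while the other registers' measurements (unchanged between the two adjacent hybrids) can be absorbed into the state. The conversion $\mathsf{PI} \to \mathsf{API}^{\eps,\delta}$ via Theorem~\ref{thm:apiprop} adds $\eps + \delta$ per register, which is negligible under the stated parameter regime. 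A union bound over the $O(k)$ hybrid steps yields the statistical conclusion.

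For the computational version, the same hybrid structure goes through, but each transition is now realized by a reduction that runs $\otimes_\ell \mathsf{API}^{\eps,\delta}$ (each in time $\poly(\lambda, 1/\eps)$) and feeds the outputs to the distinguisher of $\vec{p}_0$ versus $\vec{p}_1$. The $\mu^2$ factor in the assumed runtime budget $(k(\lambda)/\mu^2(\lambda)) \cdot \poly(\lambda)$ arises because extending Zhandry's expectation-to-distribution argument via threshold implementations requires Chernoff-style amplification with $O(1/\mu^2)$ samples per step, and the factor $k$ accounts for the $k$ hybrid steps. The main obstacle I expect is the second step: lifting Theorem~6.5 from a fixed state to a joint distribution over $(\rho, pp)$, since the conditional state given $pp$ is not necessarily efficiently sampleable, which blocks a direct black-box reduction. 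The resolution is to operate the reduction at the joint level, passing the full sample $(\rho, s, pp)$ and invoking the joint indistinguishability directly, avoiding any explicit conditioning on $pp$.
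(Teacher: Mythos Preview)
Your proposal has a genuine gap in the ``$\mathsf{PI} \to \mathsf{API}^{\eps,\delta}$ conversion'' step. Theorem~6.5 of \cite{Z20} gives only a \emph{shift distance} bound $\shiftd^\eps(\mathsf{PI}_0\cdot\rho,\mathsf{PI}_1\cdot\rho)\le\negl$, and Theorem~\ref{thm:apiprop} gives $\shiftd^\eps(\mathsf{API},\mathsf{PI})\le\delta$. Chaining these yields at best $\shiftd^{O(\eps)}(\mathsf{API}_0,\mathsf{API}_1)\le\negl$, which is \emph{not} the total variation bound $\statdist{\vec p_0}{\vec p_1}$ the theorem asserts: two point masses separated by less than $\eps$ have shift distance zero but TV distance one. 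Your remark that the conversion ``adds $\eps+\delta$ per register, which is negligible'' is also wrong, since $\eps$ is only inverse polynomial. So black-boxing Theorem~6.5 cannot give the stated conclusion.

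The paper does not invoke Theorem~6.5 as a black box; it re-runs the underlying \emph{small-range distribution} technique directly on $\mathsf{API}$. The key obstacle (flagged explicitly at the start of the paper's proof, and different from the $pp$-conditioning issue you identified) is that $\mathsf{API}$ does not merely draw samples from $\mathcal{B}^b_\ell(pp)$: it evaluates the algorithm $\mathcal{B}^b_\ell(pp;\cdot)$ coherently on a \emph{superposition} over the exponentially large coin set $\mathcal{R}$, so a single-sample reduction cannot switch $\mathcal{B}^0\to\mathcal{B}^1$. The fix is: precompose the coins with a random permutation on $\mathcal{R}$ (which leaves the output distribution of $\mathsf{API}$ exactly unchanged), replace the permutation by $G\circ F$ with $F:\mathcal{R}\to[s]$ and $G:[s]\to\mathcal{R}$ at quantum-query cost $O(Q^3/s)$, then replace $F$ by a $2Q$-wise independent function so the whole thing is efficient. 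After this, $\mathsf{API}$ depends on $\mathcal{B}^b_\ell$ only through the $s$ values $G(1),\dots,G(s)$, and an ordinary $s$-sample hybrid reduces the switch $\mathcal{B}^0\to\mathcal{B}^1$ to the assumed indistinguishability at cost $s\cdot\nu$. Balancing $s$ against these two losses (and summing over the $k$ registers) gives the claimed $\mu$ bound; the runtime and $\mu^2$ factors in the hypothesis come from this $s$-sample reduction, not from Chernoff amplification.
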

\begin{proof}
See \cref{appn:proofdistti}.
\end{proof}

Now, we reproduce the results of \cite{ALLZZCONF20} regarding \emph{threshold implementations}.
\begin{theorem}[Threshold Implementation \cite{ALLZZCONF20}]\label{thm:singleatiprop}
    Consider the following measurement, denoted $\mathsf{ATI}^{\eps,\delta}_{\mathcal{P},\mathcal{D}, \eta}$, associated with a collection of projective measurements $\mathcal{P}$, a distribution $\mathcal{D}$ over the index set of $\mathcal{P}$ and a threshold value $\eta \in [0, 1]$, applied to a state $\rho$.

    \begin{enumerate}
        \item Apply $\mathsf{API}^{\eps,\delta}_{\mathcal{P}, \mathcal{D}}$ to $\rho$, let $p$ be the outcome.
        \item Outcome $1$ if and only if $p \geq \eta$.
    \end{enumerate}

We denote by $\Tr[\mathsf{ATI}^{\eps,\delta}_{\mathcal{P},\mathcal{D},\eta}\cdot\rho]$ the probability that the outcome above is $1$. If $\mathsf{API}^{\eps,\delta}_{\mathcal{P},\mathcal{D}}$ is replaced with $\mathsf{PI}({\mathcal{P}_{\mathcal{D}}})$, then we denote the resulting measurement as $\mathsf{TI}_\eta(\mathcal{P}_\mathcal{D})$ and write $\Tr[\mathsf{TI}_\eta(\mathcal{P}_\mathcal{D})\cdot\rho]$ to denote the probability that the outcome is $1$.

We then have the following.
\begin{itemize}
\item For any state $\rho$, 
\begin{equation*}
   \Tr[\mathsf{ATI}^{\eps,\delta}_{\mathcal{P},\mathcal{D}, \eta - \eps}\cdot\rho] \geq \Tr[\mathsf{TI}_\eta(\mathcal{P}_\mathcal{D})\cdot\rho] - \delta.
\end{equation*}
\item For any state $\rho$, 
\begin{equation*}
   \Tr[\mathsf{TI}_{\eta-\eps}(\mathcal{P}_\mathcal{D})\cdot\rho] \geq \Tr[\mathsf{ATI}^{\eps,\delta}_{\mathcal{P},\mathcal{D}, \eta}\cdot\rho] - \delta.
\end{equation*}
 \item $\mathsf{ATI}^{\eps,\delta}_{\mathcal{P},\mathcal{D}, \eta}$ is efficient whenever $\mathsf{API}^{\eps,\delta}_{\mathcal{P},\mathcal{D}}$ is.
    \item $\mathsf{TI}_{\mathcal{P},\mathcal{D},\eta}$ is a projection and the collapsed state conditioned on outcome $1$ is a mixture of eigenvectors of $\mathcal{D}$ with eigenvalue $\geq \eta$.
\end{itemize}
\end{theorem}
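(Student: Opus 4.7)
} The four bullets are essentially corollaries of the properties of $\mathsf{API}$ and $\mathsf{PI}$ given in \Cref{thm:apiprop,thm:piprop}, and the plan is simply to unpack each in turn. The heart of the argument is the shift-distance bound $\shiftd^{\eps}(\mathsf{API}^{\eps,\delta}_{\mathcal{P},\mathcal{D}},\mathsf{PI}(\mathcal{P}_{\mathcal{D}}))\le\delta$; the rest is bookkeeping.

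For the first bullet, I would fix an arbitrary state $\rho$ and let $p_{\mathsf{PI}},p_{\mathsf{API}}$ denote the (real-valued) outcomes of $\mathsf{PI}(\mathcal{P}_\mathcal{D})$ and $\mathsf{API}^{\eps,\delta}_{\mathcal{P},\mathcal{D}}$ applied to $\rho$. By definition $\Tr[\mathsf{TI}_\eta(\mathcal{P}_\mathcal{D})\cdot\rho]=\Pr[p_{\mathsf{PI}}\ge\eta]$ and $\Tr[\mathsf{ATI}^{\eps,\delta}_{\mathcal{P},\mathcal{D},\eta-\eps}\cdot\rho]=\Pr[p_{\mathsf{API}}\ge\eta-\eps]$. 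Writing the shift-distance condition with $x=\eta-\eps$ and the roles of $\mathcal{D}_0,\mathcal{D}_1$ chosen so that it reads $\Pr[p_{\mathsf{PI}}<\eta]\le\Pr[p_{\mathsf{API}}<\eta-\eps]+\delta$, taking complements yields the claimed inequality. The second bullet is symmetric: apply the shift bound with the roles swapped, at $x=\eta-\eps$, to get $\Pr[p_{\mathsf{API}}<\eta]\le\Pr[p_{\mathsf{PI}}<\eta-\eps]+\delta$ and again take complements.

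The efficiency bullet is immediate from the construction of $\mathsf{ATI}$: it runs $\mathsf{API}$ once and then performs a classical comparison of the real-valued outcome with $\eta$, so its runtime is dominated by that of $\mathsf{API}$. The final bullet uses the structure of $\mathsf{PI}(\mathcal{P}_\mathcal{D})$ as a projective measurement indexed by a finite subset of $[0,1]$ (\Cref{thm:piprop}): it can be written as $\sum_{p\in S} p\cdot \Pi_p$ for orthogonal projectors $\{\Pi_p\}_{p\in S}$. Then $\mathsf{TI}_\eta(\mathcal{P}_\mathcal{D})$ is the coarse-graining with POVM elements $\sum_{p\ge\eta}\Pi_p$ and $\sum_{p<\eta}\Pi_p$, which is itself a projective measurement; and the collapsed state conditioned on outcome $1$ is a mixture of the eigenvectors of (the observable associated with) $\mathsf{PI}(\mathcal{P}_\mathcal{D})$ whose eigenvalues are at least $\eta$, as claimed.

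I do not foresee any real obstacle in this proof: all of the substantive work has already been done in establishing \Cref{thm:piprop,thm:apiprop}, and in particular in the shift-distance guarantee. The only thing to be careful about is the direction of the inequalities in the definition of $\shiftd^{\eps}$, which one must apply once in each direction to obtain the two symmetric bullets; this is a notational rather than conceptual difficulty.
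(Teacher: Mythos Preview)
Your proposal is correct and follows the natural route. The paper does not give its own proof of this theorem; it simply reproduces it as a known result from \cite{ALLZZCONF20} (building on \cite{Z20}), so there is nothing to compare against. Your derivation of the first two bullets from the shift-distance guarantee in \Cref{thm:apiprop}, together with the structural observations for the last two bullets, is exactly how the result is obtained in the cited works. The only care point you already flag---that the shift-distance bound must be invoked in both directions (i.e., with the roles of $\mathsf{API}$ and $\mathsf{PI}$ swapped)---is indeed the one place to be precise; note that the paper's stated definition of $\shiftd^\eps$ is written one-sidedly, so strictly speaking you are implicitly using the symmetric version from \cite{Z20}.
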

We give some further generalizations below.
\begin{theorem}\label{thm:multiatiprop}
 For any $k \in \mathbb{N}$, let $\mathcal{P}_\ell, \mathcal{D}_\ell$ be a collection of projective measurements and a distribution on the index set of this collection, respectively, and $\eta_\ell \in [0, 1]$ be threshold values for all $\ell \in [k]$. Write $\Tr[\left(\bigotimes_{\ell \in [k]} \mathsf{ATI}^{\eps,\delta}_{\mathcal{P}_\ell,\mathcal{D}_\ell,\eta_\ell}\right)\cdot \rho]$ to denote the probability that the outcome of the joint measurement $\bigotimes_{\ell \in [k]} \mathsf{ATI}^{\eps,\delta}_{\mathcal{P}_\ell,\mathcal{D}_\ell,\eta_\ell}$ applied on $\rho$ is all $1$, and similarly for $\mathsf{TI}$.

 Then, we have the following.
\begin{itemize}

\item  \cite[Corollary~3]{ALLZZ20}  For any $k$-partite state $\rho$, 
\begin{equation*}
   \Tr[\left(\bigotimes_{\ell \in [k]} \mathsf{ATI}^{\eps,\delta}_{\mathcal{P}_\ell, \mathcal{D}_\ell, \eta_\ell - \eps}\right)\rho] \geq \Tr[\left(\bigotimes_{\ell \in [k]} \mathsf{TI}_{\eta_\ell}({\mathcal{P}_\ell}_{\mathcal{D}_\ell})\right)\rho] - k\cdot \delta.
\end{equation*}
\item  \cite[Corollary~3]{ALLZZ20} For any $k$-partite state $\rho$, let $\rho'$ be the collapsed state obtained after applying $\bigotimes_{\ell \in [k]} \mathsf{ATI}^{\eps,\delta}_{\mathcal{P}_\ell,\mathcal{D}_\ell,\eta_\ell}$ to $\rho$ and obtaining the outcome $1$. Then, 
\begin{equation*}
   \Tr[\left(\bigotimes_{\ell \in [k]}\mathsf{TI}_{\eta_\ell - 2\eps}({\mathcal{P}_\ell}_{\mathcal{D}_\ell})\right)\rho'] \geq 1 - 2k\cdot \delta.
\end{equation*}

\item  For any $k$-partite state $\rho$, let $\rho'$ be the collapsed state obtained after applying $\bigotimes_{\ell \in [k]} \mathsf{ATI}^{\eps,\delta}_{\mathcal{P}_\ell, \mathcal{D}_\ell,\eta}$ to $\rho$ and obtaining the outcome $1$. Then, 
\begin{equation*}
   \Tr[\left(\bigotimes_{\ell \in [k]} \mathsf{ATI}^{\eps,\delta}_{\mathcal{P}_\ell, \mathcal{D}_\ell, \eta_\ell - 3\eps}\right)\cdot\rho'] \geq 1 - 3k\cdot \delta.
\end{equation*}

\item  For any $k$-partite state $\rho$, 
\begin{equation*}
  \Tr[\left(\bigotimes_{\ell \in [k]} \mathsf{TI}_{\eta_\ell - \eps}({\mathcal{P}_\ell}_{\mathcal{D}_\ell})\right)\rho] \geq  \Tr[\left(\bigotimes_{\ell \in [k]} \mathsf{ATI}^{\eps,\delta}_{\mathcal{P}_\ell, \mathcal{D}_\ell, \eta_\ell}\right)\rho] - k\cdot \delta.
\end{equation*}
   
\end{itemize}
\end{theorem}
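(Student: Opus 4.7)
The plan is to reduce the two uncited parts of the theorem (the third and fourth bullet points) to the single-partite bound in \cref{thm:singleatiprop} by a hybrid argument that replaces the measurement on one register at a time. The first and second bullet points are stated as direct citations from \cite{ALLZZCONF20}, so below I focus on the remaining two.

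The key observation is that the single-partite inequality $\Tr[\mathsf{TI}_{\eta-\eps}\cdot\rho] \geq \Tr[\mathsf{ATI}^{\eps,\delta}_{\eta}\cdot\rho] - \delta$, which holds for \emph{every} state $\rho$, is equivalent via the spectral theorem to the operator inequality $E^{TI}_{\eta-\eps} - E^{ATI}_{\eta} + \delta I \succeq 0$, where $E^{TI}$ and $E^{ATI}$ denote the outcome-$1$ POVM elements of $\mathsf{TI}$ and $\mathsf{ATI}$ respectively. For the fourth bullet point, I set up hybrids $H_0, \dots, H_k$ where $H_j$ applies $\mathsf{TI}_{\eta_\ell-\eps}$ on registers $1,\dots,j$ and $\mathsf{ATI}^{\eps,\delta}_{\eta_\ell}$ on registers $j+1,\dots,k$. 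Moving from $H_{j-1}$ to $H_j$ swaps a single $\mathsf{ATI}$ for a $\mathsf{TI}$ on register $j$. Tensoring the operator inequality on register $j$ with the product $B = \bigotimes_{\ell\neq j}E_\ell$ of POVM elements on the other registers (which is PSD and satisfies $B \preceq I$) preserves positivity, and taking expectation against $\rho$ gives an additive loss of at most $\delta\cdot\Tr[(I\otimes B)\rho] \leq \delta$ per hybrid step. Summing over $j \in [k]$ yields the desired $k\delta$ bound.

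For the third bullet point I chain the first two. By the second bullet point applied to the initial state $\rho$, we already have $\Tr[\bigotimes_\ell \mathsf{TI}_{\eta_\ell - 2\eps}\cdot\rho'] \geq 1 - 2k\delta$ for the post-ATI collapsed state $\rho'$. Now apply the first bullet point to the state $\rho'$ with the shifted thresholds $\eta_\ell - 2\eps$ in place of $\eta_\ell$: this converts the $\bigotimes_\ell \mathsf{TI}$ into $\bigotimes_\ell \mathsf{ATI}^{\eps,\delta}_{\eta_\ell - 3\eps}$ at an additional cost of $k\delta$, yielding the desired probability bound of $1 - 3k\delta$.

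The only real subtlety is ensuring that the register-by-register manipulation does not compound through the tensor product. This is handled by the fact that a product of POVM elements on the unchanged registers is always bounded above by $I$ in operator order, so each hybrid step costs only an additive $\delta$ rather than something that scales with the dimension. Apart from this, the proof is a direct substitution into the cited bullets.
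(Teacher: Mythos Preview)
Your proposal is correct and essentially matches the paper's own proof. For the third bullet the paper simply says ``combining the first two bullet points yields the third bullet point in a straightforward manner,'' which is precisely the chaining you spell out; for the fourth bullet the paper defines the same hybrids $p_i = \Tr\bigl[\bigl(\bigotimes_{\ell \leq i}\mathsf{TI}_{\eta_\ell-\eps}\bigr)\otimes\bigl(\bigotimes_{\ell > i}\mathsf{ATI}^{\eps,\delta}_{\eta_\ell}\bigr)\rho\bigr]$ and appeals to ``a simple hybrid lemma,'' whereas you make the per-step bound explicit via the operator inequality $E^{\mathsf{TI}}_{\eta-\eps} - E^{\mathsf{ATI}}_{\eta} + \delta I \succeq 0$ (valid since the single-register bound in \cref{thm:singleatiprop} holds for all states) tensored with the remaining registers' POVM elements.
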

\begin{proof}
See \cref{appn:multiatiprop}.
\end{proof}

\begin{theorem}\label{thm:multiapiprop}
For any $k \in \mathbb{N}$, let $\mathcal{P}_\ell, \mathcal{D}_\ell$ be a collection of projective measurements and a distribution on the index set of this collection, respectively. Let $\rho$ be any $k$-partite state of appropriate dimension. Consider the measurement outcome $\vec{p}$ and the post-measurement state $\rho'$ obtained by applying $\left(\bigotimes_{\ell \in [k]} \mathsf{API}^{\eps,\delta}_{\mathcal{P}_\ell,\mathcal{D}_\ell}\right)$ to a state $\rho$. Let $\vec{p'}$ be the measurement outcome obtained by applying the measurement $\left(\bigotimes_{\ell \in [k]} \mathsf{PI}({\mathcal{P}_\ell}_{\mathcal{D}_\ell})\right)$ to $\rho'$. Then,
\begin{align*}
    \Pr[\forall \ell \in [k]~~(\vec{p'})_\ell \leq (\vec{p})_\ell + 2\eps] \geq 1 - 2\cdot k \cdot \delta \\
    \Pr[\forall \ell \in [k]~~(\vec{p'})_\ell \geq (\vec{p})_\ell - 2\eps] \geq 1 - 2\cdot k \cdot \delta.
\end{align*}
\end{theorem}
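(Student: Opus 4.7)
The plan is to reduce the $k$-partite statement to the single-register case using the tensor-product structure of the measurements, and then derive the single-register bound by chaining two properties of $\mathsf{API}^{\eps,\delta}_{\mathcal{P}_\ell, \mathcal{D}_\ell}$ established in Theorem~\ref{thm:apiprop}: namely, (i) it is $(\eps,\delta)$-almost projective, and (ii) $\shiftd^\eps(\mathsf{API}^{\eps,\delta}_{\mathcal{P}_\ell, \mathcal{D}_\ell}, \mathsf{PI}({\mathcal{P}_\ell}_{\mathcal{D}_\ell})) \leq \delta$.

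First I would fix an index $\ell \in [k]$ and analyze what happens on the $\ell$-th register. Because both $\bigotimes_{\ell} \mathsf{API}^{\eps,\delta}_{\mathcal{P}_\ell, \mathcal{D}_\ell}$ and $\bigotimes_{\ell} \mathsf{PI}({\mathcal{P}_\ell}_{\mathcal{D}_\ell})$ factor as tensor products, measurements on distinct registers commute and can be applied in any order. Conditioning on the API outcomes $(\vec{p})_j$ for $j \neq \ell$, let $\sigma_\ell$ denote the corresponding conditional reduced state on the $\ell$-th register with API outcome $(\vec{p})_\ell$. The task then reduces to bounding, for each $\ell$, the probability that $\mathsf{PI}({\mathcal{P}_\ell}_{\mathcal{D}_\ell})$ applied to $\sigma_\ell$ produces an outcome $(\vec{p'})_\ell$ within $2\eps$ of $(\vec{p})_\ell$.

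For the single-register bound I would proceed in two steps. By the $(\eps,\delta)$-almost-projectivity of $\mathsf{API}^{\eps,\delta}_{\mathcal{P}_\ell, \mathcal{D}_\ell}$ applied (hypothetically) to $\sigma_\ell$, the output distribution of API on $\sigma_\ell$ is concentrated within $\eps$ of $(\vec{p})_\ell$ with probability $\geq 1 - \delta$, i.e., for any $x$, either $x \leq (\vec{p})_\ell - \eps$ places at most $\delta$ mass to the left, or $x \geq (\vec{p})_\ell + \eps$ places at most $\delta$ mass to the right. Combining this with the shift-distance bound applied to the state $\sigma_\ell$ in the appropriate direction introduces an extra $\eps$ slack and an extra $\delta$ loss, yielding that $\mathsf{PI}({\mathcal{P}_\ell}_{\mathcal{D}_\ell}) \cdot \sigma_\ell$ lies within $2\eps$ of $(\vec{p})_\ell$ with probability $\geq 1 - 2\delta$, separately for each of the two directions. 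A union bound over $\ell \in [k]$ then gives the claimed $1 - 2k\delta$ bound.

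The main subtlety I expect is handling the conditioning correctly: specifically, justifying that both almost-projectivity and the shift-distance bound — which are stated pointwise on states — transfer to the conditional reduced states $\sigma_\ell$. This is not a deep issue because the measurements act as tensor products and thus commute, so the $\ell$-th register's post-measurement state really is just $\mathsf{API}^{\eps,\delta}_{\mathcal{P}_\ell, \mathcal{D}_\ell}$ applied to the marginal of $\rho$, but it must be stated carefully. The other point requiring care is tracking the direction of the shift-distance inequality in order to obtain both the upper and lower bounds symmetrically.
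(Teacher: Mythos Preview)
Your proposal is correct and follows essentially the same route as the paper: reduce to a single register via commutativity of the tensor-factor measurements, chain $(\eps,\delta)$-almost-projectivity with the $\eps$-shift-distance bound to get the $1-2\delta$ single-register estimate, and then combine over $\ell$. The only cosmetic difference is that the paper phrases the final step as an application of the Quantum Union Bound for commuting projectors (\cref{thm:qub}), whereas you invoke a plain union bound; since the relevant $\mathsf{PI}$ projectors act on disjoint registers and hence commute, these are the same thing here.
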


\begin{proof}
See \cref{appn:multiatiprop}.
\end{proof}

\begin{theorem}\label{thm:multiapismallproof}
For any $k \in \mathbb{N}$, let $\mathcal{P}_\ell, \mathcal{D}_\ell$ be a collection of projective measurements and a distribution on the index set of this collection, respectively. Let $\rho$ be any $k$-partite state of appropriate dimension. Consider the measurement outcome $\vec{p}$ obtained by applying $\left(\bigotimes_{\ell \in [k]} \mathsf{API}^{\eps,\delta}_{\mathcal{P}_\ell,\mathcal{D}_\ell}\right)$ to a state $\rho$. Let $\vec{p'}$ be the measurement outcome obtained by applying the measurement $\left(\bigotimes_{\ell \in [k]} \mathsf{PI}({\mathcal{P}_\ell}_{\mathcal{D}_\ell})\right)$ to $\rho$. Then,
\begin{align*}
    \Pr[\forall \ell \in [k]~~(\vec{p'})_\ell \leq \eta_\ell + \eps] \geq \Pr[\forall \ell \in [k]~~(\vec{p})_\ell \leq \eta_\ell] - k \cdot \delta \\
    \Pr[\forall \ell \in [k]~~(\vec{p})_\ell \leq \eta_\ell + \eps] \geq \Pr[\forall \ell \in [k]~~(\vec{p'})_\ell \leq \eta_\ell] - k \cdot \delta.
\end{align*}
\end{theorem}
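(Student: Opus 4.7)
The plan is to prove both inequalities by a hybrid argument that walks from applying $\mathsf{PI}$ on every register to applying $\mathsf{API}$ on every register, one register at a time. Since $\mathsf{API}^{\eps,\delta}_{\mathcal{P}_\ell,\mathcal{D}_\ell}$ and $\mathsf{PI}({\mathcal{P}_\ell}_{\mathcal{D}_\ell})$ act on distinct tensor factors of $\rho$, they commute with each other, so the joint outcome distribution does not depend on the order in which the individual measurements are performed. For $j \in \{0,\dots,k\}$, let $H_j$ denote the experiment that applies $\mathsf{API}^{\eps,\delta}_{\mathcal{P}_\ell,\mathcal{D}_\ell}$ to register $\ell$ for $\ell \le j$ and $\mathsf{PI}({\mathcal{P}_\ell}_{\mathcal{D}_\ell})$ to register $\ell$ for $\ell > j$, recording all outcomes. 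Then $H_0$ samples $\vec{p'}$ and $H_k$ samples $\vec{p}$.

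To analyze the transition from $H_{j-1}$ to $H_j$, I first perform all measurements on registers $\ell \neq j$, obtaining outcomes $(a_\ell)_{\ell \neq j}$ with some probability $q(a)$ and leaving a conditional mixed state $\sigma_a$ on register $j$. The only difference between $H_{j-1}$ and $H_j$ is whether register $j$ is then measured with $\mathsf{PI}({\mathcal{P}_j}_{\mathcal{D}_j})$ or with $\mathsf{API}^{\eps,\delta}_{\mathcal{P}_j,\mathcal{D}_j}$. Theorem \ref{thm:apiprop} gives $\shiftd^\eps(\mathsf{API}^{\eps,\delta}_{\mathcal{P}_j, \mathcal{D}_j}, \mathsf{PI}({\mathcal{P}_j}_{\mathcal{D}_j})) \le \delta$, and by convexity over a pure-state decomposition of $\sigma_a$ this bound extends to mixed states, so for every $a$ and threshold $x$ we have
\begin{equation*}
\Pr_{b \sim \mathsf{API}^{\eps,\delta}_{\mathcal{P}_j,\mathcal{D}_j} \cdot \sigma_a}[b \leq x] \leq \Pr_{b \sim \mathsf{PI}({\mathcal{P}_j}_{\mathcal{D}_j}) \cdot \sigma_a}[b \leq x + \eps] + \delta
\end{equation*}
together with the symmetric bound.

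Summing over those $a$ with $a_\ell \le \eta_\ell$ for all $\ell \neq j$, weighted by $q(a)$, and using that these $q(a)$ values sum to at most $1$, yields
\begin{equation*}
\Pr_{H_j}[\forall \ell~ a_\ell \le \eta_\ell] \le \Pr_{H_{j-1}}[\forall \ell \neq j~ a_\ell \le \eta_\ell \text{ and } a_j \le \eta_j + \eps] + \delta,
\end{equation*}
along with the symmetric statement obtained by swapping the roles of $\mathsf{API}$ and $\mathsf{PI}$. Chaining these single-step bounds across $j = 1,\dots,k$ gives total probability loss $k\delta$ while the $\eps$-shift is applied to each threshold $\eta_\ell$ exactly once, at the hybrid step where register $\ell$ is modified, which is precisely the two inequalities claimed.

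The main delicate point is ensuring that the $\eps$-shifts do not accumulate into a $k\eps$ shift on every threshold: this is handled by the one-register-at-a-time structure of the hybrid, so that at the $j$-th step only $\eta_j$ is shifted. The extension of the shift-distance bound from the pure-state definition in Theorem \ref{thm:apiprop} to the mixed conditional states $\sigma_a$ is routine via convexity, and the commutation of measurements on distinct registers lets us condition on outcomes of all registers other than $j$ before invoking the bound.
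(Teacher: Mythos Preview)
Your proposal is correct and is precisely the hybrid argument the paper has in mind; the paper's own proof consists of the single sentence ``Follows from a simple hybrid lemma,'' and you have spelled out exactly that lemma, replacing $\mathsf{PI}$ by $\mathsf{API}$ one register at a time and invoking the shift-distance bound $\shiftd^\eps(\mathsf{API}^{\eps,\delta}_{\mathcal{P}_j,\mathcal{D}_j},\mathsf{PI}({\mathcal{P}_j}_{\mathcal{D}_j}))\le\delta$ from \cref{thm:apiprop} at each step. The only minor remark is that the paper's stated definition of shift distance is one-sided, so for the second inequality you may want to point instead to the two-sided bounds recorded in \cref{thm:singleatiprop} (which give both $\Pr[\mathsf{API}\le x]\le\Pr[\mathsf{PI}\le x+\eps]+\delta$ and its symmetric counterpart); your argument goes through unchanged with that citation.
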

\begin{proof}
See \cref{appn:multiatiprop}.
\end{proof}

\section{Coset States}\label{sec:coset}
In this section, we start by giving the definition of coset states  \cite{CLLZ21}, which are the states we use in our constructions, and then state the monogamy-of-entanglement property they satisfy. Then, we define two new security games for coset states that streamlines our proofs later on and show secure constructions for these games.
\begin{definition}[Coset States \cite{CLLZ21}]
Let $A$ be a subspace of $\F^n_2$ and $s, s'$ be vectors in $\F^n_2$. We define the coset state associated with $A, s, s'$, denoted $\ket{A_{s, s'}}$, to be
\begin{equation*}
    \ket{A_{s, s'}} = \sum_{a \in A} \frac{1}{\sqrt{|A|}} (-1)^{\langle s', a \rangle} \ket{a + s}.
\end{equation*}
\end{definition}
We will write $A + s$ to denote both the coset $A + s$ and the program that takes as input a vector $v \in \F_2^n$ and outputs $\true$ if and only if $v \in A + s$, and $0$ otherwise. The distinction will be clear from context.
\begin{theorem}[\cite{CLLZ21}]
    Consider a subspace $A \subseteq \F^n_2$ and vectors $s, s' \in \F^n_2$.
    \begin{enumerate}
    \item There exists an efficient quantum algorithm that outputs $\ket{A_{s, s'}}$ given $s, s'$ and the description of $A$.
        \item $H^{\otimes n}\ket{A_{s, s'}} = \ket{(A^{\perp})_{s', s}}$.
        \item We define the canonical element of a coset $A + v$ to be the lexicographically smallest element in the coset and denote it $\mathsf{Can}_A(v)$. There exists an efficient classical algorithm that, on input the description of $A$ and a vector $v$, outputs $\mathsf{Can}_A(v)$.
    \end{enumerate}
\end{theorem}
Coset states satisfy a natural monogamy-of-entanglement (\emph{MoE}) property where any adversary can win the following game with only negligible probability: We present the adversary with a coset state, and it is required to split the state into two (possibly entangled) registers that can be used to simultaneously output vectors in the cosets $A + s$ and $A^{\perp} + s'$ respectively.

\begin{theorem}[Monogamy-of-Entanglement Property for Coset States \cite{CLLZ21, CV22}]\label{defn:strmoeorig}
Consider the following game between the challenger and an adversary tuple $\adve = (\adve_0, \adve_1, \adve_2)$.
\paragraph{\underline{$\moe(\lambda, \adve)$}}
\begin{enumerate}
        \item  Sample uniformly at random a subspace $A$ of $\F_2^\lambda$ of dimension $\frac{\lambda}{2}$ and two elements $s, s' \samp \F_2^\lambda$. 
        \item Sample $\mathsf{OP}^0 \samp \io(A + s)$ and $\mathsf{OP}^1 \samp \io(A^{\perp} + s')$.
    \item Submit $\ket{A_{s, s'}}, \mathsf{OP}^0 , \mathsf{OP}^1$ to $\adve_0$.
    \item $\adve$ outputs two (possibly entangled) registers $\reg_1, \reg_2$.
    \item For $j \in \{1, 2\}$, run $v_{j} \samp \adve_j(\reg_j, A)$.
    \item Output $1$ if and only if $v_1 \in A + s$ and $v_2 \in A^{\perp} + s'$.
\end{enumerate}

Assuming the existence of $\io$ and one-way functions, then for any QPT adversary tuple $\adve$, \begin{equation*}
    \Pr[\moe(\lambda, \adve) = 1] \leq \negl(\lambda).
\end{equation*}

If we assume the existence of subexponentially-secure $\io$ and one-way functions, then there exists a constant $\constmoe > 0$ such that for any QPT adversary tuple 
 \begin{equation*}
    \Pr[\moe(\lambda, \adve) = 1] \leq 2^{-\lambda^{\constmoe}}
\end{equation*}
for all sufficiently large $\lambda$.
\end{theorem}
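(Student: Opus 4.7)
The plan is to reduce the game, in two stages, to an information-theoretic monogamy-of-entanglement bound for coset states. First, I would recall the purely information-theoretic version of the statement, in which $\adve_0$ receives only $\ket{A_{s,s'}}$ and neither of the obfuscated programs $\mathsf{OP}^0, \mathsf{OP}^1$; this is exactly the direct-product MoE theorem for coset states established in \cite{CLLZ21} and tightened in \cite{CV22}. In this information-theoretic setting, the winning probability of any (possibly unbounded) tuple $(\adve_0, \adve_1, \adve_2)$ is already bounded by $\subexp(-\lambda)$, which is more than enough for both conclusions of the theorem.

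The second and main stage is to show that handing $\adve_0$ the obfuscated programs $\mathsf{OP}^0 = \io(A+s)$ and $\mathsf{OP}^1 = \io(A^\perp + s')$ contributes only a negligible (resp.\ subexponential) additional winning probability. The key tool is the subspace-hiding property implied by $\io$ and one-way functions: $\io(A+s)$ is computationally indistinguishable from $\io(\widetilde{A}+s)$, where $\widetilde{A} \supseteq A$ is a uniformly random superspace of suitably larger dimension, and symmetrically for the dual coset $A^\perp + s'$. A short hybrid argument — whose reduction samples the random extension honestly — lets one replace $\mathsf{OP}^0, \mathsf{OP}^1$ by obfuscations that contain strictly less information about $s,s'$ than the coset state itself; in particular, the replaced programs can be produced inside the reduction using only $A$ and public randomness, and not the secrets $s,s'$. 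Composing this reduction with the information-theoretic bound yields the computational MoE bound with the claimed security loss, and the subexponential version follows by instantiating all the underlying primitives with subexponential security and scaling the parameters.

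The main obstacle in this plan is the subspace-hiding step, because $\io$ security only applies to programs that are \emph{functionally equivalent on every input}, whereas $A+s$ and $\widetilde{A}+s$ differ on exponentially many inputs. The standard way around this is to argue that for a suitably distributed random extension $\widetilde{A}$ no efficient adversary can find a separating input except with negligible probability, which converts the raw $\io$ guarantee into an effective computational-indistinguishability replacement at the cost of $\negl(\lambda)$ per hybrid step. Carefully bookkeeping this loss across the constantly many hybrids that replace first $\mathsf{OP}^0$ and then $\mathsf{OP}^1$, and propagating it through the reduction to the information-theoretic MoE game, yields the two advertised bounds $\negl(\lambda)$ and $2^{-\lambda^{\constmoe}}$ for some constant $\constmoe > 0$ respectively.
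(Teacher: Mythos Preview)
The paper does not prove this theorem: it is quoted as a known result from \cite{CLLZ21,CV22} and used as a black box, so there is no ``paper's own proof'' to compare against. Your plan does match the approach in those references at a high level --- reduce the computational game to an information-theoretic monogamy game via a subspace-hiding step for the obfuscated membership programs --- so the overall shape is right.

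There is, however, one imprecision worth flagging. You write that after the subspace-hiding replacement, ``the replaced programs can be produced inside the reduction using only $A$ and public randomness, and not the secrets $s,s'$.'' That is not how the argument goes: the replaced program is (an obfuscation of) membership in $B+s$ for a random superspace $B\supseteq A$, and this still depends on $s$ (specifically on $s \bmod B$). The reason the reduction closes is not that the programs become independent of $s,s'$, but that the \emph{information-theoretic} MoE statement of \cite{CV22} is itself formulated to tolerate giving $\adve_0$ membership oracles for such random superspaces $B+s$ and $C^\perp+s'$ (with $C\subseteq A$), and still yields a $\subexp(-\lambda)$ bound. So the correct bookkeeping is: subspace-hiding lets you swap $\io(A+s)$ for $\io(B+s)$ and $\io(A^\perp+s')$ for $\io(C^\perp+s')$ at computational cost, and then you invoke the stronger information-theoretic MoE that already accounts for those programs. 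If you rewrite that middle sentence accordingly, the plan is sound.
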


In the previous constructions of unclonable primitives \cite{CLLZ21, LLQZ22}, and also in our constructions, the security of the unclonable schemes rely on requiring the \emph{freeloader} adversaries to output a vector from either $A + s$ or $A^\perp + s'$, depending on a random challenge bit presented to them. However, the \emph{pirate} (i.e. splitting) adversary can always guess this challenge bit and measure the coset state accordingly before splitting into freeloaders, and it would be right for both freeloaders with probability $(1/2)^2$. Therefore, to achieve negligible or subexponential security, we amplify the security by using multiple coset states. This variant of the game is implicitly used in \cite{CLLZ21, LLQZ22} and a similar amplification theorem for a related game is also formally proven in \cite{CGLR23}; both for the case of uniformly random challenge strings. We generalize the amplification result to any unpredictable distribution of challenge strings.

\begin{definition}[Unpredictable Distribution]
    Let $\mathcal{D} = \{\mathcal{D}_\lambda\}_\lambda$ be a family of distributions over $\zo^{a(\lambda)}$ where $a(\cdot)$ is some polynomial and we write $\mathcal{D}(x)$ to mean $\Pr_{x' \samp \mathcal{D}}[x' = x]$. Then, $\mathcal{D}$ is said to be (statistically) \emph{unpredictable} if $\max_{x \in \zo^{a(\lambda)}} \{\mathcal{D}(x)\} \leq \negl(\lambda)$.
\end{definition}
\begin{definition}\label{defn:cosetgen}
Define $\mathsf{CosetGen}$ to be the following algorithm, where we have the default parameter values $\morelocalcosetcount = \localcosetcount$ and $\kappa(\lambda) = {\lambda^{\left\lceil3/\constmoe\right\rceil}}$.

\begin{mdframed}
        {\bf $\underline{\mathsf{CosetGen}(1^\lambda, \morelocalcosetcount = \localcosetcount, \kappa(\lambda) =   {\lambda^{\left\lceil3/\constmoe\right\rceil}})}$}
    \begin{enumerate}
    \item For $i \in [\morelocalcosetcount]$, sample uniformly at random a subspace $A_i$ of $\F_2^{\kappa(\lambda)}$ of dimension $\kappa(\lambda)/{2}$ and two elements $s_i, s_i' \samp \F_2^{\kappa(\lambda)}$. 
    \item Output $(A_i, s_i, s_i')_{i \in [\morelocalcosetcount]}$.
\end{enumerate}
    \end{mdframed}

We call the output of $\mathsf{CosetGen}$ a coset tuple.
\end{definition}
\begin{theorem}[Monogamy-of-Entanglement Property for Coset States - Multi-Challenge Version]\label{defn:strmoe}
Let $\mathcal{D}$ be a distribution over $\zo^{\morelocalcosetcount}$ that is unpredictable.
Consider the following game between the challenger and an adversary tuple $\adve = (\adve_0, \adve_1, \adve_2)$.
\paragraph{\underline{$\moeflip(\lambda, \adve)$}}
\begin{enumerate}
    \item $(A_i, s_i, s_i')_{i \in [\morelocalcosetcount]} \samp \mathsf{CosetGen}(1^\lambda, \morelocalcosetcount, \kappa(\lambda))$.
     \item For $i \in [\morelocalcosetcount]$, 
    \begin{enumerate}[label*=\arabic*.]
    \item Sample $\mathsf{OP}^0_i \samp \io(A_i + s_i)$.
    \item Sample $\mathsf{OP}^1_i \samp \io(A^{\perp}_i + s'_i)$.
    \end{enumerate}
    \item Submit $\left\{\ket{A_{i, s_i, s'_i}}\right\}_{i \in [\morelocalcosetcount]}, (\mathsf{OP}_i^0, \mathsf{OP}_i^1)_{i \in [\morelocalcosetcount]}$ to $\adve_0$.
    \item $\adve$ outputs two (possibly entangled) registers $\reg_1, \reg_2$.
    \item Challenger samples $r_1 \samp \mathcal{D}$ and $r_2 \samp \mathcal{D}$.
    \item For $\ell \in \{1, 2\}$, run $(v_{\ell,i})_{i \in [\morelocalcosetcount]} \samp \adve_j(\reg_j, r_j, (A_i)_{i \in [\morelocalcosetcount]})$.
    \item For $\ell \in \{1, 2\}$ and all $i \in [\morelocalcosetcount]$, check if $v_{\ell ,i} \in A_i + s_i$ if $(r_\ell )_i = 0$ and if $v_{\ell ,i} \in A^{\perp}_i + s'_i$ if $(r_\ell )_i = 1$. Output $1$ if and only if all the checks pass. Otherwise, output $0$.
\end{enumerate}

Assuming the existence of $\io$ and one-way functions, and setting $\kappa(\lambda) = \lambda$, then for any QPT adversary tuple $\adve$, \begin{equation*}
    \Pr[\moeflip(\lambda, \adve) = 1] \leq \negl(\lambda).
\end{equation*}

If we assume the existence of subexponentially-secure $\io$ and one-way functions, and set $\mathcal{D}$ to be a distribution that is subexponentially unpredictable, then there exists a constant $\constmoeflip$ such that for any QPT adversary tuple 
 \begin{equation*}
    \Pr[\moeflip(\lambda, \adve) = 1] \leq 2^{-\lambda^{\constmoeflip}}
\end{equation*}
for all sufficiently large $\lambda$.
By setting $\mathcal{D}$ to be $2^{-\localcosetcount}$ unpredictable and $\kappa(\lambda) ={\lambda^{\left\lceil3/\constmoe\right\rceil}}$, there exists such $\constmoeflip > 2$.
\end{theorem}
\begin{proof}
Suppose there exists an QPT adversary tuple $\adve = (\adve_0, \adve_1, \adve_2)$ that wins $\moeflip$ above with probability $\eps(\lambda)$, that is, $\Pr[\moeflip(\lambda,\adve) = 1] \geq \eps(\lambda)$. 

Consider the following adversary $\adve' = (\adve'_0, \adve'_1, \adve'_2)$ for $\moe$. 

\paragraph{\underline{$\adve^{'}_0$}\\}
On input a state $\rho$ and the obfuscated programs $\mathsf{OP}^{0*}, \mathsf{OP}^{1*}$, sample $r_1 \samp \mathcal{D}$ and $r_2 \samp \mathcal{D}$. If $r_1 = r_2$, then abort. Let $j^*$ be an index where $r_1$ and $r_2$ differ, that is, $(r_1)_{j^*} \neq (r_2)_{j^*}$. 
For all $j \in [\morelocalcosetcount] \setminus \{j^*\}$, sample a subspace $A_j$, elements $s_j, s_j' \samp \F^{\kappa(\lambda)}_2$, then set $\rho_j = \ket{A_{j, s_j, s'_j}}$ and sample $\mathsf{OP}^0_j \samp \io(A_j + s_j)$ and $\mathsf{OP}^1_j \samp \io(A^{\perp}_j + s'_j,)$. 
    Then, run $\adve((\rho_j, \mathsf{OP}^0_j, \mathsf{OP}^1_j)_{j \in [\morelocalcosetcount]})$ to obtain a bipartite state $\sigma$.
    Finally, output 
    \begin{equation*}
((\sigma[1], (A_j)_{j \in [\morelocalcosetcount] \setminus \{j^*\}}, j^*, r_1), (\sigma[2], (A_j)_{j \in [\morelocalcosetcount] \setminus \{j^*\}}, j^*, r_2)).        
    \end{equation*}
if $(r_1)_{j^*} = 0$ and $(r_2)_{j^*} = 1$, or 
\begin{equation*}
((\sigma[2], (A_j)_{j \in [\morelocalcosetcount] \setminus \{j^*\}}, j^*, r_2), (\sigma[1], (A_j)_{j \in [\morelocalcosetcount] \setminus \{j^*\}}, j^*, r_1)).        
    \end{equation*}
    if $(r_1)_{j^*} = 1$ and $(r_2)_{j^*} = 0$.
    \paragraph{\underline{$\adve'_\ell$ for $\ell \in \{1, 2\}$}\\}
    $\adve'_\ell$ runs $\adve_\ell$ on its own input and the subspace description $A$ it obtains from the challenger of $\moeflip$. Note that $\adve'_\ell$ can correctly rearrange the input order when passing it to $\adve_\ell$ since it knows $j^*$. Finally, it outputs the $j^*$-th vector in the output of $\adve_\ell$.

It is easy to see that the adversary $\adve'$ wins whenever it does not abort and the vectors output by $\adve$ are correct. Let $p_{s_1, s_2}$ denote the probability of $\adve$ winning $\moeflip$ conditioned on $r_1 = s_1$ and $r_2 = s_2$. Then, we have 
\begin{equation*}
    \eps(\lambda) = \sum_{s_1, s_2 \in \zo^{\morelocalcosetcount}} \mathcal{D}(s_1)\cdot \mathcal{D}(s_2) p_{s_1,s_2}
\end{equation*}
and
\begin{align*}
    \Pr[\moe(\lambda, \adve')] &= \sum_{s_1 \neq s_2 \in \zo^{\morelocalcosetcount}} \mathcal{D}(s_1)\cdot \mathcal{D}(s_2)\cdot p_{s_1,s_2} 
    \\&=  \eps(\lambda) -  \sum_{s \in \zo^{\morelocalcosetcount}}(\mathcal{D}(s))^2 \cdot p_{s,s}
    \\&\geq \eps(\lambda) -  \sum_{s \in \zo^{\morelocalcosetcount}}(\mathcal{D}(s))^2
    \\&\geq \eps(\lambda) -  \left[\sum_{s \in \zo^{\morelocalcosetcount}}(\mathcal{D}(s))\right]\cdot\max_{s \in \zo^{\morelocalcosetcount}}\{\mathcal{D}(s)\}
    \\&\geq \eps(\lambda) -  \max_{s \in \zo^{\morelocalcosetcount}}\{\mathcal{D}(s)\}.
\end{align*}

For the case of negligible security, we will have $\eps(\lambda) > \frac{1}{\poly(\lambda)}$ and $\max_{s \in \zo^{\morelocalcosetcount}}\{\mathcal{D}(s)\} = \negl(\lambda)$ since $\mathcal{D}$ is unpredictable, hence we get $\Pr[\moe(\lambda, \adve')] > \frac{1}{\poly(\lambda)}$, which is a contradiction by \cref{defn:strmoeorig}. The subexponential cases follow by similar calculations.
\end{proof}

Finally, we introduce another variant of the game that is useful for our unbounded collusion secure constructions. In this game, the adversary queries multiple coset state tuples (which are associated with \emph{identity strings}) that are generated pseudorandomly, and it is allowed to choose the coset state tuple for which it wants to \textit{break} the monogamy-of-entanglement property. The adversary is also presented with an (obfuscated) program that allows it to make membership queries for any coset tuple by specifying its identity.

\begin{theorem}[Monogamy-of-Entanglement Property for Coset States - Collusion-Resistant Version]\label{defn:strmoecoll}
Let $\idlen(\lambda)$ be a polynomial, denoting the length of the identity strings. Define $\cosettcount = 3\cdot(\idlen(\lambda) + \lambda)^3$. Let $\mathcal{D}$ be a distribution over $\zo^{\cosettcount}$. Consider the following game between the challenger and an adversary tuple $\adve = (\adve_0, \adve_1, \adve_2)$.
\paragraph{\underline{$\moecoll(\lambda, \idlen(\lambda), \adve)$}}
\begin{enumerate}
    \item The challenger initializes the list $\idset = [~]$. 
    
    \item The challenger samples a PRF key $K \samp F.\mathsf{KeyGen}(1^{\lambda})$. 

    \item The challenger samples $\mathsf{OPMem} \samp \io(\mathsf{PMem}_{K})$, where $\mathsf{PMem}_{K}$ is the following program.
    
\begin{mdframed}
        {\bf $\underline{\mathsf{PMem}_{K}(id, u_1, \dots, u_{\cosettcount}, r)}$}
        
        {\bf Hardcoded: $K$}
        \begin{enumerate}[label=\arabic*.]
            \item $(A_i, s_i, s_i')_{i \in [\cosettcount]} \samp \mathsf{CosetGen}(\cosetgenparam; F(K, id))$.
            \item For each $i \in [\cosettcount]$, check if $u_i \in A_i + s_i$ if $(r)_i = 0$ and check if $u_i \in A^{\perp}_i + s'_i$ if $(r)_i = 1$. If any of the checks fail, output $0$ and terminate.
            \item Output $1$.
        \end{enumerate}

    \end{mdframed}

    \item The challenger submits $\mathsf{OPMem}$ to the adversary.

    \item \textbf{\underline{Query Phase 1:}} For polynomially many rounds, the adversary makes queries as follows. The adversary submits an identity string $id \in \zo^{\idlen(\lambda)}$ to the challenger. Then, the challenger adds $id$ to the list $\idset$, samples  $(A_i, s_i, s_i')_{i \in [\cosettcount]} \samp \mathsf{CosetGen}(\cosetgenparam; F(K, id))$ and submits the state $\left\{\ket{A_{i, s_i, s'_i}}\right\}_{i \in [\cosettcount]}$ to the adversary.

    \item \textbf{\underline{Splitting Phase:}} The adversary $\mathcal{A}_0$ outputs outputs an identity string $id^* \in \zo^{\idlen(\lambda)}$ and a \emph{bipartite} register $\reg$.

    \item The challenger samples $(A^*_i, s^*_i, s^{'*}_i)_{i \in [\cosettcount]} \samp \mathsf{CosetGen}(\cosetgenparam; F(K, id^*))$.

    \item \textbf{\underline{Query Phase 2:}} For $\ell \in \{1, 2\}$, each adversary $\adve_\ell$ is given $R[\ell]$ and $(A_i^*)_{i \in [\cosettcount]}$. For polynomially many rounds, each adversary makes queries to the challenger as follows. $\adve_\ell$ submits an identity string $id$ to the challenger. If $id \neq id^*$, the challenger  samples  $(A_i, s_i, s_i')_{i \in [\cosettcount]} \samp \mathsf{CosetGen}(\cosetgenparam; F(K, id))$ and submits the state $\left\{\ket{A_{i, s_i, s'_i}}\right\}_{i \in [\cosettcount]}$ to the adversary $\adve_\ell$.
    \item \textbf{\underline{Challenge Phase:}} The challenger samples $r_1 \samp \mathcal{D}$ and $r_2 \samp \mathcal{D}$.
    \item For $\ell \in \{1, 2\}$, each adversary $\adve_\ell$  is given $r_\ell$ and it outputs a tuple of vectors $(v_{\ell, i})_{i \in [{\cosettcount}]}$.

    \item The challenger, for all $\ell \in \{1, 2\}$ and for all $i \in [\cosettcount]$, checks if $v_{\ell, i} \in A^*_{i} + s^*_i$ if $(r_\ell)_i = 0$ and checks if $v_{\ell, i} \in {(A^*)}^\perp_{i} + s^{'*}_i$ if $(r_\ell)_i = 1$.
    
    If all the checks above pass and $id^*$ appears in $\idset$ at most once, the challenger outputs $1$. Otherwise, it outputs $0$.
    \end{enumerate}

    Similarly, we define $\moecollsel(\lambda, \idlen(\lambda), \adve)$ to be the selective version of the above game where the adversary outputs the chosen identity $id^*$ at the beginning of the game.

Assuming the existence of $\io$ and one-way functions, then for any polynomial $\idlen(\lambda)$, for any unpredictable distribution $\mathcal{D}$ and for any QPT adversary tuple $\adve$, \begin{equation*}
    \Pr[\moecollsel(\lambda, \idlen(\lambda), \adve) = 1] \leq \negl(\lambda).
\end{equation*}

If we assume the existence of subexponentially-secure $\io$ and one-way functions, and set $\mathcal{D}$ to be the uniform distribution, then for any polynomial $\idlen(\lambda)$ there exists constants $\constmoecollsel, \constmoecoll > 0$ such that for any QPT adversary tuple 
 \begin{align*}
 &\Pr[\moecollsel(\lambda, \idlen(\lambda), \adve) = 1] \leq 2^{-\lambda^{\constmoecollsel}}\\
    &\Pr[\moecoll(\lambda, \idlen(\lambda), \adve) = 1] \leq 2^{-\lambda^{\constmoecoll}}
\end{align*}
for all sufficiently large $\lambda$.
\end{theorem}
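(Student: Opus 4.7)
My plan is to first prove the selective version $\moecollsel$ secure under polynomial iO$+$OWF via a sequence of hybrids that reduces it to the multi-challenge coset game $\moeflip$ at security parameter $L(\lambda)+\lambda$, and then lift the result to the adaptive game $\moecoll$ by guessing $id^*$ up front, which is the step that forces the subexponential assumption. The overall idea is to use puncturing so that the PRF-generated coset tuple associated to the committed identity $id^*$ can be transparently swapped for a truly random coset tuple embedded from the $\moeflip$ challenger.

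For the selective case I will use the following hybrids. In $\hyb_0$ we run $\moecollsel$ honestly. In $\hyb_1$ we sample a punctured key $K\{id^*\} \samp F.\mathsf{Puncture}(K, \{id^*\})$ and the cosets $(A^*_i, s^*_i, s^{'*}_i)_{i\in[\cosettcount]} \samp \mathsf{CosetGen}(\cosetgenparam; F(K, id^*))$, and replace $\mathsf{OPMem}$ by an obfuscation of a modified (appropriately padded) program that uses $K\{id^*\}$ on inputs $id\neq id^*$ and dispatches on input $id^*$ to hardcoded obfuscations $\mathsf{OP}^{0*}_i \samp \io(A^*_i+s^*_i)$ and $\mathsf{OP}^{1*}_i \samp \io((A^*_i)^\perp+s^{'*}_i)$. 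Since this modified program is functionally identical to $\mathsf{PMem}_K$, indistinguishability from $\hyb_0$ is immediate from iO security. In $\hyb_2$ we replace $F(K,id^*)$ by a truly uniform string, used consistently wherever it appears, namely inside the hardcoded $\{\mathsf{OP}^{b*}_i\}$ and in the coset state possibly served during the (at most one) phase-$1$ query on $id^*$. Indistinguishability from $\hyb_1$ follows directly from puncturable PRF security with respect to key $K\{id^*\}$.

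In $\hyb_2$ the coset parameters at $id^*$ are information-theoretically independent of everything else, so I will build a reduction $\adve'$ against $\moeflip(\cdot,\cdot)$ at security parameter $L(\lambda)+\lambda$ as follows. $\adve'$ receives the coset states $\{\ket{A^*_{i,s^*_i,s^{'*}_i}}\}$ and the obfuscated membership programs $(\mathsf{OP}^{0*}_i, \mathsf{OP}^{1*}_i)$ from its challenger, samples $K$ and $K\{id^*\}$, assembles $\mathsf{OPMem}$ as in $\hyb_2$ and hands it to $\adve$. It simulates all identity queries with $id\neq id^*$ using $K\{id^*\}$ together with $\mathsf{CosetGen}$, serves the received coset-state tuple on the first phase-$1$ query on $id^*$ (aborting if $id^*$ is queried again in phase~$1$, since in that case $\moecollsel$ outputs $0$ anyway), and after the split forwards $R[1], R[2]$ together with $(A^*_i)_{i}$ to $\adve_1,\adve_2$. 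When the $\moeflip$ challenger issues $r_1, r_2$, $\adve'$ relays them and returns $\adve_\ell$'s vector tuples as its own answers. A successful $\moecollsel$ adversary translates to a successful $\moeflip$ adversary at security parameter $L(\lambda)+\lambda$, giving winning probability at most $\negl(\lambda)$ in the polynomial regime and at most $2^{-(L(\lambda)+\lambda)^{\constmoeflip}}$ under the subexponential assumption by \cref{defn:strmoe}.

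For the adaptive case I will use a standard guessing reduction: an adversary $\adve$ for $\moecoll$ is turned into an adversary for $\moecollsel$ that samples $\widetilde{id}\samp\zo^{L(\lambda)}$ in advance, commits it as $id^*$, and proceeds only if $\adve$'s eventual choice of $id^*$ equals $\widetilde{id}$. This succeeds with probability at least $2^{-L(\lambda)}\cdot\Pr[\moecoll(\lambda,L,\adve)=1]$, so combining with the selective bound yields
\[
\Pr[\moecoll=1] \;\leq\; 2^{L(\lambda)}\cdot 2^{-(L(\lambda)+\lambda)^{\constmoeflip}} + \negl(\lambda),
\]
which is at most $2^{-\lambda^{\constmoecoll}}$ for some $\constmoecoll>0$ and all large $\lambda$, because $\constmoeflip>2$ ensures that $(L+\lambda)^{\constmoeflip}$ dominates the $L$-bit guessing loss. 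The main obstacle I expect is bookkeeping the parameters so that the iO and PRF puncturing steps each remain $2^{-\lambda^{c}}$-secure for a large enough $c$ (through appropriate polynomial scaling of the security parameter when invoking $\io$ and $F.\mathsf{Puncture}$) and so that $\mathsf{CosetGen}$ being invoked at security parameter $L+\lambda$ with $\cosettcount=3(L+\lambda)^3$ indeed produces a $\moeflip$ advantage that absorbs the $2^{L}$ guessing loss.
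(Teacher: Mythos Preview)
Your proposal is correct and follows essentially the same route as the paper: puncture the PRF key at $id^*$, replace $\mathsf{OPMem}$ by an iO of a functionally equivalent program that hardcodes obfuscated membership checks for the $id^*$-cosets, swap $F(K,id^*)$ for true randomness via puncturable PRF security, and then embed a $\moeflip$ instance at security parameter $L(\lambda)+\lambda$; the adaptive case is handled by guessing $id^*$ and absorbing the $2^{L(\lambda)}$ loss using $\constmoeflip>2$.

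The only organizational difference is that the paper first passes to an intermediate ``stronger game'' in which the challenger hands $K\{id^*\}$ to the adversary after the split, so that Phase~2 queries disappear entirely (each $\adve_\ell$ simulates them itself), and then runs the guessing-plus-hybrids argument inside that game. You instead keep Phase~2 and have the reduction answer those queries with $K\{id^*\}$; this is fine, but you should make explicit that $\adve'_0$ packs $K\{id^*\}$ into the bipartite register it outputs so that your $\adve'_1,\adve'_2$ (who in $\moeflip$ receive no further messages from a challenger) can locally simulate the Phase~2 oracle. With that detail added, the two proofs are the same.
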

\begin{proof}

    We will give the full security proof for the \emph{adaptive} case. For the adaptive case, we will rely on \textit{complexity leveraging}, i.e. basically guessing the challenge identity $id^*$. The security of the selective game $\moecollsel$ follows from the same arguments, with the difference being that we do not use complexity leveraging, since the $id^*$ is selected by the adversary at the beginning (hence no need to guess it), and therefore we can simply puncture the PRF key at that point when preparing $\mathsf{OPMem}$.
    
    We now prove adaptive security. Let $\io$ be a $2^{-\lambda^{c_{\mathsf{iO}}}}$-secure indistinguishability obfuscation scheme and $F$ be a $2^{-\lambda^{c_{\mathsf{PRF}}}}$-secure puncturable PRF family with input length $\idlen(\lambda)$ and output length same as the size of the randomness used by $\mathsf{CosetGen}$, where $c_{\mathsf{iO}}, c_{\mathsf{PRF}}$ are some constants satisfying $\lambda^{{c_{\mathsf{PRF}}}} > (\lambda + \idlen(\lambda))^3$ and $\lambda^{{c_{\mathsf{iO}}}} > (\lambda + \idlen(\lambda))^3$. Note that such a PRF exists assuming subexponentially secure one-way functions (\cref{thm:puncprfexists}). 
    
We first define a stronger game as follows.
\paragraph{\underline{$\mathsf{Moe-Coll-PuncKey}$}}
\begin{enumerate}
    \item The challenger initializes the list $\idset = [~]$.
    
    \item The challenger samples a PRF key $K \samp F.\mathsf{KeyGen}(1^{\lambda})$. 
    
    \item The challenger samples $\mathsf{OPMem} \samp \io(\mathsf{PMem}_{K})$, where $\mathsf{PMem}_{K}$ is the following program.
    
\begin{mdframed}
        {\bf $\underline{\mathsf{PMem}_{K}(id, u_1, \dots, u_{\cosettcount}, r)}$}
        
        {\bf Hardcoded: $K$}
        \begin{enumerate}[label=\arabic*.]
            \item $(A_i, s_i, s_i')_{i \in [\cosettcount]} \samp \mathsf{CosetGen}(\cosetgenparam; F(K, id))$.
            \item For each $i \in [\cosettcount]$, check if $u_i \in A_i + s_i$ if $(r)_i = 0$ and check if $u_i \in A^{\perp}_i + s'_i$ if $(r)_i = 1$. If any of the checks fail, output $0$ and terminate.
            \item Output $1$.
        \end{enumerate}

    \end{mdframed}

    \item The challenger submits $\mathsf{OPMem}$ to the adversary.

    \item For polynomially many rounds, the adversary makes queries as follows. The adversary submits an identity string $id$ to the challenger and a query type, either $\mathsf{CLASSICAL}$ or $\mathsf{STATE}$. Then, the challenger samples $(A_i, s_i, s_i')_{i \in [\cosettcount]} \samp \mathsf{CosetGen}(\cosetgenparam; F(K, id))$.
    
    If the type is $\mathsf{CLASSICAL}$, the challenger adds $id$ to the list $\idset$ \emph{twice} and submits $(A_i, s_i, s_i')_{i \in [\cosettcount]}$ to the adversary.
    
    If the type is $\mathsf{STATE}$, the challenger adds $id$ to the list $\idset$ \emph{once} and submits the state $\left\{\ket{A_{i, s_i, s'_i}}\right\}_{i \in [\cosettcount]}$ to the adversary.

    \item The adversary $\mathcal{A}_0$ outputs outputs an identity string $id^* \in \zo^{\idlen(\lambda)}$.
    
    \item  The challenger computes $(A^*_i, s^*_i, s^{'*}_i)_{i \in {\cosettcount}} = \mathsf{CosetGen}(\cosetgenparam; F(K, id^*))$.

    \item \label{item:moecollstop} {The challenger computes $K\{id^*\} \samp F.\mathsf{Punc}(K, id^*)$ and submits it to the adversary}.

    \item The adversary outputs a \emph{bipartite} register $\reg$.

        \item  The challenger samples $r_1 \samp \mathcal{D}$ and $r_2 \samp \mathcal{D}$.

    \item   For $\ell \in \{1, 2\}$, each adversary $\adve_\ell$ is given $R[\ell], (A_i^*)_{i \in [\cosettcount]}, r_\ell$ and ${K\{id^*\}}$, and it outputs a tuple of vectors $(v_{\ell, i})_{i \in [{\cosettcount}]}$.

    \item The challenger, for all $\ell \in \{1, 2\}$ and for all $i \in [\cosettcount]$, checks if $v_{\ell, i} \in A^*_{i} + s^*_i$ if $(r_\ell)_i = 0$ and checks if $v_{\ell, i} \in {(A^*)}^\perp_{i} + s^{'*}_i$ if $(r_\ell)_i = 1$.
    
    If all the checks above pass and $id^*$ appears in $\idset$ at most once, the challenger outputs $1$. Otherwise, it outputs $0$.

\end{enumerate}

It is easy to see that the security in the stronger game implies security in the original game, since the adversaries $\adve_1, \adve_2$ can simulate their coset queries simply by evaluating the PRF using $K\{id^*\}$. Note that in the original game, they are not allowed to query for $id^*$ after the split, therefore $K\{id^*\}$ rather than $K$ is sufficient.

Now suppose for a contradiction that there exists an adversary $(\adve_0, \adve_1, \adve_2)$ that wins the stronger security game with probability $2^{-\lambda}$. We define a tuple of efficient algorithms $(\adve'_0, \adve'_1, \adve'_2)$ as follows.

\paragraph{\underline{$\adve'_0$}\\}
Sample $id' \samp \zo^{\idlen(\lambda)}$.
Simulate both $\adve_0$ and the challenger of the stronger game above, up to (including) \cref{item:moecollstop}. If $id^* = id'$, output the output of $\adve_0$, along with two copies of $(K\{id^*\}, (A_i^*)_{i \in [\cosettcount]})$, one for each $\adve'_\ell$. Otherwise, output $(\perp, \perp)$.

\paragraph{\underline{$\adve'_\ell$ for $\ell \in \{1, 2\}$}\\}
If the input is $\perp$, output $\perp$ and terminate. Otherwise, simulate the rest of the challenger and $\adve_\ell$.
\\

Observe that the probability that both $\adve'_\ell$ simultaneously output the \emph{correct} vectors is at least $2^{-\lambda}\cdot 2^{-\idlen(\lambda)}$, since $\adve$ outputs the correct vectors with probability $2^{-\lambda}$ by assumption and we have $id' = id^*$ with probability $2^{-\idlen(\lambda)}$ independently. Now, we will modify the algorithms $\adve'$ through a sequence of steps to finally obtain an adversary that wins $\moeflip$ with probability $2^{-2\cdot(\lambda + L(\lambda))}$, which is a contradiction by \cref{defn:strmoe}. Throughout rest of the proof, we will assume $id^* = id'$, which is indeed required to win the game.

We define $\adve_0''$ by modifying $\adve_0'$ so that it now samples $\mathsf{OPMem}$ as follows. It  computes  $z = F(K, id')$ at the beginning of the game and $(A^*_i, s^*_i, s^{'*}_i)_{i \in {\cosettcount}} = \mathsf{CosetGen}(\cosetgenparam; F(K, id^*))$. Then, we first compute for each $i \in [\cosettcount]$, $\mathsf{OP}_i^{{*}0} \samp \io(A^{*}_i + s^{*}_i)$ and $\mathsf{OP}_i^{{*}1} \samp \io(A^{{*}\perp}_i + s_i^{'{*}})$ (i.e., as in \cref{defn:strmoe}) using $(A^*_i, s^*_i, s^{'*}_i)_{i \in {\cosettcount}}$. Then, it samples $\mathsf{OPMem} \samp \io(\mathsf{PMem'}_{K\{id'\}, id', (\mathsf{OP}_i^{*0}, \mathsf{OP}_i^{*1})_{i\in[\cosettcount]}})$.
    \begin{mdframed}
        {\bf $\underline{\mathsf{PMem'}_{K\{id'\}, id', (\mathsf{OP}_i^{*0}, \mathsf{OP}_i^{*1})_{i\in[\cosettcount]}}(id, u_1, \dots, u_{\cosettcount}, r)}$}
        
        {\bf Hardcoded: $\textcolor{red}{K\{id'\}, id', (\mathsf{OP}_i^{*0}, \mathsf{OP}_i^{*1})_{i\in[\cosettcount]}}$}
        \begin{enumerate}[label=\arabic*.]
        \item \textcolor{red}{If $id = id'$, execute the following.}
        \begin{enumerate}[label*=\arabic*.]
            \item \textcolor{red}{For each $i \in [\cosettcount]$, check if $\mathsf{OP}_i^{*0}(u_i) = 1$ if $(r)_i = 0$ and check if $\mathsf{OP}_i^{*1}(u_i) = 1$ if $(r)_i = 1$.}
            \item \textcolor{red}{If all the checks pass, output $1$ and terminate. Otherwise, output $\false$ and terminate.}
        \end{enumerate}
        
          \item $(A_i, s_i, s_i')_{i \in [\cosettcount]} \samp \mathsf{CosetGen}(\cosetgenparam; F_1(\textcolor{red}{K\{id'\}}, id))$.
            \item For each $i \in [\cosettcount]$, check if $u_i \in A_i + s_i$ if $(r)_i = 0$ and check if $u_i \in A^{\perp}_i + s'_i$ if $(r)_i = 1$. If any of the checks fail, output $0$ and terminate.
            \item Output $1$.
        \end{enumerate}
    \end{mdframed}
Further, $\adve''_0$ answers any query made by $\adve$ for $id'$ using $z$.
By correctness of the obfuscations $\mathsf{OP}_i^{{*}0}, \mathsf{OP}_i^{{*}1}$, and by security of the obfuscation of $\mathsf{PMem}'$, we get that the modified adversary outputs the correct vectors with probability at least $2^{-\lambda - \idlen(\lambda)} - 2^{-(\lambda + \idlen(\lambda))^3}$.

Observe that above, we never evaluate the PRF at $id'$ except at the first step, where we compute $z$, and the adversary only gets the punctured key $K\{id'\}$\footnote{Remember that we have $id' = id^*$.}. We define $\adve''_0$ so that it now samples $z$ uniformly at random. Then, by above and by puncturable PRF security (\cref{defn:puncprf}), the adversary outputs the correct vectors with probability at least $2^{-\lambda - \idlen(\lambda)} - 2\cdot2^{-(\lambda + \idlen(\lambda))^3}$. Note that selective security is sufficient since the adversary picks the puncturing point $id'$ before the PRF key is sampled.

Finally, we construct an adversary $\adve_0'''$ for $\moeflip$ with security parameter $L(\lambda) + \lambda$ as follows. It simulates $\adve_0''$, but instead of answering the queries related to $id'$ itself, it uses the coset state tuple it obtains from its challenger. Note that since we require that $id'$ is queried at most once to win, the single copy obtained from the challenger is sufficient.  $\adve_\ell'''$ is constructed similarly, where they use the subspace descriptions and the challenge string $r_\ell$ submitted to them by the challenger. This simulates the game above perfectly, since we place the coset tuple obtained from the challenger in place of the coset tuple associated with $id^*$, which has the same distribution since $z$ is random. Also note that the adversary outputs the correct vectors for this coset tuple, since its choice is $id^*$. Therefore $\adve'''$ wins $\moeflip$ with probability $2^{-\lambda - \idlen(\lambda)} - 2\cdot2^{-(\lambda + \idlen(\lambda))^3} > 2^{-2\cdot(\lambda + \idlen(\lambda))} > 2^{-(\lambda + \idlen(\lambda))^2}$, which is a contradiction (\cref{defn:strmoe}).
\end{proof}

\section{Identity-Based and Functional Encryption with Puncturable Master Secret Key}\label{sec:puncibemain}
In this section, we give definitions for public-key identity-based encryption, along with its variant where we can puncture the master secret key \cite{CHEN2019450}. We also define functional encryption whose master secret key can be punctured simultaneously at all functions such that $f(m_0) \neq f(m_1)$. Then, we show how to construct such schemes in the plain model.

\subsection{Definitions}
We first give the definition of usual public-key identity-based encryption.
\begin{definition}\label{defn:ibe}
 An identity-based encryption scheme with message space $\mathcal{M}$ and identity space $\mathcal{ID}$ consists of the following algorithms that satisfy the correctness guarantee below.

 \begin{itemize}
        \item $\mathsf{Setup}(1^\lambda):$ Takes a security parameter, $\lambda$; outputs a public key $pk$ and a master secret key $msk$.
        \item $\mathsf{KeyGen}(msk, id):$ Takes the master secret key and an identity $id \in \mathcal{ID}$, outputs a secret key for the identity $id$.
    \item $\mathsf{Enc}(pk, id, m):$ Takes the public key $pk$, an identity $id$ and a message $m \in \mathcal{M}$, outputs an encryption of $m$ under the identity $id$.
    \item $\mathsf{Dec}(sk, ct):$ Takes a secret key and a ciphertext, outputs either a message or $\perp$.
 \end{itemize}

\paragraph{Correctness} For all messages $m \in \mathcal{M}$ and identities $id, \in \mathcal{ID}$, we require \begin{equation*}
    \Pr[\mathsf{IBE.Dec}(sk, ct) = m : \begin{array}{c}
         pk, msk \samp \mathsf{IBE.Setup}(1^\lambda)  \\
        sk \samp \mathsf{IBE.KeyGen}(msk, id)\\
         ct \samp \mathsf{Enc}(pk, id, m)
    \end{array}] = 1.
\end{equation*}
\end{definition}

We define the following security notion for identity-based encryption.

\begin{definition}[Adaptive Indistinguishability-Based Security for Identity-Based Encryption]
    Consider the following game between the challenger and an adversary $\adve$.
    \paragraph{$\underline{\ibegame(\lambda, \adve)}$}
    \begin{enumerate}
        \item The challenger runs $(pk, msk) \samp \mathsf{IBE.Setup}(1^\lambda)$ and then it submits $pk$ to the adversary. It also initializes the set $\idset$.

        \item \textbf{Query Phase 1:} For multiple rounds, the adversary adaptively submits an identity string $id \in \idset$. For each query, the challenger samples $sk \samp \mathsf{IBE.KeyGen}(msk, id)$ and submits $sk$ to the adversary. It also adds $id$ to $\idset$.

        \item The adversary outputs an identity $id^*$ and a pair of messages $m_0, m_1$.

        \item The challenger samples $b \samp \zo$ and $ct \samp \mathsf{IBE.Enc}(pk, id^*, m_b)$. It submits $ct$ to the adversary.

        \item \textbf{Query Phase 2:} For multiple rounds, the adversary adaptively submits an identity string $id \in \idset$. For each query, the challenger samples $sk \samp \mathsf{IBE.KeyGen}(msk, id)$ and submits $sk$ to the adversary. It also adds $id$ to $\idset$.

        \item The adversary outputs a guess $b' \in \zo$.

        \item The challenger outputs $1$ if and only if $b' = b$ and $id^* \not\in \idset$.
    \end{enumerate}
We say that an identity-based encryption scheme $\mathsf{IBE}$ satisfies \emph{adaptive indistinguishability-based security} if for any QPT adversary $\adve$ 
\begin{equation*}
    \Pr[\ibegame(\lambda, \adve) = 1] \leq 1/2 + \negl(\lambda).
\end{equation*}

    If the adversary outputs $id^*$ before $\mathsf{IBE.Setup}$ is run, we call it \emph{selective indistinguishability-based security}.
\end{definition}

We can also define an even weaker variant where the adversary cannot query for specific identities, but is only given the keys for randomly sampled identities. While this weaker variant would be sufficient for our unclonable PKE construction (\cref{sec:pkecons}), since we do not know of any simpler constructions (compared to the selectively secure construction below), we do not pursue this further. 

We now move onto identity-based encryption with puncturable master secret keys, introduced by Chen, Zhang, Deng, Chang \cite{CHEN2019450}. This is defined to be an identity-based encryption scheme where the master secret key can be punctured at an identity so that the resulting key can be used to issue secret keys for any identity except for the punctured identity. \cite{CHEN2019450} also give a construction based on hierarchical identity-based encryption.

Our definition is simpler than that of \cite{CHEN2019450}, which allows the adversary to adaptively query for different secret keys before selecting the identity at which the master secret key will be punctured. Our construction below can also be made secure with respect to their definition by employing an adaptively secure puncturable PRF, however, our simplified definition suffices for our unclonable primitive constructions.

\begin{definition}
    Identity-based encryption with puncturable master secret key is an identity-based encryption scheme (\cref{defn:ibe}) with the following additional algorithms and correctness guarantees.
    
\begin{itemize}
    \item $\mathsf{Punc}(msk, id)$: Takes as input the master secret key $msk$ and an identity $id$, outputs a master secret key that is punctured at $id$.

\end{itemize}

\paragraph{Punctured Key Correctness} For all messages $m \in \mathcal{M}$ and identities $id, id' \in \mathcal{ID}$ such that $id \neq id'$, \begin{equation*}
    \Pr[\mathsf{IBE.Dec}(sk, ct) = m : \begin{array}{c}
         pk, msk \samp \mathsf{IBE.Setup}(1^\lambda)  \\
        msk' \samp \mathsf{IBE.Punc}(msk, id') \\
        sk \samp \mathsf{IBE.KeyGen}(msk', id)\\
         ct \samp \mathsf{IBE.Enc}(pk, id, m)
    \end{array}] = 1.
\end{equation*}
\end{definition}

We also define a stronger version of punctured key correctness, where we require that there be no difference between sampling a secret key for an identity using the actual master secret key versus using a punctured master secret key.

\begin{definition}[Strong Punctured Key Correctness]\label{defn:strongpuncibe}
    For all identities $id, id' \in \mathcal{ID}$ such that $id \neq id'$, we require
   \begin{equation*}
   (psk, msk, pk) \equiv (sk, msk, pk)
   \end{equation*}
   where \begin{align*}
   pk, msk &\samp \mathsf{IBE.Setup}(1^\lambda)  \\
   sk &\samp \mathsf{IBE.KeyGen}(msk, id) \\
   msk' &\samp \mathsf{IBE.Punc}(msk, id') \\
   psk &\samp \mathsf{IBE.KeyGen}(msk', id).
   \end{align*}
\end{definition}

\begin{definition}[Puncturable Master Secret Key Security for Identity-Based Encryption]
\label{defn:puncibe}
    Consider the following game between the challenger and an adversary $\adve$.
    \paragraph{$\underline{\puncibegame(\lambda, \adve)}$}
    \begin{enumerate}
    \item The adversary outputs an identity $id^*$.
        \item The challenger runs $(pk, msk) \samp \mathsf{IBE.Setup}(1^\lambda)$ and then $msk^* \samp \mathsf{IBE.Punc}(msk, id^*)$. Then, it submits $pk, msk^*$ to the adversary.

        \item  The adversary outputs a pair of messages $m_0, m_1$.

        \item The challenger samples a challenge bit $b \samp \zo$ and computes $ct \samp \mathsf{IBE.Enc}(pk, id^*, m_b)$. Then, it submits $ct$ to the adversary.

        \item The adversary outputs a guess $b' \in \zo$.

        \item The challenger outputs $1$ if and only if $b' = b$.
    \end{enumerate}
We say that an identity-based encryption scheme $\mathsf{IBE}$ satisfies puncturable master secret key security if any QPT adversary $\adve$,
\begin{equation*}
    \Pr[\puncibegame(\lambda, \adve) = 1] \leq \frac{1}{2} + \negl(\lambda).
\end{equation*}

\end{definition}

It is easy to see that puncturable master secret key security with strong punctured key correctness implies indistinguishability-based security. We formalize this below.

\begin{theorem}
    Let $\mathsf{IBE}$ be an identity-based encryption scheme that satisfies [polynomial, subexponential] puncturable master secret key security (\cref{defn:puncibe}). Then, it also satisfies [polynomial, subexponential] selective indistinguishability-based security  (\cref{defn:ibe}). 
\end{theorem}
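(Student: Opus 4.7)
The plan is to give a direct and tight reduction from selective indistinguishability-based security to puncturable master secret key security, using the strong punctured key correctness guarantee to make the simulation perfect. Given a QPT adversary $\adve$ for $\ibegame$ with advantage $\eps$, I will build a reduction $\adve'$ for $\puncibegame$ with advantage exactly $\eps$.

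The reduction $\adve'$ runs $\adve$ internally. When $\adve$ outputs its selectively chosen target identity $id^*$, $\adve'$ forwards $id^*$ to its own challenger. After receiving $(pk, msk^*)$ from the $\puncibegame$ challenger, where $msk^* \samp \mathsf{IBE.Punc}(msk, id^*)$, the reduction hands $pk$ to $\adve$. Without loss of generality we may assume $\adve$ never submits $id^*$ as a key query, since the $\ibegame$ challenger always outputs $0$ in that case (so any adversary with non-trivial advantage can be modified to refrain from such queries without loss). For every key query $id \neq id^*$ made by $\adve$ in either query phase, $\adve'$ responds with a freshly sampled $sk \samp \mathsf{IBE.KeyGen}(msk^*, id)$. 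When $\adve$ outputs challenge messages $(m_0, m_1)$, the reduction forwards them to its own challenger, receives the challenge ciphertext $ct$, and returns $ct$ to $\adve$. Second-phase key queries are handled identically. Finally, $\adve'$ outputs whatever bit $b'$ the internal $\adve$ outputs.

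The correctness of the simulation follows directly from strong punctured key correctness (\cref{defn:strongpuncibe}): for each $id \neq id^*$, the joint distribution of $(pk, \mathsf{IBE.KeyGen}(msk^*, id))$ is identical to that of $(pk, \mathsf{IBE.KeyGen}(msk, id))$, so answering each query with the punctured master key is information-theoretically indistinguishable from answering with the true master key. By a straightforward hybrid over the (polynomially many) queries — or directly, since each query is sampled with fresh randomness — the entire view of $\adve$ inside the reduction is distributed identically to its view in the real $\ibegame$ experiment. Hence $\Pr[\puncibegame(\lambda, \adve') = 1] = \Pr[\ibegame(\lambda, \adve) = 1]$. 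Since $\adve'$ runs in time polynomial in that of $\adve$ plus $q$ invocations of $\mathsf{IBE.KeyGen}(msk^*, \cdot)$ (where $q$ bounds the number of key queries), the reduction preserves both polynomial and subexponential security uniformly.

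There is no genuine conceptual obstacle here: the strength of \cref{defn:strongpuncibe} — identical distributions rather than mere computational indistinguishability — is precisely what makes the simulation perfect and the advantage preservation exact. The only minor subtlety is the WLOG argument that rules out queries on $id^*$, which the selective-security game automatically penalizes. If one instead had only a computational form of punctured key correctness, the reduction would go through essentially unchanged but with an additional negligible (respectively subexponential) loss per key query, still preserving both quantitative regimes.
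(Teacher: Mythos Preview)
Your proof is correct and essentially identical to the paper's: both construct a reduction that forwards $id^*$, answers all key queries using $\mathsf{IBE.KeyGen}(msk^*,\cdot)$, and invokes strong punctured key correctness (\cref{defn:strongpuncibe}) to argue the simulation is perfect. Your explicit handling of the WLOG that $\adve$ never queries $id^*$ and your closing remark about the computational-correctness variant are minor additions, but the core argument matches the paper's exactly.
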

\begin{proof}
    Suppose for a contradiction that there exist a QPT adversary $\adve$ that wins the selective indistinguishability-based security game against $\mathsf{IBE}$ with probability $\eps(\lambda)$. We claim that the adversary $\adve'$ described below wins the puncturable master secret key security game with $\eps(\lambda)$.

    $\adve'$ runs $\adve$ to obtain $id^*$ and outputs it. Then, it receives $pk, msk^*$ from the challenger. Then, $\adve'$ simulates the first query phase as follows. It runs $\adve$, and whenever it queries for an identity string $id$, $\adve'$ computes $sk \samp \mathsf{IBE.KeyGen}(msk^*, id)$ and gives $sk$ to $\adve$. When $\adve$ yields $m_0, m_1$, then $\adve$ outputs these values. $\adve'$ simulates the second query phase similarly using $msk^*$ after receiving the challenge ciphertext from the challenger. Finally, when $\adve$ outputs it guess $b'$, the adversary $\adve'$ also outputs it.

    Since $id^* \not\in \idset$, i.e., since the adversary $\adve$ never queries for $id^*$, hence by strong punctured key correctness of $\mathsf{IBE}$, there is no difference between sampling the secret keys using the punctured master secret key (as above) or the actual master secret key (as in the original selective indistinguishability-based security game). Hence, $\adve'$ and the challenger above perfectly simulate the selective indistinguishability-based security game played by $\adve$. Therefore, $\adve'$ wins the puncturable master secret key game with probability $\eps(\lambda)$.

    Plugging in $\eps(\lambda) = 1/\poly(\lambda)$ or $\eps(\lambda) > \subexp(\lambda)$ completes the proof.
\end{proof}

Finally, we define functional encryption where the master secret key can be punctured at all functions $f$  such that $f(m_0) = f(m_1)$. Note that previous works \cite{BV18, YALXY19, Kitagawa2022} define their own versions of puncturable functional encryption which is different than ours, see \cref{sec:relatedwork}. However, throughout the paper, we will write \emph{puncturable functional encryption} to mean our definition.

\begin{definition}[Puncturable Functional Encryption]
    Puncturable functional encryption is a functional encryption scheme (\cref{defn:fe}) with the following additional algorithms and correctness guarantees.
    
\begin{itemize}
    \item $\mathsf{Punc}(msk, m_0, m_1)$: Takes as input the master secret key $msk$ and outputs a master secret key that is \emph{punctured}.

\end{itemize}

\paragraph{Punctured Key Correctness} For all messages $m, m_0, m_1 \in \mathcal{M}$ and functions $f \in \mathfrak{F}$ such that $f(m_0) \neq f(m_1)$, \begin{equation*}
    \Pr[\mathsf{FE.Dec}(sk, ct) = f(m) : \begin{array}{c}
         pk, msk \samp \mathsf{FE.Setup}(1^\lambda)  \\
        msk' \samp \mathsf{FE.Punc}(msk, m_0, m_1) \\
        sk \samp \mathsf{FE.KeyGen}(msk', f)\\
         ct \samp \mathsf{FE.Enc}(pk, m)
    \end{array}] = 1.
\end{equation*}
\end{definition}

Similar to IBE, we also define a stronger version of punctured key correctness.

\begin{definition}[Strong Punctured Key Correctness]\label{defn:strongpuncfe}
    For all messages $m_0, m_1 \in \mathcal{M}$ and functions $f \in \mathfrak{F}$ such that $f(m_0) \neq f(m_1)$, we require
   \begin{equation*}
   (psk, msk, pk) \equiv (sk, msk, pk)
   \end{equation*}
   where \begin{align*}
   pk, msk &\samp \mathsf{FE.Setup}(1^\lambda)  \\
   sk &\samp \mathsf{FE.KeyGen}(msk, f) \\
   msk' &\samp \mathsf{FE.Punc}(msk, m_0, m_1) \\
   psk &\samp \mathsf{FE.KeyGen}(msk', f).
   \end{align*}
\end{definition}

\begin{definition}[Puncturable Functional Encryption Security]
\label{defn:puncfe}
    Consider the following game between the challenger and an adversary $\adve$.
    \paragraph{$\underline{\puncfegame(\lambda, \adve)}$}
     \begin{enumerate}
        \item Challenger samples the keys $msk, pk \samp \mathsf{FE.Setup}(1^\lambda)$.
        \item The adversary receives $pk$. It makes polynomially many queries by sending a function $f \in \mathcal{F}$ and receiving the corresponding functional key $sk_f \samp \mathsf{FE.KeyGen}(msk, f)$.
        \item The adversary outputs challenge messages $m_0, m_1$. 
        \item The challenger checks if $f(m_0) = f(m_1)$ for all $f \in \mathfrak{F}$ that was queried by the adversary. If this condition is not satisfied, it outputs $0$ and terminates.
        \item The challenger samples $msk' \samp \mathsf{FE.Punc}(msk, m_0, m_1)$.
        \item The challenger samples a challenge bit $b \samp \zo$ and prepares $ct \samp \mathsf{Enc}(pk, m_b)$.
        \item The adversary receives $msk', ct$ and outputs a guess $b'$.
        \item The challenger outputs $1$ if $b' = b$.
    \end{enumerate}
   
We say that an functional encryption scheme $\mathsf{FE}$ satisfies puncturable master secret key security if any QPT adversary $\adve$,
\begin{equation*}
    \Pr[\puncfegame(\lambda, \adve) = 1] \leq \frac{1}{2} + \negl(\lambda).
\end{equation*}

\end{definition}

\subsection{Puncturable Identity-Based Encryption Construction}\label{sec:consibe}
In this section, we show how to construct an identity-based encryption with puncturable master secret key from indistinguishability obfuscation and public-key encryption, which in turn can be constructed from indistinguishability obfuscation and one-way functions \cite{SW14, Z12}. 

    We obtain our construction through a small addition to the PKE-to-IBE transformation of \cite{CHEN2019450}. In their construction of an IBE scheme, the master secret key is a (puncturable) PRF key that is used to spin up a fresh instance of a PKE scheme for each identity value. In our puncturable IBE scheme, the puncturing algorithm is simply the puncturing algorithm for the PRF family. We present the full scheme below for completeness.

    Assume the existence of following schemes.
\begin{itemize}
    \item $\io$, an indistinguishability obfuscation scheme,
    \item $\mathsf{PKE}$, a public-key encryption scheme with message space $\mathcal{M}$,
    \item $F$, a puncturable PRF family with input space $\mathcal{ID}$ and output length same as the size of the randomness used by $\mathsf{PKE.KeyGen}$,
\end{itemize}

    \paragraph{$\underline{\mathsf{IBE.Setup}(1^{\lambda})}$}
\begin{enumerate}
    \item Sample a PRF key $K \samp F.\mathsf{KeyGen}(1^{\lambda})$.
    \item Sample $\mathsf{OPKeyGen} \samp \io(\mathsf{PKeyGen}_{K}, 1^{\lambda})$, where $\mathsf{PKeyGen}_{K}$ is the following program.
    
\begin{mdframed}
        {\bf $\underline{\mathsf{PKeyGen}_{K}(id)}$}
        
        {\bf Hardcoded: $K$}
        \begin{enumerate}[label=\arabic*.]
            \item Sample $ipk, isk \samp \mathsf{PKE.KeyGen}(1^\lambda; F(K, id))$.
            \item Output $ipk$.
        \end{enumerate}

    \end{mdframed}
    \item Output $(\mathsf{OPKeyGen}, K)$.
    \end{enumerate}

 \paragraph{$\underline{\mathsf{IBE.KeyGen}(msk, id)}$}
 \begin{enumerate}
     \item Parse $K = msk$.
     \item Compute $(ipk, isk) \samp \mathsf{PKE.KeyGen}(1^\lambda; F(K, id))$.
     \item Output $isk$.
 \end{enumerate}

\paragraph{$\underline{\mathsf{IBE.Punc}(msk, id)}$}
 \begin{enumerate}
     \item Parse $K = msk$.
     \item Compute $K' \samp F.\mathsf{Punc}(K, id)$.
     \item Output $K'$.
 \end{enumerate}
 
 \paragraph{$\underline{\mathsf{IBE.Enc}(pk, id, m)}$}
 \begin{enumerate}
     \item Parse $\mathsf{OPKeyGen} = pk$.
     \item Compute $ipk \samp \mathsf{OPKeyGen}(id)$.
     \item Output $\mathsf{PKE.Enc}(ipk, m)$.
 \end{enumerate}

\paragraph{$\underline{\mathsf{IBE.Dec}(sk, ct)}$}

Same as $\mathsf{PKE.Dec}$.

\begin{theorem}
    $\mathsf{IBE}$ satisfies both correctness (\cref{defn:strongpuncibe}) and strong punctured master secret key correctness (\cref{defn:puncibe}).
\end{theorem}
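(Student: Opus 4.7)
Both properties reduce essentially immediately to the correctness guarantees of the three underlying primitives ($\io$, $\mathsf{PKE}$, and the puncturable PRF $F$), so the plan is to unwind the definitions and invoke each guarantee in turn.

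\textbf{Correctness.} Fix a message $m \in \mathcal{M}$ and an identity $id \in \mathcal{ID}$. I would sample $(pk, msk) \samp \mathsf{IBE.Setup}(1^\lambda)$, so that $msk = K$ is a PRF key and $pk = \mathsf{OPKeyGen} \samp \io(\mathsf{PKeyGen}_{K}, 1^\lambda)$. By $\io$ correctness, for the single input $id$ we have $\mathsf{OPKeyGen}(id) = \mathsf{PKeyGen}_{K}(id)$ with probability $1$; by construction this equals the first coordinate $ipk$ of $\mathsf{PKE.KeyGen}(1^\lambda; F(K, id))$. Meanwhile, $\mathsf{IBE.KeyGen}(msk, id)$ outputs the second coordinate $isk$ of the \emph{same} deterministic computation $\mathsf{PKE.KeyGen}(1^\lambda; F(K, id))$ (since the randomness is pinned to $F(K, id)$). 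Hence the key pair $(ipk, isk)$ used during encryption and decryption is a legitimate matching PKE key pair, and the perfect correctness of $\mathsf{PKE}$ gives $\mathsf{IBE.Dec}(isk, \mathsf{IBE.Enc}(pk, id, m)) = \mathsf{PKE.Dec}(isk, \mathsf{PKE.Enc}(ipk, m)) = m$.

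\textbf{Strong punctured key correctness.} Fix identities $id \neq id'$. In the two experiments defining the tuples $(psk, msk, pk)$ and $(sk, msk, pk)$, the pair $(msk, pk)$ is sampled identically, so it suffices to show that $psk$ and $sk$ are equal as functions of $msk$ (jointly with any internal coins of $\mathsf{IBE.Punc}$). Unrolling: $sk$ is the $\mathsf{PKE}$ secret key produced by $\mathsf{PKE.KeyGen}(1^\lambda; F(K, id))$, while $psk$ is the $\mathsf{PKE}$ secret key produced by $\mathsf{PKE.KeyGen}(1^\lambda; F(K', id))$, where $K' \samp F.\mathsf{Punc}(K, id')$. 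The puncturing correctness from \Cref{defn:puncprf} guarantees that $F(K', x) = F(K, x)$ with probability $1$ for every $x$ outside the punctured set, and in particular for $x = id$ since $id \neq id'$. Therefore $F(K', id) = F(K, id)$ with probability $1$, so $\mathsf{PKE.KeyGen}$ is invoked on identical randomness in both experiments, yielding $psk = sk$ pointwise. This proves the distributional equality $(psk, msk, pk) \equiv (sk, msk, pk)$.

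Neither direction poses a real obstacle: the only subtlety to be careful about is that IBE correctness and strong punctured correctness are stated for a \emph{single} identity, so we only need $\io$-correctness and PRF-puncturing-correctness at one input rather than uniformly over all inputs. This is exactly what the definitions we rely on supply.
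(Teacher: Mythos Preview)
Your proposal is correct and follows the same approach as the paper: correctness from $\io$ and $\mathsf{PKE}$ correctness, and strong punctured key correctness from the punctured-key correctness of $F$ (which gives $F(K',id)=F(K,id)$ whenever $id\neq id'$). The paper's proof is just a terser version of what you wrote.
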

\begin{proof}
Correctness is easy to see, by correctness of $\mathsf{PKE}$ and $\mathsf{IBE}$. 

We move onto strong punctured master secret key correctness. Consider any $id \neq id'$.
    By punctured key correctness of $F$, we have that $F(K\{id'\}, id) = F(K, id)$ with probability 1 over the choice of the key and sampling of the punctured key. The result follows.
\end{proof}

\begin{theorem}\label{thm:ibepunc}
    $\mathsf{IBE}$ satisfies puncturable master secret key security (\cref{defn:puncibe}).
\end{theorem}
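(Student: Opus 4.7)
The plan is to proceed via a standard sequence of hybrids that incrementally decouple the challenge identity $id^*$ from the PRF key and reduce security to the underlying $\mathsf{PKE}$ scheme. Let $\adve$ be any QPT adversary for $\puncibegame$, and denote by $\hyb_0$ the real game. In $\hyb_0$, the challenger samples $K \samp F.\keygen(1^\lambda)$, computes $\mathsf{OPKeyGen} \samp \io(\mathsf{PKeyGen}_K)$, produces $K\{id^*\} \samp F.\mathsf{Punc}(K, id^*)$, and hands $(\mathsf{OPKeyGen}, K\{id^*\})$ to $\adve$; the challenge ciphertext is $\mathsf{PKE.Enc}(ipk^*, m_b)$ where $ipk^* \samp \mathsf{PKE.KeyGen}(1^\lambda; F(K, id^*))$.

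First I would move to $\hyb_1$, in which the obfuscated program is replaced by an obfuscation of a modified circuit $\mathsf{PKeyGen}'_{K\{id^*\}, id^*, ipk^*}$: on input $id$, this program returns the hardcoded $ipk^*$ if $id = id^*$, and otherwise returns $ipk \samp \mathsf{PKE.KeyGen}(1^\lambda; F(K\{id^*\}, id))$. Since $ipk^*$ is computed using $F(K, id^*)$ and $F(K\{id^*\}, \cdot)$ agrees with $F(K, \cdot)$ on every $id \neq id^*$ by the punctured-key correctness of $F$, the two circuits compute the same function on every input. Padding appropriately, $\hyb_0 \approx^c \hyb_1$ by the security of $\io$. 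In $\hyb_2$, I would then sample the randomness used to produce $ipk^*$ uniformly at random rather than as $F(K, id^*)$. Because the only place the full key $K$ is used is in this single evaluation, and the adversary's remaining view depends on $K$ only through $K\{id^*\}$ and through the value $F(K, id^*)$ that we are replacing, puncturable PRF security (\cref{defn:puncprf}) gives $\hyb_1 \approx^c \hyb_2$.

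In $\hyb_2$, the challenge ciphertext is $\mathsf{PKE.Enc}(ipk^*, m_b)$ where $(ipk^*, isk^*) \samp \mathsf{PKE.KeyGen}(1^\lambda)$ with fresh randomness, and crucially the corresponding $isk^*$ appears nowhere in the adversary's view (it is used neither to generate $\mathsf{OPKeyGen}$, which now only contains $ipk^*$, nor to answer any key query, since $\adve$ receives only the punctured master key). Therefore a straightforward reduction to the IND-CPA security of $\mathsf{PKE}$ shows that $\Pr[\hyb_2 = 1] \leq 1/2 + \negl(\lambda)$: the reduction receives the external PKE public key, plants it as $ipk^*$, forwards $m_0, m_1$ to the PKE challenger, embeds the returned ciphertext as the $\mathsf{IBE}$ challenge, and outputs $\adve$'s guess. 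Chaining the three hybrid indistinguishabilities gives $\Pr[\puncibegame(\lambda, \adve) = 1] \leq 1/2 + \negl(\lambda)$.

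The only delicate step is the $\io$ hop from $\hyb_0$ to $\hyb_1$: I must ensure that the two programs being obfuscated have exactly the same input/output behavior on every input, not just with overwhelming probability, so that I can invoke $\io$ security without an intervening worst-case argument. This is handled by hardcoding $ipk^*$ as the \emph{deterministic} output that $\mathsf{PKeyGen}_K$ would produce on $id^*$ (which is well-defined once $K$ is fixed, since $\mathsf{PKE.KeyGen}$ is run on the deterministic string $F(K, id^*)$), and by appealing to the punctured-key correctness of $F$ on all $id \neq id^*$; the circuits must also be padded to the same length before obfuscation. Everything else is routine.
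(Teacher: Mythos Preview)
Your proposal is correct and follows essentially the same approach as the paper's proof: the same three hybrids (hardcode $ipk^*$ and switch to the punctured key inside the obfuscated program via $\io$ security; replace $F(K,id^*)$ by true randomness via puncturable PRF security; reduce to $\mathsf{PKE}$ IND-CPA by planting the external public key as $ipk^*$). Your extra remark about ensuring exact functional equivalence for the $\io$ hop is also exactly what the paper relies on.
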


Since public-key encryption can be based on $\io$ and one-way functions \cite{SW14, Z12}, and puncturable PRFs can be based on one-way functions also (\cref{thm:puncprfexists}), we get the following corollary.
\begin{corollary}
    Assuming the existence of indistinguishability obfuscation and one-way functions, there exist an identity-based encryption scheme with puncturable master secret keys, for any message length and identity length that is polynomial in the security parameter.
\end{corollary}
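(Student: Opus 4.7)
The plan is to derive the corollary as a direct consequence of the construction in \cref{sec:consibe} together with \cref{thm:ibepunc}, simply by instantiating each building block from the hypothesized primitives. First, I would note that \cref{thm:ibepunc} establishes that the construction $\mathsf{IBE}$ of \cref{sec:consibe} satisfies puncturable master secret key security whenever the three underlying primitives it uses ($\io$, $\mathsf{PKE}$, and the puncturable PRF family $F$) are themselves secure, and the preceding theorem shows that strong punctured key correctness holds unconditionally given correctness of the building blocks.

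Next I would verify that each of those building blocks is available under the assumed hypotheses. Indistinguishability obfuscation is directly assumed. Post-quantum one-way functions imply puncturable PRFs with any polynomial input length and any polynomial output length by \cref{thm:puncprfexists}; in particular we can take the input length to be the desired identity length $\idlen(\lambda)$ and the output length to be the randomness length consumed by $\mathsf{PKE.KeyGen}$, which is itself polynomial in $\lambda$. Public-key encryption with any polynomial message length follows from $\io$ and one-way functions via the Sahai--Waters construction (cf.~\cite{SW14,Z12}), so for the desired message length we instantiate $\mathsf{PKE}$ accordingly.

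Plugging these three instantiations into the generic construction of \cref{sec:consibe} yields a concrete scheme $\mathsf{IBE}$ whose message space is the chosen polynomial-length message space and whose identity space is $\zo^{\idlen(\lambda)}$. By the theorems already proved in this section, this scheme satisfies both (strong) punctured key correctness and puncturable master secret key security, which is exactly what the corollary asserts.

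The only subtlety worth flagging is a size-consistency check: since $\mathsf{OPKeyGen}$ is an $\io$ obfuscation of a program $\mathsf{PKeyGen}_K$ that internally invokes $\mathsf{PKE.KeyGen}$ on PRF-derived randomness, we should record that the PRF output length is set to match the randomness complexity of $\mathsf{PKE.KeyGen}$ at security parameter $\lambda$; this is purely bookkeeping and is handled by \cref{thm:puncprfexists} allowing arbitrary polynomial output lengths. No further argument is needed, and I would conclude with a one-line invocation of \cref{thm:ibepunc} to finish.
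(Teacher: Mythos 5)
Your proposal is correct and matches the paper's argument exactly: the corollary is obtained by instantiating the puncturable PRF from one-way functions (\cref{thm:puncprfexists}) and the public-key encryption scheme from $\io$ and one-way functions via \cite{SW14, Z12}, then invoking \cref{thm:ibepunc} and the accompanying correctness theorem for the construction of \cref{sec:consibe}. The bookkeeping remark about matching the PRF output length to the randomness complexity of $\mathsf{PKE.KeyGen}$ is a fine (if optional) addition.
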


We prove \cref{thm:ibepunc} in \cref{sec:proofibe}. It is easy to see that our proof also generalizes to the subexponential security case. Hence, we get the following.

\begin{corollary}\label{thm:subexpibe}
    Assuming the existence of subexponentially secure indistinguishability obfuscation and one-way functions, there exist a subexponentially secure identity-based encryption scheme with puncturable master secret keys.
\end{corollary}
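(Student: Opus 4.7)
The plan is to prove security via a sequence of three hybrids that gradually strips the dependence of the adversary's view on $F(K, id^*)$ and finally reduces to the IND-CPA security of the underlying $\pke$. Let $\hyb_0$ denote the real game $\puncibegame(\lambda, \adve)$. In $\hyb_0$, the adversary receives $pk = \mathsf{OPKeyGen} \samp \io(\mathsf{PKeyGen}_K)$, the punctured master secret $msk^* = K\{id^*\}$, and a challenge ciphertext $ct \samp \mathsf{PKE.Enc}(ipk^*, m_b)$ where $ipk^*$ is the output of $\mathsf{OPKeyGen}(id^*)$, i.e., the first component of $\mathsf{PKE.KeyGen}(1^\lambda; F(K, id^*))$.

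First, I would define $\hyb_1$ where $\mathsf{OPKeyGen}$ is replaced by an obfuscation of an alternate program $\mathsf{PKeyGen}'_{K\{id^*\}, id^*, ipk^*}$ that has $ipk^*$ hardcoded: on input $id = id^*$ it outputs $ipk^*$ directly, and on input $id \neq id^*$ it computes $\mathsf{PKE.KeyGen}(1^\lambda; F(K\{id^*\}, id))$ and returns the public part. By puncturable PRF correctness (\cref{defn:puncprf}) we have $F(K\{id^*\}, id) = F(K, id)$ for every $id \neq id^*$, and at $id = id^*$ the value $ipk^*$ is by construction the public component of $\mathsf{PKE.KeyGen}(1^\lambda; F(K, id^*))$; hence after padding both programs to a common size, they compute the same function at every input. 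Indistinguishability of $\hyb_0$ and $\hyb_1$ then follows from the security of $\io$.

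Next I would pass to $\hyb_2$ in which the randomness used to produce $ipk^*$ is replaced by a uniformly sampled string $r^* \samp \zo^{n(\lambda)}$. After the previous step, $K$ enters the view of $\adve$ only through $K\{id^*\}$ and through the hardcoded value $ipk^* = \mathsf{PKE.KeyGen}(1^\lambda; F(K, id^*))$; therefore any distinguisher between $\hyb_1$ and $\hyb_2$ yields a puncturable PRF distinguisher in the selective game of \cref{defn:puncprf} that commits to $S = \{id^*\}$, receives $K\{id^*\}$ together with either $F(K, id^*)$ or a uniform $r^*$, and then simulates the rest of $\hyb_1 / \hyb_2$ using those values. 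Finally, in $\hyb_2$ the challenger never uses the PKE secret key corresponding to $ipk^*$, so I would reduce to IND-CPA security of $\pke$: the reduction receives an external PKE challenge public key $ipk^*$, uses it as the hardwired constant in $\mathsf{PKeyGen}'$, samples $K\{id^*\}$ on its own, forwards the pair $(m_0, m_1)$ returned by $\adve$ to the PKE challenger, and relays the resulting ciphertext back. The advantage of $\adve$ in $\hyb_2$ equals the IND-CPA advantage of the reduction, which is negligible.

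The main obstacle is the bookkeeping in $\hyb_1$: ensuring that $\mathsf{PKeyGen}_K$ and $\mathsf{PKeyGen}'_{K\{id^*\}, id^*, ipk^*}$ are padded to the same circuit size so that $\io$ applies, and verifying functional equivalence at every input including $id^*$ itself. Everything else is a routine chain of $\io$, PRF and $\pke$ invocations, and each step preserves subexponential security losses, which immediately yields \cref{thm:subexpibe} as well.
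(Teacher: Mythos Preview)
Your proposal is correct and follows essentially the same approach as the paper: the three hybrids (hardwire $ipk^*$ and switch to $K\{id^*\}$ via $\io$, replace $F(K,id^*)$ by uniform via puncturable PRF security, then reduce to $\pke$ IND-CPA) match the paper's proof of \cref{thm:ibepunc} in \cref{sec:proofibe} exactly, and the paper likewise derives the subexponential corollary by observing that each step preserves subexponential security losses.
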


\subsection{Proof of Security for Puncturable IBE}\label{sec:proofibe}
In this section, we prove \cref{thm:ibepunc}. Suppose for a contradiction that there exists a QPT adversary $\adve$ that wins the puncturable master secret key security game $\puncibegame$ (\cref{defn:puncibe}) with non-negligible probability. We prove security through a series of hybrids, each of which is constructed by modifying the previous hybrid.

\paragraph{$\hyb_0$:} The original game $\puncibegame(\lambda, \adve)$.
\paragraph{$\hyb_1$:} The challenger computes $z^* = F(K, id^*)$ and $K\{id^*\} \samp F.\mathsf{Punc}(K, id^*)$ after the adversary has submitted $id^*$. Then, it computes $ipk^*, isk^* \samp \mathsf{PKE.KeyGen}(1^\lambda; z^*)$. Finally, instead of sampling the public key $pk$ as before, it now computes it as $\mathsf{OPKeyGen} \samp \io(\mathsf{PKeyGen}'_{K\{id^*\}, ipk^*, id^*}, 1^{\lambda})$, where $\mathsf{PKeyGen}'_{K\{id^*\}, ipk^*, id^*}$ is the following program.
    
\begin{mdframed}
        {\bf $\underline{\mathsf{PKeyGen}'_{K\{id^*\}, ipk^*, id^*}(id)}$}
        
        {\bf Hardcoded: $K\{id^*\}, \textcolor{red}{ipk^*, id^*}$}
        \begin{enumerate}[label=\arabic*.]
            \item \textcolor{red}{If $id = id^*$, output $ipk^*$ and terminate.}
            \item Sample $ipk, isk \samp \mathsf{PKE.KeyGen}(1^\lambda; F(K, id))$.
            \item Output $ipk$.
        \end{enumerate}

    \end{mdframed}

\paragraph{$\hyb_2$:} The challenger now samples $z^*$ uniformly at random from the output space of $F$ instead of computing it as  $z^* = F(K, id^*)$.

\begin{claim}
    $\hyb_0 \approx \hyb_1$.
\end{claim}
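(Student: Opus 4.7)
The plan is to show that the two hybrids differ only in the program being obfuscated to produce $pk$, and that these two programs compute the same function, so the hybrids are indistinguishable by $\io$ security. Concretely, I would set up a reduction to $\io$ security where the reduction chooses $C_0 = \mathsf{PKeyGen}_K$ and $C_1 = \mathsf{PKeyGen}'_{K\{id^*\}, ipk^*, id^*}$, hands these to the $\io$ challenger, and uses the returned obfuscated program as the public key passed to $\adve$.

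The first key step is to establish functional equivalence of the two circuits. Fix any $id^*$ output by $\adve$, any PRF key $K$, the punctured key $K\{id^*\} \samp F.\mathsf{Punc}(K, id^*)$, the value $z^* = F(K, id^*)$, and the keys $(ipk^*, isk^*) \samp \mathsf{PKE.KeyGen}(1^\lambda; z^*)$. For $id \neq id^*$, both programs output $\mathsf{PKE.KeyGen}(1^\lambda; F(K, id))$ (the new program uses $F(K\{id^*\}, id)$, which equals $F(K, id)$ with probability $1$ by punctured-key correctness of the PRF, \cref{defn:puncprf}). For $id = id^*$, the original program outputs $\mathsf{PKE.KeyGen}(1^\lambda; z^*) = ipk^*$, which is exactly what $\mathsf{PKeyGen}'$ hardcodes. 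Hence the two circuits compute identical functions with probability $1$ over the challenger's randomness.

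The second step is to plug this into $\io$ security. I will have the reduction algorithm $\mathcal{B}$ run the puncturing adversary $\adve$ to obtain $id^*$ (recall that $\puncibegame$ is selective, so $id^*$ is produced before $pk$), sample $K, K\{id^*\}, z^*, ipk^*, isk^*$ as above, pad $C_0$ and $C_1$ to the same size, and output $(C_0, C_1, \regi{aux})$ where $\regi{aux}$ records $K\{id^*\}$ and the state of $\adve$. Upon receiving $\io(C_b)$ for unknown $b$, the reduction uses it as $pk$, forwards $(pk, K\{id^*\})$ to $\adve$, and continues simulating the puncturable IBE game, outputting whatever $\adve$ eventually outputs. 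When $b=0$ this exactly simulates $\hyb_0$, and when $b=1$ it exactly simulates $\hyb_1$; so any distinguishing advantage between the hybrids transfers directly to $\io$ advantage, which is negligible.

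There is no serious obstacle here—this is a standard puncturing-style argument. The only minor care is ensuring that $C_0$ and $C_1$ are padded to a common size (so that the $\io$ security assumption applies) and that the functional equivalence holds with probability $1$ rather than merely with overwhelming probability, which is given by the perfect punctured-correctness of the PRF in \cref{defn:puncprf}.
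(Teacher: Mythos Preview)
Your proposal is correct and matches the paper's approach exactly: the paper's proof simply states that by punctured-key correctness of $F$ the two programs $\mathsf{PKeyGen}_{K}$ and $\mathsf{PKeyGen}'_{K\{id^*\}, ipk^*, id^*}$ have identical functionality, and then invokes $\io$ security. You have spelled out the same argument in more detail, including the explicit reduction and the padding remark, but the underlying idea is identical.
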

\begin{proof}
By  correctness of the punctured PRF keys, we have that $\mathsf{PKeyGen}'_{K\{id^*\}, ipk^*, id^*}$ and $\mathsf{PKeyGen}_{K}$ have the same functionality. The result follows by the security of $\io$.
\end{proof}
\begin{claim}
    $\hyb_1 \approx \hyb_2$.
\end{claim}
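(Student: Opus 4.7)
The plan is to reduce the indistinguishability of $\hyb_1$ and $\hyb_2$ to the puncturing security of the PRF family $F$ (\cref{defn:puncprf}), since the only difference between the two hybrids is whether the string $z^*$ used to generate $(ipk^*, isk^*)$ is the true PRF evaluation $F(K, id^*)$ or a uniformly random string, while the adversary only ever sees the punctured key $K\{id^*\}$ (indirectly, through the obfuscated program $\mathsf{OPKeyGen}$).

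Concretely, I would build a reduction $\mathcal{B}$ that plays the PRF puncturing game and simulates the puncturable IBE game for $\adve$ as follows. First, $\mathcal{B}$ runs $\adve$ up to the point where it outputs the target identity $id^*$, then submits $S = \{id^*\}$ to its PRF challenger and receives a punctured key $K\{id^*\}$ together with a challenge value $y$, which is either $F(K, id^*)$ or uniform in the output space of $F$. The reduction sets $z^* := y$, computes $ipk^*, isk^* \samp \mathsf{PKE.KeyGen}(1^\lambda; z^*)$, constructs $\mathsf{OPKeyGen} \samp \io(\mathsf{PKeyGen}'_{K\{id^*\}, ipk^*, id^*}, 1^{\lambda})$ exactly as in $\hyb_1/\hyb_2$, and forwards $(\mathsf{OPKeyGen}, K\{id^*\})$ to $\adve$. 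It then simulates the rest of the IBE game honestly (it has everything it needs, since $ipk^*, isk^*$ were generated from $z^*$ and the challenge ciphertext only requires $ipk^*$) and outputs whatever $\adve$ outputs.

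When the PRF challenger's bit is $0$, so that $y = F(K, id^*)$, the view of $\adve$ is distributed exactly as in $\hyb_1$; when the bit is $1$, so that $y$ is uniform, the view is distributed exactly as in $\hyb_2$. Therefore any non-negligible distinguishing advantage between $\hyb_1$ and $\hyb_2$ translates into the same advantage for $\mathcal{B}$ in the puncturing game, contradicting \cref{defn:puncprf}. There is no real obstacle here: the only thing to verify is that $\mathcal{B}$'s simulation is syntactically correct, namely that $\mathsf{PKeyGen}'$ only uses $K$ at points $id \neq id^*$ (so $K\{id^*\}$ suffices) and that the value $z^*$ used to build $(ipk^*, isk^*)$ is the unique place where $F(K, id^*)$ appears, both of which are immediate from the construction of $\hyb_1$.
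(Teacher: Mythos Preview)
Your proposal is correct and follows essentially the same approach as the paper: both argue that since in $\hyb_1$ the adversary only receives the punctured key $K\{id^*\}$ and (functions of) PRF evaluations at points other than $id^*$, replacing $z^* = F(K, id^*)$ by a uniform string is indistinguishable by the puncturing security of $F$. The paper states this in one sentence, while you spell out the explicit reduction; the only minor imprecision is that $\mathcal{B}$ should output the game outcome (i.e., $1$ iff $b' = b$) rather than literally ``whatever $\adve$ outputs,'' but this is cosmetic.
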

\begin{proof}
    Observe that in $\hyb_1$, the adversary has only access to (efficient functions of) PRF evaluations at points other than $id^*$ and the punctured PRF key $K\{id^*\}$, rather than the full key $K$. Therefore, the result follows from the puncturable PRF security (\cref{defn:puncprf}).
\end{proof}

By above, we get that $\adve$ wins in $\hyb_2$ with non-negligible probability. We claim that the adversary $\adve'$ described below wins the public-key encryption security game against $\mathsf{PKE}$ with non-negligible probability.

$\adve'$ runs $\adve$ to obtain $id^*$. Then, it samples a PRF key $K$ for $F$ and computes $K\{id^*\} \samp F.\mathsf{Punc}(K, id^*)$. When it receives the public key $pk$ from the challenger, $\adve'$ computes  $\mathsf{OPKeyGen} \samp \io(\mathsf{PKeyGen}_{K\{id^*\}, pk, id^*}, 1^{\lambda})$, where $\mathsf{PKeyGen}_{K\{id^*\}, pk, id^*}$ is the following program.
    
\begin{mdframed}
{\bf $\underline{\mathsf{PKeyGen}'_{K\{id^*\}, pk, id^*}(id)}$}

{\bf Hardcoded: $K\{id^*\}, {pk, id^*}$}
\begin{enumerate}[label=\arabic*.]
    \item {If $id = id^*$, output $pk$ and terminate.}
    \item Sample $ipk, isk \samp \mathsf{PKE.KeyGen}(1^\lambda; F(K, id))$.
    \item Output $ipk$.
\end{enumerate}

\end{mdframed}
Then, $\adve'$ runs $\adve$ on $\mathsf{OPKeyGen}$ and $K\{id^*\}$ to obtain $m_0, m_1$, which it submits to the challenger. Finally, when $\adve'$ obtains the challenge ciphertext from the challenger, it runs $\adve$ on it to obtain the guess $b'$, which it submits to the challenger.

It is easy to see that $\adve'$ wins the public-key encryption security game with the same probability as $\adve$ wins in $\hyb_2$, which is non-negligible. This is a contradiction to the security of $\mathsf{PKE}$.

\subsection{Puncturable Functional Encryption Construction}\label{sec:puncfecons}
As an application of our puncturable IBE scheme and as a warm-up to our copy-protected functional encryption scheme, we show how to construct puncturable functional encryption. In our copy-protected functional encryption scheme (\cref{sec:mainfe}), we use a punctured master secret key in a non-black-box way to remove interaction from the post-challenge-ciphertext phase of the security game.

    Now we move onto our construction, which will be similar to the delegatable functional encryption scheme of \cite{CGJS15}. Assume the existence of following schemes and we will construct a puncturable functional encryption for the class of functions $\mathfrak{F}$ defined as all circuits that are of size at most $\circsize$, where $\circsize$ is a fixed polynomial.
\begin{itemize}
    \item $\io$, $2^{-\lambda - Q(\lambda)}$-secure indistinguishability obfuscation scheme,
    \item $\mathsf{IBE}$, a $2^{-\lambda - Q(\lambda)}$-secure public-key identity-based encryption scheme for identity space $\zo^{Q(\lambda)}$ with puncturable master secret keys (\cref{defn:puncibe}) and deterministic identity key generation satisfying strong punctured key correctness (\cref{defn:strongpuncibe})
    \item $F$, a a $2^{-\lambda - Q(\lambda)}$-secure puncturable PRF family with input space $\zo^{Q(\lambda)}$ and output length same as the size of the randomness used by $\mathsf{IBE.Enc}$,
\end{itemize}

    \paragraph{$\underline{\mathsf{FE.Setup}(1^{\lambda})}$}
\begin{enumerate}
\item Sample $pk, imsk \samp \mathsf{IBE.Setup}(1^\lambda)$.
\item Output $pk, (imsk, \fullkey)$.
    \end{enumerate}

 \paragraph{$\underline{\mathsf{FE.KeyGen}(msk', f)}$}
 \begin{enumerate}
 \item Parse $(msk'', \typekey) = msk'$.
\item If $\typekey = \punckey$, output $msk''(f)$ and terminate.
   \item Sample $sk \samp \mathsf{IBE.KeyGen}(msk'', f)$.
   \item Output $(sk, f)$.
 \end{enumerate}

\paragraph{$\underline{\mathsf{FE.Punc}(msk, m_0, m_1)}$}
 \begin{enumerate}
 \item Parse $(imsk, \typekey) = msk$. Terminate if $\typekey \neq \fullkey$.
 \item Sample $\mathsf{OPKey} \samp \io(1^\lambda, \mathsf{PKey}_{imsk,  m_0, m_1})$ where $\mathsf{PKey}_{imsk,  m_0, m_1}$ is the following program.
 
\begin{mdframed}
{\bf $\underline{\mathsf{PKey}_{imsk, m_0, m_1}(f)}$}

{\bf Hardcoded: $imsk, m_0, m_1$}
\begin{enumerate}[label=\arabic*.]
    \item If $f(m_0) \neq f(m_1)$, output $\perp$ and terminate.
    \item Compute $sk = \mathsf{IBE.KeyGen}(imsk, f)$.
    \item Output $(sk, f)$.
\end{enumerate}

\end{mdframed}
     
     \item Output $\mathsf{OPKey}$.
 \end{enumerate}
 
 \paragraph{$\underline{\mathsf{FE.Enc}(pk, m)}$}
 \begin{enumerate}
     \item Sample $K \samp F.\mathsf{Setup}(1^\lambda).$
     \item Sample $\mathsf{OPCt} \samp \io(\mathsf{PCt}_{m, pk, K})$ where $\mathsf{PCt}_{m, pk, K}$ is the following program.
     \begin{mdframed}
{\bf $\underline{\mathsf{PCt}_{m, pk, K}(f)}$}

{\bf Hardcoded: $m, pk, K$}
\begin{enumerate}[label=\arabic*.]
    \item Compute $a = f(m)$.
    \item Compute $ct = \mathsf{IBE.Enc}(pk, f, a; F(K, f))$.
    \item Output $ct$.
\end{enumerate}
\end{mdframed}
     \item Output $\mathsf{OPCt}$.
 \end{enumerate}

\paragraph{$\underline{\mathsf{FE.Dec}(sk, ct)}$}
\begin{enumerate}
    \item Parse $(sk', f) = sk$.
    \item Parse $\mathsf{OPCt} = ct$.
    \item $ct' = \mathsf{OPCt}(f)$.
    \item Output $\mathsf{IBE.Dec}(sk', ct')$.
\end{enumerate}

\begin{theorem}
    $\mathsf{FE}$ satisfies both correctness (\cref{defn:strongpuncfe}) and strong punctured master secret key correctness (\cref{defn:puncfe}).
\end{theorem}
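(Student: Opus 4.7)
The plan is to verify both conditions by routine unrolling of the definitions, relying only on the perfect correctness of $\io$ and of $\mathsf{IBE}$, together with the assumed determinism of $\mathsf{IBE.KeyGen}$. Both statements are information-theoretic correctness requirements that should hold with probability $1$, so no computational or statistical arguments need to be deployed.

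For ordinary correctness, I would fix a message $m$ and a function $f \in \mathfrak{F}$, and trace through the random experiment. After $\mathsf{FE.Setup}$ produces $(pk, (imsk, \fullkey))$ from $\mathsf{IBE.Setup}$, the key-generation step yields $sk = (sk', f)$ with $sk' \samp \mathsf{IBE.KeyGen}(imsk, f)$, and encryption yields $ct = \mathsf{OPCt} \samp \io(\mathsf{PCt}_{m, pk, K})$ for a fresh PRF key $K$. Decryption first computes $ct' = \mathsf{OPCt}(f)$, which by perfect $\io$ correctness equals $\mathsf{PCt}_{m, pk, K}(f) = \mathsf{IBE.Enc}(pk, f, f(m); F(K, f))$, and then runs $\mathsf{IBE.Dec}(sk', ct')$. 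Since $sk'$ is an honest $\mathsf{IBE}$ secret key for identity $f$, and $ct'$ is an honest $\mathsf{IBE}$ encryption of $f(m)$ under identity $f$ (with randomness fixed to $F(K, f)$), $\mathsf{IBE}$ correctness gives $\mathsf{IBE.Dec}(sk', ct') = f(m)$.

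For strong punctured key correctness, I would fix $m_0, m_1$ and $f$ with $f(m_0) = f(m_1)$, which is the regime in which the punctured key is designed to succeed (otherwise $\mathsf{PKey}$ aborts and the displayed equality cannot hold). On the non-punctured branch, $sk = (\mathsf{IBE.KeyGen}(imsk, f), f)$. On the punctured branch, $msk' = \mathsf{OPKey} \samp \io(\mathsf{PKey}_{imsk, m_0, m_1})$ and $psk = \mathsf{OPKey}(f)$. Perfect $\io$ correctness gives $\mathsf{OPKey}(f) = \mathsf{PKey}_{imsk, m_0, m_1}(f)$; because $f(m_0) = f(m_1)$, this program falls through the guard and returns $(\mathsf{IBE.KeyGen}(imsk, f), f)$. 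Since $\mathsf{IBE.KeyGen}$ is assumed deterministic, both paths produce exactly the same value of $\mathsf{IBE.KeyGen}(imsk, f)$ for every fixed $imsk$, so $psk$ and $sk$ agree with probability $1$ conditioned on $(msk, pk)$, yielding the joint-distribution identity $(psk, msk, pk) \equiv (sk, msk, pk)$.

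The proof presents essentially no obstacle; the only point worth flagging is the role of determinism of $\mathsf{IBE.KeyGen}$. Without that assumption one would have to either invoke the strong punctured key correctness of $\mathsf{IBE}$ (which is not used here, since the $\mathsf{IBE}$ master secret key is never punctured inside this construction) or explicitly couple the randomness used in the two branches. Neither is necessary in the present setup, and everything reduces to verifying that $\io$ hands back a functionally identical circuit and that $\mathsf{PKey}$'s non-abort branch computes precisely what $\mathsf{FE.KeyGen}$ computes from the full master key.
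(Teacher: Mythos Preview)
Your proposal is correct and matches the paper's approach exactly: the paper's own proof is the one-line remark that everything ``follows in a straightforward manner from the correctness of the underlying primitives and the fact that $\mathsf{IBE.KeyGen}$ is deterministic,'' and you have simply written out that verification in detail. Your flag about needing $f(m_0)=f(m_1)$ (rather than $\neq$ as the definition literally states) is well-taken; that is evidently a typo in the paper's definitions, and the intended regime is precisely the one you argue in.
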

\begin{proof}
    Follows in a straightforward manner from the correctness of the underlying primitives and the fact that $\mathsf{IBE.KeyGen}$ is deterministic.
\end{proof}
\begin{theorem}\label{thm:fepunc}
    $\mathsf{FE}$ satisfies puncturable master secret key security (\cref{defn:puncfe}).
\end{theorem}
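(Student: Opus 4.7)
The plan is to follow the exponential-hybrid / complexity-leveraging template of \cite{CGJS15}, exploiting subexponential security of the underlying $\io$, $\mathsf{IBE}$ and $F$ (whose advantages are $2^{-\lambda - Q(\lambda)}$) so that we may sum $2^{Q(\lambda)}$ negligible gaps and still finish with a negligible overall gap. Fix any QPT adversary $\adve$ playing $\puncfegame$ and consider its output after the challenger fixes $m_0, m_1$ (winning only if every classical key query $f$ satisfies $f(m_0)=f(m_1)$). We will interpolate between encryptions of $m_0$ and $m_1$ by walking through all functions $f \in \mathfrak{F}$ in lexicographic order, changing the behaviour of the obfuscated $\mathsf{PCt}$ one function at a time.

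Concretely, for each threshold $t \in \{0, 1, \dots, 2^{Q(\lambda)}\}$ define the hybrid $\hyb_t$ in which $\mathsf{OPCt}$ is the obfuscation of the program that on input $f$ outputs $\mathsf{IBE.Enc}(pk, f, f(m_0); F(K,f))$ when $f < t$ and $\mathsf{IBE.Enc}(pk, f, f(m_1); F(K,f))$ otherwise; the punctured master secret key $\mathsf{OPKey}$ is defined symmetrically so that the invariant ``what is handed to $\adve$ is consistent'' holds. Then $\hyb_0$ and $\hyb_{2^{Q(\lambda)}}$ correspond to the $b=1$ and $b=0$ executions of $\puncfegame$ respectively. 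To move from $\hyb_t$ to $\hyb_{t+1}$, when $f_t(m_0) = f_t(m_1)$ the two programs are already functionally equivalent and a single invocation of $\io$ security closes the step. When $f_t(m_0) \neq f_t(m_1)$ we perform a micro-chain: (i) puncture the PRF key at $f_t$ and hardcode the IBE ciphertext for identity $f_t$ inside $\mathsf{PCt}$ (by $\io$), (ii) switch the hardcoded PRF output to truly random (by puncturable PRF security), (iii) simultaneously puncture $imsk$ at $f_t$ inside $\mathsf{PKey}$ of $\mathsf{OPKey}$ — which is functionally equivalent since $\mathsf{PKey}$ refuses to issue a key for any $f$ with $f(m_0)\neq f(m_1)$, so by strong punctured-key correctness of $\mathsf{IBE}$ the outputs on all other $f$ are identical — (iv) use $\puncibegame$ security of $\mathsf{IBE}$ (\cref{defn:puncibe}) at the punctured identity $f_t$ to replace the hardcoded ciphertext from encrypting $f_t(m_1)$ to encrypting $f_t(m_0)$, (v) reverse steps (iii)--(i). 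Each micro-step costs $\leq 2^{-\lambda - Q(\lambda)}$, so each hybrid transition costs $O(2^{-\lambda - Q(\lambda)})$.

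Summing over the at most $2^{Q(\lambda)}$ transitions gives a total distinguishing advantage of $O(2^{-\lambda})$, contradicting the assumed non-negligible advantage of $\adve$. The main obstacle, and the reason we need $\mathsf{IBE}$ with a puncturable master secret key rather than a plain adaptively secure IBE, is precisely step (iii) above: the reduction against $\puncibegame$ must be able to simulate the punctured master secret key $\mathsf{OPKey}$ handed to $\adve$ without knowing $\mathsf{IBE.KeyGen}(imsk, f_t)$, and this is only possible if we can obfuscate an $\mathsf{IBE}$ key-generation program whose hardwired master secret key is itself punctured at $f_t$ while preserving the behaviour on every other input. Strong punctured-key correctness (\cref{defn:strongpuncibe}) guarantees the equivalence needed to apply $\io$ security in step (iii), and the deterministic key-generation of $\mathsf{IBE}$ ensures that the reduction produces exactly the key that the honest $\mathsf{PKey}$ would produce on inputs $f \neq f_t$. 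With these pieces in place, the chain of hybrids goes through cleanly, and the full proof is deferred to an appendix.
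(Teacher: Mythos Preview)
Your proposal is correct and mirrors the paper's own proof essentially step for step: the same $2^{Q(\lambda)}$-long hybrid chain indexed by a threshold $t$, the same case split on whether $t(m_0)=t(m_1)$, and the same micro-chain (puncture $F$ at $t$ and hardcode $ct^*$ via $\io$, derandomise via PRF security, puncture $imsk$ inside $\mathsf{PKey}$ via $\io$ and strong punctured-key correctness, swap $ct^*$ via $\puncibegame$ security, then undo). The only cosmetic difference is that the paper parameterises the hybrids by $m_b$ versus $m_{1-b}$ rather than $m_1$ versus $m_0$, and explicitly conditions on the event $E_t=\{t(m_0)=t(m_1)\}$ before combining; your formulation is equivalent.
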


Since we construct in \cref{sec:consibe} a puncturable IBE with the properties required by $\mathsf{FE}$ based on $\io$ and one-way functions, we get the following corollary.
\begin{corollary}
    Assuming the existence of subexponentially secure indistinguishability obfuscation and one-way functions, there exist a puncturable functional encryption scheme.
\end{corollary}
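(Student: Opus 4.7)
The plan is to prove security via a hybrid argument indexed by the $2^{Q(\lambda)}$ possible functions in $\mathfrak{F}$, ordered lexicographically. Since we have assumed $2^{-\lambda - Q(\lambda)}$-secure building blocks, if every pair of consecutive hybrids is $2^{-\lambda - Q(\lambda)}$-indistinguishable, then by triangle inequality the total advantage is bounded by $2^{Q(\lambda)} \cdot \poly(\lambda) \cdot 2^{-\lambda - Q(\lambda)} = \negl(\lambda)$, giving the claimed security.

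Concretely, for $t \in \{0, 1, \dots, 2^{Q(\lambda)}\}$ define $\hyb_t$ to be identical to the real game except that the challenge ciphertext is computed as $\mathsf{OPCt} \samp \io(\mathsf{PCt}^t_{m_0, m_1, pk, K})$, where $\mathsf{PCt}^t$ on input $f$ outputs $\mathsf{IBE.Enc}(pk, f, f(m_1); F(K, f))$ if $f < t$ and $\mathsf{IBE.Enc}(pk, f, f(m_0); F(K, f))$ otherwise. Thus $\hyb_0$ is the $b=0$ game and $\hyb_{2^{Q(\lambda)}}$ is (equivalent to) the $b=1$ game, and showing $\hyb_t \approx_{2^{-\lambda - Q(\lambda)}} \hyb_{t+1}$ for every $t$ suffices. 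First I would note that whenever $t(m_0) = t(m_1)$ the two ciphertext programs have identical functionality, so iO security of the outer obfuscation gives the step essentially for free. The interesting case is when $t(m_0) \neq t(m_1)$; note that in this case the pirate cannot have queried for the functional key of $t$ in the pre-challenge phase (else the challenger aborts), and the obfuscated punctured key $\mathsf{OPKey}$ would refuse to issue a key for $t$ anyway.

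For such $t$, I would insert the following sub-hybrids. (i) Puncture the PRF key $K$ at $t$ inside $\mathsf{PCt}^t$, hardwiring the value $F(K,t)$ at the branch $f=t$; indistinguishability by iO plus puncturing correctness. (ii) Replace the hardwired value $F(K,t)$ by a uniformly random $y^*$; indistinguishability by puncturable PRF security. (iii) Replace $imsk$ inside $\mathsf{OPKey}$ by $\mathsf{IBE.Punc}(imsk, t)$; since $t(m_0)\neq t(m_1)$ the program returns $\perp$ at $f=t$ regardless of which $imsk$ is hardwired, so by strong punctured key correctness of $\mathsf{IBE}$ the two programs agree pointwise and iO security closes the gap. (iv) Now the only place the IBE master secret key is queried at the identity $t$ is inside the hardwired ciphertext $\mathsf{IBE.Enc}(pk, t, t(m_0); y^*)$ in $\mathsf{PCt}^t$, with fresh random coins $y^*$. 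Applying puncturable master secret key security of $\mathsf{IBE}$ (\cref{defn:puncibe}) with challenge identity $t$ and challenge messages $t(m_0), t(m_1)$, we switch this hardwired ciphertext to $\mathsf{IBE.Enc}(pk, t, t(m_1); y^*)$. Finally, steps (iii), (ii), (i) are reversed, landing us in $\hyb_{t+1}$.

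The IBE reduction for step (iv) is the delicate part, and it is where I expect the main bookkeeping challenge to lie: the reduction must commit to $t$ before seeing the FE adversary's messages (since puncturable IBE security is selective), but this is fine because we are constructing a separate reduction for each of the $2^{Q(\lambda)}$ hybrid transitions. Crucially, the reduction must answer all of the FE adversary's pre-challenge functional-key queries using only $imsk\{t\}$; this is possible because any queried $f$ must satisfy $f(m_0)=f(m_1)$ once $m_0,m_1$ are produced, and in particular any $f\neq t$ can have its key generated via $\mathsf{IBE.KeyGen}(imsk\{t\}, f)$ with the same distribution as under $imsk$, by strong punctured key correctness. Since $\mathsf{IBE.KeyGen}$ is deterministic, the entire simulation is perfect, and the claimed bound per step follows. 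Summing the $2^{Q(\lambda)}$ steps yields the result.
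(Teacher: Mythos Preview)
Your proposal is correct and follows essentially the same approach as the paper. The paper's proof of \cref{thm:fepunc} uses the identical hybrid structure indexed by $t\in\{0,\dots,2^{Q(\lambda)}\}$, splits into the cases $t(m_0)=t(m_1)$ (handled by $\io$ alone) and $t(m_0)\neq t(m_1)$, and in the latter case walks through exactly your sub-hybrids (puncture $K$ at $t$, swap $F(K,t)$ for true randomness, puncture $imsk$ at $t$ inside $\mathsf{PKey}$, invoke puncturable IBE security, then undo); the only cosmetic difference is that the paper parameterizes the hybrid ciphertext by $m_b$ versus $m_{1-b}$ rather than $m_0$ versus $m_1$, and records the instantiation of the IBE building block from $\io$ and one-way functions as a separate sentence.
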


\subsection{Proof of Security for Puncturable Functional Encryption}
In this section, we prove \cref{thm:fepunc}. Our proof will be similar to security proof of the delegatable functional encryption scheme in \cite{CGJS15}.  Throughout the proof, we will interpret the functions $f \in \mathfrak{F}$, which are represented by circuits of size $Q(\lambda)$, as numbers in $\{0, 1, \dots, 2^{Q} - 1\}$.

Suppose for a contradiction there exists a QPT adversary $\adve$ that wins the puncturable functional encryption game with non-negligible advantage, that is, $\Pr[\puncfegame(\lambda, \adve) = 1] \geq 1/2+1/p(\lambda)$ for some polynomial $p(\cdot)$ and for infinitely many values of $\lambda > 0$. We will prove security through a series of hybrids, each of which is obtained by modifying the previous one, starting with $\hyb_0$.

We define $\hyb_0$ to be the same as the original security game $\puncfegame(\lambda, \adve)$.

\paragraph{\underline{$\hyb_t$ for $t \in \{0, 1, \dots, 2^Q\}$}:} We now compute the challenge ciphertext (encryption of $m_b$) as $\mathsf{OPCt} \samp \io(\mathsf{PCt}^{(t)})$.
    \begin{mdframed}
{\bf $\underline{\mathsf{PCt}^{(t)}(f)}$}

{\bf Hardcoded: $\textcolor{red}{m_0, m_1, b}, pk, K$}
\begin{enumerate}[label=\arabic*.]
    \item \textcolor{red}{If $f < t$, set $a = f(m_{1-b})$. Otherwise, set $a = f(m_b)$.}
    \item Compute $ct = \mathsf{IBE.Enc}(pk, f, a; F(K, f))$.
    \item Output $ct$.
\end{enumerate}
\end{mdframed}

Define the event $E_t$ to be the event that the pair of challenge messages $m_0, m_1$ output by the adversary $\adve$ are such that $t(m_0) = t(m_1)$.

\begin{lemma}
    $\statdist{(\hyb_t)_{|E_t}}{(\hyb_t)_{|E_t}} < 2^{-\lambda - Q(\lambda)}$.
\end{lemma}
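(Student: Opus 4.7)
The plan is to invoke the security of $\io$, after observing that conditioned on $E_t$, the programs $\mathsf{PCt}^{(t)}$ and $\mathsf{PCt}^{(t+1)}$ used to build the challenge ciphertext are functionally identical. Everything else in the two hybrids is sampled from the same distribution, so all the work is in the challenge ciphertext step.

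First I would pin down where $\mathsf{PCt}^{(t)}$ and $\mathsf{PCt}^{(t+1)}$ can possibly differ. Comparing their codes, on input $f$ both programs compute $a$ and output $\mathsf{IBE.Enc}(pk, f, a; F(K, f))$; the only difference is the threshold in step 1. For $f < t$ both set $a = f(m_{1-b})$, and for $f > t$ both set $a = f(m_b)$. At the single boundary input $f = t$: $\mathsf{PCt}^{(t)}$ sets $a = t(m_b)$ (since $t \not< t$), while $\mathsf{PCt}^{(t+1)}$ sets $a = t(m_{1-b})$ (since $t < t+1$). Thus the programs can disagree only at $f = t$.

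Next, note that the event $E_t$ is exactly the condition $t(m_0) = t(m_1)$, which gives $t(m_b) = t(m_{1-b})$. So, conditioned on $E_t$, at $f = t$ both programs encrypt the same value $a$ under the same identity $t$ with the same randomness $F(K, t)$, yielding identical ciphertexts. Combined with the previous paragraph, we get that conditioned on $E_t$ the two programs agree on every input with probability $1$ (over the conditional distribution on $msk, pk, m_0, m_1, b, K$ and all of $\adve$'s randomness).

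Finally, I would reduce to $\io$ security in the standard way: construct a sampler $\mathcal{B}$ that runs the $\hyb_t$/$\hyb_{t+1}$ experiment up to the point of producing $(\mathsf{PCt}^{(t)}, \mathsf{PCt}^{(t+1)}, \regi{aux})$ where $\regi{aux}$ contains everything in the adversary's view other than the obfuscated ciphertext (namely $pk$, the key-query responses, and the punctured key $msk'$). By the previous paragraph $\mathcal{B}$ satisfies the functional-equivalence hypothesis with probability $1$ conditioned on $E_t$, so by the $2^{-\lambda - Q(\lambda)}$-security of $\io$ the two obfuscations are indistinguishable up to that advantage; this bounds the distance of the full hybrid distributions conditioned on $E_t$ by $2^{-\lambda - Q(\lambda)}$, as claimed. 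I do not expect a serious obstacle here — the only point requiring any care is observing that the equal-functionality condition holds with probability $1$ after conditioning, so that no additional loss accumulates when invoking $\io$.
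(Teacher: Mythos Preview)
Your proposal is correct and follows essentially the same approach as the paper: observe that $\mathsf{PCt}^{(t)}$ and $\mathsf{PCt}^{(t+1)}$ differ only on input $f=t$, note that conditioned on $E_t$ we have $t(m_b)=t(m_{1-b})$ so the programs are functionally equivalent, and invoke the $2^{-\lambda-Q(\lambda)}$-security of $\io$. Your write-up is in fact more detailed than the paper's two-line proof, but the argument is identical.
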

\begin{proof}
    Observe the programs $\mathsf{PCt}^{(t)}$ and $\mathsf{PCt}^{(t+1)}$ differ only on input $f = t$, in which case the first program compute $a = t(m_b)$ whereas the second program computes $a = t(m_{1-b})$. However, conditioned on $E_t$, we have $t(m_b) = t(m_{1-b})$. Hence, the programs have the same functionality and the result follows from the security of $\io$.
\end{proof}

We will also prove $\statdist{(\hyb_t)_{|\overline{E_t}}}{(\hyb_t)_{|\overline{E_t}}} < 2^{-\lambda/2 - Q(\lambda)}$. This combined with the above lemma gives $\statdist{\hyb_t}{\hyb_{t+1}} < {2^{-\lambda/2 - Q(\lambda)}}$ through triangle inequality. Crucially, note that the probability of the event $E_t$ is the same in both hybrids since we only change the way we compute the challenge ciphertext (which the adversary sees after choosing $m_0, m_1$).

To prove $\statdist{(\hyb_t)_{|\overline{E_t}}}{(\hyb_t)_{|\overline{E_t}}} < 2^{-\lambda/2 - Q(\lambda)}$, we define a sequence of intermediary hybrids.

\paragraph{\underline{$\hyb_{t, 1}$ for $t \in \{0, 1, \dots, 2^Q\}$}:} We first compute $ct^* \samp \mathsf{IBE.Enc}(pk, t, f(m_b); F(K, t))$ and $K\{t\} \samp F.\mathsf{Punc}(K, t)$. Then,  we now compute the challenge ciphertext as $\mathsf{OPCt} \samp \io(\mathsf{PCt}^{(t, 1)})$.
    \begin{mdframed}
{\bf $\underline{\mathsf{PCt}^{(t, 1)}(f)}$}

{\bf Hardcoded: $\textcolor{red}{ct^*, K\{t\}}, m_0, m_1, b, pk$}
\begin{enumerate}[label=\arabic*.]
    \item \textcolor{red}{If $f = t$, output $ct^*$ and terminate.}
    \item {If $f < t$, set $a = f(m_{1-b})$. Otherwise, set $a = f(m_b)$.}
    \item Compute $ct = \mathsf{IBE.Enc}(pk, f, a; F(\textcolor{red}{K\{t\}}, f))$.
    \item Output $ct$.
\end{enumerate}
\end{mdframed}

\paragraph{\underline{$\hyb_{t, 2}$ for $t \in \{0, 1, \dots, 2^Q\}$}:} We now sample $ct^* \samp \mathsf{IBE.Enc}(pk, t, f(m_b); z)$ where $z$ is sampled uniformly at random from the output space of $F$.

\paragraph{\underline{$\hyb_{t, 3}$ for $t \in \{0, 1, \dots, 2^Q\}$}}:
We now change the way we sample punctured master secret key as follows. At the beginning of the game, we compute $imsk' \samp \mathsf{IBE.Punc}(imsk, t)$. We now output $\mathsf{OPKey} \samp \mathsf{PKey}^{(t)}$.
 
\begin{mdframed}
{\bf $\underline{\mathsf{PKey}^{(t)}(f)}$}

{\bf Hardcoded: $\textcolor{red}{imsk'}, m_0, m_1$}
\begin{enumerate}[label=\arabic*.]
    \item If $f(m_0) \neq f(m_1)$, output $\perp$ and terminate.
    \item Compute $sk = \mathsf{IBE.KeyGen}(\textcolor{red}{imsk'}, f)$.
    \item Output $(sk, f)$.
\end{enumerate}

\end{mdframed}

\paragraph{\underline{$\hyb_{t, 4}$ for $t \in \{0, 1, \dots, 2^Q\}$}}:
    Same as $\hyb_{t,3}$ but we now compute $ct^*$ as $ct^* \samp \mathsf{IBE.Enc}(pk, t, \textcolor{red}{f(m_{1-b})}; z)$.

 \paragraph{\underline{$\hyb_{t, 5}$ for $t \in \{0, 1, \dots, 2^Q\}$}}:
Same as $\textcolor{blue}{\hyb_{t,2}}$ but we compute $ct^*$ as $ct^* \samp \mathsf{IBE.Enc}(pk, t, \textcolor{red}{ f(m_{1-b})}; F(K, t))$.

 \paragraph{\underline{$\hyb_{t, 6}$ for $t \in \{0, 1, \dots, 2^Q\}$}}:
Same as $\textcolor{blue}{\hyb_{t, 1}}$ but we compute $ct^*$ as $ct^* \samp \mathsf{IBE.Enc}(pk, t, \textcolor{red}{ f(m_{1-b})}; F(K, t))$.

\begin{lemma}\label{lem:puncfehyb01}
    $\statdist{(\hyb_{t})_{|\overline{E_t}}}{(\hyb_{t, 1})_{|\overline{E_t}}} < 2^{-\lambda - Q(\lambda)}$ for all $t \in \{0, 1, \dots, 2^Q\}$.
\end{lemma}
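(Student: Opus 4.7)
The plan is to observe that the \emph{only} difference between $\hyb_t$ and $\hyb_{t,1}$ is the obfuscated circuit used to build the challenge ciphertext: in $\hyb_t$ it is $\io(\mathsf{PCt}^{(t)})$, while in $\hyb_{t,1}$ we additionally pre-sample $K\{t\} \samp F.\mathsf{Punc}(K, t)$ and $ct^* \samp \mathsf{IBE.Enc}(pk, t, t(m_b); F(K, t))$, and obfuscate $\mathsf{PCt}^{(t, 1)}$ instead. The quantities $K\{t\}$ and $ct^*$ appear only inside the obfuscated program; they are not given to the adversary in the clear. Everything else — the key sampling, the adversary's pre-challenge queries, and the distribution of $(m_0, m_1)$ — is identical in both hybrids. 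In particular, the event $E_t$ depends only on $(m_0, m_1)$, so conditioning on $\overline{E_t}$ does not affect the independence of the randomness used after $m_0, m_1$ are fixed.

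Second, I would verify that the two programs $\mathsf{PCt}^{(t)}$ and $\mathsf{PCt}^{(t, 1)}$ are functionally equivalent with probability $1$ over the sampling of $K$ and $K\{t\}$. On input $f = t$, the program $\mathsf{PCt}^{(t)}$ takes the $f \geq t$ branch and outputs $\mathsf{IBE.Enc}(pk, t, t(m_b); F(K, t))$, which equals the hardcoded $ct^*$ that $\mathsf{PCt}^{(t, 1)}$ outputs. On input $f \neq t$, the branching to select $a$ is identical in both programs, and by punctured-key correctness of $F$ (which holds with probability $1$) we have $F(K\{t\}, f) = F(K, f)$, so the two programs produce the same IBE ciphertext.

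Third, I would reduce to $\io$ security. Build a sampler $\mathcal{B}$ that runs the first part of the game (sample $pk$, $imsk$, $K$, answer all pre-challenge $\mathsf{KeyGen}$ queries, obtain $(m_0, m_1)$ from $\adve$), aborts unless $\overline{E_t}$ occurs, samples the bit $b$, precomputes $K\{t\}$ and $ct^*$, and outputs the pair of circuits $C_0 = \mathsf{PCt}^{(t)}$ and $C_1 = \mathsf{PCt}^{(t, 1)}$ (padded to a common size fixed by the scheme), together with auxiliary information consisting of the full transcript and $\adve$'s state. By the previous step, $C_0$ and $C_1$ are functionally equivalent with probability $1$, so $\mathcal{B}$ is a valid $\io$ sampler. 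A distinguisher that, on receiving $\io(C_i)$ and the auxiliary state, runs the remainder of the game and outputs $1$ iff $\adve$ wins transforms any $\Delta$-distinguishing gap between $(\hyb_t)_{|\overline{E_t}}$ and $(\hyb_{t, 1})_{|\overline{E_t}}$ into a $\Delta$-distinguishing advantage against $\io$, which is bounded by $2^{-\lambda - Q(\lambda)}$ by our assumption on $\io$.

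The only mild subtlety is the conditioning on $\overline{E_t}$: since $E_t$ is determined by $(m_0, m_1)$, which are chosen before the challenge ciphertext is built, the sampler $\mathcal{B}$ can simply restrict its output to transcripts in $\overline{E_t}$; the post-challenge randomness is independent of this conditioning, so the $\io$ bound transfers directly to the conditional statistical distance. No other steps are needed.
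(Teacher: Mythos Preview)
Your proposal is correct and follows exactly the paper's approach: verify that $\mathsf{PCt}^{(t)}$ and $\mathsf{PCt}^{(t,1)}$ are functionally equivalent via the punctured-key correctness of $F$, then invoke the $2^{-\lambda-Q(\lambda)}$-security of $\io$. The paper's own proof is a two-line version of your argument; your added discussion of the conditioning on $\overline{E_t}$ is fine but not strictly necessary here, since the functional equivalence (and hence the $\io$ reduction) holds regardless of whether $E_t$ occurs.
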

\begin{proof}
By the punctured key correctness of $F$, the programs $\mathsf{PCt}^{(t)}$ and $\mathsf{PCt}^{(t,1)}$ have the same functionality. The results follows by the security of $\io$.
\end{proof}

\begin{lemma}\label{lem:puncfehyb12}
    $\statdist{(\hyb_{t,1})_{|\overline{E_t}}}{(\hyb_{t, 2})_{|\overline{E_t}}} < 2^{-\lambda - Q(\lambda)}$ for all $t \in \{0, 1, \dots, 2^Q\}$.
\end{lemma}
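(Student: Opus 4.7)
The plan is a direct reduction to the $2^{-\lambda - Q(\lambda)}$-puncturing security of the PRF family $F$ at the point $t$. The only syntactic difference between $\hyb_{t,1}$ and $\hyb_{t,2}$ is whether the randomness used to form $ct^*$ is the pseudorandom string $F(K,t)$ or a uniformly sampled string $z$ of the same length. Crucially, in both hybrids the key $K$ appears in the adversary's view exclusively in its punctured form $K\{t\}$ (hardcoded inside the obfuscated program $\mathsf{PCt}^{(t,1)}$), so the only ``live'' invocation of $F$ at the punctured point $t$ is precisely in the generation of the randomness for $ct^*$.

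To formalize this, given any QPT distinguisher $\mathcal{B}$ between $(\hyb_{t,1})_{|\overline{E_t}}$ and $(\hyb_{t,2})_{|\overline{E_t}}$, I would build a PRF puncturing adversary $\mathcal{B}'$ as follows. $\mathcal{B}'$ submits $t$ as its puncturing point and receives from its challenger a punctured key $K\{t\}$ together with a value $y$ that is either $F(K,t)$ or uniformly random. It samples $pk, imsk \samp \mathsf{IBE.Setup}(1^\lambda)$, sets up the simulated FE scheme, and answers all of $\adve$'s pre-challenge functional key queries using $imsk$ exactly as in the hybrid. When $\adve$ outputs the challenge pair $m_0, m_1$, the reduction samples its own bit $b$, forms $ct^* \samp \mathsf{IBE.Enc}(pk, t, t(m_b); y)$ using the challenge randomness $y$, assembles $\mathsf{OPCt} \samp \io(\mathsf{PCt}^{(t,1)})$ with $ct^*$ and $K\{t\}$ hardcoded in place of $K$, and computes the punctured master secret key $msk'$ from $imsk$ and $m_0, m_1$ as in $\mathsf{FE.Punc}$. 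Finally, it hands $(\mathsf{OPCt}, msk')$ to $\adve$ and forwards $\adve$'s output.

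By inspection, the simulation reproduces $\hyb_{t,1}$ exactly when $y = F(K,t)$ and reproduces $\hyb_{t,2}$ exactly when $y$ is uniform. The event $\overline{E_t}$ depends only on $m_0, m_1$, which are produced during the common prefix of the two hybrids and whose distribution is independent of the PRF challenge; hence conditioning on $\overline{E_t}$ preserves the correspondence, and any distinguishing advantage of $\mathcal{B}$ is inherited by $\mathcal{B}'$. The desired bound $\statdist{(\hyb_{t,1})_{|\overline{E_t}}}{(\hyb_{t,2})_{|\overline{E_t}}} < 2^{-\lambda - Q(\lambda)}$ then follows from the $2^{-\lambda - Q(\lambda)}$-security of $F$. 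I do not expect any real obstacle: this lemma is structurally the PRF-puncturing analogue of \cref{lem:puncfehyb01}, with the sole bookkeeping point being that the event $\overline{E_t}$ is determined before the PRF challenge is used, which makes the reduction pass through cleanly.
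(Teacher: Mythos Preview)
Your proposal is correct and follows exactly the same approach as the paper: the paper's proof is a one-liner observing that the adversary only holds the punctured key $K\{t\}$ and invoking the selective puncturing security of $F$, which is precisely the reduction you spell out in detail. Your explicit handling of the conditioning on $\overline{E_t}$ (noting it is fixed before the PRF challenge enters) is a welcome bit of bookkeeping that the paper leaves implicit.
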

\begin{proof}
    Observe that the adversary only has the punctured key $K\{t\}$. The resut follows by the security of the puncturable PRF $F$.
\end{proof}

\begin{lemma}\label{lem:puncfehyb23}
    $\statdist{(\hyb_{t,2})_{|\overline{E_t}}}{(\hyb_{t, 3})_{|\overline{E_t}}} < 2^{-\lambda - Q(\lambda)}\cdot \poly(\lambda)$ for all $t \in \{0, 1, \dots, 2^Q\}$.
\end{lemma}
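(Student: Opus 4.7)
The plan is to observe that the two programs $\mathsf{PKey}_{imsk, m_0, m_1}$ (hardwired in $\hyb_{t,2}$) and $\mathsf{PKey}^{(t)}$ (hardwired in $\hyb_{t,3}$, with $imsk' \samp \mathsf{IBE.Punc}(imsk, t)$) compute identical functions on every input once we condition on $\overline{E_t}$, so that the $2^{-\lambda - Q(\lambda)}$-security of $\io$ immediately yields the bound.

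First I would verify functional equivalence on an arbitrary input $f$, conditioned on $\overline{E_t}$, i.e., on $t(m_0) \neq t(m_1)$. In the case $f = t$, the check ``$f(m_0) = f(m_1)$?'' triggers the $\perp$ branch in both programs since $t(m_0) \neq t(m_1)$, so both outputs are $\perp$ and $\mathsf{IBE.KeyGen}$ is never invoked; in particular $imsk'$ is never queried at its punctured identity $t$, which is exactly the only input where its behavior could differ from $imsk$. In the case $f \neq t$, the strong punctured-key correctness of $\mathsf{IBE}$ (\cref{defn:strongpuncibe}), together with the assumption that $\mathsf{IBE.KeyGen}$ is deterministic, gives $\mathsf{IBE.KeyGen}(imsk, f) = \mathsf{IBE.KeyGen}(imsk', f)$ as identical values (not merely identically distributed), so both programs output the same pair $(sk, f)$.

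Armed with this equivalence, I would then apply $\io$ security to the two obfuscated circuits, after padding them to a common polynomial size $\poly(\lambda, Q(\lambda))$ so that the $\io$ hypothesis applies syntactically. Because the adversary commits to $(m_0, m_1)$ before the obfuscated key is computed and before any randomness used in the obfuscation is sampled, conditioning on $\overline{E_t}$ is legitimate: any distinguisher between $(\hyb_{t,2})_{|\overline{E_t}}$ and $(\hyb_{t,3})_{|\overline{E_t}}$ transforms into an $\io$ distinguisher for two functionally equivalent circuits by sampling $(m_0, m_1)$ from the conditional distribution and simulating the remainder of the game (including $ct^*$, the PRF key $K$, the public key $pk$, and answers to all functional-key queries) in polynomial time. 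The $\io$ advantage is thus at most $2^{-\lambda - Q(\lambda)}$, and the $\poly(\lambda)$ slack in the claimed bound absorbs the simulation overhead.

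The main subtlety, and the only place where the proof is non-mechanical, is ensuring that the strong punctured-key correctness is used \emph{pointwise inside an obfuscated circuit} rather than as a distributional statement about sampled keys; this is precisely why $\mathsf{IBE.KeyGen}$ is required to be deterministic in this construction, so that distributional equality collapses to the equality of outputs on every input, which is what $\io$-security demands.
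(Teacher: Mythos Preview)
Your proposal is correct and follows essentially the same approach as the paper's own proof: both argue that conditioned on $\overline{E_t}$ the two $\mathsf{PKey}$ programs are functionally equivalent (via strong punctured-key correctness of $\mathsf{IBE}$), and then invoke $\io$ security. Your write-up is more detailed—spelling out the $f=t$ versus $f\neq t$ case analysis and the role of deterministic $\mathsf{IBE.KeyGen}$—but the underlying argument is the same.
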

\begin{proof}
    Since we are conditioned on the event $\overline{E_t}$, we have $t(m_0) \neq t(m_1)$.  Hence, by strong punctured key correctness of $\mathsf{IBE}$, all the programs $\mathsf{PKey}_P$ in these hybrids have the same functionality. Result follows from the security of $\io$.
\end{proof}

\begin{lemma}
    $\statdist{(\hyb_{t,3})_{|\overline{E_t}}}{(\hyb_{t, 4})_{|\overline{E_t}}} < 2^{-\lambda - Q(\lambda)}$ for all $t \in \{0, 1, \dots, 2^Q\}$.
\end{lemma}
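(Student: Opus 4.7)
The plan is to reduce this to the puncturable master secret key security of $\mathsf{IBE}$ (\cref{defn:puncibe}). The two hybrids $\hyb_{t,3}$ and $\hyb_{t,4}$ differ only in whether the ciphertext $ct^*$ (which is hardcoded inside the obfuscated challenge ciphertext program $\mathsf{PCt}^{(t,1)}$) is an $\mathsf{IBE}$ encryption of $t(m_b)$ or $t(m_{1-b})$, under identity $t$, using fresh uniform randomness $z$. Crucially, in both hybrids the adversary is only given $\mathsf{OPKey}$ built from $imsk' \samp \mathsf{IBE.Punc}(imsk, t)$, so the puncturable $\mathsf{IBE}$ master key is already punctured precisely at the relevant identity $t$. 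Moreover, conditioning on $\overline{E_t}$ means $t(m_0) \neq t(m_1)$, so the two candidate plaintexts for $ct^*$ are distinct and form a valid challenge pair.

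Concretely, I would construct a reduction $\adve'$ against $\puncibegame$ as follows. Since $t$ is a fixed parameter of the hybrid, $\adve'$ can commit to the puncturing point $id^* = t$ at the very start of the IBE game. It receives $(pk, imsk')$ from the IBE challenger and uses these to set up the FE simulation: $pk$ becomes the FE public key, $\adve'$ samples $b \samp \zo$ and $K \samp F.\mathsf{Setup}(1^\lambda)$ itself, and builds the punctured FE master key $\mathsf{OPKey} \samp \io(\mathsf{PKey}^{(t)})$ using $imsk'$ exactly as in $\hyb_{t,3}$. It then runs $\adve$ on $pk$ to obtain challenge messages $m_0, m_1$. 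If $t(m_0) = t(m_1)$ (i.e., the event $E_t$ occurs), $\adve'$ outputs a uniformly random bit and halts. Otherwise, $\adve'$ submits the pair $(t(m_b), t(m_{1-b}))$ to the IBE challenger, receives a challenge ciphertext $ct^*$, computes $K\{t\} \samp F.\mathsf{Punc}(K, t)$, assembles $\mathsf{OPCt} \samp \io(\mathsf{PCt}^{(t,1)})$ with $ct^*$ and $K\{t\}$ hardcoded, and feeds this challenge ciphertext to $\adve$. When $\adve$ outputs a guess $b'$, the reduction outputs $0$ (``first message'') if $b' = b$ and $1$ otherwise.

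The correctness of the simulation is straightforward: when the IBE challenger encrypts the first message $t(m_b)$, the entire view of $\adve$ is distributed exactly as in $(\hyb_{t,3})_{|\overline{E_t}}$, and when it encrypts $t(m_{1-b})$ the view is distributed exactly as in $(\hyb_{t,4})_{|\overline{E_t}}$ (in both cases the encryption randomness is fresh and uniform, matching the $z$ in the hybrids). On the event $E_t$, $\adve'$ outputs a uniform bit, which contributes the same to both branches of the IBE game and thus cancels in the distinguishing advantage. Hence $\adve'$'s distinguishing advantage in $\puncibegame$ equals $\Pr[\overline{E_t}] \cdot \statdist{(\hyb_{t,3})_{|\overline{E_t}}}{(\hyb_{t,4})_{|\overline{E_t}}}$, which, if the claim were false, would exceed (up to a factor $\le 1$) $2^{-\lambda - Q(\lambda)}$, contradicting the $2^{-\lambda - Q(\lambda)}$-security of $\mathsf{IBE}$.

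The only mildly delicate point is making sure the conditioning on $\overline{E_t}$ is handled consistently: the event depends solely on $\adve$'s output $(m_0, m_1)$ and on the fixed value $t$, not on the choice of $ct^*$, so it has identical probability in $\hyb_{t,3}$ and $\hyb_{t,4}$. This lets me convert a bound on the conditional statistical distance into an advantage in $\puncibegame$ without extra loss beyond a constant factor, and no additional hybrids are needed.
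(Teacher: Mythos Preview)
Your approach---reducing to the puncturable master secret key security of $\mathsf{IBE}$---is the same as the paper's. However, your reduction has a genuine gap: it does not handle the functional key queries that $\adve$ makes \emph{before} outputting $(m_0,m_1)$. In $\puncfegame$, after receiving $pk$ the adversary issues polynomially many queries $f$ and receives $sk_f \samp \mathsf{FE.KeyGen}(msk,f)$, which in the real hybrids is computed with the \emph{unpunctured} $\mathsf{IBE}$ master key $imsk$. Your sentence ``It then runs $\adve$ on $pk$ to obtain challenge messages $m_0, m_1$'' skips this entire phase; since $\adve'$ only holds $imsk'$ (punctured at $t$), you must explain how these pre-challenge queries are answered and why the simulation is faithful. (Relatedly, $\mathsf{PKey}^{(t)}$ hardcodes $m_0,m_1$, so $\mathsf{OPKey}$ cannot be built before $\adve$ outputs them, contrary to the ordering in your description.)

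The paper completes the argument with two observations you are missing. First, conditioned on $\overline{E_t}$ (so $t(m_0)\neq t(m_1)$), if the adversary ever queried $f=t$ then the validity check ``$f(m_0)=f(m_1)$ for all queried $f$'' fails and the game outputs $0$ in both hybrids; hence on the portion of the probability space that can contribute to the distance, $t$ is never queried. Second, by the \emph{strong punctured key correctness} of $\mathsf{IBE}$ (\cref{defn:strongpuncibe}) together with $\mathsf{IBE.KeyGen}$ being deterministic, for every $f\neq t$ the key $\mathsf{IBE.KeyGen}(imsk',f)$ equals $\mathsf{IBE.KeyGen}(imsk,f)$, so $\adve'$ can answer all pre-challenge queries using $imsk'$ and still perfectly simulate the hybrids. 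Once these two points are folded in, your reduction goes through as written.
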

\begin{proof}
    Since we are conditioned on the event $\overline{E_t}$, we have $t(m_0) \neq t(m_1)$. Further, to win, for any function $f \in \mathfrak{F}$, if $f$ is queried by the adversary, then $f(m_0) = f(m_1)$. Combining these, we get that $t$ was never queried by the adversary. Therefore, the adversary never gets a secret key for the identity $t$.  In particular, all the identity secret keys obtained by the adversary (as a result of functional key queries) can instead be obtained using $imsk'$, the IBE master secret key punctured at $t$, due to the strong punctured key correctness of IBE. Further, the punctured IBE master secret key obtained by the adversary (which is inside the punctured FE master secret key) is also punctured at $t$. Finally, observe that $ct^*$ is an IBE encryption (sampled using true randomness $z$) under the identity $t$. Hence, the result follows by puncturable IBE security (\cref{defn:puncibe}). 
\end{proof}

\begin{lemma}
    $\statdist{(\hyb_{t,4})_{|\overline{E_t}}}{(\hyb_{t, 5})_{|\overline{E_t}}} < 2^{-\lambda - Q(\lambda)}\cdot \poly(\lambda)$ for all $t \in \{0, 1, \dots, 2^Q\}$.
\end{lemma}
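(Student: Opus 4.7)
The plan is to mirror, in reverse, the two-step argument used in \cref{lem:puncfehyb12} and \cref{lem:puncfehyb23}. Those two lemmas bridged $\hyb_{t,1}$ and $\hyb_{t,3}$ through $\hyb_{t,2}$: first switching the randomness of $ct^*$ from $F(K,t)$ to a uniformly random string $z$ by puncturable PRF security, and then replacing the original IBE master secret key inside $\mathsf{PKey}$ with its punctured version by strong punctured key correctness together with $\io$ security. The only difference between that pair and the pair $(\hyb_{t,4}, \hyb_{t,5})$ is that the message encrypted in $ct^*$ is $f(m_{1-b})$ rather than $f(m_b)$, which does not enter either step of the argument. So I would simply invoke the two analogous indistinguishability claims in the opposite direction.

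Concretely, I would introduce an intermediate hybrid $\hyb_{t,4.5}$ that agrees with $\hyb_{t,4}$ except that the program $\mathsf{PKey}$ is generated from the original IBE master secret key $imsk$ rather than the punctured key $imsk'$. For the transition $\hyb_{t,4} \to \hyb_{t,4.5}$ I would use strong punctured key correctness of $\mathsf{IBE}$ to argue that the two versions of $\mathsf{PKey}$ have identical input-output behavior: on any $f$ with $f(m_0) \neq f(m_1)$ both versions return $\perp$ at the first line; on $f$ with $f(m_0) = f(m_1)$ and $f \neq t$ strong punctured key correctness guarantees that $\mathsf{IBE.KeyGen}(imsk,f)$ and $\mathsf{IBE.KeyGen}(imsk',f)$ have the same joint distribution with $pk$; and the case $f = t$ is caught by the first-line check since, conditioned on $\overline{E_t}$, we have $t(m_0) \neq t(m_1)$. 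Indistinguishability then follows from the security of $\io$, contributing at most $2^{-\lambda - Q(\lambda)} \cdot \poly(\lambda)$ to the statistical distance.

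For the transition $\hyb_{t,4.5} \to \hyb_{t,5}$ the only remaining difference is the randomness used to produce $ct^*$: uniform $z$ versus $F(K,t)$. In both hybrids the PRF key $K$ is only available to the adversary through the punctured key $K\{t\}$ hardcoded inside $\mathsf{PCt}^{(t,1)}$, so no evaluation of $F(K,\cdot)$ at $t$ appears outside $ct^*$ itself. Puncturable PRF security of $F$ (\cref{defn:puncprf}) therefore gives indistinguishability with loss at most $2^{-\lambda - Q(\lambda)}$. Combining the two steps by the triangle inequality yields the claimed bound of $2^{-\lambda - Q(\lambda)} \cdot \poly(\lambda)$.

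There is no genuine obstacle beyond bookkeeping; the only thing worth checking carefully is that the event $\overline{E_t}$, on which we are conditioning in this lemma, depends only on the challenge messages $(m_0, m_1)$ output by $\adve$ and not on any hybrid-specific quantity, so the two sub-arguments above remain valid after conditioning and are simply the mirror images of \cref{lem:puncfehyb23} and \cref{lem:puncfehyb12} respectively.
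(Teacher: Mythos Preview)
Your first step---undoing the puncturing of $imsk$ inside $\mathsf{PKey}$ via strong punctured key correctness and $\io$ security---is exactly the paper's entire proof of this lemma (``Same argument as \cref{lem:puncfehyb23}'').

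Your second step (switching the randomness of $ct^*$ from uniform $z^*$ back to $F(K,t)$ via puncturable PRF security) is not part of this transition. You were misled by a typo in the paper's definition of $\hyb_{t,5}$: although the displayed formula for $ct^*$ there shows $F(K,t)$, the hybrid is declared ``Same as $\hyb_{t,2}$'' with only the plaintext changed to $f(m_{1-b})$, and $\hyb_{t,2}$ uses uniform $z^*$. The intended $\hyb_{t,5}$ therefore still uses uniform randomness. This reading is confirmed by the structure of the subsequent lemma, where the $\hyb_{t,5}\to\hyb_{t,6}$ transition is proved by ``Same argument as \cref{lem:puncfehyb12}''---precisely the PRF step you included here.

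So your argument is correct but overshoots: your $\hyb_{t,4.5}$ is the paper's intended $\hyb_{t,5}$, and your second step is the content of the \emph{next} lemma rather than this one.
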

\begin{proof}
    Same argument as \cref{lem:puncfehyb23}.
\end{proof}

\begin{lemma}
    $\statdist{(\hyb_{t,5})_{|\overline{E_t}}}{(\hyb_{t,6})_{|\overline{E_t}}} < 2^{-\lambda - Q(\lambda)}$ for all $t \in \{0, 1, \dots, 2^Q\}$.
\end{lemma}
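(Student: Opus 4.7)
The plan is to observe that, once both definitions are unfolded, $\hyb_{t,5}$ and $\hyb_{t,6}$ describe the same experiment, so the claimed bound holds trivially with value zero on the left.

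First, I would unfold $\hyb_{t,6}$: inheriting from $\hyb_{t,1}$ and applying its one explicit override, the experiment samples $(pk, imsk)\samp \mathsf{IBE.Setup}(1^\lambda)$, $K\samp F.\mathsf{Setup}(1^\lambda)$, and $K\{t\}\samp F.\mathsf{Punc}(K,t)$; sets $ct^*\samp\mathsf{IBE.Enc}(pk,t,f(m_{1-b});F(K,t))$; and obfuscates $\mathsf{PCt}^{(t,1)}$ with $(K\{t\}, ct^*, pk, m_0, m_1, b)$ hardcoded. All pre-challenge functional-key queries are answered using the un-punctured $imsk$.

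Second, I would unfold $\hyb_{t,5}$: inheriting from $\hyb_{t,2}$ (which differs from $\hyb_{t,1}$ only in that it samples a fresh uniform $z$ and uses it as the randomness for $ct^*$ in place of $F(K,t)$) and applying its explicit override, the experiment replaces $ct^*$ with $\mathsf{IBE.Enc}(pk,t,f(m_{1-b});F(K,t))$, exactly matching the $ct^*$ of $\hyb_{t,6}$. Every other component -- $(pk,imsk)$, $K$, $K\{t\}$, the program $\mathsf{PCt}^{(t,1)}$ and its hardcoded parameters, and the answers to pre-challenge functional-key queries -- is produced by the same subroutine calls and hardcodings as in $\hyb_{t,6}$. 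The only residue of the $\hyb_{t,2}$ recipe is the independent uniform $z$, whose sole use in $\hyb_{t,2}$ was as the $ct^*$-randomness slot that the override has just erased; it is therefore a vestigial sample that never enters the adversary's view.

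Consequently the joint distribution of the adversary's view (public key, functional-key answers, challenge ciphertext) is literally equal in the two hybrids, and conditioning on the event $\overline{E_t}$, which depends only on the equally-distributed $(m_0,m_1)$, preserves this equality. Hence $\statdist{(\hyb_{t,5})_{|\overline{E_t}}}{(\hyb_{t,6})_{|\overline{E_t}}}=0<2^{-\lambda-Q(\lambda)}$. The main (mild) thing to verify, rather than any nontrivial cryptographic argument, is that $z$ is indeed syntactically confined to the $ct^*$-randomness slot of $\hyb_{t,2}$; inspection of the one-line definition of $\hyb_{t,2}$ makes this immediate. Should the intended reading instead be that $\hyb_{t,5}$ retains the random $z$ as $ct^*$'s randomness, the statement would follow by exactly the same puncturable PRF argument as Lemma~\ref{lem:puncfehyb12}, since the adversary sees only the punctured key $K\{t\}$ and not $K$.
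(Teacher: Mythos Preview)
Your analysis is correct, and you have sharply identified a typo in the paper: in the definition of $\hyb_{t,5}$, the randomness in $ct^*$ should be the uniform $z$ inherited from $\hyb_{t,2}$, not $F(K,t)$ as literally written (only $f(m_{1-b})$ is marked red as the change, and the symmetric structure $\hyb_{t,4}\to\hyb_{t,5}\to\hyb_{t,6}$ mirroring $\hyb_{t,3}\leftarrow\hyb_{t,2}\leftarrow\hyb_{t,1}$ confirms this intent). Under the intended reading, the paper's one-line proof ``same argument as Lemma~\ref{lem:puncfehyb12}'' invokes exactly the puncturable-PRF security argument you give in your final sentence: the adversary's view depends on $K$ only through $K\{t\}$ and the single evaluation $F(K,t)$, so swapping $F(K,t)$ for uniform $z$ changes the view by at most the PRF distinguishing advantage $2^{-\lambda-Q(\lambda)}$.

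Your first reading, while a faithful parse of the literal text, is not what the paper means; but your fallback to the PRF argument is precisely the paper's route, so your proposal is correct and aligned with the paper once the typo is accounted for.
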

\begin{proof}
    Same argument as \cref{lem:puncfehyb12}.
\end{proof}

\begin{lemma}
    $\statdist{(\hyb_{t,6})_{|\overline{E_t}}}{(\hyb_{t+1})_{|\overline{E_t}}} < 2^{-\lambda - Q(\lambda)}$ for all $t \in \{0, 1, \dots, 2^Q\}$.
\end{lemma}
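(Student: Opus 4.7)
The plan is straightforward and essentially mirrors the $\io$-hop already used to pass from $\hyb_{t}$ to $\hyb_{t,1}$, only run in reverse. Concretely, I will argue that the obfuscated programs underlying the challenge ciphertext in the two hybrids, namely $\mathsf{PCt}^{(t,1)}$ (as used in $\hyb_{t,6}$, where $ct^*$ now encrypts $t(m_{1-b})$ with randomness $F(K,t)$) and $\mathsf{PCt}^{(t+1)}$ (as used in $\hyb_{t+1}$), compute the same function on every input. Once that is done, $\io$ security delivers the claimed $2^{-\lambda-Q(\lambda)}$ bound immediately.

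Functional equivalence I would verify by a three-way case split on the input $f$. For $f < t$, both programs set $a = f(m_{1-b})$, and the only apparent discrepancy is that $\mathsf{PCt}^{(t,1)}$ draws the encryption randomness as $F(K\{t\}, f)$ while $\mathsf{PCt}^{(t+1)}$ uses $F(K, f)$; these are identical by punctured-key correctness of $F$ since $f \neq t$. For $f > t$, i.e. $f \geq t+1$, both programs set $a = f(m_{b})$, and the same punctured-key correctness equates the randomness. The single case where the programs look different is $f = t$: here $\mathsf{PCt}^{(t,1)}$ simply returns the hardwired $ct^* = \mathsf{IBE.Enc}(pk, t, t(m_{1-b}); F(K, t))$, whereas $\mathsf{PCt}^{(t+1)}$ takes the branch $f < t+1$, computes $a = t(m_{1-b})$, and outputs $\mathsf{IBE.Enc}(pk, t, t(m_{1-b}); F(K, t))$; the outputs coincide.

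With functional equivalence in hand, the bound follows from the $2^{-\lambda-Q(\lambda)}$-security of $\io$ applied to these two circuits. The only point worth being mildly careful about is padding: the hardcoded data differs in shape ($ct^*$ together with $K\{t\}$ on one side, $K$ on the other), so the construction must pad obfuscated ciphertext programs to a common size, which is the standard convention already implicit throughout the proof. The conditioning on $\overline{E_t}$ plays no role in this particular hop, since the argument uses only punctured-PRF correctness and $\io$ indistinguishability, neither of which depends on the event; accordingly I do not anticipate any genuine obstacle here, and the lemma drops out by the same template as Lemma~\ref{lem:puncfehyb01}.
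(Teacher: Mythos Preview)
Your proposal is correct and follows exactly the paper's approach: the paper simply says ``Same argument as \cref{lem:puncfehyb01}'', which is precisely the punctured-PRF-correctness-plus-$\io$ argument you spell out via the three-way case split on $f$. Your observation that the conditioning on $\overline{E_t}$ is irrelevant for this particular hop is also accurate.
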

\begin{proof}
    Same argument as \cref{lem:puncfehyb01}.
\end{proof}

Combining the above lemmata, we get $\statdist{(\hyb_t)_{|\overline{E_t}}}{(\hyb_t)_{|\overline{E_t}}} < 2^{-\lambda/2 - Q(\lambda)}$. Then, as argued before, this gives $\statdist{\hyb_t}{\hyb_{t+1}} < 2^{2^{-\lambda/2 - Q(\lambda)}}$, finally yielding $\statdist{\hyb_0}{\hyb_{2^Q}} < {2^{-\lambda/2}}$. Hence, $\Pr[\hyb_{0} = 1] \geq 1/2 + 1/p(\lambda)$ by assumption and therefore $\Pr[\hyb_{2^Q} = 1] \geq 1/2 + 1/(2\cdot p(\lambda))$. However, this is clearly a contradiction, since the challenge messages in these hybrids are $m_b$ and $m_{1-b}$ respectively (while we still compare the adversary's guess to $b$).

\section{Public-Key Encryption with Copy-Protected Secret Keys}\label{sec:pke}
In this section, we define public-key encryption with copy-protected secret keys. Then, we give our construction based on coset states and prove it secure.
\subsection{Definitions}
\begin{definition}[Public-key Encryption with Copy-Protected Secret Keys]
A public-key encryption scheme with copy-protected secret keys consists of the following efficient algorithms.

\begin{itemize}
\item $\keygen(1^\lambda)$: Takes in the security parameter, output a classical secret key $sk$ and a public key $pk$.
    \item $\qkeygen(sk)$: Takes as input the classical secret key and outputs a quantum secret key.

    \item $\enc(pk, m)$: Takes in the public key and a message $m \in \mathcal{M}$, outputs and encryption of $m$.
    
    \item $\dec(\regi{dec}, ct)$: Takes in a quantum secret key and a ciphertext, outputs a message or $\perp$.
\end{itemize}

We require correctness\footnote{While our schemes satisfy perfect correctness, i.e., correctness with probability $1$, some work relax the definition to $1 - \negl(\lambda)$.} and \emph{CPA security}.
\paragraph{Correctness} For all messages $m \in \mathcal{M}$, \begin{equation*}
    \Pr[\dec(\regi{dec}, ct) = m : \begin{array}{c}
         pk, sk \samp \mathsf{Setup}(1^\lambda)  \\
         \regi{dec} \samp \qkeygen(sk) \\
         ct \samp \mathsf{Enc}(pk, m)
    \end{array}] = 1.
\end{equation*}

\paragraph{CPA Security} For any stateful QPT adversary $\adve$,
    \begin{equation*}
    \Pr[\adve(ct) = b : \begin{array}{c}
         pk, sk \samp \mathsf{Setup}(1^\lambda)  \\
         m_0, m_1 \samp \adve(pk, 1^\lambda)\\
         b \samp \zo \\
         ct \samp \mathsf{Enc}(pk, m_b)
    \end{array}] \leq \frac{1}{2} + \negl(\lambda).
    \end{equation*}
\end{definition}
As observed by \cite{CLLZ21}, correctness of the scheme along with Almost As Good As New Lemma (\cref{lem:asgoodasnew}) means that we can implement decryption in a way such that the quantum secret key is not disturbed. Thus, we can reuse the key to decrypt any number of times.

Following prior work, we will use two different security notions, regular anti-piracy and strong anti-piracy. The former will be the natural security notion while the latter definition is easier to work with when proving security. Both of our definitions follow \cite{CLLZ21, LLQZ22}, with the strengthening that we allow unbounded\footnote{Still polynomial since the adversary is QPT.} number of key queries and we also allow the adversary to choose different challenge messages for each freeloader.

Now, we move onto the first definition. In this definition, the pirate (or \emph{splitting}) adversary queries for copy-protected keys for any number of rounds. Then, if it has queried for $k$ keys, it outputs $k+1$ freeloaders, which are unitaries along with hardwired quantum states. More precisely, it outputs a $(k+1)$-partite (possibly entangled) register $\regi{adv}$ and unitaries $U_\ell$. Then, the challenger presents these freeloaders with challenge ciphertexts, and the adversary wins if all freeloaders correctly predict the challenges. Below, we write $\qunivcla$ to denote the quantum universal circuit $\qunivcla((U, \rho), x)$ that takes in a unitary ${U}$ and a state $\rho$, and simulates the \emph{induced} quantum circuit on input $x$ (i.e. computes ${U}(\rho, x)$), and finally measures the first output qubit in the computational basis. We note that the freeloaders being unitaries is not restrictive and actually captures general quantum circuits since the hardwired quantum state $(\regi{adv})_\ell$ can include\footnote{It will also include some quantum information that the pirate adversary has produced from the copy-protected keys.} workspace qubits initialized to zeroes.

\begin{definition}[CPA-Style Regular $\gamma$-Anti-Piracy Security]\label{defn:regularpkeantipir}
Let $\pke$ be a public key encryption scheme with copy-protected secret keys. Consider the following game between the challenger and an adversary $\adve$.
\paragraph{$\underline{\unclonepkegame(\lambda, \adve)}$}
\begin{enumerate}
    \item The challenger runs $sk, pk \samp \pke.\mathsf{Setup}(1^\lambda)$ and submits $pk$ to the adversary.
    \item For multiple rounds, $\adve$ makes quantum key queries. For each query, the challenger generates a key as $\reg \samp \pke.\qkeygen(sk)$ and submits $\reg$ to the adversary.
    \item $\adve$ outputs a $(k + 1)$-partite register $\regi{adv}$, unitaries $\{U_\ell\}_{\ell \in [k+1]}$ and challenge messages $\{m_\ell^0, m_\ell^1\}_{\ell \in [k+1]}$, where $k$ is the number of queries it made.
    \item The challenger executes the following for each $\ell \in [k + 1]$.
    \begin{enumerate}[label*=\arabic*.]
        \item $b_\ell \samp \zo$.
        \item $ct_\ell \samp \pke.\mathsf{Enc}(pk, m^{b_\ell}_\ell)$.
        \item $b'_\ell \samp \qunivcla(U_\ell, \regi{adv}[\ell], ct_\ell)$.
        \item Check if $b'_\ell = b_\ell$.
    \end{enumerate}
    \item The challenger outputs 1 if and only if all the checks pass.
\end{enumerate}
    We say that $\pke$ satisfies $\gamma$-anti-piracy security if for any QPT adversary $\adve$,
    \begin{equation*}
    \Pr[\unclonepkegame(\lambda, \adve) = 1] \leq \frac{1}{2} + \gamma(\lambda) + \negl(\lambda).
    \end{equation*}

    We ignore writing $\gamma$ when $\gamma = 0$.
\end{definition}
Note that we can also define a version where the freeloader adversaries try to guess the whole message $m_\ell$ where $m_\ell \samp \mathcal{M}$; and we require negligible probability of success. It is not known if this version is not implied by CPA security, see \cite{CLLZ21}. However, our construction will satisfy both notions.
Before moving onto the stronger definition, we need the following notation.
\begin{definition}[Decryptor Testing]\label{defn:goodecr}
    In the anti-piracy game between the challenger and an adversary, fix $\ell \in [k + 1]$, some values $m_\ell^0, m_\ell^1$ of the challenge messages, a freeloader unitary $U_\ell$, and some value $st$ of a classical state maintained by the challenger (which will be defined later). Let $\mathcal{D}$ be an efficient ciphertext distribution that can depend on $st$. That is, $\mathcal{D}^{st}(m; r)$ is an efficient classical algorithm where $m \in \mathcal{M}$, $r \in \mathcal{R}$ and $\mathcal{R}$ is a random coin set. 
    
    Consider the following mixture $\mathcal{P}$ of binary projective measurements, induced by $\mathcal{D}$ and $m_\ell^0, m_\ell^1, U_\ell, st$, applied on a state $\rho$.
    \begin{enumerate}
        \item Sample $b \samp \zo$.
        \item Sample $r \samp \mathcal{R}$.
        \item Run $ct \samp \mathcal{D}^{st}(m^b_\ell; r)$.
        \item Execute $U_\ell$ on $(\rho, ct)$, and measure the first qubit of the output registers, let $b'$ be the output.
        \item Output 1 if $b' = b$. Otherwise, output $0$.
    \end{enumerate}

    Observe that we can efficiently execute the above measurement\footnote{More formally, we are actually talking about the measurement where $r, b$ are fixed} for arbitrary given superpositions of $r$ and $b$ values. Therefore, by \cref{sec:piti}, there exists both exact and approximated projective and threshold implementations for $\mathcal{P}$. We write $\mathsf{PI}_{\ell, \mathcal{D}}$ and $\mathsf{API}^{ \eps, \delta}_{\ell, \mathcal{D}}$ to denote the projective implementation and approximate projective implementation of $\mathcal{P}$, respectively. Similarly, let $\mathsf{TI}_{\ell, \mathcal{D}, \eta}$ and $\mathsf{ATI}^{\eps, \delta}_{\ell, \mathcal{D}, \eta}$ denote the threshold and efficient approximate threshold implementations of $\mathcal{P}$ for a threshold value $\eta$. 
    
    While the fixed values $m_\ell^0, m_\ell^1, U_\ell, st$ are omitted from the notation, they will be clear from the context. Unless otherwise specified, we will write $\mathcal{D}$ to denote the honest ciphertext distribution, that is, we encrypt $m$ as \begin{equation*}
        ct \samp \mathsf{PKE.Enc}(pk, m)
    \end{equation*}
    where $pk$ is part of $st$.
\end{definition}

\begin{definition}[CPA-Style Strong $\gamma$-Anti-Piracy]\label{defn:strongantipir}
    Let $\pke$ be a public key encryption scheme with copy-protected secret keys. Consider the following game between the challenger and an adversary $\adve$.
    \paragraph{$\underline{\strongunclonpkegame(\lambda, \gamma(\lambda), \adve)}$}
\begin{enumerate}
    \item The challenger runs $sk, pk \samp \pke.\mathsf{Setup}(1^\lambda)$ and submits $pk$ to the adversary.
    \item For multiple rounds, $\adve$ makes quantum key queries. For each query, the challenger generates a key as $\reg \samp \pke.\qkeygen(sk)$ and submits $\reg$ to the adversary.
    \item $\adve$ outputs a $(k + 1)$-partite register $\regi{adv}$, unitaries $\{U_\ell\}_{\ell \in [k+1]}$ and challenge messages $\{m_\ell^0, m_\ell^1\}_{\ell \in [k+1]}$, where $k$ is the number of queries it made.
    \item The challenger applies the test
    \begin{equation*}
    \bigotimes_{\ell \in [k+1]} \mathsf{TI}_{\ell, \mathcal{D}, 1/2+\gamma}
    \end{equation*}
    to $\regi{adv}$ and outputs $1$ if and only if the measurement result is all $1$.
\end{enumerate}
We say that $\pke$ satisfies strong $\gamma$-anti-piracy security if for any QPT adversary $\adve$,
    \begin{equation*}
       \Pr[\strongunclonpkegame(\lambda, \gamma(\lambda), \adve)] \leq \negl(\lambda).
    \end{equation*}
\end{definition}

We also have the following relationship between the various security definitions for public-key encryption.
\begin{theorem}[\cite{CLLZ21}]
Suppose a public key encryption scheme with copy-protected keys satisfies CPA-style strong $\gamma$-anti-piracy (\cref{defn:strongantipir}). Then, it also satisfies CPA-style regular $\gamma$-anti-piracy (\cref{defn:regularpkeantipir}).
\end{theorem}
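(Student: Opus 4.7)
The plan is to argue by contrapositive. Suppose regular $\gamma$-anti-piracy fails, so there exist a QPT adversary $\adve$, a polynomial $p(\cdot)$, and infinitely many $\lambda$ such that $\Pr[\unclonepkegame(\lambda, \adve) = 1] > 1/2 + \gamma(\lambda) + 1/p(\lambda)$. I will show that the \emph{same} adversary $\adve$ wins $\strongunclonpkegame(\lambda, \gamma, \adve)$ with non-negligible probability, contradicting strong anti-piracy. Note that $\adve$ interacts identically with the challenger in both games up through producing the register $\regi{adv}$ and the challenge message pairs $(m_\ell^0, m_\ell^1)_{\ell}$; only the final test differs, so no new construction of an adversary is required.

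The key step is to rewrite the regular game's winning probability in terms of the projective implementations of Definition~\ref{defn:goodecr}. For each $\ell$, the step that samples $b_\ell$, produces $ct_\ell \samp \pke.\mathsf{Enc}(pk, m_\ell^{b_\ell})$, runs $b'_\ell \samp \qunivcla(\regi{adv}[\ell], ct_\ell)$, and compares $b'_\ell$ to $b_\ell$ is exactly the binary POVM $\mathcal{P}_\ell$ underlying $\mathsf{PI}_{\ell, \mathcal{D}}$. Since these measurements act on disjoint registers, by Theorem~\ref{thm:piprop} applied in parallel (using the standard product-of-projective-implementations observation), the probability that all $k+1$ checks pass equals $\E[\prod_{\ell} p_\ell]$, where the tuple $(p_\ell)_{\ell \in [k+1]}$ is the outcome of applying $\bigotimes_{\ell} \mathsf{PI}_{\ell, \mathcal{D}}$ to $\regi{adv}$ and then, independently per register, outputting $1$ with probability $p_\ell$.

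Now I apply a simple averaging argument. Let $q = \Pr[\forall \ell: p_\ell \geq 1/2 + \gamma]$, the probability computed under the joint $\mathsf{PI}$-distribution. Partitioning on this event, and using $\prod_\ell p_\ell \leq 1$ always and $\prod_\ell p_\ell < 1/2 + \gamma$ whenever some $p_\ell < 1/2 + \gamma$, we get
\begin{equation*}
1/2 + \gamma + 1/p(\lambda) < \E\left[\prod_\ell p_\ell\right] \leq q + (1 - q)(1/2 + \gamma) = 1/2 + \gamma + q(1/2 - \gamma),
\end{equation*}
which yields $q > \frac{1/p(\lambda)}{1/2 - \gamma(\lambda)} \geq 2/p(\lambda)$, an inverse polynomial. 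By definition, $\mathsf{TI}_{\ell, \mathcal{D}, 1/2 + \gamma}$ returns $1$ exactly when $\mathsf{PI}_{\ell, \mathcal{D}}$ returns a value at least $1/2 + \gamma$, so $q$ is precisely the probability that $\bigotimes_\ell \mathsf{TI}_{\ell, \mathcal{D}, 1/2 + \gamma}$ outputs all-ones on $\regi{adv}$, i.e., $\Pr[\strongunclonpkegame(\lambda, \gamma, \adve) = 1] \geq 2/p(\lambda)$. This contradicts the assumed strong $\gamma$-anti-piracy.

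The main thing to get right in a careful write-up is the equivalence in the middle paragraph, which relies on the fact that the per-register POVMs commute and that the exact projective implementation, composed across registers, reproduces the joint outcome distribution of independently running each $\mathcal{P}_\ell$; this is a routine consequence of Theorem~\ref{thm:piprop}. Everything else is bookkeeping, with the small caveat that the threshold in $\mathsf{TI}$ is taken as $p \geq \eta$ (rather than strict), which is the convention matching the averaging calculation.
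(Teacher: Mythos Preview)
Your proposal is correct and follows essentially the same approach as the paper (which defers to the proof of \cref{thm:festrongtoregular}): rewrite the regular winning probability as $\E[\prod_\ell p_\ell]$ via the parallel projective implementations, then partition on the event $\{\forall \ell:\, p_\ell \ge 1/2+\gamma\}$. The only cosmetic difference is that the paper argues directly (strong $\Rightarrow$ regular) while you argue by contrapositive; the inequality $\E[\prod_\ell p_\ell] \le q + (1-q)(1/2+\gamma)$ is identical in both.
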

While \cite{CLLZ21} proves the above for only $1 \to 2$ anti-piracy security, it can be generalized to unbounded collusion setting - see proof of \cref{thm:festrongtoregular}.

\begin{theorem}[\cite{CLLZ21}]
    Suppose a public key encryption scheme with copy-protected keys satisfies CPA-style regular $\gamma$-anti-piracy (\cref{defn:regularpkeantipir}) for any inverse polynomial $\gamma$. Then, it also satisfies regular CPA security and regular $\gamma$-anti-piracy for $\gamma = 0$.
\end{theorem}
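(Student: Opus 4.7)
My plan is to prove both claims via one contrapositive argument. The key observations are that the regular CPA game is syntactically a restriction of $\unclonepkegame$, and that ``no advantage better than negligible'' follows from ``no advantage better than $\gamma + \negl$ for every inverse polynomial $\gamma$'' by pure quantifier manipulation.

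First, I would reduce the CPA claim to the $\gamma = 0$ anti-piracy claim. The regular CPA game is exactly $\unclonepkegame$ instantiated with zero key queries ($k = 0$, yielding a single freeloader). Given any QPT CPA adversary $\adve_{\mathsf{CPA}}$, I would construct an anti-piracy adversary $\adve_{\mathsf{AP}}$ that on input $pk$ runs $\adve_{\mathsf{CPA}}(pk)$ without issuing any key queries, extracts $(m_0, m_1)$ together with the post-first-phase quantum state, and packages this state with the classical description of $\adve_{\mathsf{CPA}}$'s second-phase circuit into the single freeloader register $\regi{adv}$, outputting $(m^0_1, m^1_1) := (m_0, m_1)$. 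Since the challenger later runs $\qunivcla(\regi{adv}, ct)$, which reproduces $\adve_{\mathsf{CPA}}$'s guess verbatim, the two games have identical winning probabilities, and so proving $\gamma = 0$ anti-piracy is enough.

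Next, I would prove the $\gamma = 0$ anti-piracy claim by contrapositive. Suppose some QPT $\adve$ satisfies $\Pr[\unclonepkegame(\lambda, \adve) = 1] > 1/2 + 1/p(\lambda)$ for some polynomial $p$ and infinitely many $\lambda$---the standard unfolding of a non-negligible advantage. I would then set $\gamma(\lambda) := 1/(2p(\lambda))$, which is an inverse polynomial. By the hypothesis applied to this $\gamma$, there is a negligible function $\nu$ with $\Pr[\unclonepkegame(\lambda, \adve) = 1] \leq 1/2 + 1/(2p(\lambda)) + \nu(\lambda)$ for all sufficiently large $\lambda$. Combining the two bounds gives $\nu(\lambda) > 1/(2p(\lambda))$ for infinitely many sufficiently large $\lambda$, contradicting that $\nu$ is negligible.

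The hardest part of this proof is essentially nonexistent---it is a pure quantifier-juggling argument, and the only point requiring mild care is the standard equivalence between ``advantage is negligible'' and ``for every polynomial $p$, the advantage eventually drops below $1/p$.'' None of the projective-implementation or monogamy-of-entanglement machinery is invoked here, since those tools were used earlier to establish the strong anti-piracy from which the hypothesized regular $\gamma$-anti-piracy already follows; this final collapse from ``for every inverse polynomial $\gamma$'' down to ``$\gamma = 0$'' is purely logical.
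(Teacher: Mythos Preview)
Your proposal is correct and takes essentially the same approach as the paper, which dispenses with the proof in a single line: ``This is simply due to the definition of $\negl(\lambda)$ and $\gamma$-anti-piracy.'' You have merely unpacked that remark---the CPA game as the $k=0$ instance of $\unclonepkegame$, and the quantifier inversion turning ``$\leq 1/2 + \gamma + \negl$ for every inverse polynomial $\gamma$'' into ``$\leq 1/2 + \negl$''---which is exactly the intended (definitional) argument.
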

This is simply due to the definition of $\negl(\lambda)$ and $\gamma$-anti-piracy.

\subsection{Construction}\label{sec:pkecons}
In this section, we present our construction. Assume the existence of following primitives where we set $\nu(\lambda) = 2^{-6\lambda}\cdot2^{-8\lambda^{0.3\constmoecoll}}$.
\begin{itemize}
    \item $\io$, indistinguishability obfuscation scheme that is $\nu(\lambda)$-secure against $2^{5\lambda}\cdot2^{8\lambda^{0.3\constmoecoll}}$-time adversaries,
    \item $\mathsf{IBE}$, identity-based encryption scheme for the identity space $\mathcal{ID} = \zo^\lambda$ (\cref{defn:ibe}) that is $\nu(\lambda)$-secure against $2^{5\lambda}\cdot2^{8\lambda^{0.3\constmoecoll}}$-time adversaries,
    \item $F_1$, puncturable PRF family with input length $\lambda$ and output length same as the size of the randomness used by $\mathsf{CosetGen}$ (\cref{defn:cosetgen}) that is $\nu(\lambda)$-secure against $2^{5\lambda}\cdot2^{8\lambda^{0.3\constmoecoll}}$-time adversaries,
    \item $F_2$, puncturable PRF family with input length $\lambda$ and output length same as the size of the randomness used by $\mathsf{IBE.Enc}$ that is $\nu(\lambda)$-secure against $2^{5\lambda}\cdot2^{8\lambda^{0.3\constmoecoll}}$-time adversaries,
    \item $\mathsf{CCObf}$, compute-and-compare obfuscation for $2^{-\lambda^{0.2\cdot {\constmoecoll}}}$-unpredictable distributions that is $2^{-2\lambda - 1}\cdot 2^{-2\lambda^{0.3\constmoecoll}}$-secure against $2^{3\lambda}\cdot 2^{2\lambda^{0.3\constmoecoll}}$-time adversaries,
\end{itemize}
A remark is in order regarding our assumptions. We note that all of our assumptions above can be based on any subexponential $\io$ and LWE assumption. For example, if we have an $\io$ scheme that is $2^{-\lambda^{c_1}}$-secure against $2^{\lambda^{c_2}}$-time adversaries; in our construction we implicitly initiate it with security parameter $\lambda^{c'}$ where $c' = \max\{0.3\constmoecoll/c_1,0.3\constmoecoll/c_2\}$. While this might require larger padding for obfuscated circuits, this is still within polynomial factors. The same applies for the other primitives. Thus, our assumptions can be based solely on subexponential hardness for any exponent, since we can always scale the security parameter by a polynomial factor when instantiating the underlying primitives.

Set $L(\lambda) = \lambda$ and therefore $c_L(\lambda) = 24\cdot\lambda^3$ (see \cref{defn:strmoecoll}). We also assume that all obfuscated programs in the construction and in the proof are appropriately padded.

We now give our construction for public-key encryption with copy-protected secret keys. 
\paragraph{$\underline{\mathsf{PKE.Setup}(1^{\lambda})}$}
\begin{enumerate}
    \item Sample a PRF key $K_1 \samp F_1.\mathsf{KeyGen}(1^\lambda)$.
    \item Sample $cpk, csmk \samp \mathsf{IBE.Setup}(1^\lambda)$.
    \item Sample $\mathsf{OPMem} \samp \io(\mathsf{PMem}_{K_1})$, where $\mathsf{PMem}_{K_1}$ is the following program.
    
\begin{mdframed}\label{code:pmemorig}
        {\bf $\underline{\mathsf{PMem}_{K_1}(id, u_1, \dots, u_{\cosettcount}, r)}$}
        
        {\bf Hardcoded: $K_1$}
        \begin{enumerate}[label=\arabic*.]
            \item $(A_i, s_i, s_i')_{i \in [\cosettcount]} \samp \mathsf{CosetGen}(1^{L(\lambda) + \lambda}; F_1(K_1, id))$.
            \item For each $i \in [\cosettcount]$, check if $u_i \in A_i + s_i$ if $(r)_i = 0$ and check if $u_i \in A^{\perp}_i + s'_i$ if $(r)_i = 1$. If any of the checks fail, output $0$ and terminate.
            \item Output $1$.
        \end{enumerate}

    \end{mdframed}

    \item Set $pk = (cpk, \mathsf{OPMem})$ and $sk = (cmsk, K_1)$.
    \item Output $(pk, sk)$.
\end{enumerate}

\paragraph{$\underline{\mathsf{PKE.QKeyGen}(sk)}$}
\begin{enumerate}
    \item Parse $(cmsk, K_1) = sk$.
    \item Sample $id \samp \zo^{\lambda}$.
    \item $(A_i, s_i, s_i')_{i \in [\cosettcount]} = \mathsf{CosetGen}(1^{L(\lambda) + \lambda}; F_1(K_1, id))$.
    \item $ck \samp \mathsf{IBE.KeyGen}(cmsk, id)$.
    
    \item Output $\left(\ket{A_{i, s_i, s'_i}}\right)_{i \in [\cosettcount]}, ck, id$.
\end{enumerate}

\paragraph{$\underline{\mathsf{PKE.Enc}(pk, m)}$}
\begin{enumerate}
    \item Parse $(cpk, \mathsf{OPMem}) = pk$.
    \item Sample $r \samp \zo^{\cosettcount}$.
    \item Sample a PRF key $K_2$ for $F_2$ as $K_2 \samp F_2.\mathsf{KeyGen}(1^{\lambda})$.
    \item Sample $\mathsf{OPCt} \samp \io(\mathsf{PCt}_{\mathsf{OPMem}, cpk, K_2, r, m})$, where $\mathsf{PCt}_{\mathsf{OPMem}, cpk, K_2, r, m}$ is the following program.
 \begin{mdframed}
        {\bf $\underline{\mathsf{PCt}_{\mathsf{OPMem}, cpk, K_2, r, m}(id, u_1, \dots, u_{\cosettcount})}$}
        
        {\bf Hardcoded: $\mathsf{OPMem}, cpk, K_2, r, m$}
        \begin{enumerate}[label=\arabic*.]
            \item Run $\mathsf{OPMem}(id, u_1, \dots, u_{\cosettcount}, r)$. If it outputs $0$, output $\perp$ and terminate.
            \item Output $\mathsf{IBE.Enc}(cpk, id, m; F_2(K_2, id))$.
        \end{enumerate}
    \end{mdframed}
    \item Output $(\mathsf{OPCt}, r)$.
\end{enumerate}

\paragraph{$\underline{\mathsf{PKE.Dec}(\regi{key}, ct)}$}
\begin{enumerate}
    \item Parse $((\reg_i)_{i \in [\cosettcount]}, ck, id) = \regi{key}$ and $(\mathsf{OPCt}, r) = ct$.
    \item For indices $i \in [\cosettcount]$ such that $(r)_i = 1$, apply $H^{\otimes \kappa(L(\lambda) + \lambda)}$ to $\reg_i$.
    \item Run the program $\mathsf{OPCt}$ coherently on $id$ and $(\reg_i)_{i \in [\cosettcount]}$.
    \item Measure the output register and denote the outcome by $cct$.
    \item Output $\mathsf{IBE.Dec}(ck, cct)$.
\end{enumerate}

Correctness with probability $1$ follows in a straightforward manner from the correctness of the underlying schemes. We claim that the construction is also secure.
\begin{theorem}\label{thm:pkeantipiracy}
 $\mathsf{PKE}$ satisfies strong $\gamma$-anti-piracy for any inverse polynomial $\gamma$.
\end{theorem}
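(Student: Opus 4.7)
The plan is to argue by contradiction: suppose some QPT $\adve$ wins $\strongunclonpkegame(\lambda, \gamma, \adve)$ non-negligibly for some inverse polynomial $\gamma$, and derive an adversary against the collusion-resistant monogamy game $\moecoll$ (\cref{defn:strmoecoll}), a contradiction. Following the technical overview, I would first define a family of $2^\lambda + 1$ hybrid ciphertext distributions $\{\mathcal{D}_j\}_{j \in \{0, \ldots, 2^\lambda\}}$, where $\mathcal{D}_j$ replaces $\mathsf{PCt}$ with a variant $\mathsf{PCt}^j$ that, on input $id$ (interpreted as an integer), outputs $\mathsf{IBE.Enc}(cpk, id, \failmes; F_2(K_2, id))$ instead of the honest encryption whenever $id < j$. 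Then $\mathcal{D}_0$ is the honest distribution, while in $\mathcal{D}_{2^\lambda}$ the ciphertext is perfectly independent of the challenge bit, which forces $\mathsf{TI}_{\ell, \mathcal{D}_{2^\lambda}, 1/2 + \gamma}$ to output $0$ always, whereas the winning hypothesis gives non-negligible success for $\bigotimes_{\ell} \mathsf{TI}_{\ell, \mathcal{D}_0, 1/2+\gamma}$. Using the efficient approximate threshold implementations and their multipartite variants (\cref{thm:singleatiprop,thm:multiatiprop,thm:multiapiprop}) together with a hybrid argument along $j$, I can extract for each freeloader $\ell \in [k+1]$ a ``jump index'' $j_\ell \in \{0, \ldots, 2^\lambda - 1\}$ where the efficient $\mathsf{API}^{\eps,\delta}_{\ell, \mathcal{D}_{j}}$ outcome drops noticeably between $j_\ell$ and $j_\ell + 1$.

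The crucial structural observation is that such jumps can occur only at indices equal to one of the queried identities $id_1, \ldots, id_k$. For any $j \notin \{id_q\}_q$, the only difference between $\mathcal{D}_j$ and $\mathcal{D}_{j+1}$ lies in the output of $\mathsf{PCt}$ on the single input $id = j$, which in one case is an $\mathsf{IBE}$ encryption of $m$ and in the other of $\failmes$, both under the identity $j$. Since the adversary holds no $\mathsf{IBE}$ secret key for $j$, a standard inner hybrid (puncture $F_2$ at $j$ inside $\mathsf{PCt}$, switch the encryption randomness to truly random, switch the plaintext using adaptive $\mathsf{IBE}$ security, undo the puncture) shows the two induced ciphertext distributions are computationally indistinguishable at the (subexponential) security level provided by the parameter $\nu(\lambda)$. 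Combined with the computational part of \cref{thm:distti}, this implies the two $\mathsf{API}$ outcome distributions are statistically close on every freeloader register, ruling out a jump at $j \notin \{id_q\}$. Applying pigeonhole to $k + 1$ freeloaders whose jump points all lie in the size-$k$ set $\{id_1, \ldots, id_k\}$, there must exist distinct $\ell, \ell' \in [k + 1]$ with $j_\ell = j_{\ell'} = id_q$ for some $q$; randomly guessing $(\ell, \ell', q)$ incurs only $\poly(\lambda)$ loss.

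Having isolated $\ell, \ell', q$ with a shared jump at $j^* = id_q$, I would translate the jump into an $\moecoll$ attack. The coset-membership check inside $\mathsf{PCt}^{j^*}$ at $id = id_q$ is recast via the canonical-representative trick for cosets as a sub-exponentially unpredictable compute-and-compare program whose target is determined by the coset tuple pseudorandomly associated to $id_q$ and whose hidden output is the honest $\mathsf{IBE}$ ciphertext, unpredictability being inherited from \cref{defn:strmoecoll}. Replacing it with its simulated compute-and-compare obfuscation via \cref{thm:ccobf} is an indistinguishable change that kills the gap between $\mathcal{D}_{j^*}$ and $\mathcal{D}_{j^*+1}$; since by hypothesis a noticeable gap persists, the extractor guaranteed by \cref{thm:ccobf}, when applied to $R_\ell$, must recover a vector tuple sitting in the correct cosets for the tuple associated to $id_q$ and the challenge randomness $r_\ell$. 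To obtain vectors from $R_{\ell'}$ as well, I invoke the simultaneous-extraction machinery (\cref{thm:simulproj,thm:impindep}) to argue that after extracting from $R_\ell$ the bipartite state still exhibits the required jump on its second register, so a second invocation of the extractor also succeeds. The pair of vector tuples, together with $r_\ell, r_{\ell'}$, wins $\moecoll$ at $id_q$. The main obstacle is parameter bookkeeping: the $2^\lambda$-long hybrid chain demands truly subexponential underlying security (justifying the specific $\nu(\lambda)$ chosen in the construction), the polynomial guessing loss for $(\ell, \ell', q)$ must be compatible with the compute-and-compare extractor's loss, and — most delicate of all — showing that post-extraction on $R_\ell$ the residual state of $R_{\ell'}$ still has a noticeable $\mathsf{PI}$-gap between $\mathcal{D}_{j^*}$ and $\mathcal{D}_{j^*+1}$, which is where the implementation-independence and almost-projectivity tools \cref{thm:simulproj,thm:multiatiprop,thm:multiapiprop} carry the argument, all while the overall reduction only uses the oracle $\mathsf{OPMem}$ supplied by the $\moecoll$ challenger.
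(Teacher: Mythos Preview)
Your proposal is correct and mirrors the paper's proof in \cref{sec:pkeproof}: the paper merely reindexes your $2^\lambda+1$ hybrids by the $k$ queried identities, so that the reduction $\adve_0'$ applies $\mathsf{API}$ sequentially only along the polynomial-length chain $\mathcal{D}_0,\ldots,\mathcal{D}_{k+1}$ (with thresholds at the sorted $id_1<\cdots<id_k$), and realizes your ``jumps collapse away from queried identities'' observation as a separate seven-step inner hybrid $\mathcal{D}_j^{(\Delta,0)},\ldots,\mathcal{D}_j^{(\Delta,6)}$ invoked only in the analysis, never in the executed reduction. One point to tighten: the compute-and-compare step runs in the contrapositive direction --- you should not first assert sub-exponential unpredictability ``inherited from \cref{defn:strmoecoll}''; rather, the persistence of the $\mathcal{D}'$-versus-$\mathcal{D}''$ gap forces the obfuscation to be distinguishable, whence by \cref{thm:ccobf} the distribution is \emph{not} $2^{-\lambda^{0.2\constmoecoll}}$-unpredictable, and the resulting predictor is precisely the coset-vector extractor you feed into $\moecoll$.
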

When we instantiate the assumed building blocks with known constructions, we get the following corollary.
\begin{corollary}\label{thm:cppkeexists}
Assuming subexponentially secure $i\mathcal{O}$ and subexponentially secure LWE, there exists a public-key encryption scheme that satisfies anti-piracy security against unbounded collusion.
\end{corollary}
\begin{proof}
    $\mathsf{IBE}$ can be constructed based on $\io$ and one-way functions (\cref{thm:subexpibe}). $F_1$ and $F_2$  can be constructed based on one-way functions (\cref{thm:puncprfexists}). which in turn can be obtained from LWE. $\mathsf{CCObf}$ can be constructed based on $\io$ and LWE (\cref{thm:ccobf}).
\end{proof}

\subsection{Proof of Strong Anti-Piracy Security}\label{sec:pkeproof}
In this section, we will prove \cref{thm:pkeantipiracy}. We note that our construction also satisfies random challenge anti-piracy security; we only give the full proof for \cref{thm:pkeantipiracy} and the  random challenge anti-piracy security follows by mostly the same proof. 

Throughout the proof, we will interpret identity strings for $\mathsf{IBE}$, which are $\lambda$-bit strings, as integers in the set $\{0, 1, \dots, 2^\lambda - 1\}$. Without loss of generality, we assume that $\mathsf{IBE}$ can also encrypt the symbol $\failmes$, which is outside the message space $\mathcal{M}$ for $\mathsf{PKE}$.

Fix any inverse polynomial $\gamma(\lambda)$ and suppose for a contradiction that there exists an efficient adversary $\adve$ that wins the strong $\gamma$-anti-piracy game with non-negligible probability. Let $k$ denote the number of keys obtained by the adversary. 
Define $\hyb_0$ to be the original security game $\strongunclonpkegame(\lambda, \gamma(\lambda), \adve)$.

\subsubsection*{Making Key Identities Unique}
Define $\hyb_1$ by modifying $\hyb_0$ as follows. We change the way we sample the identity strings in $\mathsf{PKE.QKeyGen}$ during each quantum key query. Let the challenger record each sampled identity when answering each query, and when answering a new query, it samples uniformly at random an identity value from the set $\{1, \dots, 2^\lambda - 1\}$ \emph{that has not appeared before}\footnote{Note that this can be done on-the-go in polynomial time, with overwhelming probability, e.g. through rejection sampling}. That is, we sample unique identity strings for each quantum key. Also, we define the following notation. Let $id_{\alpha(i)}$ be the $i^{th}$ identity value sampled where $\alpha(\cdot)$ is the permutation $[k] \to [k]$ such that $0 < id_1 < \dots < id_k < 2^{\lambda}$. That is, $id_{\alpha(i)}$ is the identity string that is sampling during the $i^{th}$ query of the adversary. For simplicity of notation, we also set $id_0 = 0$ and $id_{k + 1} = 2^\lambda$.

\begin{claim}
$\statdist{\hyb_0}{\hyb_1} < \exp(-\lambda)$. 
\end{claim}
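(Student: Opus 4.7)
The plan is to observe that $\hyb_0$ and $\hyb_1$ differ only in how the identities $id_1,\dots,id_k$ used during the quantum key queries are sampled, and to argue that the two identity distributions are statistically close via a birthday/coupling argument. Since $k$ is a polynomial in $\lambda$ (as the pirate adversary $\adve$ is QPT), the collision probability will be exponentially small.

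Concretely, I would define the bad event $E$ in $\hyb_0$ as: either some identity equals the excluded value $0$, or two sampled identities coincide (i.e., $\exists i \neq j$ with $id_i = id_j$, or $\exists i$ with $id_i = 0$). A union bound gives
\[
\Pr[E] \leq \frac{k}{2^{\lambda}} + \binom{k}{2}\cdot \frac{1}{2^{\lambda}} \leq \frac{k^2}{2^{\lambda}} \leq \exp(-\lambda),
\]
where the last inequality uses $k = \poly(\lambda)$. Conditioned on $\overline{E}$, the tuple $(id_1,\dots,id_k)$ in $\hyb_0$ is uniformly distributed over the set of length-$k$ sequences of \emph{distinct non-zero} elements of $\zo^{\lambda}$, by symmetry of the i.i.d.\ uniform sampling. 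In $\hyb_1$, the same tuple is by definition uniform over precisely this set (sampling without replacement from $\{1,\dots,2^\lambda-1\}$). Hence the conditional distributions of the identity tuples coincide exactly.

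Since the entire remainder of each hybrid (setup, generation of $\mathsf{PKE.QKeyGen}$ outputs, the adversary's actions, and the challenger's test $\bigotimes_\ell \mathsf{TI}_{\ell,\mathcal{D},1/2+\gamma}$) is a fixed (possibly quantum) channel applied to these identities together with independent fresh randomness, the data processing inequality yields
\[
\statdist{\hyb_0}{\hyb_1} \leq \Pr[E] \leq \exp(-\lambda).
\]
I do not expect any real obstacle here: the argument is a standard birthday bound plus a coupling, and the key point is just that the rest of the game uses the identities in the same way in both hybrids.
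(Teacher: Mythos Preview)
Your proposal is correct and takes essentially the same approach as the paper: the paper's proof simply states that i.i.d.\ uniform sampling from $\{0,\dots,2^\lambda-1\}$ yields $k$ distinct nonzero values with probability at least $1 - k^2/2^\lambda$ and concludes since $k$ is polynomial. You have filled in exactly this birthday/coupling argument with the union bound and the observation that the downstream game is a fixed channel on the identity tuple.
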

\begin{proof}
An easy calculation shows that uniformly and independently sampling from $\{0, 1, \dots, 2^\lambda - 1\}$ $k$ times gives $k$ unique values from the set $\{1, \dots, 2^\lambda - 1\}$ with probability at least
\begin{equation*}
    1 - \frac{k^2(\lambda)}{2^\lambda}.
\end{equation*} 
The result follows since $k(\cdot)$ is a polynomial.
\end{proof}

\subsubsection*{Making the Challenger Efficient}
Define $\hyb_2$ by modifying $\hyb_1$ as follows. At the end of the game, instead of applying threshold implementations $\mathsf{TI}_{\ell, \mathcal{D}, 1/2+\gamma}$, the challenger applies approximate threshold implementations $\mathsf{ATI}^{\eps, \delta}_{\ell, \mathcal{D}, 1/2+\frac{31\gamma}{32}}$ with $\eps = \frac{\gamma}{32k}$ and $\delta = 2^{-10\lambda}\cdot2^{-10\lambda^{\constmoecoll}}$. It outputs $1$ if and only if all $\mathsf{ATI}$ output $1$.

\begin{claim}
    $\Pr[\hyb_2 = 1] > \Pr[\hyb_1 = 1] - {\exp(-\lambda)}$.
\end{claim}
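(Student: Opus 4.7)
The plan is to derive the bound as an almost immediate application of the first bullet of Theorem~\ref{thm:multiatiprop}, together with the fact that the acceptance probability of $\mathsf{ATI}^{\eps,\delta}_{\mathcal{P},\mathcal{D},\eta}$ is monotone non-increasing in the threshold $\eta$. Let $\rho$ denote the $(k+1)$-partite register $\regi{adv}$ that the pirate outputs at the end of the query phase in $\hyb_1$ (and, identically, in $\hyb_2$, since the two hybrids only differ in the final measurement applied by the challenger).

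First, I will instantiate Theorem~\ref{thm:multiatiprop} (first bullet) with the collections of projective measurements $\mathcal{P}_\ell$ and distributions $\mathcal{D}_\ell$ corresponding to the honest ciphertext distribution used for freeloader $\ell$, with threshold parameters $\eta_\ell = 1/2 + \gamma$ for every $\ell \in [k+1]$, and with the parameters $\eps = \gamma/(32k)$ and $\delta = 2^{-10\lambda}\cdot 2^{-10\lambda^{\constmoecoll}}$ fixed by the definition of $\hyb_2$. This yields
\begin{equation*}
\Tr\!\left[\Bigl(\bigotimes_{\ell \in [k+1]} \mathsf{ATI}^{\eps,\delta}_{\ell,\mathcal{D},\,\eta_\ell-\eps}\Bigr)\rho\right] \;\geq\; \Tr\!\left[\Bigl(\bigotimes_{\ell \in [k+1]} \mathsf{TI}_{\ell,\mathcal{D},\,1/2+\gamma}\Bigr)\rho\right] \;-\; (k+1)\cdot \delta \;=\; \Pr[\hyb_1 = 1] - (k+1)\delta.
\end{equation*}

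Next, I will observe that $\eta_\ell - \eps = 1/2 + \gamma\bigl(1 - \tfrac{1}{32k}\bigr) \geq 1/2 + \tfrac{31\gamma}{32}$ whenever $k \geq 1$, and that $\mathsf{ATI}^{\eps,\delta}_{\ell,\mathcal{D},\eta}$ is by definition a post-processing that accepts iff the outcome $p$ of $\mathsf{API}^{\eps,\delta}_{\ell,\mathcal{D}}$ satisfies $p \geq \eta$; hence its acceptance probability on any fixed state is monotone non-increasing in $\eta$. Lowering each threshold from $\eta_\ell - \eps$ to $1/2 + 31\gamma/32$ therefore only increases the joint acceptance probability, giving
\begin{equation*}
\Pr[\hyb_2 = 1] \;=\; \Tr\!\left[\Bigl(\bigotimes_{\ell \in [k+1]} \mathsf{ATI}^{\eps,\delta}_{\ell,\mathcal{D},\,1/2+31\gamma/32}\Bigr)\rho\right] \;\geq\; \Pr[\hyb_1 = 1] - (k+1)\delta.
\end{equation*}
Since $k+1$ is polynomial in $\lambda$ and $\delta$ is doubly subexponentially small, the error term $(k+1)\delta$ is bounded by $\exp(-\lambda)$ for all sufficiently large $\lambda$, yielding the claim.

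I do not foresee any real obstacle here: both ingredients (the Theorem~\ref{thm:multiatiprop} comparison bound and the monotonicity of $\mathsf{ATI}$ in the threshold) are structural, and the arithmetic on $\eps$ is trivial. The only mild care needed is to verify that the multipartite comparison is being applied to the same freeloader projective measurements in both hybrids, which is the case because $\hyb_2$ is obtained from $\hyb_1$ by swapping only the final testing step while leaving the adversary's strategy, the quantum key generation, and the honest ciphertext distribution $\mathcal{D}$ untouched.
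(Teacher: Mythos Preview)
Your proposal is correct and follows essentially the same approach as the paper: both apply the first bullet of Theorem~\ref{thm:multiatiprop} with $\eta_\ell = 1/2+\gamma$ to the $(k+1)$-partite state output by the adversary, and then absorb the $(k+1)\delta$ loss into $\exp(-\lambda)$. Your explicit mention of the monotonicity of $\mathsf{ATI}$ in the threshold (to pass from $\eta_\ell-\eps$ down to $1/2+31\gamma/32$) is a detail the paper leaves implicit, but the argument is the same.
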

\begin{proof}
Let $\sigma$ be the $(k + 1)$-partite state output by the adversary. By \cref{thm:multiatiprop}, we get
\begin{equation*}
    \Tr[\left(\bigotimes_{\ell \in [k+1]}\mathsf{ATI}^{\eps, \delta}_{\ell,\mathcal{D}, 1/2+\frac{31\gamma}{32}}\right)\sigma] \geq \Tr[\left(\bigotimes_{\ell \in [k+1]}\mathsf{TI}_{\ell, \mathcal{D}, 1/2+\gamma}\right)\sigma] - (k(\lambda) + 1)\cdot \exp(-\lambda).
\end{equation*}
Observe that the trace expressions on the left-hand side and the right-hand side are the winning probabilities in $\hyb_2$ and $\hyb_1$ respectively. The result follows since $k(\lambda)$ is a polynomial.
\end{proof}

Therefore, $\adve$ wins in $\hyb_2$ with probability $\frac{1}{p(\cdot)}$ for some polynomial $p(\cdot)$ and infinitely many values of $\lambda > 0$. Note that in $\hyb_2$, now the challenger is also efficient by \cref{thm:singleatiprop} and our choice of $\eps, \delta$.

The rest of the proof will be devoted to showing that using $\adve$, we can construct an adversary that breaks the selective monogamy-of-entanglement game $\moecollsel$ (\cref{defn:strmoecoll}). We will use projective and threshold implementations for various mixtures of measurements to test the freeloaders. The public key $pk$, the identity strings $id_1, \dots, id_k$ and the permutation $\alpha$ will be part of the classical state $st$ of the challenger, in the sense of \cref{defn:goodecr}. The particular distribution on the collection of projective measurements (induced by a challenge ciphertext distribution) will vary, and it will be denoted explicitly.

\begin{definition}
    For all $j \in [k]$, let $(A_i^j, s_i^j, s_i^{'j})_{{i \in [\cosettcount]}}$ denote the tuple of subspaces and vectors sampled during the sampling of the $(\alpha^{-1}(j))$-th key. That is, it is the coset tuple associated with $id_j$.
\end{definition}

\subsubsection*{A Monogamy-of-Entanglement Type Game}
First, we define the following monogamy-of-entanglement-type game $\mathcal{G}$ for a tuple of adversaries $(\adve'_0, \adve'_1, \adve'_2)$. Observe that it will be straightforward to reduce the game $\mathcal{G}$ to $\moecollsel$ with no loss of security, since the former is the same as the latter except that it includes an independent $\mathsf{IBE}$ instance that can sampled by the reduction.
\paragraph{$\underline{\mathcal{G}(\lambda, (\adve'_0, \adve'_1, \adve'_2))}$}
\begin{enumerate}
\item The adversary outputs an index $j^* \in [k]$.
    \item The challenger executes $pk, sk \samp \mathsf{PKE.Setup}(1^\lambda)$ and submits $pk$ to $\adve'_0$.
    \item For $k$ rounds, $\adve'$ makes quantum key queries. For each query, the challenger samples a quantum key as in $\mathsf{PKE.QKeyGen}$, but by sampling the identity $id$ in a collision-free way (as in $\hyb_1$), and submits it to $\adve_0'$.
    \item The adversary outputs a \emph{bipartite} register $\regi{bip}$.
    \item For $\ell \in \{1, 2\}$, the challenger does the following.
    \begin{enumerate}[label*=\arabic*.]
        \item Sample $r_\ell \samp \zo^{\cosettcount}$.
        \item Run $\adve'_\ell$ on $\regi{bip}[\ell]$, $(A^{j^*}_i)_{i \in [\cosettcount]}$ and $r_\ell$ to obtain a tuple of vectors $(v_{\ell, i})_{i \in [{\cosettcount}]}$.
        \item For all $i \in [\cosettcount]$, check if $v_{\ell, i} \in A^{j^*}_{i} + s^{j^*}_i$ if $(r_\ell)_i = 0$ and check if $v_{\ell, i} \in {(A^{j^*})}^\perp_{i} + s^{'{j^*}}_i$ if $(r_\ell)_i = 1$.
    \end{enumerate}
    If all the checks pass, the challenger outputs $1$. Otherwise, it outputs $0$.
\end{enumerate}

Now, we construct a tuple of adversaries $(\adve'_0, \adve'_1, \adve'_2)$ for $\mathcal{G}$, starting with $\adve'_0$. Let $\mathcal{D}_j$ for $j \in \{0, \dots, k + 1\}$ be efficient ciphertext distributions, which we will define later. 
\begin{mdframed}
        {\bf $\underline{\adve_0'(pk)}$}
        \begin{enumerate}
        \item Uniformly at random sample $x, y, j^*$ such that $1 \leq x < y \leq k + 1$ and $j^* \in \{1, \dots, k\}$. Output $j^*$.
            \item Simulate $\adve$ on $pk$ by making a quantum secret key query to the challenger whenever $\adve$ makes a query, and forwarding the obtained key to it. Let $\regi{adv}$ be the $(k + 1)$-partite register (with state $\sigma$), let $(m_\ell^0, m_\ell^1)_{\ell \in [k + 1]}$ be the challenge messages and let $(U_\ell)_{\ell \in [k + 1]}$ be the unitaries output by $\adve$ at the end of the query phase.
            \item Apply $\mathsf{API}^{\eps, \delta}_{\ell, \mathcal{D}_0}$ to all registers $\regi{adv}[\ell]$ for $\ell \in [k + 1]$, let $b_{\ell, 0}$ be the measurement outcomes.
            \item Apply $\mathsf{API}^{\eps, \delta}_{\ell, \mathcal{D}_i}$ in succession for $i = 1$ to $j^*$ to $\regi{adv}[x]$, let $b_{x, i}$ be the measurement outcomes.
            \item Apply $\mathsf{API}^{\eps, \delta}_{\ell, \mathcal{D}_i}$ in succession for $i = 1$ to $j^*$ to $\regi{adv}[y]$, let $b_{y, i}$ be the measurement outcomes.
            \item Output  \begin{align*}
                (&(\regi{adv}[x], j^*, x, y, (b_{\ell, 0})_{\ell \in [k + 1]}, (b_{x,i})_{i \in [j^*]}, (b_{y,i})_{i \in [j^*]},(U_\ell)_{\ell \in [k + 1]}),\\ &(\regi{adv}[y], j^*, x, y, (b_{\ell, 0})_{\ell \in [k + 1]}, (b_{x,i})_{i \in [j^*]}, (b_{y,i})_{i \in [j^*]}, (U_\ell)_{\ell \in [k + 1]}),\\ &j^*).
            \end{align*}
            
        \end{enumerate}
    \end{mdframed}

For $j \in \{1, \dots, k\}$, define $\mathcal{D}_j$ to be the challenge ciphertext distribution where an encryption of a message $m$ is computed as follows.
\begin{enumerate}
   \item Sample $r \samp \zo^{\cosettcount}$.
    \item Sample a PRF key $K_2$ for $F_2.\mathsf{KeyGen}(1^\lambda)$.
    \item Sample $\mathsf{OPCt} \samp \io(\textcolor{red}{ \mathsf{PCt}^{(j)}_{\mathsf{OPMem}, cpk, K_2, r, m, id_j}})$
    \begin{mdframed}
        {\bf $\underline{\mathsf{PCt}^{(j)}_{\mathsf{OPMem}, cpk, K_2, r, m, id_j}(id, u_1, \dots, u_{\cosettcount})}$}
        
        {\bf Hardcoded: $\mathsf{OPMem}, cpk, K_2, r, m, \textcolor{red}{id_j}$}
        \begin{enumerate}[label=\arabic*.]
            \item Run $\mathsf{OPMem}(id, u_1, \dots, u_{\cosettcount}, r)$. If it outputs $0$, output $\perp$ and terminate.
            \item \textcolor{red}{If $id < id_j$, set $a = \failmes$. Otherwise, set $a = m$.}
            \item Output $\mathsf{IBE.Enc}(cpk, id, \textcolor{red}{a}; F_2(K_2, id))$.
        \end{enumerate}
    \end{mdframed}
    \item Output $(\mathsf{OPCt}, r)$.
\end{enumerate}
We define $\mathcal{D}_{0}$ to be the honest ciphertext distribution $\mathcal{D}$ and we define $\mathcal{D}_{k + 1}$ as follows.
\begin{enumerate}
   \item Sample $r \samp \zo^{\cosettcount}$.
    \item Sample a PRF key $K_2$ for $F_2.\mathsf{KeyGen}(1^\lambda)$.
    \item Sample $\mathsf{OPCt} \samp \io(\textcolor{red}{ \mathsf{PCt}^{(k+1)}_{\mathsf{OPMem}, cpk, K_2, r}})$
    \begin{mdframed}
        {\bf $\underline{\mathsf{PCt}^{(k+1)}_{\mathsf{OPMem}, cpk, K_2, r}(id, u_1, \dots, u_{\cosettcount})}$}
        
        {\bf Hardcoded: $\mathsf{OPMem}, cpk, K_2, r$}
        \begin{enumerate}[label=\arabic*.]
            \item Run $\mathsf{OPMem}(id, u_1, \dots, u_{\cosettcount}, r)$. If it outputs $0$, output $\perp$ and terminate.
            \item Output $\mathsf{IBE.Enc}(cpk, id, \textcolor{red}{\failmes}; F_2(K_2, id))$.
        \end{enumerate}
    \end{mdframed}
    \item Output $(\mathsf{OPCt}, r)$.
\end{enumerate}
Note that the distribution $\mathcal{D}_{k+1}$ does not actually use the message $m$.

Observe that $\adve_0'$ can indeed execute $\mathsf{API}^{\eps, \delta}_{\ell, \mathcal{D}_i}$. The identity strings $id_j$ are part of the quantum secret keys. Further, the adversary can record the order in which the identity strings are received and also their sorted version, so it can indeed index them as $id_j$. 

Finally, we claim that there exists efficient $\adve_1', \adve_2'$ such that  $(\adve'_0, \adve_1', \adve_2')$ wins $\mathcal{G}$ with probability $$\frac{1}{ 2^{0.4\cdot\lambda^{\constmoecoll}}}.$$
We will construct these adversaries in the rest of the proof, and at the end we will show that the security of $\mathcal{G}$ can be reduced to $\moecollsel$, arriving at a contradiction.

\subsubsection*{Finding a Simultaneous Jump}
In the rest of the proof, we will formalize the following fact. Observe that the adversary only obtains $k$ different identity keys for $\mathsf{IBE}$. Therefore, informally, by the security of $\mathsf{IBE}$ and by the pigeonhole principle, two of the $k+1$ freeloaders must be using $\mathsf{IBE}$ encryptions of their challenge message under the same identity $id_j$ to decode their challenge ciphertext. This will in turn mean that they must be using the same coset state tuple, and therefore we will extract coset vectors for the same identity/tuple, which will be a contradiction by the monogamy-of-entanglement property.

As a first step, we will show that the decryption success probabilities  of the two freeloaders $x, y$, when tested with respect to the distributions $\mathcal{D}_{j}$, jump at the same index $j^*$, meaning that the two freeloaders use the same identity string block $[id_{j^*}, id_{j^* + 1} - 1]$ to decrypt.

\begin{claim}\label{claim:mesresdist}

Let $\tau$ be the state of the bipartite register $\regi{adv}[x,y]$ output by $\adve'_0$ in $\mathcal{G}$, and also consider the classical values $j^*, x, y, \{b_{\ell, i}\}_{\ell, i}$ contained in the output of $\adve'_0$.
    
    Suppose we apply the measurement $\mathsf{API}^{\eps, \delta}_{x, \mathcal{D}_{j^* + 1}}\otimes\mathsf{API}^{\eps, \delta}_{y, \mathcal{D}_{j^* + 1}}$ to $\tau$ and let $b_{x,j^*+1}, b_{y,j^*+1}$ denote the measurement outcomes we obtain. Then,
   
    \begin{equation*}
     \Pr[ b_{x,j^*}  - b_{x,j^*+1} > \frac{29\gamma}{32k}  \wedge  b_{y,j^*} - b_{y,j^*+1} > \frac{29\gamma}{32k}] > \frac{1}{4p(\lambda)\cdot k^3(\lambda)}
    \end{equation*}
    
    where the probability is taken over the randomness of the challenger, the adversary $\adve_0'$ and the measurement outcomes.
\end{claim}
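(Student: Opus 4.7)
The plan is to exhibit two distinct freeloaders that experience a simultaneous large drop across the same transition $\mathcal{D}_{j^*} \to \mathcal{D}_{j^* + 1}$, via a pigeonhole argument on the $k$ ``admissible'' jump indices, and to observe that $\adve'_0$'s uniformly random guess of $(x, y, j^*)$ hits the resulting good triple with inverse-polynomial probability. I would first use that $\adve$ wins $\hyb_2$ with probability $\ge 1/p(\lambda)$: this is, by construction of the $\mathsf{ATI}$, the same as saying that with probability at least $1/p(\lambda)$, applying $\bigotimes_\ell \mathsf{API}^{\eps,\delta}_{\ell, \mathcal{D}_0}$ to the state produced by $\adve$ yields $b_{\ell, 0} \ge 1/2 + 31\gamma/32$ for every $\ell \in [k + 1]$. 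At the far end of the hybrid chain, $\mathcal{D}_{k + 1}$ encrypts $\failmes$ regardless of the challenge bit, so $\mathsf{PI}(\mathcal{P}_{\mathcal{D}_{k + 1}})$ has outcome at most $1/2$ on any state; by \cref{thm:multiapiprop} its approximate version is at most $1/2 + 2\eps$ except with probability $O(k\delta)$.

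Next, I would show that the transition $\mathcal{D}_0 \to \mathcal{D}_1$ contributes only negligible jumps. The programs $\mathsf{PCt}^{(0)}$ and $\mathsf{PCt}^{(1)}$ differ only on identities $id \in [0, id_1)$, for none of which the adversary ever receives an $\mathsf{IBE}$ secret key. Traversing this range by a chain of sub-hybrids that puncture $F_2$ at each $id$, invoke $\mathsf{IBE}$ security there, and swap programs via $\io$ security yields $\mathcal{D}_0 \approx^c \mathcal{D}_1$ against appropriately bounded adversaries; the subexponential parameters of $\io$, $F_2$, and $\mathsf{IBE}$ are calibrated precisely to absorb the up-to-$2^\lambda$ sub-hybrids hidden in this step. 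Applying the computational version of \cref{thm:distti} then lifts this to closeness of the joint API-outcome distribution on both freeloader registers, even after the preceding $\mathsf{API}_{\ell, \mathcal{D}_0}$ measurements have acted on the bipartite state $\tau$. Hence with overwhelming probability $b_{\ell, 1} \ge b_{\ell, 0} - \negl(\lambda)$, and the ``real'' decrease $b_{\ell, 1} - b_{\ell, k + 1} \ge 30\gamma/32 - \negl(\lambda)$ must come entirely from transitions $j \in \{1, \dots, k\}$.

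The pigeonhole step then completes the argument. For each freeloader $\ell \in [k + 1]$, by averaging the total drop $\ge 30\gamma/32 - \negl(\lambda)$ across the $k$ nontrivial transitions, there exists $j^\star_\ell \in \{1, \dots, k\}$ with jump at least $29\gamma/(32 k)$, where the extra slack absorbs the error from repeatedly moving between API outcomes and exact PI outcomes through \cref{thm:multiapiprop}. Since $k + 1$ freeloaders are assigned to $k$ indices, distinct $\ell_1, \ell_2$ and a value $j^*$ exist with $j^\star_{\ell_1} = j^\star_{\ell_2} = j^*$. Because $\adve'_0$ samples $(x, y, j^*)$ uniformly at random from a set of size $\binom{k + 1}{2} \cdot k \le k^3(\lambda)$, it hits this good triple with probability $\ge 1/k^3(\lambda)$. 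Multiplying with the $1/p(\lambda)$ initial winning probability and absorbing subleading losses into a constant factor of $4$ yields the claimed lower bound $1/(4 p(\lambda) k^3(\lambda))$.

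The main technical challenge will be the quantum, sequential nature of the APIs: each measurement irreversibly disturbs the state, so I cannot simply compare fresh applications of $\mathsf{API}_{\ell, \mathcal{D}_j}$ to the original state, but must simultaneously track the post-measurement states along the chain $\mathcal{D}_0 \to \mathcal{D}_1 \to \cdots$ on \emph{both} registers of $\tau$. Lifting the computational indistinguishability $\mathcal{D}_0 \approx^c \mathcal{D}_1$ to closeness of outcomes for the surviving entangled register after $x$ has been measured, while retaining enough security margin to cover the exponentially many sub-hybrids required to traverse $[0, id_1)$, is the bulk of the work and exactly why the construction must start from the finely tuned subexponential-security assumptions.
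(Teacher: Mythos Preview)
Your proposal is correct and follows essentially the same approach as the paper's proof: establish high values at $\mathcal{D}_1$ via the deferred indistinguishability $\mathcal{D}_0 \approx^c \mathcal{D}_1$ (the paper's \cref{claim:d0closed1}), low values at $\mathcal{D}_{k+1}$ since that distribution is independent of $b$, pigeonhole over the $k$ remaining transitions, and divide by $\binom{k+1}{2}\cdot k \le k^3$ for the uniform guess of $(x,y,j^*)$. The one clarification is that the sequential-API concern you raise in the last paragraph is handled in the paper not by directly tracking entangled post-measurement states but by passing to a ``modified experiment'' in which the full API chain $i=0,\dots,k+1$ is applied to \emph{all} $k+1$ registers; the pigeonhole is established there, and a no-signalling argument (between disjoint registers, and backwards in time) then shows that the marginal distribution on $(\regi{adv}[x],\regi{adv}[y])$ together with the recorded values $b'_{\ell,0},b'_{x,i},b'_{y,i}$ up through step $j^*+1$ exactly agrees with the actual output of $\adve'_0$ plus the extra $\mathsf{API}_{\mathcal{D}_{j^*+1}}$ measurement in the claim.
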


First we define the following notation.

\begin{definition}
    Let $\expfromddd{\mathcal{C}}{\ell}$ denote the outcome of the following experiment where $\mathcal{C}$ is a ciphertext distribution that can depend on $pp$.
\begin{enumerate}
            \item Uniformly at random sample $x, y, j^*$ such that $1 \leq x < y \leq k + 1$ and $j^* \in \{1, \dots, k\}$.
\item Execute $pk, sk \samp \mathsf{PKE.Setup}(1^\lambda)$.
\item Simulate the first step of $\adve'_0$ and the challenger of $\mathcal{G}$:
\begin{enumerate}[label*=\arabic*.]
    \item Simulate $\adve$ on $pk$ by sampling a quantum secret key (as in $\hyb_1$) whenever $\adve$ makes a query, and submitting the key to it. Let $\regi{adv}, (m_\ell^0, m_\ell^1)_{\ell \in [k + 1]}, (U_\ell)_{\ell \in [k + 1]}$ be the output of $\adve$.
\end{enumerate}
    \item Set  $pp =  (x, y, j^*, (id_j)_{j\in[k+1]}, (m_\ell^0, m_\ell^1)_{\ell \in [k + 1]}, (U_\ell)_{\ell \in [k + 1]}, pk)$.
    \item Sample $b \samp \zo$.
    \item Sample $ct \samp \mathcal{C}(pp, m_\ell^b)$.
    \item Output $\regi{adv}, (b, ct), pp$.
\end{enumerate} 
We will write $\expfromddd{\mathcal{C}}{\ell} \approx_\nu^c \expfromddd{\mathcal{C}'}{\ell}$ to denote that the advantage of any computational adversary in distinguishing the outcomes of these experiments is $\nu$.
\end{definition}

\begin{proof}[Proof of \cref{claim:mesresdist}]
Consider instead the following modified version of $\adve_0'$. We run $\mathsf{API}^{\eps, \delta}_{\ell, \mathcal{D}_i}$ in succession from $i = 0$ to $i = k + 1$ on all registers $\ell \in [k + 1]$ of $\regi{adv}$, to obtain values $b'_{\ell, i}$. While the ordering of execution between the registers does not matter, since local operations on disjoint registers commute, for convenience, assume that we run $\mathsf{API}^{\eps, \delta}_{\ell, \mathcal{D}_i}$ on all registers before moving onto $\mathsf{API}^{\eps, \delta}_{\ell, \mathcal{D}_{i+1}}$. Let $\rho_i$ denote the post-measurement state after having run $\mathsf{API}^{\eps, \delta}_{\ell, \mathcal{D}_i}$ on all sub-registers. 

First, we claim that 
\begin{equation}\label{eqn:problk}
\Pr[\forall \ell \in [k + 1]~~b'_{\ell, {k+1}} < 1/2+\frac{2\gamma}{32} \bigg| \forall \ell \in [k + 1] \forall i \in \{0, \dots, k\}~~b'_{\ell, i} = b''_{\ell, i}] \geq 1 -  (k(\lambda) + 1)\cdot\exp(-\lambda).
\end{equation}
for any fixed tuple of values $(b''_{\ell, i})_{\ell \in [k + 1], i \in \{0, \dots, k\}}$ in the joint support of $(b'_{\ell, i})_{\ell \in [k + 1], i \in \{0, \dots, k\}}$.
 To prove this, we will instead prove the more general statement that for any quantum state $\xi$ of appropriate dimension, we have $$\Pr[\forall \ell \in [k + 1] ~ x_\ell <\frac{1}{2}+ \frac{2\gamma}{32}]  \geq 1 - (k(\lambda) + 1)\cdot\exp(-\lambda).$$
 where $(x_\ell)_{\ell \in [k + 1]} \samp \left(\bigotimes_{\ell \in[k+1]}\mathsf{API}^{\eps,\delta}_{\ell,\mathcal{D}_{k+1}}\right)\cdot\xi$.
 
 Let $\iota$ be any quantum state of appropriate dimension. By \cref{thm:singleatiprop}, we have for all $\ell \in [k + 1]$
 \begin{align*}
     &\Pr[\left(\mathsf{API}^{\eps,\delta}_{\ell,\mathcal{D}_{k+1}}\right)\cdot\iota \geq  \frac{1}{2}+ \frac{2\gamma}{32}] \\\leq &\Pr[\left(\mathsf{PI}_{\ell,\mathcal{D}_{k+1}}\right)\cdot\iota \geq \frac{1}{2}+  \frac{\gamma}{32}] + \exp(-\lambda).
 \end{align*} Then, by \cref{thm:piprop}, we have that if the outcome of $\mathsf{PI}_{\ell, \mathcal{D}_{k+1}}$ is $p'$, then the post-measurement state has success probability $p'$ for the distribution $\mathcal{D}_{k+1}$. However, the challenge ciphertext sampled according to $\mathcal{D}_{k+1}$ is independent of the challenge bit $b$, hence we always have $p' \leq 1/2$. Hence,
 $$
 \Pr[\left(\mathsf{PI}_{\ell,\mathcal{D}_{k+1}}\right)\cdot\iota \geq  \frac{1}{2}+ \frac{\gamma}{32}] = 0.
 $$
 Therefore, $\Pr[\left(\mathsf{API}^{\eps,\delta}_{\ell,\mathcal{D}_{k+1}}\right)\cdot\iota \geq  \frac{1}{2}+ \frac{2\gamma}{32}] \leq \exp(-\lambda)$. Now, if we apply $\mathsf{API}^{\eps,\delta}_{\ell,\mathcal{D}_{k+1}}$ to each part $\xi[i]$, even conditioned on some outcome obtained for the other parts, we get that the result will be $\geq 1/2 + 2\gamma/32$ with probability at most $\exp(-\lambda)$, since we showed the result above for any state $\iota$. Hence, probability of obtaining an outcome $\geq 1/2 + 2\gamma/32$ for at least one part is at most $(k(\lambda) + 1)\cdot\exp(-\lambda)$.
 This gives the desired result (\cref{eqn:problk}).

Now, we claim that we have $b'_{\ell,1} \geq \frac{1}{2} + \frac{31\gamma}{32}$ for all $\ell \in [k +  1]$ with probability $1/(2p(\lambda))$. First, by assumption we have
\begin{equation*}
\Pr[\left(\bigotimes_{\ell \in [k+1]}\mathsf{ATI}^{\eps,\delta}_{\ell,\mathcal{D}, 1/2+\frac{31\gamma}{32}}\right)\sigma]  \geq 1/p(\lambda).
\end{equation*}
since this is exactly the winning condition in $\hyb_2$. While we later apply other measurements, they do not change the marginal distribution of the initial measurement since we cannot signal backwards in time.

Assume for now that  $\expfromddd{\mathcal{D}}{\ell} \approx^c \expfromddd{\mathcal{D}_1}{\ell}$ for all $\ell \in [k+1]$ and we will prove it later (\cref{claim:d0closed1}).
Then, by \cref{thm:distti} and by above we get
\begin{align}
   \label{eqn:atid1} &\Pr[\left(\bigotimes_{\ell \in [k+1]}\mathsf{ATI}^{\eps,\delta}_{\ell,\mathcal{D}_1, 1/2+\frac{31\gamma}{32}}\right)\sigma]   \\\geq
     &\Pr[\left(\bigotimes_{\ell \in [k+1]}\mathsf{ATI}^{\eps,\delta}_{\ell,\mathcal{D}, 1/2+\frac{31\gamma}{32}}\right)\sigma] - \negl(\lambda) > 1/(2\cdot p(\lambda)).
\end{align}

In \cref{thm:distti}, it is easy to see $\expfromddd{\mathcal{D}}{\ell}$ corresponds to $(\mathcal{S}, \mathcal{D})$ and $\expfromddd{\mathcal{D}_1}{\ell}$ corresponds to $(\mathcal{S}, \mathcal{D}_1)$; while the measurement results $\vec{p_0}, \vec{p_1}$ correspond to $ \bigotimes_{\ell \in [k+1]}\mathsf{API}^{\eps,\delta}_{\ell,\mathcal{D}}\instalper$ and $\bigotimes_{\ell \in [k+1]}\mathsf{API}^{\eps,\delta}_{\ell,\mathcal{D}_1}\instalper$ when we define our collection of measurements as in \cref{defn:goodecr}. That is, our measurement is executing the given state as a decryptor using $\qunivcla$ and comparing the outcome to $b$.

Finally, by combining \cref{eqn:problk} and \cref{eqn:atid1}, we get that with probability at least $1/(4\cdot p(\lambda))$, we have that $\frac{1}{2} + \frac{31\gamma}{32} \leq b'_{\ell,1}$ and $b'_{\ell, k + 1} < \frac{1}{2} + \frac{2\gamma}{32}$ for all $\ell \in [k + 1]$. Hence, we see that with probability at least $\frac{1}{4p(\lambda)}$; for all $\ell \in [k+1]$ there is $i_{\ell} \in \{1, \dots, k\}$ such that $b'_{\ell, i_{\ell} } - b'_{\ell, i_{\ell} + 1} > \frac{29\gamma}{32k}$. Then, by pigeonhole principle, there is $\ell \neq \ell'$ such that $i_{\ell} = i_{\ell'}$. 

We claim that for any fixed $x < y \in [k + 1]$ and $j^* \in [k]$, the marginal distribution (i.e., the reduced density matrix) of $\rho_{j^*}[x, y], (b'_{\ell, 0})_{\ell \in [k + 1]}, (b'_{x,i})_{i \in [j^* + 1]}, (b'_{y,i})_{i \in [j^*+1]}$ in the above experiment is the same as the distribution of $\tau, (b_{\ell, 0})_{\ell \in [k + 1]}, (b_{x,i})_{i \in [j^* + 1]}, (b_{y,i})_{i \in [j^* + 1]}$ conditioned on the fixed values of $x, y, j^*$. This follows from two arguments. First, no-signalling between disjoint registers gives that whether or not we apply measurements on the other registers does not change the marginal distributions of measurement outcomes and post-measurement states on registers $x, y$. Similarly, by the time we are applying measurements for $\mathcal{D}_{i}$ for $i \geq j^* + 1$, the measurement outcomes for $\mathcal{D}_{j^*}$ are already determined. Since it is not possible to signal backwards in time, the marginal distributions for measurement outcomes $b_{\ell, j^*}$ is not affected by whether or not we apply the measurements for $\mathcal{D}_{i}$ for $i \geq j^* + 1$.

We have already shown that with probability $1/(4p(\lambda))$, there is guaranteed to be a \emph{jump} in measurement results. Since $x, y, j^*$ are sampled independently by $\adve'_0$, they hit the correct indices $\ell, \ell'$ satisfying $i_{\ell} = i_{\ell'}$ with probability $1/k{k+1 \choose 2}$ and $j^*$ hits $i_{\ell} = i_{\ell'}$ with probability $1/k$. Therefore, we finally have
  \begin{equation*}
     \Pr[ b_{x,j^*} - b_{x,j^*+1} > \frac{29\gamma}{32k} \wedge b_{y,j^*} - b_{y,j^*+1} > \frac{29\gamma}{32k} ] > \frac{1}{4 \cdot p(\lambda )\cdot k^3(\lambda)}.
    \end{equation*}
 \end{proof}

We have shown that the freeloaders $x, y$ use the same identity string block $[id_{j^*}, id_{j^* + 1} - 1]$ to decrypt. Now we will further show that they use the exact same identity string $id_{j^*}$. To that end, we first define some intermediary challenge ciphertext distributions. Define the following for all $j \in \{0, 1, \dots, k\}$ and $\Delta \in \{0, 1, \dots, id_{j+1}-id_{j} - 1\}$. For notational convenience, also define $\mathcal{D}_j^{id_{j+1}-id_{j}, 0}$ to be $\mathcal{D}_{j+1}^{(0,  0)}$ for all $j \in \{0, 1, \dots, k\}$. Also note that $\mathcal{D}_j^{(0,0)}$ is exactly the same as $\mathcal{D}_j$ for $j \in [k]$.

\begin{itemize}
    \item{$\underline{\mathcal{D}_{j}^{(\Delta, 0)}(m)}$:} 
    \begin{enumerate}
   \item Sample $r \samp \zo^{\cosettcount}$.
    \item Sample a PRF key $K_2$ for $F_2.\mathsf{KeyGen}(1^\lambda)$.
    \item Sample $\mathsf{OPCt} \samp \io({ \mathsf{PCt}^{(j, \Delta, 0)}_{\mathsf{OPMem}, cpk, K_2, r, m, id_j + \Delta}}).$
 \begin{mdframed}
        {\bf $\underline{{ \mathsf{PCt}^{(j, \Delta, 0)}_{\mathsf{OPMem}, cpk, K_2, r, m, id_j + \Delta}}(id, u_1, \dots, u_{\cosettcount})}$}
        
        {\bf Hardcoded: ${\mathsf{OPMem}, cpk, K_2, r, m, \textcolor{red}{id_j + \Delta}}$}
        \begin{enumerate}
            \item Run $\mathsf{OPMem}(id, u_1, \dots, u_{\cosettcount}, r)$. If it outputs $0$, output $\perp$ and terminate.
            \item \textcolor{red}{If $id < id_j + \Delta$, set $a = \failmes$. Otherwise, set $a = m$.}
            \item Output $\mathsf{IBE.Enc}(cpk, id, {a}; F_2(K_2, id))$.
        \end{enumerate}
    \end{mdframed}
    \item Output $(\mathsf{OPCt}, r)$.
\end{enumerate}

        \item{$\underline{\mathcal{D}_{j}^{(\Delta, 1)}(m)}$:}  \begin{enumerate}
   \item Sample $r \samp \zo^{\cosettcount}$.
    \item Sample a PRF key $K_2$ for $F_2.\mathsf{KeyGen}(1^\lambda)$.
    \item \textcolor{red}{ $ct^* = \mathsf{IBE.Enc}(cpk, id_j + \Delta, m; F_2(K_2, id_j+\Delta)).$}
        \item \textcolor{red}{ $K_2\{id_j+\Delta\} \samp F_2.\mathsf{Punc}(K_2, id_j+\Delta)$.}
            
    \item Sample     $\mathsf{OPCt} \samp \io({ \mathsf{PCt}^{(j, \Delta, 1)}_{\mathsf{OPMem}, cpk, K_2\{id_j+\Delta\}, r, m, id_j + \Delta, ct^*}}).$
 \begin{mdframed}
        {\bf $\underline{{ \mathsf{PCt}^{(j, \Delta, 1)}_{\mathsf{OPMem}, cpk, K_2\{id_j+\Delta\}, r, m, id_j + \Delta, ct^*}}(id, u_1, \dots, u_{\cosettcount})}$}
        
        {\bf Hardcoded: ${\mathsf{OPMem}, cpk, \textcolor{red}{K_2\{id_j+\Delta\}}, r, m, id_j + \Delta, \textcolor{red}{ct^*}}$}
        \begin{enumerate}
            \item Run $\mathsf{OPMem}(id, u_1, \dots, u_{\cosettcount}, r)$. If it outputs $0$, output $\perp$ and terminate.
            \item \textcolor{red}{If $id = id_j + \Delta$, output $ct^*$ and terminate.}
            \item {If \textcolor{red}{$id < id_j + \Delta + 1$}, set $a = \failmes$. Otherwise, set $a = m$.}
            \item Output $\mathsf{IBE.Enc}(cpk, id, {a}; F_2(K_2, id))$.
        \end{enumerate}
    \end{mdframed}
    \item Output $(\mathsf{OPCt}, r)$.
\end{enumerate}

    \item{$\underline{\mathcal{D}_{j}^{(\Delta, 2)}(m)}$:} 
    \begin{enumerate}
   \item Sample $r \samp \zo^{\cosettcount}$.
    \item Sample a PRF key $K_2$ for $F_2.\mathsf{KeyGen}(1^\lambda)$.
    \item \textcolor{red}{Sample $z^*$ uniformly at random from the output space of $F_2$.}
    \item \textcolor{red}{ $ct^* = \mathsf{IBE.Enc}(cpk, id_j + \Delta, m; z^*).$}
        \item { $K_2\{id_j+\Delta\} \samp F_2.\mathsf{Punc}(K_2, id_j+\Delta)$.}
            
    \item Sample     $\mathsf{OPCt} \samp \io({ \mathsf{PCt}^{(j, \Delta, 2)}_{\mathsf{OPMem}, cpk, K_2\{id_j+\Delta\}, r, m, id_j + \Delta, ct^*}}).$
 \begin{mdframed}
        {\bf $\underline{{ \mathsf{PCt}^{(j, \Delta, 2)}_{\mathsf{OPMem}, cpk, K_2\{id_j+\Delta\}, r, m, id_j + \Delta, ct^*}}(id, u_1, \dots, u_{\cosettcount})}$}
        
        {\bf Hardcoded: ${\mathsf{OPMem}, cpk, {K_2\{id_j+\Delta\}}, r, m, id_j + \Delta, {ct^*}}$}
        \begin{enumerate}
            \item Run $\mathsf{OPMem}(id, u_1, \dots, u_{\cosettcount}, r)$. If it outputs $0$, output $\perp$ and terminate.
            \item {If $id = id_j + \Delta$, output $ct^*$ and terminate.}
            \item {If {$id < id_j + \Delta + 1$}, set $a = \failmes$. Otherwise, set $a = m$.}
            \item Output $\mathsf{IBE.Enc}(cpk, id, {a}; F_2(K_2, id))$.
        \end{enumerate}
    \end{mdframed}
    \item Output $(\mathsf{OPCt}, r)$.
\end{enumerate}

    \item{$\underline{\mathcal{D}_{j}^{(\Delta, 3)}(m)}$:} 
   \begin{enumerate}
   \item Sample $r \samp \zo^{\cosettcount}$.
    \item Sample a PRF key $K_2$ for $F_2.\mathsf{KeyGen}(1^\lambda)$.
        \item {Sample $z^*$ uniformly at random from the output space of $F_2$.}
    \item { $ct^* = \mathsf{IBE.Enc}(cpk, id_j + \Delta, m; z^*).$}
    \item { $K_2\{id_{j}+\Delta\} \samp F_2.\mathsf{Punc}(K_2, id_{j}+\Delta)$.}
    \textcolor{red}{\item Compute $(A^{*}_i,s_i^{*},s_i^{'{*}}) = F_1(K_1, id_j + \Delta).$
      \item For $i \in [\cosettcount]$, set $g_i = \mathsf{Can}_{A^{*}_i}$ if $(r)_i = 0$ and set $g_i = \mathsf{Can}_{(A^{*}_i)^{\perp}}$ if $(r)_i = 1$.
    \item For $i \in [\cosettcount]$, compute $y_i = g_i(s_i^{*})$ if $(r)_i = 0$ and $y_i =  g_i(s_i^{'{*}})$ if $(r)_i = 1$.
    \item Set $g$ to be the function $g(v_1, \dots, v_{\cosettcount}) = (g_1(v_1) || \dots || g_{\cosettcount}(v_{\cosettcount}))$.
    \item Set $y = y_1 || \dots || y_{\cosettcount}$.
    \item  $\mathsf{OCC} \samp \mathsf{CCObf.Obf}(g, y, ct^*)$.}

         \item Sample    $\mathsf{OPCt} \samp \io({ \mathsf{PCt}^{(j,\Delta,3)}_{\mathsf{OPMem}, cpk, K_2\{id_j\}, r, m, id_{j}+\Delta, \mathsf{OCC}}}).$
  \begin{mdframed}
        {\bf $\underline{{ \mathsf{PCt}^{(j,\Delta,3)}_{\mathsf{OPMem}, cpk, K_2\{id_{j}+\Delta\}, r, m, id_{j}+\Delta, \mathsf{OCC}}}(id, u_1, \dots, u_{\cosettcount})}$}
        
        {\bf Hardcoded: ${\mathsf{OPMem}, cpk, {K_2\{id_{j}+\Delta\}}, r, m, id_{j}+\Delta, \textcolor{red}{\mathsf{OCC}}}$}
        \begin{enumerate}
        \item \textcolor{red}{If $id = id_{j}+\Delta$, output the output of $\mathsf{OCC}(u_1, \dots, u_{\cosettcount})$ and terminate.}
            \item Run $\mathsf{OPMem}(id, u_1, \dots, u_{\cosettcount}, r)$. If it outputs $0$, output $\perp$ and terminate.
            \item {If {$id < id_{j}+\Delta + 1$}, set $a = \failmes$. Otherwise, set $a = m$.}
            \item Output $\mathsf{IBE.Enc}(cpk, id, {a}; F_2(K_2, id))$.
        \end{enumerate}
    \end{mdframed}
    
    \item Output $(\mathsf{OPCt}, r)$.
\end{enumerate}

\item{$\underline{\mathcal{D}_{j}^{(\Delta, 4)}(m)}$:}  Same as $\mathcal{D}_{j}^{(\Delta, 3)}$ except for the following. Replace the line
    $$ct^* = \mathsf{IBE.Enc}(cpk, id_j + \Delta, m; z^*)
    $$
    with
    $$ 
    ct^* = \mathsf{IBE.Enc}(cpk, id_j + \Delta, \textcolor{red}{\failmes}; z^*).
    $$
    
        \item{$\underline{\mathcal{D}_{j}^{(\Delta, 5)}(m)}$:} Same as $\mathcal{D}_{j}^{(\Delta, 2)}$ except for the following. Replace the line
    $$ct^* = \mathsf{IBE.Enc}(cpk, id_j + \Delta, m; z^*)
    $$
    with
    $$ 
    ct^* = \mathsf{IBE.Enc}(cpk, id_j + \Delta, \textcolor{red}{\failmes}; z^*).
    $$

\item{$\underline{\mathcal{D}_{j}^{(\Delta, 6)}(m)}:$} Same as ${\mathcal{D}_{j}^{(\Delta, 1)}}$ except for the following. Replace the line
                $$ct^* = \mathsf{IBE.Enc}(cpk, id_j + \Delta, m; F_2(K_2, id_j+\Delta))$$
                with
                $$ 
                ct^* = \mathsf{IBE.Enc}(cpk, id_j + \Delta, \textcolor{red}{\failmes}; F_2(K_2, id_j+\Delta)).
                $$
\end{itemize}

Now, we show that these distributions \emph{collapse} around $\Delta = 0$ for each $j$. Below, all our indistinguishability claims are for $2^{5\lambda}\cdot2^{8\lambda^{0.3\constmoecoll}}$-time adversaries and we set $\nu(\lambda) = 2^{-6\lambda}\cdot2^{-8\lambda^{0.3\constmoecoll}}$.
\begin{claim}\label{claim:delta0todelta1}
$\expnot{j}{\Delta}{0}{\ell}\approx^c_{\nu(\lambda)} \expnot{j}{\Delta}{1}{\ell}$
    for all $j \in \{0, 1, \dots, k\}$, $\Delta \in \{0, 1, \dots, id_{j+1}-id_{j} - 1\}$ and $\ell \in [k + 1]$.
\end{claim}
\begin{proof}
    Observe that by punctured key correctness of $F_2$ (\cref{defn:puncprf}), the different obfuscated programs ${ \mathsf{PCt}^{(j, \Delta, 0)}_{\mathsf{OPMem}, cpk, K_2, r, m, id_j + \Delta}}$ and ${ \mathsf{PCt}^{(j, \Delta, 1)}_{\mathsf{OPMem}, cpk, K_2\{id_j+\Delta\}, r, m, id_j + \Delta, ct^*}}$ in these hybrids have the same functionality. The result follows by security of $\io$ and by our choice of parameters.
\end{proof}
\begin{claim}\label{claim:delta1todelta2}
$\expnot{j}{\Delta}{1}{\ell} \approx^c_{\nu(\lambda)} \expnot{j}{\Delta}{2}{\ell}$
  for all $j \in \{0, 1, \dots, k\}$, $\Delta \in \{0, 1, \dots, id_{j+1}-id_{j} - 1\}$ and $\ell \in[k + 1]$.
\end{claim}
\begin{proof}
    The result follows by selective puncturing security of $F_2$ (\cref{defn:puncprf}) and our choice of parameters.
\end{proof}
\begin{claim}\label{claim:delta2todelta3}
    $\expnot{j}{\Delta}{2}{\ell}\approx^c_{\nu(\lambda)} \expnot{j}{\Delta}{3}{\ell}$
    for all $j \in \{0, 1, \dots, k\}$, $\Delta \in \{0, 1, \dots, id_{j+1}-id_{j} - 1\}$ and $\ell \in [k + 1]$.
\end{claim}
\begin{proof}
       Observe that the obfuscated ciphertext programs $\mathsf{PCt}$ in these hybrids have the same functionality by correctness of $\mathsf{CCObf}$, since a vector $w$ is in $A_i^* + s_i^*$ if and only if $\mathsf{Can}_{A_i^{*}}(w) =\mathsf{Can}_{A_i^{*}}(s_i^*)$ and similarly for $(A^*)_i^\perp + s^{'*}_i$. Then, the claim follows by the security of $\io$.
\end{proof}
\begin{claim}
$\expnot{j}{\Delta}{3}{\ell} \approx^c_{\nu(\lambda)} \expnot{j}{\Delta}{4}{\ell}$
     if
    \begin{itemize}
    \item  $j \in \{1, \dots, k\}$ and $\Delta \in \{1, \dots, id_{j+1}-id_{j} - 1\}$, or
    \item  $j = 0$ and $\Delta \in \{0, 1, \dots, id_{j+1}-id_{j}-1\}$
    \end{itemize}
    and for all $\ell \in [k + 1]$.
\end{claim}
\begin{proof}
    Observe that in these hybrids, the randomness used to invoke $\mathsf{IBE.Enc}$ to compute $ct^*$ is uniformly and independently sampled. Further, the adversary only has the $\mathsf{IBE}$ keys for the identities $id_1, id_2, \dots, id_k$, all of which are different from the identity $id_j + \Delta$ under which $ct^*$ is encrypted. Hence, by IBE security (\cref{defn:ibe}), the result follows.
\end{proof}
\begin{claim}
    $\expnot{j}{\Delta}{4}{\ell} \approx^c_{\nu(\lambda)} \expnot{j}{\Delta}{5}{\ell}$ for all $j \in \{0, 1, \dots, k\}$, $\Delta \in \{0, 1, \dots, id_{j+1}-id_{j} - 1\}$ and $\ell \in [k + 1]$.
\end{claim}
\begin{proof}
    Essentially the same argument as in \cref{claim:delta2todelta3} yields the result.
\end{proof}

\begin{claim}
    $\expnot{j}{\Delta}{5}{\ell} \approx^c_{\nu(\lambda)} \expnot{j}{\Delta}{6}{\ell}$ for all $j \in \{0, 1, \dots, k\}$, $\Delta \in \{0, 1, \dots, id_{j+1}-id_{j} - 1\}$ and $\ell \in [k + 1]$.
\end{claim}
\begin{proof}
    Essentially the same argument as in \cref{claim:delta1todelta2} yields the result.
\end{proof}

\begin{claim}
$\expnot{j}{\Delta}{6}{\ell} \approx^c_{\nu(\lambda)} \expnot{j}{\Delta+1}{0}{\ell}$
    for all $j \in \{0, 1, \dots, k\}$, $\Delta \in \{0, 1, \dots, id_{j+1}-id_{j} - 1\}$ and $\ell \in [k + 1]$.
\end{claim}
\begin{proof}
        Essentially the same argument as in \cref{claim:delta0todelta1} yields the result.
\end{proof}

\begin{claim}\label{claim:d0closed1}
For all $\ell \in [k+1]$, we have 
\begin{itemize}
    \item $\expfromddd{\mathcal{D}_0}{\ell} \approx^c_{\nu(\lambda)}  \expfromddd{\mathcal{D}_1}{\ell} $
    \item $\expnot{j}{0}{4}{\ell} \approx^c_{\nu(\lambda)} \expfromddd{\mathcal{D}_{j+1}}{\ell}$ for all $j \in \{0, 1, \dots, k\}$ 
    \item $\expfromddd{\mathcal{D}_j}{\ell} \approx^c_{\nu(\lambda)} \expnot{j}{0}{3}{\ell}$ for all $j \in \{0, 1, \dots, k\}$ 
\end{itemize}
where $\nu(\lambda) = 2^{-5\lambda}\cdot2^{-8\lambda^{0.3\constmoecoll}}$.
\end{claim}
\begin{proof}
It is easy to see that $\mathcal{D}_{0} \approx^c_{\nu(\lambda)} \mathcal{D}_{0}^{(0,0)}$ and  $\mathcal{D}_{k+1} \approx^c_{\nu(\lambda)} \mathcal{D}_{k+1}^{(0, 0)}$ by the security of $\io$.

Rest follows by a simple calculation using the above results.
\end{proof}

\begin{definition}
    We will write $\mathcal{D}'$ to denote $\mathcal{D}_{j^*}^{(0,3)}$ and $\mathcal{D}''$ to denote $\mathcal{D}_{j^*}^{(0,4)}$ where $j^*$ is as output by $\adve_0'$.
\end{definition}

Finally, we show that the success probabilities for both freeloaders jump exactly at $j^*$.
\begin{claim}\label{claim:dprimdprim}
Let $\tau$ be the bipartite state output by $\adve'_0$ in $\mathcal{G}$. Let $p'_x, p'_y$ be the outcome of applying $\mathsf{PI}_{x, \mathcal{D}'}\otimes \mathsf{PI}_{y, \mathcal{D}'}$ to $\tau$. Similarly, let $p''_x, p''_y$ be the outcome of applying $\mathsf{PI}_{x, \mathcal{D}''}\otimes \mathsf{PI}_{y, \mathcal{D}''}$ to $\tau$. Then,
\begin{itemize}
    \item $\Pr[p'_x > b_{x,j^*} - \frac{3\gamma}{32k} \wedge p'_y > b_{y,j^*} - \frac{3\gamma}{32k}] \geq 1 - 2^{-2\lambda}\cdot 2^{-4\lambda^{0.3\constmoecoll}}$.
    \item $\Pr[b_{x,j^*} - p''_x > \frac{28\gamma}{32k} \wedge b_{y,j^*} - p''_y  > \frac{28\gamma}{32k}] > \frac{1}{q(\lambda)}$ for some polynomial $q(\cdot)$.
\end{itemize}
\end{claim}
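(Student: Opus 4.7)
The plan is to prove both parts by relating the outcomes of the final $\mathsf{PI}$ measurements to outcomes of (hypothetical or actual) $\mathsf{API}$ measurements on the relevant ciphertext distributions, and then invoking the known bounds on those $\mathsf{API}$ outcomes. The two standard gadgets I will chain are (i) the shift-distance bound $\shiftd^\eps(\mathsf{API}^{\eps,\delta}_{\mathcal{P},\mathcal{D}},\mathsf{PI}(\mathcal{P}_\mathcal{D})) \leq \delta$ from \cref{thm:apiprop}, together with the near-projectivity property from \cref{thm:multiapiprop}, and (ii) the computational part of \cref{thm:distti}, applied to move between computationally indistinguishable ciphertext distributions. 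The computational indistinguishabilities I will use are $\mathcal{D}_{j^*} \approx^c \mathcal{D}'$ (obtained by combining \cref{claim:d0closed1} with the chain \cref{claim:delta0todelta1}--\cref{claim:delta2todelta3}) and $\mathcal{D}_{j^*+1} \approx^c \mathcal{D}''$ (from \cref{claim:d0closed1} directly). Throughout, since $b_{x,j^*}$ and $b_{y,j^*}$ are classical values fixed strictly before the final measurement, all the arguments can be made after conditioning on their values.

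For Part 1, first consider a hypothetical application of $\mathsf{API}^{\eps,\delta}_{x,\mathcal{D}'}\otimes\mathsf{API}^{\eps,\delta}_{y,\mathcal{D}'}$ to $\tau$, producing outcomes $(q_x,q_y)$. By the computational part of \cref{thm:distti} with the indistinguishability $\mathcal{D}_{j^*} \approx^c \mathcal{D}'$, the joint distribution of $(q_x,q_y)$ is within statistical distance $\mu(\lambda) \leq 2^{-2\lambda}\cdot 2^{-4\lambda^{0.3\constmoecoll}}$ of that obtained by instead applying $\mathsf{API}^{\eps,\delta}_{x,\mathcal{D}_{j^*}}\otimes\mathsf{API}^{\eps,\delta}_{y,\mathcal{D}_{j^*}}$; the latter in turn is jointly within $\eps$ of $(b_{x,j^*}, b_{y,j^*})$ with failure at most $4\delta$, by the almost-projectivity property (applying the same $\mathsf{API}$ a second time to $\tau$). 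Finally, $(p'_x, p'_y)$ obtained from $\mathsf{PI}_{x,\mathcal{D}'}\otimes\mathsf{PI}_{y,\mathcal{D}'}$ agrees with $(q_x, q_y)$ up to an $\eps$ shift with failure $2\delta$, via the single-register shift-distance bound from \cref{thm:apiprop} applied register-by-register. Setting $\eps = \gamma/(32k)$, the three $\eps$ slacks accumulate to at most $3\gamma/(32k)$, and the failure probability is bounded by $6\delta + \mu(\lambda) \leq 2^{-2\lambda}\cdot 2^{-4\lambda^{0.3\constmoecoll}}$.

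For Part 2, I start from \cref{claim:mesresdist}, which gives probability at least $\frac{1}{4p(\lambda)k^3(\lambda)}$ for the event $b_{x,j^*} - b_{x,j^*+1} > \frac{29\gamma}{32k} \wedge b_{y,j^*} - b_{y,j^*+1} > \frac{29\gamma}{32k}$ when the final measurement is $\mathsf{API}^{\eps,\delta}_{x,\mathcal{D}_{j^*+1}}\otimes\mathsf{API}^{\eps,\delta}_{y,\mathcal{D}_{j^*+1}}$. Replacing $\mathcal{D}_{j^*+1}$ by $\mathcal{D}''$ in the final measurement perturbs the joint outcome distribution by at most $\mu(\lambda)$ via the computational part of \cref{thm:distti}, and then replacing the two $\mathsf{API}^{\eps,\delta}_{\cdot,\mathcal{D}''}$ by $\mathsf{PI}_{\cdot,\mathcal{D}''}$ perturbs the outcomes by at most $\eps$ with failure $2\delta$ by \cref{thm:apiprop}. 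Hence the event transforms into $b_{x,j^*} - p''_x > \frac{29\gamma}{32k} - \eps \wedge b_{y,j^*} - p''_y > \frac{29\gamma}{32k} - \eps$, which with $\eps = \gamma/(32k)$ implies the event in Part 2 and has probability at least $\frac{1}{4p(\lambda)k^3(\lambda)} - 2\delta - \mu(\lambda) > \frac{1}{8p(\lambda)k^3(\lambda)}$ for sufficiently large $\lambda$, yielding $q(\lambda) = 8p(\lambda)k^3(\lambda)$.

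The main obstacle is carefully composing shift-distance and statistical-distance bounds in the bipartite setting, so that the joint distribution of $(p'_x, p'_y)$ (resp.\ $(p''_x, p''_y)$) is provably close to the joint distribution of the $\mathsf{API}$ outcomes rather than just the marginals. In particular, \cref{thm:distti} is phrased for $\otimes\mathsf{API}$ and not for $\otimes\mathsf{PI}$, which forces me to insert $\otimes\mathsf{API}$ as an intermediate measurement and pay the $O(\delta)$ cost of \cref{thm:apiprop} on each register, using no-signalling across the disjoint registers $x$ and $y$ to justify that the bipartite shift bound is at most twice the single-register one.
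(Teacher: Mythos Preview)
Your proposal is correct and follows essentially the same approach as the paper: both parts are proved by combining the almost-projectivity of $\mathsf{API}$ (to relate to the recorded values $b_{x,j^*},b_{y,j^*}$), the computational indistinguishability $\mathcal{D}_{j^*}\approx^c\mathcal{D}'$ and $\mathcal{D}_{j^*+1}\approx^c\mathcal{D}''$ via \cref{thm:distti}, and an $\mathsf{API}$-to-$\mathsf{PI}$ transfer. The only cosmetic difference is that you spell out the bipartite $\mathsf{API}$-to-$\mathsf{PI}$ transfer by a register-by-register shift-distance hybrid, whereas the paper packages that step into direct citations of \cref{thm:multiapiprop}, \cref{thm:multiatiprop}, and \cref{thm:multiapismallproof}.
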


\begin{proof}
 Let $(a'_x, a'_y)$ be the outcome of applying $\mathsf{API}^{\eps,\delta}_{x, \mathcal{D}_{j^*}}\otimes \mathsf{API}^{\eps,\delta}_{y,\mathcal{D}_{j^*}}$ to $\tau$.
 Then, by \cref{thm:multiapiprop},  \cref{thm:multiatiprop} and by definition of $b_{x,j^*}, b_{y,j^*}$, we have 
    $$\Pr[a'_x > b_{x,j^*} - \frac{3\gamma}{32k} \wedge a'_y > b_{y,j^*} - \frac{3\gamma}{32k}] \geq 1 - \poly(\lambda)\cdot \delta(\lambda).$$

Then, since $\expfromddd{\mathcal{D}_{j^*}}{\ell} \approx^c_\nu \expfromddd{\mathcal{D}'}{\ell}$ against $2^{5\lambda}\cdot2^{8\lambda^{0.3\constmoecoll}}$-time adversaries where $\nu(\lambda) = 2^{-5\lambda}\cdot2^{-8\lambda^{0.3\constmoecoll}}$, we get 
$$\Pr[p'_x > b_{x,j^*} - \frac{3\gamma}{32k} \wedge p'_y > b_{y,j^*} - \frac{3\gamma}{32k}] \geq 1 -  2^{-2\lambda}\cdot 2^{-4\lambda^{0.3\constmoecoll}}.$$
 by \cref{thm:distti}. 

 See the proof of \cref{claim:mesresdist} for a remark on how to invoke \cref{thm:distti}. Note that here, we are applying the measurements to $\tau$ rather than to the state $\sigma$. However, since the procedure that gives $\tau$ from $\sigma$ is an efficient procedure that only uses $pp$, the indistinguishability between $\mathcal{D}_{j^*}$ and $\mathcal{D}'$ given  $\sigma$ still applies when we are instead given $\tau$, hence \cref{thm:distti} indeed applies.

We now move onto the second claim. By \cref{claim:mesresdist}, we have that $$\Pr[b_{x,j^*} - b_{x,j^*+1} > \frac{29\gamma}{32k} \wedge b_{y,j^*} - b_{y,j^*+1} > \frac{29\gamma}{32k}]$$
is non-negligible. 
Further, we have $\expfromddd{\mathcal{D}_{j^*+1}}{\ell} \approx \expfromddd{\mathcal{D}''}{\ell}$. By \cref{thm:distti} and \cref{thm:multiapismallproof}, we get that $$\Pr[b_{x,j^*} - p''_x  > \frac{28\gamma}{32k} \wedge b_{y,j^*} - p''_y > \frac{28\gamma}{32k}]$$ is non-negligible. Similar to above, the indistinguishability of $\mathcal{D}_{j^*+1}$ and $\mathcal{D}''$ given $\sigma$ still applies when we are given the state $\tau$ instead. Therefore, \cref{thm:distti} indeed applies.
\end{proof}

\subsubsection*{Extracting MoE Vectors}
We have shown that the two freeloaders use the same identity $j^*$ to decrypt. Now we will show that we can exract MoE vectors from these two freeloaders \emph{simultaneously}. That is, we extract MoE vectors from one of the freeloaders even conditioned on successful extraction from the other one.\footnote{Note that this is not a direct consequence of extraction from a single freeloader due to entanglement.}
\begin{claim}\label{claim:pkeextract}
    There exist efficient $\adve_1', \adve_2'$ such that  $(\adve'_0, \adve_1', \adve_2')$ wins $\mathcal{G}$ with probability $\frac{1}{2^{0.4\cdot\lambda^{\constmoecoll}}}.$
\end{claim}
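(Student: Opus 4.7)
The plan is to build $\adve_1', \adve_2'$ that, given only the subspace descriptions $(A_i^{j^*})_i$ and a challenge string $r_\ell$, extract canonical representatives of the cosets from $\regi{bip}[\ell]$ using the extractability implicit in compute-and-compare obfuscation. The first step is to combine the two parts of \cref{claim:dprimdprim} into a single simultaneous event: since $p'_\ell$ is close to $b_{\ell, j^*}$ except with exponentially small probability, while $b_{\ell, j^*} - p''_\ell$ is bounded below by $28\gamma/32k$ with inverse-polynomial probability, a union bound yields that with inverse-polynomial probability the simultaneous gap $p'_x - p''_x \geq 25\gamma/32k$ and $p'_y - p''_y \geq 25\gamma/32k$ holds.

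Next I will invoke \cref{thm:ccobf} together with the structural observation that $\mathcal{D}'$ and $\mathcal{D}''$ use identical compute-and-compare programs apart from the hidden message $ct^*$ (an $\mathsf{IBE}$ encryption of $m$ vs.\ $\failmes$), while the target value $y$ (the concatenation $y_1\|\dots\|y_{\cosettcount}$ of canonical representatives of either $A_i^{j^*} + s_i^{j^*}$ or $(A_i^{j^*})^\perp + s_i^{'j^*}$, selected by $r$) and the function $g$ are the same. Introduce intermediary distributions $\widetilde{\mathcal{D}}', \widetilde{\mathcal{D}}''$ that replace $\mathsf{CCObf.Obf}(g,y,ct^*)$ by $\mathsf{CCObf.Sim}(1^\lambda,|g|,|y|,|ct^*|)$. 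Under these the challenge ciphertext carries no information about $b$, so success probability is at most $1/2$; hence for the real gap to persist, the distribution over $y$ conditioned on the adversary's view must be predictable above the $2^{-\lambda^{0.2\constmoecoll}}$ threshold. The contrapositive of \cref{thm:ccobf}, applied as in \cite{CLLZ21}, yields an efficient extractor that, given a register plus a simulated CC obfuscation (which does not need $s^{j^*},s^{'j^*}$), outputs $y$ with probability polynomially related to the CC security loss.

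Each $\adve_\ell'$ will thus, on input $(\regi{bip}[\ell], (A_i^{j^*})_i, r_\ell)$, construct the simulated CC obfuscation and the simulated encryption-of-$\failmes$ ciphertext exactly as in $\widetilde{\mathcal{D}}''$, run this extractor on $\regi{bip}[\ell]$ with $r = r_\ell$, and output the resulting components $y_1,\dots,y_{\cosettcount}$ as its answer tuple. Since canonical representatives of a coset lie in the coset, these vectors pass the verification checks of $\mathcal{G}$ for the choice of $r_\ell$. The main obstacle, and the point of greatest technical subtlety, is guaranteeing that the extraction succeeds \emph{simultaneously} on both registers rather than just on one at a time; measuring $\regi{bip}[x]$ in general can destroy the gap on $\regi{bip}[y]$. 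Here I will appeal to the fact that $\mathsf{PI}_{x, \mathcal{D}'}$ and $\mathsf{PI}_{y, \mathcal{D}'}$ are projective measurements on disjoint registers, so that by \cref{thm:impindep} and \cref{thm:simulproj}, combined with the gentle measurement lemma \cref{lem:gentlemes}, the gap witnessed on one register survives any efficient measurement on the other up to polynomial loss in the gap size; this is the standard simultaneous-extraction trick of \cite{CLLZ21} adapted to our setting, and it permits sequential extraction from $x$ and then $y$.

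Putting everything together, the winning probability in $\mathcal{G}$ is the product of the inverse-polynomial probability of the simultaneous gap event, the square (roughly) of the per-register extraction success probability of $\mathsf{CCObf}$, and the simultaneity loss. Given the parameter choices, namely $\mathsf{CCObf}$ being $2^{-2\lambda-1}\cdot 2^{-2\lambda^{0.3\constmoecoll}}$-secure for $2^{-\lambda^{0.2\constmoecoll}}$-unpredictable distributions, a direct calculation shows this product is bounded below by $2^{-0.4\lambda^{\constmoecoll}}$ for all sufficiently large $\lambda$. The anticipated difficulty is purely bookkeeping: threading approximate projective implementations through the extraction argument without letting any step's loss exceed the subexponential tolerance budgeted into the parameters. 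Once that is done, combining with a reduction that embeds an $\moecoll$-challenge coset tuple into the $\adve_0'$-simulation in place of $(A^{j^*}_i, s^{j^*}_i, s^{'j^*}_i)_i$ contradicts \cref{defn:strmoecoll}, completing the proof.
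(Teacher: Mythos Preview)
Your plan has the right skeleton, but the step that turns the $\mathcal{D}'/\mathcal{D}''$ gap into a compute-and-compare distinguisher is broken as written. You assert that under the simulated distributions $\widetilde{\mathcal{D}}',\widetilde{\mathcal{D}}''$ ``the challenge ciphertext carries no information about $b$, so success probability is at most $1/2$.'' That is false: inspect $\mathsf{PCt}^{(j^*,0,3)}$. For any input with $id>id_{j^*}$ the program still outputs $\mathsf{IBE.Enc}(cpk,id,m;\cdot)$, which depends on $m=m^b$. Replacing $\mathsf{OCC}$ by the simulator only erases the branch at $id=id_{j^*}$; every larger identity still carries the challenge message. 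So a freeloader might well have success $>1/2$ under $\widetilde{\mathcal{D}}'$, and your route from ``gap'' to ``predictable $y$'' does not close. The correct bridge (and what the paper actually does) is to observe that $\mathcal{D}'$ and $\mathcal{D}''$ differ \emph{only} in the hidden message $ct^*$ of the CC program, so their simulated versions coincide: $\widetilde{\mathcal{D}}'=\widetilde{\mathcal{D}}''=:\mathcal{D}_{sim}$. One first shows $\expfromddd{\mathcal{D}'}{x}\not\approx^c\expfromddd{\mathcal{D}''}{x}$ directly (by contradiction, via \cref{thm:distti} applied to items~1--2 of \cref{claim:dprimdprim}); then the triangle inequality forces $\mathcal{D}'\not\approx^c\mathcal{D}_{sim}$ or $\mathcal{D}''\not\approx^c\mathcal{D}_{sim}$, and either gives the CC distinguisher.

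A second, related issue is your ``up to polynomial loss'' remark for simultaneous extraction. In \cite{CLLZ21} the per-register extractor succeeds with inverse-polynomial probability, so polynomial degradation suffices. Here the extractor is only guaranteed to succeed with probability $\geq 2^{-\lambda^{0.2\constmoecoll}}$, and \cref{thm:simulproj} divides the error in item~1 of \cref{claim:dprimdprim} by this subexponentially small $p_i$. That is precisely why item~1 is stated with error $2^{-2\lambda}\cdot 2^{-4\lambda^{0.3\constmoecoll}}$ rather than merely $\negl(\lambda)$: after dividing by $2^{-\lambda^{0.2\constmoecoll}}$ you still have something small enough to re-run the contradiction argument on the second register (the paper additionally flips the order---first apply $\mathsf{PI}_{y,\mathcal{D}''}$, then extract on $x$---to lower-bound $\Pr[E\mid G]$). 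Treating this as polynomial bookkeeping would not yield a valid proof; the entire parameter regime of the construction is calibrated so that this subexponential division goes through.
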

\begin{proof}

For a challenge ciphertext distribution $\mathcal{C}$, let $\expfromddd{\mathcal{C}}{x}'$ denote the outcome of the following experiment.

\begin{enumerate}
    \item Execute $pk, sk \samp \mathsf{PKE.Setup}(1^\lambda)$.
\item Simulate $\adve'_0$ and the challenger of $\mathcal{G}$:
\begin{enumerate}[label*=\arabic*.]
    \item Simulate $\adve$ on $pk$ by sampling a quantum secret key (as in $\hyb_1$) whenever $\adve$ makes a query, and submitting the key to it. Let $\regi{adv}, (m_\ell^0, m_\ell^1)_{\ell \in [k + 1]},(U_\ell)_{\ell \in [k + 1]}$ be the output of $\adve$.
            \item Uniformly at random sample $x, y, j^*$ such that $1 \leq x < y \leq k + 1$ and $j^* \in \{1, \dots, k\}$.
                        \item Apply $\mathsf{API}^{\eps, \delta}_{\ell, \mathcal{D}_0}$ to all registers $\regi{adv}[\ell]$ for $\ell \in [k + 1]$, let $b_{\ell, 0}$ be the measurement outcomes.
            \item Apply $\mathsf{API}^{\eps, \delta}_{\ell, \mathcal{D}_i}$ in succession for $i = 1$ to $j^*$ to $\regi{adv}[x]$, let $b_{x, i}$ be the measurement outcomes.
            \item Apply $\mathsf{API}^{\eps, \delta}_{\ell, \mathcal{D}_i}$ in succession for $i = 1$ to $j^*$ to $\regi{adv}[y]$, let $b_{y, i}$ be the measurement outcomes.
\end{enumerate}
    \item Set  $pp =  (x, y, j^*, (id_j)_{j\in[k+1]}, (m_\ell^0, m_\ell^1)_{\ell \in [k + 1]}, (U_\ell)_{\ell \in [k + 1]}, pk)$.
    \item Sample $b \samp \zo$.
    \item Sample $ct \samp \mathcal{C}(pp, m_\ell^b)$.
    \item Output $\regi{adv}[x], (b, ct), pp$.
\end{enumerate}

It is easy to see that $\expfromddd{\mathcal{C}_0}{\ell} \approx_\nu^c \expfromddd{\mathcal{C}_1}{\ell}$ implies $\expfromddd{\mathcal{C}_0}{x}' \approx_\nu^c \expfromddd{\mathcal{C}_1}{x}'$ since they only differ in their auxiliary states and we can efficiently obtain the auxiliary state of the latter using the auxiliary state of the former.

By \cref{claim:dprimdprim}, we have
\begin{enumerate}
     \item \label{item:pidprime} $\Pr[\mathsf{PI}_{x, \mathcal{D}'}\cdot\tau[1] \leq b_{x,j^*} - \frac{3\gamma}{32k}] \leq 2^{-2\lambda}\cdot 2^{-4\lambda^{0.3\constmoecoll}}$, and
    \item  \label{item:pidprimeprime} $\Pr[\mathsf{PI}_{x, \mathcal{D}''}\cdot\tau[1]<b_{x,j^*} - \frac{28\gamma}{32k}]$ is non-negligible.
\end{enumerate}

Suppose for a contradiction that $\expfromddd{\mathcal{D}'}{x}' \approx^c \expfromddd{\mathcal{D}''}{x}'$. Then, by  \cref{thm:singleatiprop} and \cref{thm:distti}, \cref{item:pidprimeprime} implies that
$$\Pr[\mathsf{PI}_{x, \mathcal{D}'}\cdot\tau[1] < b_{x,j^*} - \frac{26\gamma}{32k}] $$
is non-negligible. This is a contradiction to \cref{item:pidprime}, therefore, $\expfromddd{\mathcal{D}'}{x}'\not\approx^c \expfromddd{\mathcal{D}''}{x}'$. We define the distribution $\mathcal{D}_{sim}$ by modifying $\mathcal{D}'$ as follows: We replace the line 
    $$\mathsf{OCC} \samp \mathsf{CCObf.Obf}(g, y, ct^*)$$
    with 
    $$\mathsf{OCC} \samp \mathsf{CCObf.Sim}(1^\lambda, |g|, |y|, |ct^*|).$$
Since $\expfromddd{\mathcal{D}'}{x}' \not\approx^c \expfromddd{\mathcal{D}''}{x}'$, we have either $\expfromddd{\mathcal{D}'}{x}'\not\approx^c \expfromddd{\mathcal{D}_{sim}}{x}'$ or $\expfromddd{\mathcal{D}''}{x}' \not\approx^c \expfromddd{\mathcal{D}_{sim}}{x}'$. We will only discuss the first case but the second case follows from the same argument.

Now, we will give a distribution $\mathcal{B}$ over compute-and-compare programs (with quantum auxiliary information) and an adversary $\adve_{CC}$ that breaks the security of $\mathsf{CCObf}$ for this distribution. This in turn will mean by \cref{defn:ccobf} that there is an adversary that can predict the \emph{target} value of these programs, given the description of the compute part of the program and the auxiliary information.

We first define the distribution $\mathcal{B}$. 
\paragraph{$\underline{\mathcal{B}(1^\lambda)}$}
\begin{enumerate}
\item Execute $pk, sk \samp \mathsf{PKE.Setup}(1^\lambda)$.
\item Simulate $\adve'_0$ and the challenger of $\mathcal{G}$:
\begin{enumerate}[label*=\arabic*.]
    \item Simulate $\adve$ on $pk$ by sampling a quantum secret key (as in $\hyb_1$) whenever $\adve$ makes a query, and submitting the key to it. Let $\regi{adv}, (m_\ell^0, m_\ell^1)_{\ell \in [k + 1]},(U_\ell)_{\ell \in [k + 1]}$ be the output of $\adve$.
            \item Uniformly at random sample $x, y, j^*$ such that $1 \leq x < y \leq k + 1$ and $j^* \in \{1, \dots, k\}$.
                        \item Apply $\mathsf{API}^{\eps, \delta}_{\ell, \mathcal{D}_0}$ to all registers $\regi{adv}[\ell]$ for $\ell \in [k + 1]$, let $b_{\ell, 0}$ be the measurement outcomes.
            \item Apply $\mathsf{API}^{\eps, \delta}_{\ell, \mathcal{D}_i}$ in succession for $i = 1$ to $j^*$ to $\regi{adv}[x]$, let $b_{x, i}$ be the measurement outcomes.
            \item Apply $\mathsf{API}^{\eps, \delta}_{\ell, \mathcal{D}_i}$ in succession for $i = 1$ to $j^*$ to $\regi{adv}[y]$, let $b_{y, i}$ be the measurement outcomes.
\end{enumerate}
\item Set  $pp =  (x, y, j^*, (id_j)_{j\in[k+1]}, (m_\ell^0, m_\ell^1)_{\ell \in [k + 1]},(U_\ell)_{\ell \in [k + 1]}, pk)$.
\item Sample $b \samp \zo$.
\item Simulate the first steps of $\mathcal{D}'$ on $m_x^b$:
\begin{enumerate}[label*=\arabic*.]
    \item Sample $r \samp \zo^{\cosettcount}$.
        \item {Sample $z^*$ uniformly at random the output space of $F_2$.}
    \item { $ct^* = \mathsf{IBE.Enc}(cpk, id_j, m^b_x; z^*).$}
    \item Compute $(A^{*}_i,s_i^{*},s_i^{'{*}}) = F_1(K_1, id_{j^*}).$
      \item For $i \in [\cosettcount]$, set $g_i = \mathsf{Can}_{A^{*}_i}$ if $(r)_i = 0$ and set $g_i = \mathsf{Can}_{(A^{*}_i)^{\perp}}$ if $(r)_i = 1$.
    \item For $i \in [\cosettcount]$, compute $y_i = g_i(s_i^{*})$ if $(r)_i = 0$ and $y_i =  g_i(s_i^{'{*}})$ if $(r)_i = 1$.
    \item Set $g$ to be the function $g(v_1, \dots, v_{\cosettcount}) = (g_1(v_1) || \dots || g_{\cosettcount}(v_{\cosettcount}))$.
    \item Set $y = y_1 || \dots || y_{\cosettcount}$.
\end{enumerate}
    \item Output $(g, y, ct^*)$ as the compute-and-compare program and $$(\regi{adv}[x], pp, r, m^b_x, id_{j^*}, b)$$ as the auxiliary information.
\end{enumerate}

We define the adversary $\adve_{CC}$ as follows. Let $\mathcal{A}_{dist}$ be an adversary that distinguishes $\expfromddd{\mathcal{D}'}{x} \not\approx^c \expfromddd{\mathcal{D}_{sim}}{x}$.

\paragraph{$\underline{\mathcal{A}_{CC}(P, \regi{aux})}$}
\begin{enumerate}
\item Parse $(\reg, pp, r, m^b_x, id_{j^*}, b) = \regi{aux}$.
\item Parse $(x, y, j^*, (id_j)_{j\in[k+1]}, (m_\ell^0, m_\ell^1)_{\ell \in [k + 1]},(U_\ell)_{\ell \in [k + 1]}, pk) = pp$.
\item Sample a PRF key $K_2$ for $F_2.\mathsf{KeyGen}(1^\lambda)$.
\item Sample { $K_2\{id_{j^*}\} \samp F_2.\mathsf{Punc}(K_2, id_{j^*})$.}
    \item  Sample    $\mathsf{OPCt} \samp \io(\mathsf{PCt}).$
  \begin{mdframed}
        {\bf $\underline{\mathsf{PCt}(id, u_1, \dots, u_{\cosettcount})}$}
        
        {\bf Hardcoded: ${\mathsf{OPMem}, cpk, {K_2\{id_{j^*}\}}, r, m^b_x, id_{j^*}, P}$}
        \begin{enumerate}
        \item {If $id = id_{j^*}$, output the output of $P(u_1, \dots, u_{\cosettcount})$ and terminate.}
            \item Run $\mathsf{OPMem}(id, u_1, \dots, u_{\cosettcount}, r)$. If it outputs $0$, output $\perp$ and terminate.
            \item {If {$id < id_{j^*} + 1$}, set $a = \failmes$. Otherwise, set $a = m^b_x$.}
            \item Output $\mathsf{IBE.Enc}(cpk, id, {a}; F_2(K_2, id))$.
        \end{enumerate}
    \end{mdframed}
    \item Set $ct = (\mathsf{OPCt}, r)$.
    \item Output $\adve_{dist}(\reg, (ct, b), pp)$.
\end{enumerate}

It is easy to see that $\adve_{CC}(\mathsf{CCObf.Obf}(g, y, ct^*), \regi{aux})$ corresponds to $\adve_{dist}(\expfromddd{\mathcal{D}'}{x})$ while $\adve_{CC}(\allowbreak\mathsf{CCObf.Sim}(1^\lambda, |g|, |y|, |ct^*|), \regi{aux}))$ corresponds to $\adve_{dist}(\expfromddd{\mathcal{D}_{sim}}{x})$ where $(g, y, ct^*) \samp \mathcal{B}(1^\lambda)$.
Hence, since $\adve_{dist}$ distinguishes $\expfromddd{\mathcal{D}'}{x} \not\approx^c \expfromddd{\mathcal{D}_{sim}}{x}$, by \cref{thm:ccobf} there exists an adversary $\mathcal{M}_1$ that can extract vectors $u_i$ such that $g((u_i)_{i \in \cosettcount}) = y$, using the quantum auxiliary information defined above and the description of $g$. Note that $g$ can be computed efficiently given $r$ and $(A^{*}_i)_{i \in [\cosettcount]}$, which are indeed provided to $\adve'_1$ in $\mathcal{G}$. Similarly, $\adve_1'$ can compute $\regi{aux}$ from its input provided by $\adve_0'$. Therefore, we set $\adve'_1$ to be the adversary that computes the input as above and simulates $\mathcal{M}_1$. It is easy to see that $\adve'_1$ outputs correct vectors in the game $\mathcal{G}$ with probability at least $2^{-\lambda^{0.2\cdot {\constmoecoll}}}$, since $\mathsf{CCObf}$ is a compute-and-compare obfuscation scheme for $2^{-\lambda^{0.2\cdot {\constmoecoll}}}$-unpredictable distributions.

Now, we will argue that we can simultaneously extract MoE vectors from the second register, that is, we can extract even conditioned on a successful extraction from the first register. Let $\xi$ denote the  post-measurement state of the input state of $\adve'_2$, conditioned on $\adve'_1$ succeeding. 
First, define  $\expfromddd{\mathcal{C}}{y}''$ as follows.
\begin{enumerate}
   \item Simulate $\mathcal{B}(1^\lambda)$.
   \item Run $\adve'_1$ on $(\regi{adv}[x], pp, r, m^b_x, id_{j^*}, b)$ and $g$ to obtain vectors $(u_i)_{i \in \cosettcount}$.
   \item Check if $\mathsf{OPMem}(id_{j^*}, u_1, \dots, u_{\cosettcount}, r)$. \textcolor{blue}{If the output is $\false$, output $\perp$ and terminate}.
   \item Sample $b \samp \zo$.
    \item Sample $ct \samp \mathcal{C}(pp, m_\ell^b)$.
    \item Output $\regi{adv}[y], (b, ct), pp$.
\end{enumerate}
Observe that the state of the register $\regi{adv}[y]$ output above is $\xi$ (when the experiment outcome is not $\perp$). 
We claim that $\xi$ satisfies
\begin{enumerate}
\item \label{item:pidprime2} $\Pr[\mathsf{PI}_{y, \mathcal{D}'}\cdot\xi \leq b_{y,j^*} - \frac{3\gamma}{32k}] \leq \frac{3}{2} \cdot \sqrt{2^{-2\lambda}\cdot 2^{-4\lambda^{0.3\constmoecoll}}}\cdot 2^{\lambda^{0.2\cdot {\constmoecoll}}}.$
    \item  \label{item:pidprimeprime2} $\Pr[\mathsf{PI}_{y, \mathcal{D}_{''}}\cdot\xi < b_{y,j^*} - \frac{28\gamma}{32k}] \geq 2^{-\lambda^{0.3\cdot {\constmoecoll}}}.$
\end{enumerate}

This first claim follows from \cref{claim:dprimdprim}, \cref{thm:simulproj}, and the fact that extraction on the first register succeeds with probability $\geq 2^{-\lambda^{0.2\cdot {\constmoecoll}}}$.
We argue the second claim as follows. Let $E$ denote the event of successful extraction on the first register, and let $G$ denote the event that applying $\mathsf{PI}_{y, \mathcal{D}_{''}}$ on the second register yields a value $< b_{y,j^*} - \frac{28\gamma}{32k}$. The probability above corresponds to $\Pr[G | E]$, which equals $\frac{\Pr[E | G]\cdot \Pr[G]}{\Pr[E]} \geq \Pr[E | G]\cdot \Pr[G] \geq \Pr[E | G]\cdot\frac{1}{\poly(\lambda)}$. However, observe that we can first apply the measurement $\mathsf{PI}_{y, \mathcal{D}_{''}}$ on the second register, and then try to extract on the first register. Observe that a \emph{gap} still exists on the first register after this measurement on the second register and conditioning on the outcome $G$, by \cref{claim:dprimdprim} and \cref{thm:simulproj}, since $\Pr[G] > 1/\poly(\lambda)$. Hence, similar to the extraction argument above, we get that $\Pr[E | G] > 2^{-\lambda^{0.2\cdot {\constmoecoll}}}$, which proves our claim.

Now, suppose for a contradiction that $\expfromddd{\mathcal{D}'}{y}'' \approx^c_{\nu} \expfromddd{\mathcal{D}''}{y}''$ against $2^{3\lambda}\cdot 2^{2\lambda^{0.3\constmoecoll}}$-time adversaries where $\nu = 2^{-2\lambda - 1}\cdot 2^{-2\lambda^{0.3\constmoecoll}}$. Then, by \cref{thm:distti}, we get that \cref{item:pidprime2} implies \begin{equation*}
    \Pr[\mathsf{PI}_{y, \mathcal{D}_{''}}\cdot\xi \leq b_{y,j^*} - \frac{3\gamma}{32k}] \leq 2\cdot 2^{-\lambda}\cdot 2^{-\lambda^{0.3\cdot {\constmoecoll}}}.
\end{equation*}
which is a contradiction to \cref{item:pidprimeprime2}. Hence, $\expfromddd{\mathcal{D}'}{y}'' \not\approx^c_{\nu} \expfromddd{\mathcal{D}''}{y}''$. Then, using the same extraction argument we used for the first register, by the security of $\mathsf{CCObf}$, we get that there exists an adversary $\adve_2'$ such that it outputs the correct coset vectors with probability at least $ 2^{-\lambda^{0.2\cdot {\constmoecoll}}}$ \emph{conditioned on $\adve'_1$ outputting correct coset vectors}.  This shows that $(\adve'_0, \adve_1', \adve_2')$ wins $\mathcal{G}$ with probability ${2^{-0.4\cdot\lambda^{\constmoecoll}}}.$
\end{proof}

We have shown that there is an adversary $(\adve_0', \adve_1', \adve_2')$ that wins the game $\mathcal{G}$ with probability $${1}/{2^{0.4\cdot\lambda^{\constmoecoll}}}.$$
Finally, we show that we can construct an adversary $(\adve_0'', \adve_1'', \adve_2'')$ that can win $\moecollsel$.

\begin{claim}\label{claim:pkereducetome}
There exists efficient $\adve'' = (\adve_0'', \adve_1'', \adve_2'')$ such that
       \begin{equation*}
           \Pr[\moecollsel(\lambda, \idlen(\lambda), \adve'') = 1] \geq {2^{-0.4\cdot\lambda^{\constmoecoll}}}.
       \end{equation*}
\end{claim}
\begin{proof}
    $\adve_0''$ simulates both the challenger of $\mathcal{G}$ and the adversary $\adve'_0$ as follows. It first samples the random collision-free identity strings $id_1 < \dots < id_k$, and the random index $j^*$ (which it outputs to its challenger); and also $cpk, csmk \samp \mathsf{IBE.Setup}(1^\lambda)$. Then, it sets $pk = (cpk, \mathsf{OPMem})$ where it obtains $\mathsf{OPMem}$ from its challenger. Then, whenever $\adve'_0$ makes a key query for $i \in [k]$, it queries its own challenger for the coset state associated with $id_{\alpha(i)}$. It also samples $ck$ as in $\mathsf{PKE.QKeyGen}$ using $cmsk$, and submits the coset state, $id_{\alpha(i)}$ and $ck$ to $\adve_0'$. Finally, when $\adve'_0$ yields a bipartite register, $\adve_0''$ outputs it.

    We define $\adve_1''$ so that it simulates $\adve_1'$ and make no queries during the second query phase. $\adve_2''$ is defined similarly for $\adve_2'$.
    It is easy to see that $\adve''$ playing $\moecollsel$ perfectly simulates $\mathcal{G}$ as played by $\adve'$, hence $\adve''$ wins with probability ${2^{-0.4\cdot\lambda^{\constmoecoll}}}$.
\end{proof}
This completes the security proof, since the above is a contradiction to \cref{defn:strmoecoll}.
\section{Public-Key Functional Encryption with Copy-Protected Functional Keys}\label{sec:mainfe}
In this section, we formally define functional encryption with copy-protected functional keys. Then, we give a construction based on coset states and prove it secure.

We note that Kitagawa and Nishimaki \cite{KN22} define a simpler model of functional encryption with copy-protected functional keys and give a secure construction with respect to their model. In their model, the adversary can query for any number of functional keys, but only one can be in copy-protected mode. In turn, the adversary only outputs two freeloaders (i.e., only $1 \to 2$ copy-protection is considered). Further, the freeloader adversaries are not allowed to query for more keys after getting their challenge ciphertexts; that is, the adversaries are not fully adaptive.

\subsection{Definitions}\label{sec:fedefnmain}

 An informal overview of our security model is as follows. The piracy adversary will be allowed to adaptively query for classical (i.e., not copy-protected) and copy-protected (i.e. quantum) functional keys. At the end of this first query phase, the adversary will produce a pair of challenge messages $m^0, m^1$ and $k + 1$ registers (\textit{freeloaders}) where $k$ is the number of copy-protected keys obtained by it. After this split, the challenger presents the freeloaders each with a challenge ciphertext. Finally, after receiving the challenge ciphertexts, freeloaders can query for more functional keys, and they output their guess at the end. 

We will also require the following for the challenge message pair $m^0, m^1$ and the functions queried. First, we require that $f(m^0) = f(m^1)$ for all functions $f$ queried by the pirate in the \emph{classical mode}. This is required since, otherwise, the pirate can give all the freeloaders the classical key $sk_f$, and they can decrypt their challenge ciphertexts with this key to distinguish $\mathsf{Enc}(m_0)$ vs $\mathsf{Enc}(m_1)$. Second, for the same reason as above, we require that a freeloader can query a key for $f$ only if $f(m^0) = f(m^1)$. Note that these requirements are the same as the classical FE game (\cref{defn:fe}). Importantly, we will not require anything for functional keys that were obtained in the \emph{copy-protected mode} by the pirate adversary before the split. Thus, our security guarantee will allow $k$ out of the $k+1$ freeloaders to possibly use these copy-protected functional keys to decrypt their challenge ciphertexts. However, it should not be possible for all $k + 1$ registers to use these copy-protected keys simultaneously. 

We also define our model so that copy-protected functional keys are generated given only a classical functional key, without any extra information. Therefore, we do not need to separately require that a copy-protected key for $f$ allows no more than obtaining $f(m)$ given $\mathsf{Enc}(m)$, which is already implied by the regular functional encryption security.

\begin{definition}[Public-key Functional Encryption with Copy-Protected Secret Keys]
A public-key functional encryption scheme with copy-protected secret keys is a public-key functional encryption scheme (\cref{defn:fe}) with the following additional algorithm and guarantee.

\begin{itemize}
    \item $\qkeygen(fk)$: Takes as input a classical functional key, outputs a quantum secret key.
\end{itemize}

We require correctness\footnote{While our schemes satisfy perfect correctness, i.e., correctness with probability $1$, some work relax the definition to $1 - \negl(\lambda)$.} for the quantum functional keys.
\paragraph{Correctness} For all messages $m \in \mathcal{M}$, \begin{equation*}
    \Pr[\mathsf{Dec}(\regi{dec}, ct) = f(m) : \begin{array}{c}
         pk, msk \samp \mathsf{Setup}(1^\lambda)  \\
         sk_f \samp \keygen(msk, f) \\
         \reg_f \samp \qkeygen(sk_f) \\
         ct \samp \mathsf{Enc}(pk, m)
    \end{array}] = 1.
\end{equation*}
\end{definition}
As discussed in \cref{sec:pke}, correctness of the scheme along with \cref{lem:asgoodasnew} means that we can implement decryption in a way such that the quantum functional key is not disturbed. Thus, we can reuse the key to decrypt any number of times.

Similar to public-key encryption, we give a CPA-style anti-piracy security definition.

Let $\mathfrak{F} = \{\mathfrak{F}_\lambda\}_\lambda$ be a family of functions. We define anti-piracy security for $\mathfrak{F}$ as follows.
\begin{definition}[CPA-Style Regular Anti-Piracy Security for Functional Encryption]\label{defn:unclonfe}
Consider the following game between the challenger and an adversary $\adve$.
\paragraph{$\underline{\unclonfegame(\lambda, \adve)}$}
\begin{enumerate}
    \item The challenger runs $msk, pk \samp \fe.\mathsf{Setup}(1^\lambda)$ and submits $pk$ to the adversary. It also initializes the set $\fclass = \emptyset$.
    \item \textbf{Query Phase 1:} For multiple rounds, the adversary adaptively submits a function $f \in \mathfrak{F}$ and a query type, either $\mathsf{CLASSICAL}$ or $\mathsf{PROTECTED}$. For each $f$, the challenger does the following. It first computes $sk_f \samp \mathsf{FE}.\mathsf{KeyGen}(msk, f)$. 
    
    Then, if the query type is $\mathsf{CLASSICAL}$, it adds $f$ to $\fclass$ and submits $sk_f$ to the adversary. 
    
    Otherwise, it computes $\reg_f \samp \fe.\qkeygen(sk_f)$ and submits $\reg_f$ to the adversary.
    \item \textbf{Split Phase:}  The adversary outputs a pair of challenge messages $m^0, m^1$ and a $(k + 1)$-partite register $\regi{adv}$ (where $k$ is the number of queries of the type $\mathsf{PROTECTED}$), each part of the register being an interactive freeloader adversary that will be executed using a universal circuit. The challenger checks if  $f(m^0) = f(m^1)$ for all $f \in \fclass$. If not, it outputs $0$ and terminates.
    \item \textbf{Challenge Phase:} For each $\ell \in [k + 1]$, the challenger samples $b_\ell \samp \zo$, then computes $ct_\ell \samp \fe.\mathsf{Enc}(pk, m^{b_\ell})$ and sends $ct_\ell$ to the $\ell$-th freeloader.
    \item \textbf{Query Phase 2:} The challenger interacts with each of the $k + 1$ freeloaders, using a universal circuit, for multiple rounds as follows. The freeloader $\ell \in [k + 1]$ adaptively submits a function $f \in \mathfrak{F}$ and a query type, either $\mathsf{CLASSICAL}$ or $\mathsf{PROTECTED}$. For each query $f$, the challenger answers with $\perp$ if $f(m^0) \neq f(m^1)$. Otherwise, it does the following. It first computes $sk_f \samp \mathsf{FE}.\mathsf{KeyGen}(msk, f)$. 
    If the query type is $\mathsf{CLASSICAL}$, the challenger submits $sk_f$ to the adversary $\ell$. If the query type is $\mathsf{PROTECTED}$, the challenger computes $\reg_f \samp \fe.\qkeygen(sk_f)$ and submits $\reg_f$ to the adversary.

    \item For $\ell \in [k + 1]$, the challenger submits $ct_\ell$ to the $\ell$-th freeloader to obtain a guess $b'_\ell$. Then, it checks if $b'_\ell = b_\ell$ for all $\ell \in [k + 1]$. It outputs $1$ if and only if all the check pass.
\end{enumerate}
    We say that a public key functional encryption scheme $\fe$ with copy-protected secret keys satisfies $\gamma$-anti-piracy security if for any QPT adversary $\adve$,
    \begin{equation*}
    \Pr[\unclonfegame(\lambda, \adve) = 1] \leq \frac{1}{2} + \gamma(\lambda) + \negl(\lambda).
    \end{equation*}
    We omit indicating $\gamma$ explicitly when $\gamma = 0$.
\end{definition}

We make some remarks about this definition. First, notice that if a construction satisfies $\gamma$-anti-piracy for any inverse polynomial $\gamma$, then it also satisfies it for $\gamma = 0$, simply because of the added $\negl(\lambda)$ term above. Second, note that ($\gamma=0-$)anti-piracy security trivially implies regular functional encryption security: an adversary for the latter corresponds to an adversary for the former that only makes queries of the type $\mathsf{CLASSICAL}$.

\subsection{Construction}\label{sec:unclonfecons}
In section, we give our construction of a functional encryption scheme with copy-protected keys for the class of functions $\mathfrak{F}$ defined as all circuits that are of size at most $\circsize$, where $\circsize$ is any fixed polynomial. The construction is highly similar to our public-key encryption construction. The main difference is that a functional key for a function $f$ will consist of an $\mathsf{IBE}$ key for $id || f$ where $id$ is a random string.

Assume the existence of following primitives where we set $\nu(\lambda) = 2^{-5\lambda-\circsize}\cdot2^{-8\lambda^{0.3\constmoecoll}}$.
\begin{itemize}
    \item $\io$, indistinguishability obfuscation scheme that is $\nu(\lambda)$-secure against $2^{5\lambda}\cdot2^{8\lambda^{0.3\constmoecoll}}$-time adversaries,
    \item $\mathsf{IBE}$, identity-based encryption scheme with puncturable master secret keys (\cref{defn:puncibe}) and deterministic $\mathsf{KeyGen}$ that satisfies strong punctured key correctness (\cref{defn:strongpuncibe}), for the identity space $\mathcal{ID} = \zo^{\circsize + \lambda}$ that is $\nu(\lambda)$-secure against $2^{5\lambda}\cdot2^{8\lambda^{0.3\constmoecoll}}$-time adversaries,
    \item $F_1$, puncturable PRF family with input length $\circsize + \lambda$ and output length same as the size of the randomness used by $\mathsf{CosetGen}$ (\cref{defn:cosetgen}) that is $\nu(\lambda)$-secure against $2^{5\lambda}\cdot2^{8\lambda^{0.3\constmoecoll}}$-time adversaries,
    \item $F_2$, puncturable PRF family with input length $\circsize + \lambda$ and output length same as the size of the randomness used by $\mathsf{IBE.Enc}$ that is $\nu(\lambda)$-secure against $2^{5\lambda}\cdot2^{8\lambda^{0.3\constmoecoll}}$-time adversaries,
    \item $\mathsf{CCObf}$, compute-and-compare obfuscation for $2^{-\lambda^{0.2\cdot {\constmoecoll}}}$-unpredictable distributions that is $2^{-2\lambda - 1}\cdot 2^{-2\lambda^{0.3\constmoecoll}}$-secure against $2^{3\lambda}\cdot 2^{2\lambda^{0.3\constmoecoll}}$-time adversaries,
\end{itemize}

Similar to our public-key encryption scheme, while we assume exponential security of the above primitives for specific exponents, these assumptions can be based only on subexponential hardness for some exponent, since we can always scale the security parameter by a polynomial factor.

Also, set $L(\lambda) = Q(\lambda) + \lambda$ and hence $\cosettcount = 3\cdot( Q(\lambda) + 2\lambda)^3$.

We now give our construction. Below, assume that all programs that are obfuscated are appropriately padded. 
\paragraph{$\underline{\mathsf{FE.Setup}(1^{\lambda})}$}
\begin{enumerate}
    \item Sample a PRF key $K_1 \samp F_1.\mathsf{KeyGen}(1^\lambda)$.
    \item Sample $cpk, csmk \samp \mathsf{IBE.Setup}(1^{\lambda})$.
    \item Sample $\mathsf{OPMem} \samp \io(\mathsf{PMem}_{K_1})$, where $\mathsf{PMem}_{K_1}$ is the following program.
    
\begin{mdframed}\label{code:fepmemorig}
        {\bf $\underline{\mathsf{PMem}_{K_1}(id || f, u_1, \dots, u_{\cosettcount}, r)}$}
        
        {\bf Hardcoded: $K_1$}
        \begin{enumerate}[label=\arabic*.]
            \item $(A_i, s_i, s_i')_{i \in [\cosettcount]} \samp \mathsf{CosetGen}(\cosetgenparam; F_1(K_1, id || f))$.
            \item For each $i \in [\cosettcount]$, check if $u_i \in A_i + s_i$ if $(r)_i = 0$ and check if $u_i \in A^{\perp}_i + s'_i$ if $(r)_i = 1$. If any of the checks fail, output $0$ and terminate.
            \item Output $1$.
        \end{enumerate}

    \end{mdframed}

    \item Set $pk = (cpk, \mathsf{OPMem})$, $msk = (cmsk, K_1)$.
    \item Output $(pk, msk)$.
\end{enumerate}

\paragraph{$\underline{\mathsf{FE.KeyGen}(msk, f)}$}
\begin{enumerate}
    \item Parse $(cmsk, K_1) = msk$.
    \item Sample $id \samp \zo^{\lambda}$.
    \item Sample $ck \samp \mathsf{IBE.KeyGen}(cmsk, id || f)$.
    \item $(A_i, s_i, s_i')_{i \in [\cosettcount]} = \mathsf{CosetGen}(\cosetgenparam; F_1(K_1, id || f))$.
    \item Output $(ck, id, f, (A_i, s_i, s_i')_{i \in [\cosettcount]})$.
\end{enumerate}

\paragraph{$\underline{\mathsf{FE.QKeyGen}(fk)}$}
\begin{enumerate}
    \item Parse $(ck, id, f, (A_i, s_i, s_i')_{i \in [\cosettcount]}) = fk$.
    
    \item Output $\left(\ket{A_{i, s_i, s'_i}}\right)_{i \in [\cosettcount]}, ck, id, f$.
\end{enumerate}

\paragraph{$\underline{\mathsf{FE.Enc}(pk, m)}$}
\begin{enumerate}
    \item Parse $(cpk, \mathsf{OPMem}) = pk$.
    \item Sample $r \samp \zo^{\cosettcount}$.
    \item Sample a PRF key $K_2$ for $F_2$ as $K_2 \samp F_2.\mathsf{KeyGen}(1^{\lambda})$.
    \item Sample $\mathsf{OPCt} \samp \io(\mathsf{PCt}_{\mathsf{OPMem}, cpk, K_2, r, m})$, where $\mathsf{PCt}_{\mathsf{OPMem}, cpk, K_2, r, m}$ is the following program.
 \begin{mdframed}
        {\bf $\underline{\mathsf{PCt}_{\mathsf{OPMem}, cpk, K_2, r, m}(id || f, u_1, \dots, u_{\cosettcount})}$}
        
        {\bf Hardcoded: $\mathsf{OPMem}, cpk, K_2, r, m$}
        \begin{enumerate}[label=\arabic*.]
            \item Run $\mathsf{OPMem}(id || f, u_1, \dots, u_{\cosettcount}, r)$. If it outputs $0$, output $\perp$ and terminate.
            \item Output $\mathsf{IBE.Enc}(cpk, id || f, f(m); F_2(K_2, id || f))$.
        \end{enumerate}
    \end{mdframed}
    \item Output $(\mathsf{OPCt}, r)$.
\end{enumerate}

\paragraph{$\underline{\mathsf{FE.Dec}(\regi{key}, ct)}$}
\begin{enumerate}
    \item Parse $((\reg_i)_{i \in [\cosettcount]}, ck, id, f) = \regi{key}$ and $(\mathsf{OPCt}, r) = ct$.
    \item For indices $i \in [\cosettcount]$ such that $(r)_i = 1$, apply $H^{\otimes\kappa(L(\lambda)+\lambda)}$ to $\reg_i$.
    \item Run the program $\mathsf{OPCt}$ coherently on $id, f$ and $(\reg_i)_{i \in [\cosettcount]}$.
    \item Measure the output register and denote the outcome by $cct$.
    \item Output $\mathsf{IBE.Dec}(ck, cct)$.
\end{enumerate}

Correctness with probability $1$ follows in a straightforward manner from the correctness of the underlying schemes. We claim that the construction is also secure.
\begin{theorem}\label{thm:feantipiracy}
 $\mathsf{FE}$ satisfies $\gamma$-anti-piracy (\cref{defn:unclonfe}) for any inverse polynomial $\gamma$.
\end{theorem}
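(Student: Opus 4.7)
The plan is to adapt the security proof of the PKE construction (Section on PKE proof) to the functional encryption setting, with two new technical ingredients: (i) invoking the puncturable IBE scheme from Section on puncturable IBE to make IBE security reductions go through inside the hybrids, and (ii) using the punctured master secret key technique (built using our puncturable FE construction) to eliminate the interaction during Query Phase 2 of the freeloaders.

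First, I would define a strong anti-piracy variant $\strongunclonfegame$ analogous to $\strongunclonpkegame$, where the challenger replaces the sampling-and-comparison step with threshold implementations $\mathsf{TI}_{\ell, \mathcal{D}, 1/2+\gamma}$ applied to the freeloader registers, and show (analogously to the PKE case) that regular $\gamma$-anti-piracy follows from strong $\gamma$-anti-piracy for any inverse polynomial $\gamma$. Next, I would eliminate the interactive Query Phase 2: in a hybrid we would upgrade our scheme to hand each freeloader a classical key $pmsk$ obtained by puncturing the FE master secret key at $(m^0, m^1)$, which allows the freeloader to self-service any admissible key query. By strong punctured key correctness of our puncturable FE (the IBE component has deterministic $\mathsf{KeyGen}$ and the punctured IBE master key produces identical keys on non-punctured identities), this modification is indistinguishable from the original game. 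From here on the freeloader is a single-round circuit and the projective implementations $\mathsf{PI}_{\ell, \mathcal{D}}, \mathsf{API}^{\eps,\delta}_{\ell, \mathcal{D}}$ from Section on PI/TI apply directly, with the ciphertext distribution augmented to also include $pmsk$.

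From this point the proof mirrors the PKE proof essentially verbatim, with identity strings $id$ replaced by concatenations $id || f$ throughout. I would: make the identities $id_j$ sampled during PROTECTED queries unique; swap threshold implementations for approximate threshold implementations to make the challenger efficient; define distributions $\mathcal{D}_j$ indexed by the sorted set of queried identities, where the challenge ciphertext program outputs $\mathsf{IBE.Enc}(cpk, id || f, \failmes; F_2(K_2, id || f))$ for $(id || f) < (id_j || f_j)$ and the honest value otherwise; show that the collection $\{\mathcal{D}_j\}$ \emph{collapses} away from the $k$ queried identity-function pairs (so that the only jumps in $\mathsf{API}$ outcomes can happen at these $k$ indices); apply pigeonhole to the $k+1$ freeloaders to find two of them with a simultaneous jump at some $j^*$; and finally extract MoE vectors from the two registers via compute-and-compare obfuscation, leveraging \cref{thm:distti,thm:simulproj} for simultaneous extraction. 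The extracted vectors break $\moecoll$ with identity length $L(\lambda) = Q(\lambda) + \lambda$.

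The main obstacle will be the hybrid chain inside the collapse argument, specifically moving from $\mathcal{D}_j^{(\Delta, 3)}$ to $\mathcal{D}_j^{(\Delta, 4)}$ (the step that invokes IBE security at the identity $id_j + \Delta$ interpreted as $id || f$). Unlike in the PKE proof, the adversary now holds $pmsk$, which internally contains the IBE master secret key. To invoke $\mathsf{IBE}$ indistinguishability at the specific identity $id_j + \Delta$, we must replace the IBE master secret key inside $pmsk$ with a key punctured at that identity; this is precisely what puncturable IBE (\cref{defn:puncibe}, constructed in Section on puncturable IBE) enables, since puncturing at $id || f$ does not affect the adversary's view of any other identity key, and strong punctured key correctness (\cref{defn:strongpuncibe}) guarantees identical distributions on all admissible $f' \neq f$. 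The PRF puncturing for $K_2$ proceeds as in the PKE proof. Combining the two punctured-key arguments (IBE and PRF) at each hybrid step, together with the subexponential complexity leveraging needed to walk through all $2^{Q(\lambda) + \lambda}$ identity-function pairs via a $\mathcal{D}_j^{(\Delta, \cdot)}$ chain, yields the contradiction with $\moecoll$ and hence proves the theorem.
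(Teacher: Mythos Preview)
Your overall architecture matches the paper's: define a strong anti-piracy game with threshold implementations, remove Query Phase~2 by handing each freeloader an obfuscated punctured key program $pmsk$, then replay the PKE argument with $id$ replaced by $id\|f$, collapsing the exponential hybrid chain via subexponential IBE/PRF/$\io$ security and finishing with simultaneous compute-and-compare extraction to break $\moecoll$.

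There is, however, one genuine gap. In your intermediate distributions $\mathcal{D}_j$ you set the ``erased'' branch of $\mathsf{PCt}$ to output $\mathsf{IBE.Enc}(cpk,id\|f,\failmes;\ldots)$, exactly as in the PKE proof. This breaks the collapse step $\mathcal{D}_j^{(\Delta,3)}\to\mathcal{D}_j^{(\Delta,4)}$ at those indices $id_j\|f_j+\Delta$ whose function part $f'$ satisfies $f'(m^0)=f'(m^1)$. At such an index the two encrypted values $f'(m^b)$ and $\failmes$ differ, so you must invoke IBE security and hence puncture $cmsk$ at $id_j\|f_j+\Delta$. But $f'(m^0)=f'(m^1)$ means the program $\mathsf{PKey}$ inside $pmsk$ does \emph{not} output $\perp$ at this identity; it outputs $\mathsf{IBE.KeyGen}(cmsk,id_j\|f_j+\Delta)$. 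Replacing $cmsk$ by the punctured $cmsk'$ therefore changes the functionality of $\mathsf{PKey}$ at this single input, and $\io$ security no longer lets you swap the obfuscated program. Your remark that strong punctured key correctness ``guarantees identical distributions on all admissible $f'\neq f$'' is correct but does not cover the point $f'=f$ itself, which here is admissible.

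The paper sidesteps this by using $f(m^{1-b})$ instead of $\failmes$ in the erased branch. Then the step $\mathcal{D}_j^{(\Delta,3)}\to\mathcal{D}_j^{(\Delta,4)}$ splits into two cases: if $f'(m^0)=f'(m^1)$ the two encrypted values coincide and the step is trivial; if $f'(m^0)\neq f'(m^1)$ then $\mathsf{PKey}$ already outputs $\perp$ at this identity, so puncturing $cmsk$ there leaves $\mathsf{PKey}$ unchanged and the IBE reduction goes through. The endpoint $\mathcal{D}_{k+1}$ then encrypts $m^{1-b}$ while the challenge bit is $b$, which still forces success probability at most $1/2$ by symmetry. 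Once you make this change, the rest of your plan is essentially the paper's proof.
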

When we instantiate the assumed primitives with known constructions, we get the following corollary.
\begin{corollary}
Assuming subexponentially secure $i\mathcal{O}$ and subexponentially secure LWE, there exists a public-key functional encryption scheme that satisfies anti-piracy security against unbounded collusion.
\end{corollary}
\subsection{Proof of Anti-Piracy}\label{sec:feproof}
Proof will closely follow the strong anti-piracy security proof for our public-key encryption construction in \cref{sec:pkeproof}, which crucially relies on projective implementations to simultaneously extract vectors from all registers to obtain a reduction to the monogamy-of-entanglement game. However, since the freeloaders in the functional encryption security game are interactive as opposed to the ones in regular public-key encryption, we cannot use projective implementations directly. Therefore, we first make the post-challenge-ciphertext steps non-interactive by providing the freeloaders with a punctured master secret key $pmsk$ that lets them issue their own functional keys, as long as $f(m^0) = f(m^1)$. 

We give the following two definitions, specific to our construction $\fe$. Recall that we also assume that $\mathsf{IBE.KeyGen}$ is deterministic, which is true for the construction we give in \cref{sec:consibe}.

We start with the post-challenge-non-interactive \emph{regular} anti-piracy definition. It is defined similar to \cref{defn:regularpkeantipir}.
\begin{definition}[CPA-Style Post-Challenge-Ciphertext-Non-interactive Anti-Piracy Security for $\fe$]\label{defn:feni}
    Consider the following game between the challenger and an adversary $\adve$.

    \paragraph{$\underline{\nife(\lambda, \adve)}$}
    \begin{enumerate}
        \item The challenger runs $msk, pk \samp \fe.\mathsf{Setup}(1^\lambda)$ and submits $pk$ to the adversary. It also initializes the set $\fclass$. It parses $(cmsk, K_1) = msk$.
        \item \textbf{Query Phase 1:} For multiple rounds, the adversary adaptively submits a function $f \in \mathfrak{F}$ and a query type, either $\mathsf{CLASSICAL}$ or $\mathsf{PROTECTED}$. For each $f$, the challenger does the following. It first computes $sk_f \samp \mathsf{FE}.\mathsf{KeyGen}(msk, f)$. 
        
        Then, if the query type is $\mathsf{CLASSICAL}$, it adds $f$ to $\fclass$ and submits $sk_f$ to the adversary. 
        
        Otherwise, it computes $\reg_f \samp \fe.\qkeygen(sk_f)$ and submits $\reg_f$ to the adversary. 

        \item The adversary outputs a pair of challenge messages $m^0, m^1$, a $(k + 1)$-partite register $\regi{adv}$ (where $k$ is the number of queries of the type $\mathsf{PROTECTED}$) and freeloader unitaries $\{U_{\ell}\}_{\ell \in [k + 1]}$.
\item The challenger checks if  $f(m^0) = f(m^1)$ for all $f \in \fclass$. If not, it outputs $0$ and terminates. 

Otherwise, the challenger computes $pmsk \samp \io(\mathsf{PKey}_{cmsk, K_1})$.

\begin{mdframed}
        {\bf $\underline{\mathsf{PKey}_{cmsk, K_1}(id || f)}$}
        
        {\bf Hardcoded: $cmsk, K_1, m^0, m^1$}
        \begin{enumerate}[label=\arabic*.]
          \item Check if $f(m^0) = f(m^1)$. If not, output $\perp$ and terminate.
        \item Compute $ck = \mathsf{IBE.KeyGen}(cmsk, id || f)$.
        \item $(A_i, s_i, s_i')_{i \in [\cosettcount]} = \mathsf{CosetGen}(\cosetgenparam; F_1(K_1, id || f))$.
    \item Output $(ck, id, f, (A_i, s_i, s_i')_{i \in [\cosettcount]})$.
        \end{enumerate}

    \end{mdframed}

    \item For $\ell \in [k + 1]$, the challenger executes $b'_{\ell} \samp \qunivcla(U_\ell, \regi{adv}[\ell], ct_\ell, pmsk)$. Then, it checks if $b'_\ell = b_\ell$ and if $f(m^0) = f(m^1)$ for all $f \in \mathcal{F}_\ell$. It outputs $1$ if and only if all the check pass.
    \end{enumerate}

We say $\fe$ satisfies post-challenge-ciphertext non-interactive $\gamma$-anti-piracy security if for any QPT adversary, 
\begin{equation*}
     \Pr[\nife(\lambda, \gamma(\lambda), \adve) = 1] \leq \frac{1}{2} + \gamma(\lambda) + \negl(\lambda).
\end{equation*}
\end{definition}

We now define decryptor testing and strong anti-piracy.
\begin{definition}[Functional Encryption Decryptor Testing]\label{defn:goodecrfunc}
    In the anti-piracy game between the challenger and an adversary, fix $\ell \in [k + 1]$, some values $m^0, m^1$ of the challenge messages, a freeloader unitary $U_\ell$ and some value $st$ of a classical state of the challenger (which will be defined later). Let $\mathcal{D}$ be an efficient ciphertext and punctured master secret key distribution that can depend on $st$. That is, $\mathcal{D}^{st}(m; r)$ is an efficient classical algorithm where $m \in \mathcal{M}$, $r \in \mathcal{R}$ and $\mathcal{R}$ is a random coin set. 
    
    Consider the following mixture of binary \emph{projective measurements} $\mathcal{P}$, induced by $\mathcal{D}$ and $m^0, m^1,b U_\ell st$, applied on a state $\rho$.

    \begin{enumerate}
        \item Sample $b \samp \zo$.
        \item Sample $r \samp \mathcal{R}$.
        \item Run $ct, pmsk \samp \mathcal{D}^{st}(m^b; r)$.
        \item Execute $U_\ell$ on $(\rho, pmsk, ct)$, and measure the first qubit of the output register, let $b'$ be the output.
        \item Output 1 if $b' = b$. Otherwise, output $0$.
    \end{enumerate}

    Observe that we can efficiently execute the above measurement for arbitrary given superpositions of $r$ and $b$ values. Therefore, by \cref{sec:piti}, there exists exact and efficient approximated projective and threshold implementations for $\mathcal{P}$. We write $\mathsf{PI}_{\ell, \mathcal{D}}$ and $\mathsf{API}^{\eps, \delta}_{\ell, \mathcal{D}}$ to denote the projective implementation and approximate projective implementation of $\mathcal{P}$, respectively. Similarly, let $\mathsf{TI}_{\ell, \mathcal{D}, \eta}$ and $\mathsf{ATI}^{\eps, \delta}_{\ell, \mathcal{D}, \eta}$ denote the threshold and efficient approximate threshold implementations of $\mathcal{P}$ for a threshold value $\eta$. 
    
    The fixed values $m^0, m^1, U_\ell, st$, omitted in the notation, will be clear from the context. Unless otherwise specified, we will write $\mathcal{D}$ to denote the honest distribution where the ciphertext is sampled as \begin{equation*}
        ct \samp \fe.\mathsf{Enc}(pk, m)
    \end{equation*}
    and $pmsk$ is sampled as in \cref{defn:feni}, where $pk$ is part of $st$.
\end{definition}

\begin{definition}[CPA-Style Strong Anti-Piracy Security for $\fe$]\label{defn:strongapfe}
    Consider the following game between the challenger and an adversary $\adve$.

    \paragraph{$\underline{\strongunclonfegame(\lambda, \gamma, \adve)}$}
    \begin{enumerate}
        \item The challenger runs $msk, pk \samp \fe.\mathsf{Setup}(1^\lambda)$ and submits $pk$ to the adversary. It also initializes the set $\fclass$. It parses $(cmsk, K_1) = msk$.
        \item \textbf{Query Phase 1:} For multiple rounds, the adversary adaptively submits a function $f \in \mathfrak{F}$ and a query type, either $\mathsf{CLASSICAL}$ or $\mathsf{PROTECTED}$. For each $f$, the challenger does the following. It first computes $sk_f \samp \mathsf{FE}.\mathsf{KeyGen}(msk, f)$. 
        
        Then, if the query type is $\mathsf{CLASSICAL}$, it adds $f$ to $\fclass$ and submits $sk_f$ to the adversary. 
        
        Otherwise, it computes $\reg_f \samp \fe.\qkeygen(sk_f)$ and submits $\reg_f$ to the adversary.
 \item The adversary outputs a pair of challenge messages $m^0, m^1$, a $(k + 1)$-partite register $\regi{adv}$ (where $k$ is the number of queries of the type $\mathsf{PROTECTED}$) and freeloader unitaries $\{U_{\ell}\}_{\ell \in [k + 1]}$.

\item  The challenger checks if  $f(m^0) = f(m^1)$ for all $f \in \fclass$. If not, it outputs $0$ and terminates.

        \item The challenger applies the test
        \begin{equation*}
        \bigotimes_{\ell \in [k+1]} \mathsf{TI}_{\ell, \mathcal{D}, 1/2+\gamma}
        \end{equation*}
        to $\reg$ and outputs $1$ if and only if all the measurement results are $1$.
  
    \end{enumerate}

We say $\fe$ satisfies strong $\gamma$-anti-piracy security if for any QPT adversary, 
\begin{equation*}
     \Pr[\strongunclonfegame(\lambda, \gamma(\lambda), \adve) = 1] \leq \negl(\lambda).
\end{equation*}
\end{definition}

We first prove that the stronger definition implies the regular anti-piracy security (\cref{defn:unclonfe}).
\begin{theorem}\label{thm:festrongtoregular}
Suppose $\fe$ satisfies strong $\gamma$-anti-piracy security. Then, it also satisfies regular $\gamma$-anti-piracy security.
\end{theorem}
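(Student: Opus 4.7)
My plan is to show the contrapositive: if a QPT adversary $\adve$ wins $\unclonfegame(\lambda,\adve)$ with probability at least $1/2+\gamma+1/p(\lambda)$ for some polynomial $p$ and infinitely many $\lambda$, then I will construct an efficient $\adve'$ that wins $\strongunclonfegame(\lambda,\gamma,\adve')$ with non-negligible probability, contradicting strong $\gamma$-anti-piracy. The adversary $\adve'$ will run $\adve$ verbatim through Query Phase 1, forwarding every quantum and classical key query to its own challenger, and then output $\adve$'s challenge messages $m^0,m^1$ together with a modified $(k+1)$-partite register $\regi{adv}'$. Each subregister $\regi{adv}'[\ell]$ will consist of a \emph{non-interactive} wrapper around the $\ell$-th original freeloader: upon receiving its input $(ct_\ell, pmsk)$ from the strong game's test distribution $\mathcal{D}$ (see \cref{defn:goodecrfunc}), the wrapper internally simulates whatever Query Phase 2 the original freeloader would conduct, using $pmsk$ to answer both $\mathsf{CLASSICAL}$ and $\mathsf{PROTECTED}$ queries. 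Concretely, to answer a query for $f$ it samples a fresh $id \samp \zo^\lambda$, evaluates $pmsk(id\|f)$ to obtain a classical functional key, and (for $\mathsf{PROTECTED}$) applies $\fe.\qkeygen$ to it.

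The first key step is verifying that this simulation is perfect whenever the freeloader's queries satisfy $f(m^0)=f(m^1)$. Because $\mathsf{IBE.KeyGen}$ is deterministic and $(A_i,s_i,s_i')_{i\in[\cosettcount]}$ is a deterministic function of $id\|f$ via $F_1(K_1,\cdot)$, the classical functional key produced by $\mathsf{FE.KeyGen}(msk,f)$ is exactly the distribution (over the random choice of $id$) of $\mathsf{PKey}_{cmsk,K_1}(id\|f)$ on that valid input; by correctness of $\io$, the obfuscated $pmsk$ produces the same outputs. Queries violating $f(m^0)=f(m^1)$ would cause the regular game to output $0$ anyway, so we may assume WLOG every query is valid. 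Consequently, the joint probability that all $k+1$ modified freeloaders guess their bits $b_\ell$ correctly on i.i.d.\ draws $(ct_\ell,pmsk_\ell)\samp \mathcal{D}(m^{b_\ell})$ equals $\adve$'s winning probability in the regular game, namely at least $1/2+\gamma+1/p(\lambda)$.

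The second step converts this joint success probability into a lower bound on $\Pr[\bigotimes_\ell \mathsf{TI}_{\ell,\mathcal{D},1/2+\gamma}\cdot\regi{adv}' = \vec{1}]$. Let $\mathcal{E}_\ell$ denote the POVM element ``freeloader $\ell$ decrypts its challenge correctly,'' whose projective implementation is $\mathsf{PI}_{\ell,\mathcal{D}}$ by \cref{defn:goodecrfunc} and \cref{thm:piprop}. Since the freeloaders act on disjoint subregisters with independent fresh challenges, the joint winning probability can be written as
\begin{equation*}
    \Pr[\text{all freeloaders win}] \;=\; \E_{\vec{p}\,\sim\,\bigotimes_\ell \mathsf{PI}_{\ell,\mathcal{D}}\cdot\regi{adv}'}\Bigl[\prod_{\ell=1}^{k+1} p_\ell\Bigr].
\end{equation*}
Let $T$ denote the event that every $p_\ell \geq 1/2+\gamma$; by definition of the threshold implementation, $\Pr[T]$ is exactly $\adve'$'s winning probability in the strong game. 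On the complement $\neg T$ some $p_\ell<1/2+\gamma$, so $\prod_\ell p_\ell < 1/2+\gamma$. Hence
\begin{equation*}
    \tfrac{1}{2}+\gamma+\tfrac{1}{p(\lambda)} \;\leq\; \Pr[T] + (\tfrac{1}{2}+\gamma)(1-\Pr[T]) \;\leq\; \tfrac{1}{2}+\gamma+\Pr[T],
\end{equation*}
yielding $\Pr[T]\geq 1/p(\lambda)$, the desired contradiction.

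The main obstacle I anticipate is not the PI/product argument, which is by now fairly standard, but rather the careful accounting in the first step: I have to ensure that the simulated functional keys delivered via $pmsk$ are statistically identical (not merely computationally close) to those the genuine challenger would hand back, including the joint distribution over the randomness of $id$ sampling, of $\mathsf{IBE.KeyGen}$, and of the coset-state randomness. This identity is exactly what the deterministic $\mathsf{IBE.KeyGen}$, the strong punctured key correctness of the underlying IBE, and the deterministic coset sampling from $F_1(K_1,\cdot)$ were engineered to guarantee, which is precisely why $pmsk$ was baked into $\mathcal{D}$ in \cref{defn:goodecrfunc} in the first place.
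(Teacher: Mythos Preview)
Your proposal is correct and follows essentially the same approach as the paper, just phrased as a contrapositive rather than a direct implication. Both arguments combine the same two ingredients: (i) using the obfuscated $pmsk$ to wrap each interactive freeloader into a non-interactive one whose Query-Phase-2 answers are identical to those of the honest challenger (by $\io$ correctness and determinism of $\mathsf{IBE.KeyGen}$ and $F_1$), and (ii) the projective-implementation identity $\Pr[\text{all win}]=\E[\prod_\ell p_\ell]$ together with the case split on whether all $p_\ell\geq 1/2+\gamma$.
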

\begin{proof}
We first show that strong $\gamma$-anti-piracy security implies post-challenge-ciphertext non-interactive $\gamma$-anti-piracy security, by generalizing an argument made by \cite{CLLZ21} for public-key encryption.

By the properties of projective implementations (\cref{thm:piprop}); in the game $\nife$, instead of applying the freeloader unitary and comparing the output to $b'_\ell$, if we apply the corresponding projective implementation (defined in \cref{defn:goodecrfunc}) to obtain a value $p_\ell$ and output a bit $a_\ell = 1$ with probability $p_\ell$, we get the correct output distribution for all registers simultaneously\footnote{When used directly, \cref{thm:piprop} would give us this for only single register at a time. However, note that the joint distribution is also correct since the projective implementations are correct for any input state, and hence we can consider the post-measurement state of any register conditioned on measurement outcomes of the other registers, and the projective implementation will still have the correct output distribution.}. Hence, we can equivalently execute the security game $\nife$ by applying these projective implementations, obtaining some $a_\ell$, and outputting $1$ if and only if $a_\ell = 1$ for all $\ell \in [k + 1]$.

Now, note that by construction of $\mathsf{TI}$ and by the assumption that $\fe$ satisfies strong $\gamma$-anti-piracy security, we have
\begin{equation*}
    \Pr[\forall \ell \in [k + 1]~~p_{\ell} \geq \frac{1}{2} + \gamma(\lambda)] \leq \negl(\lambda).
\end{equation*}

Then, by above,
\begin{align*}
    \Pr&[\nife(\lambda, \adve) = 1] = \E[p_1\cdots p_\ell] \\= &\Pr[\forall \ell \in [k + 1]~~p_{\ell} \geq \frac{1}{2} + \gamma(\lambda)]\cdot \E[p_1\cdots p_\ell | \forall \ell \in [k + 1]~~p_{\ell} \geq \frac{1}{2} + \gamma(\lambda)] + \\
    &\Pr[\exists \ell \in [k + 1]~~p_{\ell} < \frac{1}{2} + \gamma(\lambda)]\cdot \E[p_1\cdots p_\ell | \exists \ell \in [k + 1]~~p_{\ell} < \frac{1}{2} + \gamma(\lambda)] \\
    \leq & \negl(\lambda)\cdot 1 + 1\cdot(\frac{1}{2} + \gamma(\lambda)).
\end{align*}

This completes the proof that strong $\gamma$-anti-piracy security implies post-challenge-ciphertext non-interactive $\gamma$-anti-piracy security.

Now, we show that the latter implies regular $\gamma$-anti-piracy security. 
A freeloader adversary $\mathcal{B}'$ for $\nife$ can simulate a freeloader adversary $\mathcal{B}$ for the regular $\gamma$-anti-piracy as follows. Whenever $\mathcal{B}$ makes a query for a function $f$ in Query Phase 2, $\mathcal{B}'$ samples a random identity $id$ and evaluates $pmsk$ on $id, f$. Since $f$ satisfies $f(m^0) = f(m^1)$, by correctness of $\io$, it will be able to obtain the correct key. If query is of type $\mathsf{CLASSICAL}$, then $\mathcal{B}'$ submits the obtained classical key $fk$ to $\mathcal{B}$. Otherwise, it runs $\mathsf{FE.QKeyGen}$ on $fk$ and then submits the resulting quantum key. This perfectly simulates the regular anti-piracy game, hence post-challenge-ciphertext non-interactive $\gamma$-anti-piracy security implies regular anti-piracy security.
\end{proof}

\subsubsection*{Reducing to Monogamy-of-Entanglement}
\begin{theorem}
    $\fe$ satisfies strong $\gamma$-anti-piracy security for any inverse polynomial $\gamma$. 
\end{theorem}

Combining this theorem with the results from the previous section yields that our FE construction is secure. Proof of this theorem is almost identical to the proof of security of our public-key encryption scheme (\cref{sec:pkeproof}). Therefore, we will omit the proofs of some sub-claims.

Throughout the proof, we will interpret identity strings for $\mathsf{IBE}$, which are $\circsize + \lambda$-bit strings, as integers in the set $\{0, 1, \dots, 2^{\circsize + \lambda} - 1\}$.

Fix any inverse polynomial $\gamma(\lambda)$ and suppose for a contradiction that there exists an efficient adversary $\adve$ that wins the strong $\gamma$-anti-piracy game with non-negligible probability. Let $k$ denote the number of copy-protected keys obtained by the adversary. 
Define $\hyb_0$ to be the original security game $\strongunclonfegame(\lambda, \gamma(\lambda), \adve)$. 

Define $\hyb_1$ by modifying $\hyb_0$ as follows. When generating functional keys, we sample the random identity strings $id$ in a collision-free way. Further, at the end of the game, the challenger instead applies the test $\bigotimes_{\ell \in [k+1]} \mathsf{ATI}^{\eps,\delta}_{\ell, \mathcal{D}, 1/2+\frac{31\gamma}{32}}$ instead of $\mathsf{TI}$ where we set $\eps = \frac{\gamma}{32k}$ and $\delta = 2^{-10\lambda}\cdot2^{-10\lambda^{\constmoecoll}}$.

\begin{claim}
    $\Pr[\hyb_2 = 1] > 1/p(\lambda)$ for some polynomial $p(\cdot)$ and infinitely many values of $\lambda > 0$
\end{claim}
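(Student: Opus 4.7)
The plan is to show that the non-negligible winning probability against the original strong $\gamma$-anti-piracy game $\hyb_0$ transfers (up to negligible loss) through the modifications introduced to reach $\hyb_2$. Recall the modifications: (i) the identity strings attached to each protected functional key are re-sampled in a collision-free fashion rather than independently and uniformly, and (ii) the final $\mathsf{TI}$-based check with threshold $1/2+\gamma$ is replaced by the efficient $\mathsf{ATI}^{\eps,\delta}$-based check with the slightly relaxed threshold $1/2+\tfrac{31\gamma}{32}$, where $\eps = \gamma/(32k)$ and $\delta = 2^{-10\lambda}\cdot 2^{-10\lambda^{\constmoecoll}}$.

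For the identity-sampling modification, the plan is to argue exactly as in the public-key encryption proof of \cref{sec:pkeproof}: sampling $k(\lambda)$ identities i.i.d.\ uniformly from $\zo^\lambda$ yields $k$ distinct values with probability at least $1-k^2(\lambda)/2^\lambda$, so the statistical distance between the view of $\adve$ in $\hyb_0$ and in the intermediate game in which the identities are forced to be distinct is at most $\exp(-\lambda)$. Since the preamble checks (in particular the event $f(m^0)=f(m^1)$ for $f \in \fclass$) and the quantum registers produced by $\adve$ are the same deterministic function of the game transcript in both cases, the probability that any fixed final measurement outputs $1$ changes by at most $\exp(-\lambda)$.

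For the $\mathsf{TI} \to \mathsf{ATI}$ modification, the plan is to invoke the multi-register approximation \cref{thm:multiatiprop} applied to the $(k+1)$-partite state $\sigma$ produced by $\adve$ after Query Phase 1. With the choice $\eps=\gamma/(32k)$ the two thresholds line up, since $1/2+\tfrac{31\gamma}{32} = (1/2+\gamma)-\eps$, and the theorem gives
\[
\Tr\!\left[\Bigl(\bigotimes_{\ell \in [k+1]} \mathsf{ATI}^{\eps,\delta}_{\ell,\mathcal{D},\,1/2+\tfrac{31\gamma}{32}}\Bigr)\sigma\right] \;\ge\; \Tr\!\left[\Bigl(\bigotimes_{\ell \in [k+1]} \mathsf{TI}_{\ell,\mathcal{D},\,1/2+\gamma}\Bigr)\sigma\right] - (k+1)\cdot\delta.
\]
Because $k(\lambda)$ is polynomial and $\delta$ is (sub)exponentially small, the additive loss $(k+1)\delta$ is negligible. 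Chaining with the previous bound, the winning probability in $\hyb_2$ is at least the non-negligible winning probability in $\hyb_0$ minus $\exp(-\lambda) + (k+1)\delta$, and hence remains at least $1/p(\lambda)$ for some polynomial $p(\cdot)$ and infinitely many $\lambda$.

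The only mild subtlety is making sure the two modifications compose cleanly: the uniqueness modification is purely a change in the classical sampling performed by the challenger before producing $\sigma$, while the $\mathsf{TI}\to\mathsf{ATI}$ modification acts only on the final check applied to $\sigma$. Since these two changes touch disjoint parts of the experiment, their error contributions add, and the total is negligible. No deeper technique is required here; the main obstacle is just bookkeeping of the parameters $(\eps,\delta,k)$ so that the threshold shift is absorbed without falling below $1/p(\lambda)$, and then verifying that the event $f(m^0)=f(m^1)$ for all $f \in \fclass$ (the only non-measurement condition in the winning criterion) is unaffected by both modifications.
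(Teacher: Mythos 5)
Your proposal is correct and matches the paper's argument, which for this claim simply defers to the identical two-step argument in the PKE proof (\cref{sec:pkeproof}): a birthday bound for making the identities collision-free, followed by \cref{thm:multiatiprop} for the $\mathsf{TI}\to\mathsf{ATI}$ switch. One small note: with $\eps=\gamma/(32k)$ the thresholds do not literally ``line up'' (that would require $\eps=\gamma/32$); rather $1/2+\tfrac{31\gamma}{32}\le(1/2+\gamma)-\eps$, and your displayed inequality still follows because lowering the $\mathsf{ATI}$ threshold can only increase the acceptance probability.
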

\begin{proof}
    Follows from the same argument as in \cref{sec:pkeproof}.
\end{proof}

\begin{definition}
    For all $j \in [k]$, let $id_j || f_j$ denote the identity string and let $(A_i^j, s_i^j, s_i^{'j})_{{i \in [\cosettcount]}}$ denote the tuple of cosets sampled during the sampling of the functional key for the $(\alpha^{-1}(j))$-th query of type $\mathsf{PROTECTED}$. That is, it is the coset tuple associated with $id_j || f_j$.
\end{definition}

We now define a monogamy-of-entanglement type game $\mathcal{G}$, similar to the game defined in the PKE proof.

\paragraph{$\underline{\mathcal{G}(\lambda, (\adve'_0, \adve'_1, \adve'_2))}$}
\begin{enumerate}
        \item The challenger runs $msk, pk \samp \fe.\mathsf{Setup}(1^\lambda)$ and submits $pk$ to the adversary. It also initializes the set $\fclass$. It parses $(cmsk, K_1) = msk$.
        \item \textbf{Query Phase 1:} For multiple rounds, the adversary adaptively submits a function $f \in \mathfrak{F}$ and a query type, either $\mathsf{CLASSICAL}$ or $\mathsf{PROTECTED}$. For each $f$, the challenger does the following. It first computes $sk_f \samp \mathsf{FE}.\mathsf{KeyGen}(msk, f)$. 
        
        Then, if the query type is $\mathsf{CLASSICAL}$, it adds $f$ to $\fclass$ and submits $sk_f$ to the adversary. 
        
        Otherwise, it computes $\reg_f \samp \fe.\qkeygen(sk_f)$ and submits $\reg_f$ to the adversary.
            \item The adversary outputs a pair of challenge messages $m^0, m^1$ and an index $j^* \in [k]$ where $k$ is the number of queries it made of type $\mathsf{PROTECTED}$.

\item  The challenger checks if  $f(m^0) = f(m^1)$ for all $f \in \fclass$. If not, it outputs $0$ and terminates.

Otherwise, the challenger computes $K_1\{id_{j^*} || f_{j^*}\} \samp F_1.\mathsf{Punc}(K_1, id_{j^*} || f_{j^*})$ and submits it to the adversary.  

\item The challenger outputs a \emph{bipartite} register $\regi{bip}$.

    \item For $\ell \in \{1, 2\}$, the challenger does the following.
    \begin{enumerate}[label*=\arabic*.]
        \item Sample $r_\ell \samp \zo^{\cosettcount}$.
        \item Run $\adve'_\ell$ on $\regi{bip}[\ell]$, $(A^{j^*}_i)_{i \in [\cosettcount]}$ and $r_\ell$ to obtain a tuple of vectors $(v_{\ell, i})_{i \in [{\cosettcount}]}$.
        \item For all $i \in [\cosettcount]$, check if $v_{\ell, i} \in A^{j^*}_{i} + s^{j^*}_i$ if $(r_\ell)_i = 0$ and check if $v_{\ell, i} \in {(A^{j^*})}^\perp_{i} + s^{'{j^*}}_i$ if $(r_\ell)_i = 1$.
    \end{enumerate}
    If all the checks pass, the challenger outputs $1$. Otherwise, it outputs $0$.
\end{enumerate}

It is straightforward to reduce this game to the collusion-resistant MoE game. We construct our adversary for $\mathcal{G}$ as follows, where we will define challenge ciphertext-punctured master secret key distributions $\mathcal{D}_j$ later. Without loss of generality\footnote{In the general case, the adversary would simply sample $j^*$ from $[k']$ where $k'$ is the number of queries made by the adversary $\adve$ of type $\mathsf{PROTECTED}$ that do satisfy $f(m_0) \neq f(m_1)$.}, we will assume that all the queries made by the adversary $\adve$ of type $\mathsf{PROTECTED}$ satisfy $f(m_0) \neq f(m_1)$.
\begin{mdframed}
        {\bf $\underline{\adve_0'(pk)}$}
        \begin{enumerate}
            \item Simulate $\adve$ on $pk$ by making a functional key query to the challenger whenever $\adve$ makes a query, and forwarding the obtained key to it. Let $\regi{adv}$ be the $(k + 1)$-partite register (with state $\sigma$) and $(m^0, m^1)$ be the challenge messages output by $\adve$ at the end of the query phase.

            \item Uniformly at random sample $x, y, j^*$ such that $1 \leq x < y \leq k + 1$ and $j^* \in \{1, \dots, k\}$.

            \item Output $j^*$ to the challenger and obtain $K_1\{id_{j^*} || f_{j^*}\}$.
            
            \item Apply $\mathsf{API}^{\eps, \delta}_{\ell, \mathcal{D}_0}$ to all registers $\regi{adv}[\ell]$ for $\ell \in [k + 1]$, let $b_{\ell, 0}$ be the measurement outcomes.
            \item Apply $\mathsf{API}^{\eps, \delta}_{\ell, \mathcal{D}_i}$ in succession for $i = 1$ to $j^*$ to $\regi{adv}[x]$, let $b_{x, i}$ be the measurement outcomes.
            \item Apply $\mathsf{API}^{\eps, \delta}_{\ell, \mathcal{D}_i}$ in succession for $i = 1$ to $j^*$ to $\regi{adv}[y]$, let $b_{y, i}$ be the measurement outcomes.
            \item Output  \begin{align*}
                (&(\regi{adv}[x], j^*, x, y, (b_{\ell, 0})_{\ell \in [k + 1]}, (b_{x,i})_{i \in [j^*]}, (b_{y,i})_{i \in [j^*]}),\\ &(\regi{adv}[y], j^*, x, y, (b_{\ell, 0})_{\ell \in [k + 1]}, (b_{x,i})_{i \in [j^*]}, (b_{y,i})_{i \in [j^*]}, ),\\ &j^*).
            \end{align*}
            
        \end{enumerate}
    \end{mdframed}
    For $j \in \{1, \dots, k\}$, define $\mathcal{D}_j$ to be the following challenge ciphertext-punctured master secret key distribution.
\begin{enumerate}
   \item Sample $r \samp \zo^{\cosettcount}$.
    \item Sample a PRF key $K_2$ for $F_2$ as $K_2 \samp F_2.\mathsf{KeyGen}(1^\lambda)$.
    \item Sample $\mathsf{OPCt} \samp \io(\textcolor{red}{ \mathsf{PCt}^{(j)}_{\mathsf{OPMem}, cpk, K_2, r, m, id_j || f_j}})$
    \begin{mdframed}
        {\bf $\underline{\mathsf{PCt}^{(j)}_{\mathsf{OPMem}, cpk, K_2, r, m^b, m_{1-b}, id_j || f_j}(id || f, u_1, \dots, u_{\cosettcount})}$}
        
        {\bf Hardcoded: $\mathsf{OPMem}, cpk, K_2, r, m^b, m^{1-b}, \textcolor{red}{id_j || f_j}$}
        \begin{enumerate}[label=\arabic*.]
            \item Run $\mathsf{OPMem}(id || f, u_1, \dots, u_{\cosettcount}, r)$. If it outputs $0$, output $\perp$ and terminate.
            \item \textcolor{red}{If $id || f < id_j || f_j$, set $a = f(m^{1-b})$. Otherwise, set $a = f(m^{b})$.}
            \item Output $\mathsf{IBE.Enc}(cpk, id, \textcolor{red}{a}; F_2(K_2, id))$.
        \end{enumerate}
    \end{mdframed}
     \item Sample $pmsk \samp \io(\mathsf{PKey}_{cmsk, K_1\{id_{j^*}||f_{j^*}\}})$.

\begin{mdframed}
        {\bf $\underline{\mathsf{PKey}_{cmsk, K_1\{id_{j^*}||f_{j^*}\}}(id || f)}$}
        
        {\bf Hardcoded: $cmsk, K_1\{id_{j^*}||f_{j^*}\}, m^0, m^1$}
        \begin{enumerate}[label=\arabic*.]
          \item Check if $f(m^0) = f(m^1)$. If not, output $\perp$ and terminate.
        \item Compute $ck = \mathsf{IBE.KeyGen}(cmsk, id || f)$.
        \item $(A_i, s_i, s_i')_{i \in [\cosettcount]} = \mathsf{CosetGen}(\cosetgenparam; F_1(K_1\{id_{j^*}||f_{j^*}\}, id || f))$.
    \item Output $(ck, id, f, (A_i, s_i, s_i')_{i \in [\cosettcount]})$.
        \end{enumerate}

    \end{mdframed}
    \item Output $(\mathsf{OPCt}, r), pmsk$.
\end{enumerate}
We define $\mathcal{D}_{0}$ to be the distribution where ciphertext is computed honestly and $pmsk$ is computed as in \cref{defn:feni}. We define $\mathcal{D}_{k + 1}$ as follows.
\begin{enumerate}
   \item Sample $r \samp \zo^{\cosettcount}$.
    \item Sample a PRF key $K_2$ for $F_2.\mathsf{KeyGen}(1^\lambda)$.
    \item Sample $\mathsf{OPCt} \samp \io(\textcolor{red}{ \mathsf{PCt}^{(k+1)}_{\mathsf{OPMem}, cpk, K_2, r}})$
    \begin{mdframed}
        {\bf $\underline{\mathsf{PCt}^{(k+1)}_{\mathsf{OPMem}, cpk, K_2, r, m^0, m^1, b}(id || f, u_1, \dots, u_{\cosettcount})}$}
        
        {\bf Hardcoded: $\mathsf{OPMem}, cpk, K_2, r, m^{1-b}$}
        \begin{enumerate}[label=\arabic*.]
            \item Run $\mathsf{OPMem}(id || f, u_1, \dots, u_{\cosettcount}, r)$. If it outputs $0$, output $\perp$ and terminate.
            \item Output $\mathsf{IBE.Enc}(cpk, id, \textcolor{red}{f(m^{1-b})}; F_2(K_2, id))$.
        \end{enumerate}
    \end{mdframed}
    \item Sample $pmsk$ as in $\mathcal{D}_j$.
    \item Output $(\mathsf{OPCt}, r), pmsk$.
\end{enumerate}

\begin{definition}
    Let $\expfromddd{\mathcal{C}}{\ell}$ denote the outcome of the following experiment where $\mathcal{C}$ is a challenge ciphertext-punctured master secret key distribution that can depend on $pp$.
\begin{enumerate}
\item Execute $pk, sk \samp \mathsf{PKE.Setup}(1^\lambda)$.
\item Simulate the first steps of $\adve'_0$ and the challenger of $\mathcal{G}$:
\begin{enumerate}[label*=\arabic*.]
            \item Simulate $\adve$ on $pk$ by making a functional key query to the challenger whenever $\adve$ makes a query, and forwarding the obtained key to it. Let $\regi{adv}$ be the $(k + 1)$-partite register (with state $\sigma$) and $(m^0, m^1)$ be the challenge messages output by $\adve$ at the end of the query phase.

            \item Uniformly at random sample $x, y, j^*$ such that $1 \leq x < y \leq k + 1$ and $j^* \in \{1, \dots, k\}$.

            \item Compute $K_1\{id_{j^*} || f_{j^*}\} \samp F_1.\mathsf{Punc}(K_1, id_{j^*} || f_{j^*})$.
\end{enumerate}
    \item Set $pp =  (x, y, j^*,(id_j || f_j)_{j\in[k+1]}, m_0, m_1, pk)$.
    \item Sample $b \samp \zo$.
    \item Sample $ct, pmsk \samp \mathcal{C}(pp, m^b)$.
    \item Output $\regi{adv}, (b, ct, pmsk), pp$.
\end{enumerate} 
We will write $\expfromddd{\mathcal{C}}{\ell} \approx_\nu^c \expfromddd{\mathcal{C}'}{\ell}$ to denote that the advantage of any computational adversary in distinguishing the outcomes of these experiments is $\nu$.
\end{definition}

\begin{claim}\label{claim:femesresdist}
Let $\tau$ be the state of the bipartite register $\regi{adv}[x,y]$ output by $\adve'_0$ in $\mathcal{G}$, and also consider the classical values $j^*, x, y, \{b_{\ell, i}\}_{\ell, i}$ contained in the output of $\adve'_0$.
    
    Suppose we apply the measurement $\mathsf{API}^{\eps, \delta}_{x, \mathcal{D}_{j^* + 1}}\otimes\mathsf{API}^{\eps, \delta}_{y, \mathcal{D}_{j^* + 1}}$ to $\tau$ and let $b_{x,j^*+1}, b_{y,j^*+1}$ denote the measurement outcomes we obtain. Then,
   
    \begin{equation*}
     \Pr[ b_{x,j^*}  - b_{x,j^*+1} > \frac{29\gamma}{32k}  \wedge  b_{y,j^*} - b_{y,j^*+1} > \frac{29\gamma}{32k}] > \frac{1}{4p(\lambda)\cdot k^3(\lambda)}
    \end{equation*}
    
    where the probability is taken over the randomness of the challenger, the adversary $\adve_0'$ and the measurement outcomes.
\end{claim}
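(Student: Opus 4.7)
My plan is to mirror the proof of \cref{claim:mesresdist} essentially verbatim, with the FE distributions $\mathcal{D}_j$ (which now test both a ciphertext and a punctured master secret key $pmsk$ per \cref{defn:goodecrfunc}) in place of the PKE ones. I will consider an alternative version of $\adve'_0$ that, after simulating $\adve$ to obtain the $(k + 1)$-partite state $\sigma$, applies the chain of measurements $\mathsf{API}^{\eps, \delta}_{\ell, \mathcal{D}_i}$ for $i = 0, 1, \ldots, k + 1$ in succession on every register $\regi{adv}[\ell]$, recording outcomes $b'_{\ell, i}$. Because measurements on disjoint registers commute and one cannot signal backwards in time, the marginal distribution of $(\regi{adv}[x, y], (b'_{\ell, 0})_\ell, (b'_{x, i})_{i \leq j^* + 1}, (b'_{y, i})_{i \leq j^* + 1})$ agrees with that of the analogous tuple produced by the actual $\adve'_0$ conditioned on the sampled $x, y, j^*$ plus the additional $i = j^* + 1$ measurement from the claim statement.

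Two quantitative ingredients must be established. First, using the FE analog of \cref{claim:d0closed1} -- a hop-by-hop chain through the $\mathcal{D}_{j}^{(\Delta, s)}$ distributions collapsing via $F_2$-puncturing, $\mathsf{IBE}$-puncturing, and $\mathsf{CCObf}$ security exactly as in the PKE case, which additionally must account for the fact that $pmsk$ embeds the puncturable $\mathsf{IBE}$ master secret key (the $\mathsf{IBE}$ security hops still go through since $pmsk$ refuses to issue keys for $f$ with $f(m^0) \neq f(m^1)$) -- together with \cref{thm:distti} and \cref{thm:multiatiprop} applied to the hypothesis that $\adve$ wins $\hyb_2$ with probability at least $1/p(\lambda)$, I will obtain $\Pr[\forall \ell \in [k + 1]~b'_{\ell, 1} \geq \tfrac{1}{2} + \tfrac{31\gamma}{32}] \geq \tfrac{1}{2 p(\lambda)}$. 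Second, for the endpoint at $i = k + 1$, I will invoke the POVM-complementarity identity $E_{\mathcal{D}_{k + 1}} = I - E_{\mathcal{D}_0}$ that holds pointwise on every state: a direct calculation shows that since $\mathcal{D}_{k + 1}$ is obtained from $\mathcal{D}_0$ by hardcoding $m^{1 - b}$ in place of $m^b$, the two success indicators sum to $1$ for every freeloader circuit and every state, hence $\mathsf{PI}_{\ell, \mathcal{D}_0}$ and $\mathsf{PI}_{\ell, \mathcal{D}_{k + 1}}$ share an eigenbasis with pointwise-complementary eigenvalues, lifting the initial winning guarantee to $\Pr[\forall \ell~b'_{\ell, k + 1} < \tfrac{1}{2} + \tfrac{2\gamma}{32}] \geq 1 - (k + 1) \exp(-\lambda)$.

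With both endpoints controlled, the remainder is essentially identical to the PKE case: with probability at least $\tfrac{1}{4 p(\lambda)}$ every register $\ell$ must exhibit a drop of more than $\tfrac{29\gamma}{32 k}$ between some consecutive pair $(i_\ell, i_\ell + 1)$ with $i_\ell \in \{1, \ldots, k\}$ by a telescoping argument; by pigeonhole applied to $k + 1$ freeloaders and only $k$ possible jump positions, two freeloaders $\ell \neq \ell'$ must satisfy $i_\ell = i_{\ell'}$; and since $\adve'_0$ samples $x, y, j^*$ independently and uniformly, the event $(x, y, j^*) = (\ell, \ell', i_\ell)$ has probability $\tfrac{1}{k \binom{k + 1}{2}}$, giving the claimed bound $\tfrac{1}{4 p(\lambda) k^3(\lambda)}$.

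The hard part will be justifying the endpoint claim at $i = k + 1$ in the presence of the intermediate measurements: unlike the PKE $\mathcal{D}_{k + 1}$, which is information-theoretically independent of $b$ so that any state gives $\mathsf{PI}$-eigenvalue at most $\tfrac{1}{2}$, the FE $\mathcal{D}_{k + 1}$ only satisfies the complementarity identity above, and the chain $\mathsf{API}_{\ell, \mathcal{D}_1}, \ldots, \mathsf{API}_{\ell, \mathcal{D}_k}$ can a priori perturb the state so that $\mathsf{PI}_{\ell, \mathcal{D}_0}$'s eigenvalue is no longer large at the end of the chain. I will handle this by tracking outcomes through \cref{thm:multiapiprop}, which ensures that successive almost-projective measurements on the same register yield outcomes close to those of their exact $\mathsf{PI}$ counterparts with high probability, and then using the shared-eigenbasis complementarity to translate a large $\mathsf{PI}_{\ell, \mathcal{D}_0}$-eigenvalue on the post-intermediate state into a small $\mathsf{PI}_{\ell, \mathcal{D}_{k + 1}}$-eigenvalue, absorbing the $\eps, \delta$ slack into the $\tfrac{2\gamma}{32}$ budget.
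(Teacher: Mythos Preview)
Your overall plan to mirror \cref{claim:mesresdist} matches the paper's own approach, which is literally a one-line reference to the PKE argument plus the remark that ``the success probability of the freeloaders with respect to $\mathcal{D}_{k+1}$ is $\leq 1/2$ since $\mathcal{D}_{k+1}$ is encoding $m^{1-b}$ while the challenge bit is $b$.'' You go further and correctly identify a real subtlety the paper glosses over: unlike the PKE endpoint, the FE $\mathcal{D}_{k+1}$ is \emph{not} information-theoretically independent of $b$, so the statement ``$\mathsf{PI}_{\ell,\mathcal{D}_{k+1}}\leq 1/2$ on every state'' (which the PKE proof uses verbatim to get the conditional bound \cref{eqn:problk}) is simply false here---an anti-decryptor that outputs the complement of what it decodes has $\mathsf{PI}_{\ell,\mathcal{D}_{k+1}}$ close to $1$. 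Your complementarity observation $\mathcal{E}_{\mathcal{D}_{k+1}}=I-\mathcal{E}_{\tilde{\mathcal{D}}_0}$ (strictly with $\tilde{\mathcal{D}}_0$, the honest distribution using the punctured $pmsk$) is correct and is the operator-level content of the paper's one-liner.

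However, your proposed fix---tracking via \cref{thm:multiapiprop} to conclude that $\mathsf{PI}_{\ell,\mathcal{D}_0}$ remains large on the post-intermediate state---does not go through. That theorem only relates $\mathsf{API}$ and $\mathsf{PI}$ for the \emph{same} underlying POVM; the chain $\mathsf{API}_{\ell,\mathcal{D}_1},\ldots,\mathsf{API}_{\ell,\mathcal{D}_k}$ involves different, generally non-commuting POVMs, so after the last step the state is (approximately) in a $\mathcal{D}_k$-eigenspace, which says nothing about its $\mathcal{D}_0$-eigenvalue. The complementarity gives you control of $\mathsf{PI}_{\ell,\mathcal{D}_{k+1}}$ only on states where you already control $\mathsf{PI}_{\ell,\tilde{\mathcal{D}}_0}$, and you have no handle on the latter for $\rho_k$. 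A much cleaner repair, which avoids any tracking, is to redefine every $\mathcal{D}_j^{(\Delta,s)}$ (and hence $\mathcal{D}_j$ and $\mathcal{D}_{k+1}$) to hardcode the fixed value $f(m^0)$ in place of $f(m^{1-b})$ throughout. All hop-by-hop indistinguishabilities survive (the IBE step now swaps $f(m^b)$ for $f(m^0)$, which is either trivial when $b=0$ or covered by puncturable-IBE security when $b=1$), and $\mathcal{D}_{k+1}$ becomes genuinely independent of $b$, so the PKE endpoint argument applies verbatim and the rest of your plan (telescoping, pigeonhole, uniform guess of $(x,y,j^*)$, no-signalling marginals) goes through exactly as written.
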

\begin{proof}
    Follows from the same argument as \cref{claim:mesresdist}. Only caveat is that we need to prove that the success probability of the freeloaders with respect to the challenge ciphertext distribution $\mathcal{D}_{k + 1}$ is $\leq 1/2$. However, this is indeed true, since $\mathcal{D}_{k + 1}$ is encoding $m^{1 - b}$ while the challenge bit is $b$.
\end{proof}

\begin{definition}
When we refer to $((id || f) + \Delta)$ as a function, we mean the following. We associate the strings in $\zo^{\lambda + Q(\lambda)}$ with numbers $\{0,1,\dots, 2^{\lambda + Q(\lambda)}\}$ in the canonical way, and we compute the sum of the numbers associated with $\Delta$ and $(id || f)$. Then, we switch back to the bit representation of this number, and take the last $Q$ bits, and we define $(id || f) + \Delta$ to be the circuit defined by this $Q$-bit string.
\end{definition}

Now, we define some intermediary distributions. Define the following for all $j \in \{0, 1, \dots, k\}$ and $\Delta \in \{0, 1, \dots, id_{j+1}||f_{j+1}-id_{j}||f_j - 1\}$. For notational convenience, also define $\mathcal{D}_j^{id_{j+1}||f_{j+1}-id_{j}||f_j, 0}$ to be $\mathcal{D}_{j+1}^{(0,  0)}$ for all $j \in \{0, 1, \dots, k\}$. Also note that $\mathcal{D}_j^{(0,0)}$ is exactly the same as $\mathcal{D}_j$ for $j \in [k]$.

\begin{itemize}
    \item{$\underline{\mathcal{D}_{j}^{(\Delta, 0)}}$:} 
    \begin{enumerate}
   \item Sample $r \samp \zo^{\cosettcount}$.
    \item Sample a PRF key $K_2$ for $F_2.\mathsf{KeyGen}(1^\lambda)$.
    \item Sample $\mathsf{OPCt} \samp \io({ \mathsf{PCt}^{(j, \Delta, 0)}_{\mathsf{OPMem}, cpk, K_2, r, m, id_j + \Delta}}).$
 \begin{mdframed}
        {\bf $\underline{{ \mathsf{PCt}^{(j, \Delta, 0)}_{\mathsf{OPMem}, cpk, K_2, r, m^b, m^{1-b}, id_j || f_j + \Delta}}(id || f, u_1, \dots, u_{\cosettcount})}$}
        
        {\bf Hardcoded: ${\mathsf{OPMem}, cpk, K_2, r, m^b, m^{1-b}, \textcolor{red}{id_j || f_j + \Delta}}$}
        \begin{enumerate}
            \item Run $\mathsf{OPMem}(id || f, u_1, \dots, u_{\cosettcount}, r)$. If it outputs $0$, output $\perp$ and terminate.
            \item \textcolor{red}{If $id || f < id_j || f_j + \Delta$, set $a = f(m^{1-b})$. Otherwise, set $a = f(m^b)$.}
            \item Output $\mathsf{IBE.Enc}(cpk, id, {a}; F_2(K_2, id))$.
        \end{enumerate}
    \end{mdframed}
    \item Sample $pmsk$ as in $\mathcal{D}_j$.
    \item Output $(\mathsf{OPCt}, r), pmsk$.
\end{enumerate}

        \item{$\underline{\mathcal{D}_{j}^{(\Delta, 1)}}$:}  \begin{enumerate}
   \item Sample $r \samp \zo^{\cosettcount}$.
    \item Sample a PRF key $K_2$ for $F_2.\mathsf{KeyGen}(1^\lambda)$.
    \item \textcolor{red}{ $ct^* = \mathsf{IBE.Enc}(cpk, id_j||f_j + \Delta, (id_j||f_j + \Delta)(m^b); F_2(K_2, id_j||f_j+\Delta)).$}
        \item \textcolor{red}{ $K_2\{id_j||f_j+\Delta\} \samp F_2.\mathsf{Punc}(K_2, id_j||f_j+\Delta)$.}
            
    \item Sample     $\mathsf{OPCt} \samp \io({ \mathsf{PCt}^{(j, \Delta, 1)}_{\mathsf{OPMem}, cpk, K_2\{id_j+\Delta\}, r, m, id_j + \Delta, ct^*}}).$
 \begin{mdframed}
        {\bf $\underline{{ \mathsf{PCt}^{(j, \Delta, 1)}_{\mathsf{OPMem}, cpk, K_2\{id_j+\Delta\}, r, m, id_j + \Delta, ct^*}}(id || f, u_1, \dots, u_{\cosettcount})}$}
        
        {\bf Hardcoded: ${\mathsf{OPMem}, cpk, \textcolor{red}{K_2\{id_j||f_j+\Delta\}}, r, m^b, m^{1-b}, id_j||f_j + \Delta, \textcolor{red}{ct^*}}$}
        \begin{enumerate}
            \item Run $\mathsf{OPMem}(id || f, u_1, \dots, u_{\cosettcount}, r)$. If it outputs $0$, output $\perp$ and terminate.
            \item \textcolor{red}{If $id||f = id_j || f_j + \Delta$, output $ct^*$ and terminate.}
            \item {If \textcolor{red}{$id < id_j || f_j + \Delta + 1$}, set $a = f(m^{1-b})$. Otherwise, set $a = f(m)$.}
            \item Output $\mathsf{IBE.Enc}(cpk, id, {a}; F_2(K_2, id))$.
        \end{enumerate}
    \end{mdframed}
    \item Sample $pmsk$ as in $\mathcal{D}_j$.
    \item Output $(\mathsf{OPCt}, r), pmsk$.
\end{enumerate}

    \item{$\underline{\mathcal{D}_{j}^{(\Delta, 2)}}$:} 
    \begin{enumerate}
   \item Sample $r \samp \zo^{\cosettcount}$.
    \item Sample a PRF key $K_2$ for $F_2.\mathsf{KeyGen}(1^\lambda)$.
    \item \textcolor{red}{Sample $z^*$ uniformly at random from the output space of $F_2$.}
    \item \textcolor{red}{ $ct^* = \mathsf{IBE.Enc}(cpk, id_j||f_j + \Delta, (id_j||f_j + \Delta)(m^b); z^*).$}
        \item { $K_2\{id_j||f_j+\Delta\} \samp F_2.\mathsf{Punc}(K_2, id_j||f_j+\Delta)$.}
            
    \item Sample     $\mathsf{OPCt} \samp \io({ \mathsf{PCt}^{(j, \Delta, 2)}_{\mathsf{OPMem}, cpk, K_2\{id_j+\Delta\}, r, m, id_j + \Delta, ct^*}}).$
 \begin{mdframed}
        {\bf $\underline{{ \mathsf{PCt}^{(j, \Delta, 2)}_{\mathsf{OPMem}, cpk, K_2\{id_j||f_j+\Delta\}, r, m^b, m^{1-b}, id_j||f_j + \Delta, ct^*}}(id || f, u_1, \dots, u_{\cosettcount})}$}
        
        {\bf Hardcoded: ${\mathsf{OPMem}, cpk, {K_2\{id_j+\Delta\}}, r, m^b, m^{1-b}, id_j||f_j + \Delta, {ct^*}}$}
        \begin{enumerate}
            \item Run $\mathsf{OPMem}(id || f, u_1, \dots, u_{\cosettcount}, r)$. If it outputs $0$, output $\perp$ and terminate.
            \item {If $id||f = id_j||f_j + \Delta$, output $ct^*$ and terminate.}
            \item {If {$id < id_j||f_j + \Delta + 1$}, set $a = f(m^{1-b})$. Otherwise, set $a = f(m)$.}
            \item Output $\mathsf{IBE.Enc}(cpk, id, {a}; F_2(K_2, id))$.
        \end{enumerate}
    \end{mdframed}
    \item Sample $pmsk$ as in $\mathcal{D}_j$.
    \item Output $(\mathsf{OPCt}, r), pmsk$.
\end{enumerate}

    \item{$\underline{\mathcal{D}_{j}^{(\Delta, 3)}}$:} 
   \begin{enumerate}
   \item Sample $r \samp \zo^{\cosettcount}$.
    \item Sample a PRF key $K_2$ for $F_2.\mathsf{KeyGen}(1^\lambda)$.
        \item {Sample $z^*$ uniformly at random from the output space of $F_2$.}
    \item { $ct^* = \mathsf{IBE.Enc}(cpk, id_j||f_j + \Delta, (id_j||f_j + \Delta)(m^b); z^*).$}
    \item { $K_2\{id_{j}||f_j+\Delta\} \samp F_2.\mathsf{Punc}(K_2, id_{j}||f_j+\Delta)$.}
    \textcolor{red}{\item Compute $(A^{*}_i,s_i^{*},s_i^{'{*}}) = F_1(K_1, id_j||f_j + \Delta).$
      \item For $i \in [\cosettcount]$, set $g_i = \mathsf{Can}_{A^{*}_i}$ if $(r)_i = 0$ and set $g_i = \mathsf{Can}_{(A^{*}_i)^{\perp}}$ if $(r)_i = 1$.
    \item For $i \in [\cosettcount]$, compute $y_i = g_i(s_i^{*})$ if $(r)_i = 0$ and $y_i =  g_i(s_i^{'{*}})$ if $(r)_i = 1$.
    \item Set $g$ to be the function $g(v_1, \dots, v_{\cosettcount}) = (g_1(v_1) || \dots || g_{\cosettcount}(v_{\cosettcount}))$.
    \item Set $y = y_1 || \dots || y_{\cosettcount}$.
    \item  $\mathsf{OCC} \samp \mathsf{CCObf.Obf}(g, y, ct^*)$.}

         \item Sample    $\mathsf{OPCt} \samp \io({ \mathsf{PCt}^{(j,\Delta,3)}_{\mathsf{OPMem}, cpk, K_2\{id_j\}, r, m, id_{j}+\Delta, \mathsf{OCC}}}).$
  \begin{mdframed}
        {\bf $\underline{{ \mathsf{PCt}^{(j,\Delta,3)}_{\mathsf{OPMem}, cpk, K_2\{id_{j}+\Delta\}, r, m, id_{j}||f_j+\Delta, \mathsf{OCC}}}(id || f, u_1, \dots, u_{\cosettcount})}$}
        
        {\bf Hardcoded: ${\mathsf{OPMem}, cpk, {K_2\{id_{j}+\Delta\}}, r, m^b, m^{1-b}, id_{j}||f_j+\Delta, \textcolor{red}{\mathsf{OCC}}}$}
        \begin{enumerate}
        \item \textcolor{red}{If $id||f = id_{j}||f_j+\Delta$, output the output of $\mathsf{OCC}(u_1, \dots, u_{\cosettcount})$ and terminate.}
            \item Run $\mathsf{OPMem}(id || f, u_1, \dots, u_{\cosettcount}, r)$. If it outputs $0$, output $\perp$ and terminate.
            \item {If {$id < id_{j}||f_j+\Delta + 1$}, set $a = f(m^{1-b})$. Otherwise, set $a = f(m)$.}
            \item Output $\mathsf{IBE.Enc}(cpk, id, {a}; F_2(K_2, id))$.
        \end{enumerate}
    \end{mdframed}
    \item Sample $pmsk$ as in $\mathcal{D}_j$.
    \item Output $(\mathsf{OPCt}, r), pmsk$.
\end{enumerate}

\item{$\underline{\mathcal{D}_{j}^{(\Delta, 4)}}$:}  Same as $\mathcal{D}_{j}^{(\Delta, 3)}$ except for the following. Replace the line
    $$ct^* = \mathsf{IBE.Enc}(cpk, id_j||f_j + \Delta, (id_j||f_j + \Delta)(m^b); z^*)
    $$
    with
    $$ 
    ct^* = \mathsf{IBE.Enc}(cpk, id_j||f_j + \Delta, \textcolor{red}{(id_j||f_j + \Delta)(m^{1-b})}; z^*).
    $$
    
        \item{$\underline{\mathcal{D}_{j}^{(\Delta, 5)}}$:} Same as $\mathcal{D}_{j}^{(\Delta, 2)}$ except for the following. Replace the line
    $$ct^* = \mathsf{IBE.Enc}(cpk, id_j||f_j + \Delta, (id_j||f_j + \Delta)(m^b); z^*)
    $$
    with
    $$ 
    ct^* = \mathsf{IBE.Enc}(cpk, id_j||f_j + \Delta, \textcolor{red}{(id_j||f_j + \Delta)(m^{1-b})}; z^*).
    $$

\item{$\underline{\mathcal{D}_{j}^{(\Delta, 6)}}:$} Same as ${\mathcal{D}_{j}^{(\Delta, 1)}}$ except for the following. Replace the line
                $$ct^* = \mathsf{IBE.Enc}(cpk, id_j||f_j + \Delta, (id_j||f_j + \Delta)(m^b); F_2(K_2, id_j||f_j+\Delta))$$
                with
                $$ 
                ct^* = \mathsf{IBE.Enc}(cpk, id_j||f_j + \Delta, \textcolor{red}{(id_j||f_j + \Delta)(m^{1-b})}; F_2(K_2, id_j||f_j+\Delta)).
                $$
\end{itemize}

Now, we show that these distributions \emph{collapse} around $\Delta = 0$ for each $j$. Below, all our indistinguishability claims are for $2^{5\lambda}\cdot2^{8\lambda^{0.3\constmoecoll}}$-time adversaries and we set $\nu(\lambda) = 2^{-5\lambda-\circsize}\cdot2^{-8\lambda^{0.3\constmoecoll}}$.
\begin{claim}\label{claim:fedelta0todelta1}
$\expnot{j}{\Delta}{0}{\ell}\approx^c_{\nu(\lambda)} \expnot{j}{\Delta}{1}{\ell}$
    for all $j \in \{0, 1, \dots, k\}$, $\Delta \in \{0, 1, \dots, id_{j+1}-id_{j} - 1\}$ and $\ell \in [k + 1]$.
\end{claim}
\begin{proof}
    Observe that by punctured key correctness of $F_2$ (\cref{defn:puncprf}), the different obfuscated programs ${ \mathsf{PCt}^{(j, \Delta, 0)}_{\mathsf{OPMem}, cpk, K_2, r, m, id_j + \Delta}}$ and ${ \mathsf{PCt}^{(j, \Delta, 1)}_{\mathsf{OPMem}, cpk, K_2\{id_j+\Delta\}, r, m, id_j + \Delta, ct^*}}$ in these hybrids have the same functionality. The result follows by security of $\io$ and by our choice of parameters.
\end{proof}
\begin{claim}\label{claim:fedelta1todelta2}
$\expnot{j}{\Delta}{1}{\ell} \approx^c_{\nu(\lambda)} \expnot{j}{\Delta}{2}{\ell}$
  for all $j \in \{0, 1, \dots, k\}$, $\Delta \in \{0, 1, \dots, id_{j+1}-id_{j} - 1\}$ and $\ell \in[k + 1]$.
\end{claim}
\begin{proof}
    The result follows by selective puncturing security of $F_2$ (\cref{defn:puncprf}) and our choice of parameters.
\end{proof}
\begin{claim}\label{claim:fedelta2todelta3}
    $\expnot{j}{\Delta}{2}{\ell}\approx^c_{\nu(\lambda)} \expnot{j}{\Delta}{3}{\ell}$
    for all $j \in \{0, 1, \dots, k\}$, $\Delta \in \{0, 1, \dots, id_{j+1}-id_{j} - 1\}$ and $\ell \in [k + 1]$.
\end{claim}
\begin{proof}
       Observe that the obfuscated ciphertext programs $\mathsf{PCt}$ in these hybrids have the same functionality by correctness of $\mathsf{CCObf}$, since a vector $w$ is in $A_i^* + s_i^*$ if and only if $\mathsf{Can}_{A_i^{*}}(w) =\mathsf{Can}_{A_i^{*}}(s_i^*)$ and similarly for $(A^*)_i^\perp + s^{'*}_i$. Then, the claim follows by the security of $\io$.
\end{proof}
\begin{claim}
$\expnot{j}{\Delta}{3}{\ell} \approx^c_{\nu(\lambda)} \expnot{j}{\Delta}{4}{\ell}$
     if
    \begin{itemize}
    \item  $j \in \{1, \dots, k\}$ and $\Delta \in \{1, \dots, id_{j+1}-id_{j} - 1\}$, or
    \item  $j = 0$ and $\Delta \in \{0, 1, \dots, id_{j+1}-id_{j}-1\}$
    \end{itemize}
    and for all $\ell \in [k + 1]$.
\end{claim}
\begin{proof}
We have two cases. First, assume that $(id_j||f_j + \Delta)(m_0) = (id_j||f_j + \Delta)(m_1)$. Then, the result easily follows. 

Otherwise, define the intermediary distributions $\mathcal{D}_{j}^{(\Delta, 3')}, \mathcal{D}_{j}^{(\Delta, 4')}$ as follows.
\paragraph{ $\underline{\mathcal{D}_{j}^{(\Delta, 3')}}$}
\begin{enumerate}
        \item Sample $(\mathsf{OPCt}, r)$ as in $\mathcal{D}_{j}^{(\Delta, 3)}$.
        \item Sample $cmsk' \samp \mathsf{IBE.Punc}(cmsk, id_j||f_j + \Delta)$.
        \item Sample $pmsk \samp \io(\mathsf{PKey}_{cmsk', K_1\{id_{j^*}||f_{j^*}\}})$.

\begin{mdframed}
        {\bf $\underline{\mathsf{PKey}_{cmsk', K_1\{id_{j^*}||f_{j^*}\}}(id || f)}$}
        
        {\bf Hardcoded: $cmsk', K_1\{id_{j^*}||f_{j^*}\}, m^0, m^1$}
        \begin{enumerate}[label=\arabic*.]
          \item Check if $f(m^0) = f(m^1)$. If not, output $\perp$ and terminate.
        \item Compute $ck = \mathsf{IBE.KeyGen}(cmsk', id || f)$.
        \item $(A_i, s_i, s_i')_{i \in [\cosettcount]} = \mathsf{CosetGen}(\cosetgenparam; F_1(K_1\{id_{j^*}||f_{j^*}\}, id || f))$.
    \item Output $(ck, id, f, (A_i, s_i, s_i')_{i \in [\cosettcount]})$.
        \end{enumerate}

    \end{mdframed}
\end{enumerate}
\paragraph{ $\underline{\mathcal{D}_{j}^{(\Delta, 4')}}$}
 Same as $\mathcal{D}_{j}^{(\Delta, 3')}$ except for the following. Replace the line
    $$ct^* = \mathsf{IBE.Enc}(cpk, id_j||f_j + \Delta, (id_j||f_j + \Delta)(m^b); z^*)
    $$
    with
    $$ 
    ct^* = \mathsf{IBE.Enc}(cpk, id_j||f_j + \Delta, \textcolor{red}{(id_j||f_j + \Delta)(m^{1-b})}; z^*).
    $$

First, we claim $\expnot{j}{\Delta}{3}{\ell} \approx^c_{\nu(\lambda)} \expnot{j}{\Delta}{3'}{\ell}$ and  $\expnot{j}{\Delta}{4}{\ell} \approx^c_{\nu(\lambda)} \expnot{j}{\Delta}{4'}{\ell}$. We will only argue the first one and the second one follows similarly.  Observe that by strong punctured key correctness of deterministic $\mathsf{IBE.KeyGen}$, the obfuscated programs $\mathsf{PKey}$ in these hybrids can behave differently only on input $(id_j||f_j + \Delta)(m_0).$. However, since we are considering the case $(id_j||f_j + \Delta)(m_0) \neq (id_j||f_j + \Delta)(m_1)$, the programs will not go past the first line and will have the exact same functionality. Then, the claim follows by security of $\io$.

Now, we claim $\expnot{j}{\Delta}{3'}{\ell} \approx^c_{\nu(\lambda)} \expnot{j}{\Delta}{4'}{\ell}$.
    Observe that in these hybrids, the randomness used to invoke $\mathsf{IBE.Enc}$ to compute $ct^*$ is uniformly and independently sampled. Further, the adversary only has the $\mathsf{IBE}$ keys for the identities $id_1 || f_1,  id_2 || f_2, \dots, id_k || f_k$, all of which are different from the identity $id_j || f_j + \Delta$ under which $ct^*$ is encrypted. Finally, the master secret key $cmsk'$ obtained by the adversary is punctured at $id_j || f_j + \Delta$. Hence, the result follows from the punctured master secret key security of $\mathsf{IBE}$.
\end{proof}
\begin{claim}
    $\expnot{j}{\Delta}{4}{\ell} \approx^c_{\nu(\lambda)} \expnot{j}{\Delta}{5}{\ell}$ for all $j \in \{0, 1, \dots, k\}$, $\Delta \in \{0, 1, \dots, id_{j+1}-id_{j} - 1\}$ and $\ell \in [k + 1]$.
\end{claim}
\begin{proof}
    Essentially the same argument as in \cref{claim:fedelta2todelta3} yields the result.
\end{proof}

\begin{claim}
    $\expnot{j}{\Delta}{5}{\ell} \approx^c_{\nu(\lambda)} \expnot{j}{\Delta}{6}{\ell}$ for all $j \in \{0, 1, \dots, k\}$, $\Delta \in \{0, 1, \dots, id_{j+1}-id_{j} - 1\}$ and $\ell \in [k + 1]$.
\end{claim}
\begin{proof}
    Essentially the same argument as in \cref{claim:fedelta1todelta2} yields the result.
\end{proof}

\begin{claim}
$\expnot{j}{\Delta}{6}{\ell} \approx^c_{\nu(\lambda)} \expnot{j}{\Delta+1}{0}{\ell}$
    for all $j \in \{0, 1, \dots, k\}$, $\Delta \in \{0, 1, \dots, id_{j+1}-id_{j} - 1\}$ and $\ell \in [k + 1]$.
\end{claim}
\begin{proof}
        Essentially the same argument as in \cref{claim:fedelta0todelta1} yields the result.
\end{proof}

\begin{claim}\label{claim:fed0closed1}
For all $\ell \in [k+1]$, we have 
\begin{itemize}
    \item $\expfromddd{\mathcal{D}_0}{\ell} \approx^c_{\nu(\lambda)}  \expfromddd{\mathcal{D}_1}{\ell} $
    \item $\expnot{j}{0}{4}{\ell} \approx^c_{\nu(\lambda)} \expfromddd{\mathcal{D}_{j+1}}{\ell}$ for all $j \in \{0, 1, \dots, k\}$ 
    \item $\expfromddd{\mathcal{D}_j}{\ell} \approx^c_{\nu(\lambda)} \expnot{j}{0}{3}{\ell}$ for all $j \in \{0, 1, \dots, k\}$ 
\end{itemize}
where $\nu(\lambda) = 2^{-5\lambda}\cdot2^{-8\lambda^{0.3\constmoecoll}}$.
\end{claim}
\begin{proof}
It is easy to see that $\mathcal{D}_{0} \approx^c_{\nu(\lambda)} \mathcal{D}_{0}^{(0,0)}$ and  $\mathcal{D}_{k+1} \approx^c_{\nu(\lambda)} \mathcal{D}_{k+1}^{(0, 0)}$ by the security of $\io$. For the former, in particular, observe that the obfuscated punctured master secret key programs have the same functionality since while $K_1$ in $\mathcal{D}_{0}$ is punctured at $id_{j^*} || f_{j^*}$, we have that $f_{j^*}(m_0) \neq f_{j^*}(m_1)$.

Rest follows by a simple calculation using the above results.
\end{proof}

\begin{definition}
    We will write $\mathcal{D}'$ to denote $\mathcal{D}_{j^*}^{(0,3)}$ and $\mathcal{D}''$ to denote $\mathcal{D}_{j^*}^{(0,4)}$ where $j^*$ is as output by $\adve_0'$.
\end{definition}

\begin{claim}\label{claim:fedprimdprim}
Let $\tau$ be the bipartite state output by $\adve'_0$ in $\mathcal{G}$. Let $p'_x, p'_y$ be the outcome of applying $\mathsf{PI}_{x, \mathcal{D}'}\otimes \mathsf{PI}_{y, \mathcal{D}'}$ to $\tau$. Similarly, let $p''_x, p''_y$ be the outcome of applying $\mathsf{PI}_{x, \mathcal{D}''}\otimes \mathsf{PI}_{y, \mathcal{D}''}$ to $\tau$. Then,
\begin{itemize}
    \item $\Pr[p'_x > b_{x,j^*} - \frac{3\gamma}{32k} \wedge p'_y > b_{y,j^*} - \frac{3\gamma}{32k}] \geq 1 - 2^{-2\lambda}\cdot 2^{-4\lambda^{0.3\constmoecoll}}$.
    \item $\Pr[b_{x,j^*} - p''_x > \frac{28\gamma}{32k} \wedge b_{y,j^*} - p''_y  > \frac{28\gamma}{32k}] > \frac{1}{q(\lambda)}$ for some polynomial $q(\cdot)$.
\end{itemize}
\end{claim}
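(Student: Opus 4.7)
The plan is to mirror the proof of its PKE analog (Claim \ref{claim:dprimdprim}) essentially verbatim, substituting the FE ciphertext-and-punctured-master-key distributions $\mathcal{D}_{j^*}$, $\mathcal{D}_{j^*+1}$, $\mathcal{D}'$, $\mathcal{D}''$ for the PKE counterparts. The groundwork has already been laid: the chain of hybrid distributions $\mathcal{D}_{j^*}^{(\Delta, i)}$ together with Claims \ref{claim:delta0todelta1} through the claim transitioning $\mathcal{D}_{j}^{(\Delta, 6)}$ to $\mathcal{D}_{j}^{(\Delta+1, 0)}$, combined with Claim \ref{claim:d0closed1}, yield the computational indistinguishabilities $\expfromddd{\mathcal{D}_{j^*}}{\ell} \approx^c_\nu \expfromddd{\mathcal{D}'}{\ell}$ and $\expfromddd{\mathcal{D}_{j^*+1}}{\ell} \approx^c_\nu \expfromddd{\mathcal{D}''}{\ell}$ against $2^{5\lambda}\cdot 2^{8\lambda^{0.3\constmoecoll}}$-time adversaries, for a suitably small $\nu$.

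For the first bullet, I would first apply $\mathsf{API}^{\eps,\delta}_{x,\mathcal{D}_{j^*}} \otimes \mathsf{API}^{\eps,\delta}_{y,\mathcal{D}_{j^*}}$ to $\tau$, obtaining outcomes $(a'_x, a'_y)$. By the almost-projectivity of API together with \cref{thm:multiapiprop}, and by the definition of $b_{x,j^*}, b_{y,j^*}$ as the most recent API measurement outcomes recorded by $\adve'_0$, one gets $\Pr[a'_x > b_{x,j^*} - \frac{3\gamma}{32k} \wedge a'_y > b_{y,j^*} - \frac{3\gamma}{32k}] \geq 1 - \poly(\lambda)\cdot\delta$. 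Invoking the quantitative version of \cref{thm:distti} with the indistinguishability $\expfromddd{\mathcal{D}_{j^*}}{x,y} \approx^c_\nu \expfromddd{\mathcal{D}'}{x,y}$, I would replace the API outcomes for $\mathcal{D}_{j^*}$ by the PI outcomes for $\mathcal{D}'$, losing at most $2^{-2\lambda}\cdot 2^{-4\lambda^{0.3\constmoecoll}}$ in the probability and yielding the first bullet.

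For the second bullet, I would start from Claim \ref{claim:femesresdist}, which guarantees that $\Pr[b_{x,j^*} - b_{x,j^*+1} > \frac{29\gamma}{32k} \wedge b_{y,j^*} - b_{y,j^*+1} > \frac{29\gamma}{32k}] \geq 1/(4p(\lambda)k^3(\lambda))$. Using $\expfromddd{\mathcal{D}_{j^*+1}}{x,y} \approx^c_\nu \expfromddd{\mathcal{D}''}{x,y}$ together with \cref{thm:distti} and \cref{thm:multiapismallproof}, I would transfer the gap from $b_{\ell,j^*}$ versus API outcomes for $\mathcal{D}_{j^*+1}$ into a gap between $b_{\ell,j^*}$ and PI outcomes for $\mathcal{D}''$, absorbing the $\poly(\lambda)\cdot\delta$ and $\nu$ losses into a polynomial factor. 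As in the PKE proof, since the register $\tau$ is produced from $\sigma$ by an efficient procedure using only $pp$, the computational indistinguishabilities continue to hold with $\tau$ in place of $\sigma$, so \cref{thm:distti} applies directly to the actual joint measurements on $\tau$.

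The main obstacle is purely bookkeeping rather than conceptual: one must verify that the chain of seven intermediate indistinguishabilities composes to give the required $\nu$-closeness between $\mathcal{D}_{j^*}$ and $\mathcal{D}'$ (and between $\mathcal{D}_{j^*+1}$ and $\mathcal{D}''$), while the cumulative advantage loss stays within the $\nu < \mu^2\cdot \poly(\lambda)$ budget demanded by the quantitative version of \cref{thm:distti}. The parameter choice $\nu(\lambda) = 2^{-5\lambda-Q(\lambda)}\cdot 2^{-8\lambda^{0.3\constmoecoll}}$ from the construction was made precisely to accommodate this loss, and the larger input length $L(\lambda) = Q(\lambda) + \lambda$ for the underlying MoE game absorbs the extra $2^{Q(\lambda)}$ factor arising from the fact that the hybrid ladder here ranges over identities of length $Q(\lambda) + \lambda$ rather than $\lambda$.
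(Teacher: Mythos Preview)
Your proposal is correct and matches the paper's approach exactly: the paper's proof of this claim is simply ``Follows from the same argument as in \cref{claim:dprimdprim},'' and your write-up spells out precisely that transfer, including the relevant FE-specific parameter bookkeeping.
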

\begin{proof}
    Follows from the same argument as in \cref{claim:dprimdprim}.
\end{proof}
\begin{claim}
    There exist efficient $\adve_1', \adve_2'$ such that  $(\adve'_0, \adve_1', \adve_2')$ wins $\mathcal{G}$ with probability $\frac{1}{2^{0.4\cdot\lambda^{\constmoecoll}}}.$
\end{claim}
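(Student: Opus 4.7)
The plan is to mirror the structure of \cref{claim:pkeextract} from the PKE proof, using the gap between $\mathsf{PI}_{\mathcal{D}'}$ and $\mathsf{PI}_{\mathcal{D}''}$ outcomes established in \cref{claim:fedprimdprim} together with compute-and-compare obfuscation extractability to build $\adve'_1, \adve'_2$. First, I would define experiments $\expfromddd{\mathcal{C}}{x}'$ and $\expfromddd{\mathcal{C}}{y}''$ analogous to the PKE case, where the former additionally outputs the internal measurement outcomes produced by $\adve'_0$ on $\regi{adv}[x]$, and the latter is conditioned on a prior successful extraction on $\regi{adv}[x]$. The gap between $\mathsf{PI}_{x, \mathcal{D}'}$ and $\mathsf{PI}_{x, \mathcal{D}''}$ on $\tau[1]$ from \cref{claim:fedprimdprim} combined with \cref{thm:distti} and \cref{thm:singleatiprop} forces $\expfromddd{\mathcal{D}'}{x}' \not\approx^c \expfromddd{\mathcal{D}''}{x}'$, since otherwise the measurement outcomes would have indistinguishable distributions, contradicting the jump.

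Next, I would introduce a hybrid distribution $\mathcal{D}_{sim}$ obtained from $\mathcal{D}'$ by replacing $\mathsf{OCC} \samp \mathsf{CCObf.Obf}(g, y, ct^*)$ with $\mathsf{OCC} \samp \mathsf{CCObf.Sim}(1^\lambda, |g|, |y|, |ct^*|)$. By the triangle inequality, at least one of $\expfromddd{\mathcal{D}'}{x}' \not\approx^c \expfromddd{\mathcal{D}_{sim}}{x}'$ or $\expfromddd{\mathcal{D}''}{x}' \not\approx^c \expfromddd{\mathcal{D}_{sim}}{x}'$ holds. WLOG assume the former. I then construct a distinguisher $\adve_{CC}$ against compute-and-compare obfuscation for a distribution $\mathcal{B}$ that samples $(g, y, ct^*)$ exactly as in $\mathcal{D}'$ and outputs $\regi{adv}[x]$ together with all auxiliary public parameters as quantum auxiliary information. $\adve_{CC}$ wraps the received obfuscated program inside the $\io$ of a program that mimics $\mathsf{PCt}^{(j^*, 0, 3)}$, using a punctured $K_2\{id_{j^*} || f_{j^*}\}$ to answer all other identity queries. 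Crucially, the punctured master secret key component $pmsk$ of the ciphertext distribution can also be simulated by $\adve_{CC}$ since it only depends on $cmsk$ and $K_1\{id_{j^*} || f_{j^*}\}$, which $\adve_{CC}$ samples itself. By the extractability corollary of \cref{thm:ccobf} for sub-exponentially unpredictable distributions, this yields an efficient $\adve'_1$ that, from $\regi{adv}[x]$, $(A^{j^*}_i, r)$, and the associated classical side information, recovers vectors $(u_i)_i$ satisfying $g((u_i)) = y$, i.e., vectors in the correct cosets for $id_{j^*}||f_{j^*}$, with probability at least $2^{-\lambda^{0.2 \constmoecoll}}$.

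Finally, for simultaneous extraction on $\regi{adv}[y]$, I would condition on $\adve'_1$ succeeding and let $\xi$ denote the resulting post-measurement state on $\regi{adv}[y]$. Using \cref{thm:simulproj} together with the non-negligible extraction probability, I transfer the gap of \cref{claim:fedprimdprim} to $\xi$, obtaining bounds
\begin{align*}
\Pr[\mathsf{PI}_{y,\mathcal{D}'}\cdot \xi \le b_{y,j^*} - \tfrac{3\gamma}{32k}] &\le \sqrt{2^{-2\lambda}\cdot 2^{-4\lambda^{0.3\constmoecoll}}}\cdot 2^{\lambda^{0.2\constmoecoll}},\\
\Pr[\mathsf{PI}_{y,\mathcal{D}''}\cdot \xi < b_{y,j^*} - \tfrac{28\gamma}{32k}] &\ge 2^{-\lambda^{0.3\constmoecoll}},
\end{align*}
by the same Bayes' rule/commutation of disjoint measurements argument used in \cref{claim:pkeextract}. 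This forces $\expfromddd{\mathcal{D}'}{y}'' \not\approx^c_\nu \expfromddd{\mathcal{D}''}{y}''$ against $2^{3\lambda}\cdot 2^{2\lambda^{0.3\constmoecoll}}$-time adversaries with advantage threshold $\nu = 2^{-2\lambda - 1}\cdot 2^{-2\lambda^{0.3\constmoecoll}}$, and a second invocation of \cref{thm:ccobf} on the analogous distribution yields $\adve'_2$ extracting correct vectors with probability at least $2^{-\lambda^{0.2\constmoecoll}}$ conditioned on $\adve'_1$ succeeding.

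The main obstacle, as in the PKE case, is the simultaneous extraction: naively running the extractor on $\regi{adv}[x]$ could irreversibly disturb $\regi{adv}[y]$. This is handled by the observation that $\mathsf{PI}$ is projective and local operations on disjoint registers commute, so the gap on $\regi{adv}[y]$ is preserved under any channel applied to $\regi{adv}[x]$, including the extraction channel. The only new subtlety compared to the PKE proof is that the distributions $\mathcal{D}', \mathcal{D}''$ now also contain the punctured master secret key $pmsk$ fed to the freeloaders by $\qunivcla$; but since $pmsk$ depends only on $cmsk$ and the punctured key $K_1\{id_{j^*}||f_{j^*}\}$ (both known to $\adve_{CC}$) and since $\mathsf{IBE}$-punctured-key security is already absorbed into the definition of $\mathcal{D}_j^{(\Delta,4)}$, the hybrids go through unchanged. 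Multiplying the probabilities, $(\adve'_0, \adve'_1, \adve'_2)$ wins $\mathcal{G}$ with probability at least $2^{-0.4\cdot\lambda^{\constmoecoll}}$, contradicting \cref{defn:strmoecoll} after reducing $\mathcal{G}$ to $\moecoll$ exactly as in \cref{claim:pkereducetome}.
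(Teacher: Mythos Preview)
Your proposal is correct and follows essentially the same approach as the paper, which simply defers to the PKE argument of \cref{claim:pkeextract}. You even correctly identify the one new subtlety in the FE setting—that the compute-and-compare distinguisher must also simulate the punctured master secret key $pmsk$, which it can do since $pmsk$ depends only on $cmsk$ and $K_1\{id_{j^*}\|f_{j^*}\}$, both of which the distribution $\mathcal{B}$ samples for itself.
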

\begin{proof}
    Follows from the same argument as in \cref{claim:pkeextract}.
\end{proof}
\begin{claim}\label{claim:fepkereducetome}
There exists efficient $\adve'' = (\adve_0'', \adve_1'', \adve_2'')$ such that
       \begin{equation*}
           \Pr[\moecollpunckey(\lambda, \idlen(\lambda), \adve'') = 1] \geq {2^{-0.4\cdot\lambda^{\constmoecoll}}}.
       \end{equation*}
\end{claim}
\begin{proof}
It is straightforward to reduce $\mathcal{G}$ to $\moecollpunckey$, which is proven secure in the proof of \cref{defn:strmoecoll}. See also \cref{claim:pkereducetome}.
\end{proof}
The above constitutes a contradiction by \cref{defn:strmoecoll}, therefore this completes the security proof.

\section{Signature Scheme with Copy-Protected Keys}\label{sec:unclondigsig}
In this section, we define signature schemes with copy-protected signing keys. Then, we give our construction based on coset states and prove it secure.
\subsection{Definitions}
\begin{definition}[Signature Scheme with Copy-Protected Secret Keys]\label{defn:cpds}
A signature scheme with copy-protected secret keys consists of the following efficient algorithms.

\begin{itemize}
\item $\keygen(1^\lambda)$: Takes in the security parameter, output a classical signing key $sk$ and a classical verification key $vk$.
    \item $\qkeygen(sk)$: Takes as input the classical signing key and outputs a quantum signing key.

    \item $\Sign(\regi{sk}, m)$: Takes in a quantum signing key and a message $m$, outputs a classical signature on $m$.

    \item $\Ver(vk, m, sig)$: Takes in the verification key, a message $m \in \mathcal{M}$ and a claimed signature $sig$ on $m$, outputs $1$ (accept) or $0$ (reject).
\end{itemize}

We require correctness.
\paragraph{Correctness} For all messages $m \in \mathcal{M}$, \begin{equation*}
    \Pr[\Ver({vk}, sig) = 1 : \begin{array}{c}
         sk, vk \samp \mathsf{Setup}(1^\lambda)  \\
         \regi{sk} \samp \qkeygen(sk) \\
         sig \samp \mathsf{Sign}(\regi{sk}, m)
    \end{array}] \geq 1 - \negl(\lambda).
\end{equation*}
\end{definition}
\begin{definition}[Pseudodeterministic Signatures]\label{defn:dspr}
    A signature scheme is said to be \emph{pseudodeterministic} if for any value of $sk, vk$ in the support induced by $\mathsf{KeyGen}$, for any message $m \in \mathcal{M}$, there exists a fixed signature $sig_{sk, vk, m}$ such that
        \begin{equation*}
    \Pr[sig = sig_{sk, vk, m} : \begin{array}{c}
         sk, vk \samp \mathsf{Setup}(1^\lambda)  \\
         \regi{sk} \samp \qkeygen(sk) \\
         sig \samp \mathsf{Sign}(\regi{sk}, m)
    \end{array}] \geq 1 - \negl(\lambda).
\end{equation*}
\end{definition}
As observed by \cite{LLQZ22}, a pseudodeterministic signature scheme, along with \cref{lem:asgoodasnew}, means that we can implement the signing in a way such that the quantum secret key is only negligibly disturbed. Thus, we can reuse the key to sign any polynomial number of times. Our scheme (\cref{sec:sigcons}) will be pseudodeterministic.

We now define anti-piracy security for signature schemes, similar to our PKE definition (\cref{defn:regularpkeantipir}).
\begin{definition}[Anti-Piracy Security for Signature Schemes]\label{defn:unclonsign}
Let $\dss$ be a signature scheme with copy-protected secret keys. Consider the following game between the challenger and an adversary $\adve$.
\paragraph{$\underline{\unclonsiggame(\lambda, \adve)}$}
\begin{enumerate}
    \item The challenger runs $sk, vk \samp \dss.\mathsf{Setup}(1^\lambda)$ and submits $vk$ to the adversary.
    \item For multiple rounds, $\adve$ makes quantum key queries. For each query, the challenger generates a key as $\reg \samp \dss.\qkeygen(sk)$ and submits $\reg$ to the adversary.
    \item $\adve$ outputs a $(k + 1)$-partite register $\regi{adv}$ and freeloader unitaries $\{U_{\ell}\}_{\ell \in [k + 1]}$ where $k$ is the number of queries it made.
    \item The challenger executes the following for each $\ell \in [k + 1]$.
    \begin{enumerate}[label*=\arabic*.]
        \item $m_\ell \samp \mathcal{M}$.
        \item $sig_\ell \samp \qunivcla(U_\ell, \regi{adv}[\ell], m_\ell)$.
        \item Check if $\dss.\Ver(vk, m_\ell, sig_\ell) = 1$.
    \end{enumerate}
    \item The challenger outputs 1 if and only if all the checks pass.
\end{enumerate}
    We say that $\dss$ satisfies anti-piracy security if for any QPT adversary $\adve$,
    \begin{equation*}
    \Pr[\unclonsiggame(\lambda, \adve) = 1] \leq \negl(\lambda).
    \end{equation*}
\end{definition}

\subsection{Construction}\label{sec:sigcons}
In this section, we present our construction. Assume the existence of following primitives where we set $\nu(\lambda) = 2^{-6\lambda}\cdot2^{-8\lambda^{0.3\constmoecoll}}$.
\begin{itemize}
    \item $F$, prefix puncturable extracting PRF (\cref{defn:prepuncpf})  with error $2^{-\lambda - 1}$ for min-entropy $s_2(\lambda) + s_3(\lambda)$, with input length $m(\lambda)$ and output length $n(\lambda)$, 
    \item $\io$, indistinguishability obfuscation scheme that is $\nu(\lambda)$-secure against $2^{5\lambda}\cdot2^{8\lambda^{0.3\constmoecoll}}$-time adversaries,
    \item $\mathsf{IBE}$, identity-based encryption scheme for the identity space $\mathcal{ID} = \zo^\lambda$ (\cref{defn:ibe}) that is $\nu(\lambda)$-secure against $2^{5\lambda}\cdot2^{8\lambda^{0.3\constmoecoll}}$-time adversaries,
    \item $F_1$, puncturable PRF family with input length $\lambda$ and output length same as the size of the randomness used by $\mathsf{CosetGen}$ (\cref{defn:cosetgen}), that is $\nu(\lambda)$-secure against $2^{5\lambda}\cdot2^{8\lambda^{0.3\constmoecoll}}$-time adversaries,
    \item $F_2$, puncturable PRF family with input length $\lambda$ and output length same as the size of the randomness used by $\mathsf{IBE.Enc}$ that is $\nu(\lambda)$-secure against $2^{5\lambda}\cdot2^{8\lambda^{0.3\constmoecoll}}$-time adversaries,\footnote{We also assume that $F_2$ has uniformly random keys (when not punctured), that is, the key generation algorithm $F_2.\mathsf{KeyGen}$ simply samples and outputs a uniformly random string. This is satisfied by the puncturable PRF constructions based on one-way functions we are using.}
    \item $\mathsf{CCObf}$, compute-and-compare obfuscation for $2^{-\lambda^{0.2\cdot {\constmoecoll}}}$-unpredictable distributions that is $2^{-2\lambda - 1}\cdot 2^{-2\lambda^{0.3\constmoecoll}}$-secure against $2^{3\lambda}\cdot 2^{2\lambda^{0.3\constmoecoll}}$-time adversaries,
    \item $F_3$, puncturable statistically injective PRF with error probability $2^{-\lambda}$ with input length $s_3(\lambda)$ and output length $s_2(\lambda)$,
    \item $F_4$, puncturable PRF with input length $s_2(\lambda)$ and output length $s_3(\lambda)$, 
        \item $G_1$, a pseudorandom generator with input length $n(\lambda)$ and output length $n(\lambda)$ plus the key size of the PRF $F_2$,

    \item $G_2$, a pseudorandom generator with input length $s_1(\lambda)/2$ and output length 
    $s_1(\lambda)$,
    \item $G_3$, a pseudorandom generator with input length $\lambda$ and output length 
    $2\cdot\lambda$,

    \item $f$, a subexponentially secure injective one-way function with input space $\zo^{n(\lambda)}$.
\end{itemize}

We also set the parameters from above as follows:
\begin{itemize}
\item $n(\lambda) = \lambda$,
\item $s_1(\lambda) = \cosettcount$,
\item $s_3(\lambda) - s_1(\lambda) - 2\lambda$ to be larger than the size of the obfuscations (of the program $Q$) defined in \cref{defn:hiddentrigger},
\item $s_2(\lambda) \geq 2\cdot s_3(\lambda) + \lambda$,
\item $s_2(\lambda) + s_3(\lambda) \geq n(\lambda) + 2\lambda + 4$,
\item $m(\lambda) = s_1(\lambda) + s_2(\lambda) + s_3(\lambda)$.
\end{itemize}

As in our other schemes, while some of our security assumptions above are exponential with specific exponents, all of these assumptions can be based solely on subexponential hardness for any exponent, since we can always scale the security parameter by a polynomial factor when instantiating the underlying primitives.

Set $L(\lambda) = \lambda$ and therefore $c_L(\lambda) = 24\cdot\lambda^3$ (see \cref{defn:strmoecoll}). We also assume that all obfuscated programs in the construction and in the proof are appropriately padded.

We now give our signature scheme with copy-protected signing keys, for the message space $\mathcal{M} = \zo^{m(\lambda)}$.
\paragraph{$\underline{\dss.\mathsf{Setup}(1^{\lambda})}$}
\begin{enumerate}
    \item Sample PRF keys $K \samp F.\mathsf{KeyGen}(1^\lambda)$ and $K_i \samp F_i.\mathsf{KeyGen}(1^\lambda)$ for $i \in \{1, 3, 4\}$.
    \item Sample $cpk, csmk \samp \mathsf{IBE.Setup}(1^\lambda)$.
    \item Sample $\mathsf{OPVer} \samp \io(\mathsf{PVer})$ where $\mathsf{PVer}$ is the following program.
    
\begin{mdframed}
        {\bf $\underline{\mathsf{PVer}(m, sig)}$}
        
        {\bf Hardcoded: $K, K_3, K_4$}
        \begin{enumerate}[label=\arabic*.]
        \textbf{Hidden Trigger Check}
        \item Parse $m_1 || m_2 || m_3 = m$ with $|m_i| = s_i$.
        \item Compute $m_1' || \mathsf{OQ}' || r'  = F_4(K_4, m_2) \oplus m_3$.
        \item Check if $m_1' = m_1$ and $m_2 = F_3(K_3, m_1' || \mathsf{OQ}' || r' )$. If so, treat $\mathsf{OQ}'$ as a classical circuit, output $\mathsf{OQ}'(\textsf{mode}=\textsf{verify}, sig || 0^{\cosettcount \cdot \lambda})$ and terminate.\\
        \textbf{Normal Mode}
        \item Parse $y || K_2' = G_1(F(K, m))$ with $|y| = n(\lambda)$.
        \item Output $1$ if $f(sig) = f(y)$. Otherwise, output $0$.
        \end{enumerate}

    \end{mdframed}
    
 \item Sample $\mathsf{OPMem} \samp \io(\mathsf{PMem}_{K_1})$, where $\mathsf{PMem}_{K_1}$ is the following program.
    
\begin{mdframed}
        {\bf $\underline{\mathsf{PMem}_{K_1}(id, u_1, \dots, u_{\cosettcount}, x)}$}
        
        {\bf Hardcoded: $K_1$}
        \begin{enumerate}[label=\arabic*.]
            \item $(A_i, s_i, s_i')_{i \in [\cosettcount]} \samp \mathsf{CosetGen}(1^{L(\lambda) + \lambda}; F_1(K_1, id))$.
            \item For each $i \in [\cosettcount]$, check if $u_i \in A_i + s_i$ if $(x)_i = 0$ and check if $u_i \in A^{\perp}_i + s'_i$ if $(x)_i = 1$. If any of the checks fail, output $0$ and terminate.
            \item Output $1$.
        \end{enumerate}

    \end{mdframed}
    
    \item Sample $\mathsf{OPEval} \samp \io(\mathsf{PEval})$, where $\mathsf{PEval}$ is the following program.\footnote{Note that it is also possible to put the coset generation PRF key $K_1$ directly inside $\mathsf{OPEval}$ due to the $\io$ security. However, we elect to use $\mathsf{OPMem}$ to preserve the similarities to our PKE construction.}

    \begin{mdframed}
        {\bf $\underline{\mathsf{PEval}(m, id, u_1, \dots, u_{\cosettcount})}$}
        
        {\bf Hardcoded: $\mathsf{OPMem}, cpk, K, K_3, K_4$}
        \begin{enumerate}[label=\arabic*.]
        \textbf{Hidden Trigger Check}
        \item Parse $m_1 || m_2 || m_3 = m$ with $|m_i| = s_i$.
        \item Compute $m_1'||  \mathsf{OQ}' || r'  = F_4(K_4, m_2) \oplus m_3$.
        \item Check if $m_1' = m_1$ and $m_2 = F_3(K_3, m_1'||\mathsf{OQ}'  || r' )$. If so, treat $\mathsf{OQ}'$ as a classical circuit, output $\mathsf{OQ}'(\textsf{mode}=\textsf{eval}, id, u_1, \dots, u_{\cosettcount})$ and terminate.\\
        \textbf{Normal Mode}
            \item Run $\mathsf{OPMem}(id, u_1, \dots, u_{\cosettcount}, m_1)$. If it outputs $0$, output $\perp$ and terminate.
            \item Parse $y || K_2' = G_1(F(K, m))$ with $|y| = n(\lambda)$.
            \item Output $\mathsf{IBE.Enc}(cpk, id, y; F_2(K_2', id))$.
        \end{enumerate}
    \end{mdframed}
    \item Set $vk = \mathsf{OPVer}$ and $sk = (cmsk, cpk, K_1, \mathsf{OPEval})$.
    \item Output $(vk, sk)$.
\end{enumerate}

\paragraph{$\underline{\mathsf{DS.QKeyGen}(sk)}$}
\begin{enumerate}
    \item Parse $ (cmsk, cpk, K_1, \mathsf{OPEval}) = sk$.

    \item Sample $id \samp \zo^{\lambda}$.
    \item $(A_i, s_i, s_i')_{i \in [\cosettcount]} = \mathsf{CosetGen}(1^{L(\lambda) + \lambda}; F_1(K_1, id))$.
    \item $ck \samp \mathsf{IBE.KeyGen}(cmsk, id)$.
    \item Output $\left(\ket{A_{i, s_i, s'_i}}\right)_{i \in [\cosettcount]}, ck, id,  \mathsf{OPEval}$.
\end{enumerate}

\paragraph{$\underline{\mathsf{DS.Sign}(\regi{key}, m)}$}
\begin{enumerate}
    \item Parse $((\reg_i)_{i \in [\cosettcount]}, ck, id,  \mathsf{OPEval}) = \regi{key}$.
    \item Parse $m_1 || m_2 || m_3 = m$ with $|m_i| = s_i$.
    \item For indices $i \in [\cosettcount]$ such that $(m_0)_i = 1$, apply $H^{\otimes \kappa(L(\lambda) + \lambda)}$ to $\reg_i$.
    \item Run the program $\mathsf{OPEval}$ coherently on $m, id$ and $(\reg_i)_{i \in [\cosettcount]}$.
    \item Measure the output register and denote the outcome by $cct$.
    \item Output $\mathsf{IBE.Dec}(ck, cct)$.
\end{enumerate}

\paragraph{$\underline{\mathsf{DS.Ver}(vk, m, sig)}$}
\begin{enumerate}
    \item Parse $\mathsf{OPVer} = vk$.
    \item Output $\mathsf{OPVer}(m, sig)$.
\end{enumerate}

We claim that the construction is correct and secure.
\begin{theorem}\label{thm:dscorrect}
 $\mathsf{DS}$ satisfies correctness (\cref{defn:cpds}) and psuedodeterminism (\cref{defn:dspr}), and hence reusability.   
\end{theorem}
\begin{theorem}\label{thm:dseufcma}
 $\mathsf{DS}$ satisfies selective\footnote{It can also be made by adaptively secure by complexity leveraging and slightly changing the parameters of the underlying primitives, since we are already assuming subexponentially secure primitives.} message existential unforgeability security.
\end{theorem}
\begin{theorem}\label{thm:dsantipiracy}
 $\mathsf{DS}$ satisfies anti-piracy security (\cref{defn:unclonsign}).
\end{theorem}
When we instantiate the assumed building blocks with known constructions, we get the following corollary.
\begin{corollary}\label{thm:cpdsexists}
Assuming subexponentially secure $i\mathcal{O}$ and subexponentially secure LWE, there exists a signature scheme that satisfies anti-piracy security against unbounded collusion.
\end{corollary}

\subsection{Proof of Correctness and Reusability}\label{sec:digsigcorrect}
It is easy to see that the scheme has psuedodeterministic signatures. If a message does not satisfy the hidden trigger condition, the signing procedure will output the first $n(\lambda)$ bits of $ G_1(F(K, m))$ by the correctness of the $\io$ and $\mathsf{IBE}$ schemes. In particular, we are assuming perfect correctness for the IBE scheme, which is indeed true for our instantiation (\cref{thm:subexpibe}). As a result, the IBE ciphertext $\mathsf{IBE.Enc}(cpk, id, y; F_2(K_2', id))$ output by $\mathsf{OPEval}$ will decrypt to $y$, the first $n(\lambda)$ bits of $G_1(F(K, m))$, for any $m$. If the message $m$ does satisfy the hidden trigger condition, then the output of $\mathsf{OPEval}$ is deterministically determined by $m$ (once $sk, vk$ are fixed). Then, we run $\mathsf{IBE.Dec}$ which we can assume to be deterministic since it has perfect correctness. Therefore, psuedodeterministic signatures property indeed holds for all $m \in \mathcal{M}$.

Now, we move onto correctness. The following discussion follows closely to the proof of correctness given by \cite{LLQZ22} for their scheme. First, it is easy to see that the correctness holds for any $m$ that does not satisfy the hidden trigger condition. We will show that, for any fixed message $m \in \mathcal{M}$, with overwhelming probability over the randomness of the scheme (including setup), the message $m$ will not satisfy the hidden trigger condition, thus proving correctness (\cref{defn:cpds}).

Now, fix a message $m \in \mathcal{M}$. Let $\Hat{F_4}$ denote the truncated version of $F_4$ where we only keep the first $s_1$ bits, which is also a PRF. Similarly, let $\Hat{m_3}$ denote the first $s_1$ bits of $m_3$. Observe that if $m \in \mathcal{M}$ does not satisfy correctness, then by above it satisfies the hidden trigger condition, and therefore satisfies $\Hat{F_4}(K_4, m_2) = m_1 \oplus \Hat{m_3}$ with probability $1/\poly(\lambda)$. Therefore, any sequence of messages that does not satisfy correctness gives us a (non-uniform) adversary for the PRF $\Hat{F_4}$ where we can simply distinguish an output $\Hat{F_4}(K_4, m_2)$ of the PRF from a random value by checking if it is equal to $m_1 \oplus \Hat{m_3}$, which would be satisfied by a random value only with exponentially small probability. This breaks the PRF $\Hat{F_4}$ with probability $1/\poly(\lambda)$, which is a contradiction.

\subsection{Proof of Existential Unforgeability}
We prove the security through a series of hybrids, which are binary random variables denoting the outcome of the forgery game and each one is constructed by modifying the previous one.

\paragraph{\underline{$\hyb_0$}:} The original selective message existential unforgeability security game. 
\paragraph{\underline{$\hyb_1$}:} We define the $\hyb_1$ so that, after the adversary outputs the challenge message $m^*$, the challenger checks if $m^*$ satisfies the hidden trigger condition, and it terminates if so. As argued in \cref{sec:digsigcorrect}, this can only happen with negligible probability. Hence, $\hyb_0 \approx \hyb_1$.

\paragraph{\underline{$\hyb_2$}:}
We first compute $g^* = G_1(F(K, m^*))$, and parse $y^* || K_2^* = g^*$ with $|y^*| = n(\lambda)$ and we set $z^* = f(y^*)$. Finally, we now sample $\mathsf{OPVer}$ as $\mathsf{OPVer} \samp \io(\mathsf{PVer}')$ 
    
\begin{mdframed}
        {\bf $\underline{\mathsf{PVer}'(m, sig)}$}
        
        {\bf Hardcoded: $\textcolor{red}{K\{m^*\}}, K_3, K_4, \textcolor{red}{m^*, z^*}$}
        \begin{enumerate}[label=\arabic*.]
        \textbf{Hidden Trigger Check}
        \item Parse $m_1 || m_2 || m_3 = m$ with $|m_i| = s_i$.
        \item Compute $m_1' || \mathsf{OQ}' || r'  = F_4(K_4, m_2) \oplus m_3$.
        \item Check if $m_1' = m_1$ and $m_2 = F_3(K_3, m_1' || \mathsf{OQ}' || r' )$. If so, treat $\mathsf{OQ}'$ as a classical circuit, output $\mathsf{OQ}'(\textsf{mode}=\textsf{verify}, sig || 0^{\cosettcount \cdot \lambda})$ and terminate.\\
        \textbf{Normal Mode}
        \textcolor{red}{\item If $m = m^*$: Output $1$ if $f(sig) = z^*$ and output $0$ if  $f(sig) \neq z^*$, and terminate.}
        \item Parse $y || K_2' = G_1(F(K\{m^*\}, m^*))$ with $|y| = n(\lambda)$.
        \item Output $1$ if $f(sig) = f(y)$. Otherwise, output $0$.
        \end{enumerate}

    \end{mdframed}
By the punctured key correctness of $F$, the functionality of $\mathsf{PVer}$ did not change. Thus, $\hyb_1 \approx \hyb_2$ by the security of $\io$.
 
 \paragraph{\underline{$\hyb_3$}:} We now sample $g^*$ uniformly at random. Observe that $\mathsf{PVer}$ only has the punctured key $K\{m^*\}$, and the signing oracle only answers queries for messages $m \neq m^*$, which can also be simulated using $K\{m^*\}$ rather than $K$. Thus, we have $\hyb_2 \approx \hyb_3$ by the punctured key security of $F$ and the security of the PRG $G_1$.

We claim that $\Pr[\hyb_3 = 1] \leq \negl(\lambda)$. Observe that, since $m^*$ is not a hidden trigger input, $\hyb_3 = 1$ occurs if and only if the forged signature $sig^*$ output by the adversary satisfies $f(sig^*) = z^*$, where $z^* = f(y^*)$ and $y^*$ is uniformly at random. Therefore, $\Pr[\hyb_3 = 1] \leq \negl(\lambda)$ follows by the security of the one-way function $f$.

\subsection{Proof of Anti-Piracy Security}\label{sec:digsigcpproof}
In this section, we prove \cref{thm:dsantipiracy}.

First, we show that hidden trigger inputs are indistinguishable from uniformly random challenge strings, even when the adversary gets a (obfuscated) program that allows it to generate its own hidden trigger inputs.

\begin{definition}[Hidden Trigger Inputs]\label{defn:hiddentrigger}
    Let $\mathsf{GenTrigger}_{K, K_3, K_4, \mathsf{OPMem}, cpk}$ be the following program, where the hardcoded values are as in the signature scheme construction (\cref{sec:sigcons}). The input format to the program will be clear from context.
    \begin{mdframed}
        {\bf $\underline{\mathsf{GenTrigger}_{K, K_3, K_4, \mathsf{OPMem},cpk}(r_1, r_2, r_3)}$}
        
        {\bf Hardcoded: $K, K_3, K_4, \mathsf{OPMem}, cpk$}
        \begin{enumerate}
        
            \item Parse $x_1 || x_2 || x_3 = G_2(r_1)$ with $|x_i| = s_i$.
            \item Parse $y || K_2' = G_1(F(K, x))$ with $|y| = n(\lambda)$.
            \item $\mathsf{OQ} \samp \io(Q_{cpk, \mathsf{OPMem}, x_1, K'_2, y}; r_3)$.
            \item $x_2' = F_3(K_3, x_1|| \mathsf{OQ} || G_3(r_2)  )$.
            \item $x_3' = F_4(K_4, x_2') \oplus (x_1|| \mathsf{OQ} ||G_3(r_2))$.
            \item Output $x_1 || x_2' || x_3'$.
        \end{enumerate}
        
    \end{mdframed} 
    The circuit $Q_{cpk, \mathsf{OPMem}, x_1, K'_2, y}$ used above is the following. Note that it contains hardcoded values that are computed during the execution of $\mathsf{GenTrigger}$.
    \begin{mdframed}
        {\bf $\underline{Q_{cpk, \mathsf{OPMem}, x_1, K'_2, y}(\mathsf{mode}, w)}$}
        
        {\bf Hardcoded: $cpk, \mathsf{OPMem}, x_1, K'_2, y$}
        \begin{enumerate}
            \item If $\mathsf{mode} = \mathsf{eval}$:
           \begin{enumerate}[label=\arabic*.]
           \item Parse $id, u_1, \dots, u_{\cosettcount} = w$.
           \item Run $\mathsf{OPMem}(id, u_1, \dots, u_{\cosettcount},  \textcolor{blue}{x_1})$. If it outputs $0$, output $\perp$ and terminate.
            \item Output $\mathsf{IBE.Enc}(cpk, id, \textcolor{blue}{y}; F_2( \textcolor{blue}{K'_2}, id))$.
           \end{enumerate}

             \item If $\mathsf{mode} = \mathsf{check}$, parse $sig || 0^{\cosettcount \cdot \lambda} = w$ and check if $f(sig) = f(\textcolor{blue}{y})$. If so, output $1$, and otherwise output $0$.        \end{enumerate}
        
    \end{mdframed} 
\end{definition}
\begin{lemma}\label{lem:hiddentrigger}
    Let $\dss$ be the signature scheme from \cref{sec:sigcons} and let $\granlen$ denote the length of the randomness used by $\io$ to obfuscate $Q$ in \cref{defn:hiddentrigger}. Consider the following experiment, parameterized by $\ell(\lambda)$.
\paragraph{$\underline{\hiddentriggame(\lambda, \adve, \ell(\lambda), b)}$}
\begin{enumerate}
    \item The challenger runs $sk, vk \samp \dss.\mathsf{Setup}(1^\lambda)$ and submits $vk$ to the adversary.
    \item For multiple rounds, $\adve$ makes quantum key queries. For each query, the challenger generates a key as $\reg \samp \dss.\qkeygen(sk)$ and submits $\reg$ to the adversary.
    \item The adversary outputs a register $\regi{adv}$.
    \item Sample $\mathsf{OGenTrigger} \samp \io(\mathsf{GenTrigger})$.
    \item For $i = 1$ to $\ell$:
    \begin{enumerate}[label=\arabic*.]
        \item Sample $r^i_1  \samp \zo^{s_1(\lambda)/2}$. 
        \item Sample $r^i_2 \samp \zo^{\lambda}$.
        \item Sample $r^i_3  \samp \zo^{\granlen}$. 
        \item Set $z^{0,i} = \mathsf{OGenTrigger}(r_1^i, r_2^i, r_3^i)$.
        \item Sample $z^{1,i} \samp \zo^{m(\lambda)}$.
        
    \end{enumerate}
    
    \item Output $((z^{b,i})_{i \in [\ell]}, \mathsf{OGenTrigger}, \regi{adv})$.
\end{enumerate}
    Then, for any polynomial $\ell(\lambda)$,
    \begin{equation*}
   \hiddentriggame(\lambda, \adve, \ell(\lambda), 0) \approx_c \hiddentriggame(\lambda, \adve, \ell(\lambda), 1).
    \end{equation*}
\end{lemma}
For now, assume \cref{lem:hiddentrigger} and we prove it at the end of the section. We will prove anti-piracy security through a series of hybrids. Define $\hyb_0$ to be the original game $\unclonsiggame(\lambda, \adve)$.

\paragraph{$\underline{\hyb_1}$}: We now sample $m^\ell$ for all $\ell \in [k + 1]$ as hidden triggers 
(\cref{defn:hiddentrigger}): $m^\ell \samp \mathsf{OGenTrigger}(r_1^\ell, r_2^\ell, r_3^\ell)$ where $r_1^\ell, r_2^\ell, r_3^\ell$ are sampled uniformly at random. We get $\hyb_0 \approx \hyb_1$ by \cref{lem:hiddentrigger}. Crucially note that at the end of the game in $\hyb_1$, the forged signatures output by the adversary are verified using $\mathsf{OPVer}$, which is in adversary's view. Hence, the adversary can indeed simulate $\hyb_0, \hyb_1$ in the reduction to \cref{lem:hiddentrigger}.

\paragraph{$\underline{\hyb_2}$}: We now sample $m^\ell$ for $\ell \in [k+1]$ as follows.

    \begin{enumerate}
        
        \item Sample $r_1^\ell \samp \zo^{s_1(\lambda)/2}$. 
        \item Sample $r_2^\ell \samp \zo^{\lambda}$.
        \item Sample $r_3^\ell  \samp \zo^{\granlen}$. 
        \item Let $x^\ell = G_2(r_1^\ell)$.
        \item Parse $x_1^\ell || x_2^\ell || x_3^\ell = x^\ell$ with $|x^\ell_i| = s_i$.
        \item Parse $y^\ell || K_2^\ell = G_1(F(K, x^\ell))$ with $|y^\ell| = n(\lambda)$.
        \item $\mathsf{OQ}^\ell \samp \io(Q_{cpk, \mathsf{OPMem}, x_1^\ell, K^\ell_2, y^\ell}; r^\ell_3)$.
        \item $x_2^{'\ell} = F_3(K_3, x_1^\ell|| \mathsf{OQ}^\ell || G_3(r_2^\ell))$.
        \item $x_3^{'\ell} = F_4(K_4, x_2^{'\ell}) \oplus (x_1^\ell|| \mathsf{OQ}^\ell || G_3(r_2^\ell))$.
        \item Set $m^\ell =  x_1^\ell || x_2^{'\ell} || x_3^{'\ell}$.
    \end{enumerate}

Note that this is only a syntactic change, we only unwrapped the sampling of the inputs $m^\ell$. Hence, $\hyb_1 \equiv \hyb_2$.

\paragraph{$\underline{\hyb_3}$}:
We now sample $x^\ell$ for all $\ell \in [k+1]$  uniformly at random. We get  $\hyb_2 \approx \hyb_3$ by the security of the PRG $G_2$.

\paragraph{$\underline{\hyb_4}$}: We now sample all $y^\ell$ and $K_{2}^{\ell}$ uniformly at random. First, observe that with overwhelming probability, the values $x^\ell$ for all $\ell \in [k + 1]$ will be unique. Then, observe that the inputs $x^\ell$ to $F$ have each min-entropy $s_2 + s_3$ given the view of the adversary, since only the first part $x^\ell_1$ is used during the hidden trigger generation and the rest is discarded. Therefore, by our choice of parameters for the extracting PRF $F$ and by the security of PRG $G_1$, we have $\hyb_3 \approx \hyb_4$.

\paragraph{$\underline{\hyb_5}$}: At the end of the game, instead of running $\mathsf{OPVer}$ on the forged signatures $sig^\ell$ output by the adversary, we compare them to $y^\ell$. Observe that since $m^\ell$ are hidden trigger inputs, running $\mathsf{OPVer}$ on $m^\ell$ makes it enter the hidden trigger mode, and it executes $\mathsf{OQ^\ell}$ with $\mathsf{mode}=\mathsf{verify}$, which checks if $f(sig^\ell) = f(y^\ell)$. Since $f$ is injective, this can happen if and only if $y^\ell = sig^\ell$. Hence, $\hyb_4 \equiv \hyb_5$.

Finally, observe that the hidden trigger inputs generated above in $\hyb_4$ 
are special encodings of $(\mathsf{OQ}^\ell, x_1^\ell)$, which are (\emph{almost}) the same as ciphertexts of our PKE scheme (\cref{sec:pkecons}) encrypting the random messages $y^\ell$; and we are comparing the adversary's outputs to $y^\ell$. Therefore, the security follows by the random message anti-piracy security (see \cref{sec:pkeproof}) of our scheme and we have $\Pr[\hyb_4 = 1] \leq \negl(\lambda)$.

The only difference between the programs $\mathsf{OQ}^\ell$ here and the ciphertext programs of our PKE is that $\mathsf{OQ}^\ell$ also has a mode that acts as a point on function on the encrypted message: the adversary can query it on some message $m'$ and can check if it equals the encrypted message. See \cref{appn:dstopke} for a more detailed discussion on how the security follows from our PKE security proof even the ciphertext programs are modified as such.

\subsubsection{Proof of \cref{lem:hiddentrigger}}\label{sec:hiddentriggerproof}
In this section, we prove \cref{lem:hiddentrigger}. We will only prove the case $\ell = 1$ - the general case for any polynomial $\ell(\lambda)$ follows easily by the hybrid lemma since $\mathsf{OGenTrigger}$ is given to the adversary.

We follow an indirect approach to prove the security by first showing the security of another game, $\hiddentriggame'$. That is, we will first show $\Pr[(\lambda, \adve) = 1] \leq 1/2 + \negl(\lambda)$.

\paragraph{$\underline{\hiddentriggame'(\lambda, \adve)}$}
\begin{enumerate}
    \item The challenger runs $sk, vk \samp \dss.\mathsf{Setup}(1^\lambda)$ and submits $vk$ to the adversary.
    \item For multiple rounds, $\adve$ makes quantum key queries. For each query, the challenger generates a key as $\reg \samp \dss.\qkeygen(sk)$ and submits $\reg$ to the adversary.
    \item The adversary outputs a register $\regi{adv}$.
    \item Sample $\mathsf{OGenTrigger} \samp \io(\mathsf{GenTrigger})$.
       
        \item Sample $r_1^* \samp \zo^{s_1(\lambda)/2}$. 
        \item Sample $r_2^* \samp \zo^{\lambda}$.
        \item Sample $r_3^*  \samp \zo^{\granlen}$. 
        \item Let $x^* = G_2(r_1^*)$.
        \item Parse $x_1^* || x_2^* || x_3^* = x^*$ with $|x^*_i| = s_i$.
        \item Parse $y^* || K_2^* = G_1(F(K, x^*))$ with $|y^*| = n(\lambda)$.
        \item $\mathsf{OQ}^* \samp \io(Q_{cpk, \mathsf{OPMem}, x_1^*, K^*_2, y^*}; r^*_3)$.
        \item $\Tilde{r} =  G_3(r_2^*)$.
        \item $x_2^{'*} = F_3(K_3, x_1^*|| \mathsf{OQ}^* ||  \Tilde{r})$.
        \item $x_3^{'*} = F_4(K_4, x_2^{'*}) \oplus (x_1^*|| \mathsf{OQ}^* || \Tilde{r})$.
        \item Set $z^{0} =  x_1^* || x_2^{'*} || x_3^{'*}$.
        \item Set $z^{1} =  x_1^* || x_2^* || x_3^*$.
    \item Sample $b \samp \zo$.
    \item Submit $z^{b}, \mathsf{OGenTrigger}$ to the adversary $\adve$ and the adversary outputs a bit $b'$.
    \item Output $1$ if and only if  $b' =  b$.
    
\end{enumerate}

First, note that the security of $\hiddentriggame'$ implies indistinguishability between $z^0$ and $z^1$ (given the adversary's state and $\mathsf{OGenTrigger}$) by the usual elementary argument. Since $r^*_1$ is random and $G_2$ is a secure PRG, we also obtain indistinguishability between $z^1$ and a random $x$. Combining these two gives us \cref{lem:hiddentrigger} (for $\ell = 1$) as desired.

We now prove security of $\hiddentriggame'$ through a series of hybrids, each of which is constructed by modifying the previous one. In the hybrids below, whenever we have a check in an obfuscated program where a variable is compared to multiple different values, or a PRF key is punctured at various values, we assume that these are (implicitly) coded in lexicographical order of these values\footnote{To re-emphasize, we order them according to the hardcoded value, not the variable name that denotes the hardcoded value.} to have symmetry which will be needed in the last hybrid.

\paragraph{$\underline{\hyb_0}$:} $\hiddentriggame'(\lambda, \adve)$.

\paragraph{$\underline{\hyb_1}$:} We now sample $x^*$ uniformly at random from $\zo^{m(\lambda)}$ instead of setting $x^* = G_2(r_1^*)$. By the security of $G_2$, we get $\hyb_0 \approx \hyb_1$.

\paragraph{$\underline{\hyb_2}$:} We now sample $\mathsf{OGenTrigger}$ as $\mathsf{OGenTrigger} \samp \io(\mathsf{GenTrigger}')$ where $K\{x_1^*||\cdot\}$ is the prefix punctured key sampled as $K\{x_1^*||\cdot\} \samp F.\mathsf{Puncture}(K, x_1^*)$.

\begin{mdframed}
        {\bf $\underline{\mathsf{GenTrigger}'_{\textcolor{red}{K\{x_1^*||\cdot\}}, K_3, K_4, \mathsf{OPMem},cpk}(r_1, r_2, r_3)}$}
        
        {\bf Hardcoded: $\textcolor{red}{K\{x_1^*||\cdot\}}, K_3, K_4, \mathsf{OPMem}, cpk$}
        \begin{enumerate}
        
            \item Parse $x_1 || x_2 || x_3 = G_2(r_1)$ with $|x_i| = s_i$.
            \item Parse $y || K_2' = F(\textcolor{red}{K\{x_1^*||\cdot\}}, x)$ with $|y| = n(\lambda)$.
            \item $\mathsf{OQ} \samp \io(Q_{cpk, \mathsf{OPMem}, x_1, K'_2, y}; r_3)$.
            \item $x_2' = F_3(K_3, x_1|| \mathsf{OQ} || G_3(r_2)  )$.
            \item $x_3' = F_4(K_4, x_2') \oplus (x_1|| \mathsf{OQ} ||G_3(r_2))$.
            \item Output $x_1 || x_2' || x_3'$.
        \end{enumerate}
        
    \end{mdframed} 

    We claim $\hyb_0 \approx \hyb_1$. Observe that, by the prefix punctured key correctness of $F$, the functionality of $\mathsf{GenTrigger}$ can only possible change if the input $x$ is such that the first $s_1(\lambda)$ bits of $G_2(r_1)$ equals $x_1^*$. However, the image set of $G_2$ has size at most $2^{s_1(\lambda)/2}$, and hence the same is true for the truncated version where we only keep the first $s_1(\lambda)$ bits of the outputs. However, $x^1*$ is sampled uniformly at random from $\zo^{s_1(\lambda)}$. Hence, the probability that the condition above occurs is at most $2^{-s_1(\lambda)/2}$ where $s_1(\lambda)$ is polynomial in $\lambda$. Hence, with overwhelming probability, the functionality of $\mathsf{GenTrigger}$ does not change, and $\hyb_1 \approx \hyb_2$ follows by the security of $\io$.

\paragraph{$\underline{\hyb_3}$:} We now sample $\mathsf{OPVer}$ as $\mathsf{OPVer} \samp \io(\mathsf{PVer}')$ and $\mathsf{OPEval}$ as $\mathsf{OPEval} \samp \io(\mathsf{PEval}')$.

\begin{mdframed}
        {\bf $\underline{\mathsf{PVer}'(m, sig)}$}
        
        {\bf Hardcoded: $\textcolor{red}{K\{z^0, z^1\}}, K_3, K_4, K_5$}
        \begin{enumerate}[label=\arabic*.]
        \textbf{Hidden Trigger Check}
        \item \textcolor{red}{If $m = z^0$ or $m = z^1$, output $\mathsf{OQ}^*(\textsf{mode}=\textsf{verify}, sig || 0^{\cosettcount \cdot \lambda})$ and terminate.}
        \item Parse $m_1 || m_2 || m_3 = m$ with $|m_i| = s_i$.
        \item Compute $m_1' || \mathsf{OQ}' || r'  = F_4(K_4, m_2) \oplus m_3$.
        \item Check if $m_1' = m_1$ and $m_2 = F_3(K_3, m_1' || \mathsf{OQ}' || r' )$. If so, treat $\mathsf{OQ}'$ as a classical circuit, output $\mathsf{OQ}'(\textsf{mode}=\textsf{verify}, sig || 0^{\cosettcount \cdot \lambda})$ and terminate.\\
        \textbf{Normal Mode}
        \item Parse $y || K_2' = G_1(F(\textcolor{red}{K\{z^0, z^1\}}, m))$ with $|y| = n(\lambda)$.
        \item Output $1$ if $f(sig) = f(y)$. Otherwise, output $0$.
        \end{enumerate}

    \end{mdframed}
    
    \begin{mdframed}
        {\bf $\underline{\mathsf{PEval}'(m, id, u_1, \dots, u_{\cosettcount})}$}
        
        {\bf Hardcoded: $\mathsf{OPMem}, cpk, \textcolor{red}{K\{z^0, z^1\}}, K_3, K_4, K_5$}
        \begin{enumerate}[label=\arabic*.]
        \textbf{Hidden Trigger Check}
        \textcolor{red}{\item If $m = z^0$ or $m = z^1$, output $\mathsf{OQ}^*(\textsf{mode}=\textsf{eval}, id, u_1, \dots, u_{\cosettcount})$ and terminate.}
        \item Parse $m_1 || m_2 || m_3 = m$ with $|m_i| = s_i$.
    
        \item Compute $m_1'||  \mathsf{OQ}' || r'  = F_4(K_4, m_2) \oplus m_3$.
        \item Check if $m_1' = m_1$ and $m_2 = F_3(K_3, m_1'||\mathsf{OQ}'  || r' )$. If so, treat $\mathsf{OQ}'$ as a classical circuit, output $\mathsf{OQ}'(\textsf{mode}=\textsf{eval}, id, u_1, \dots, u_{\cosettcount})$ and terminate.\\
        \textbf{Normal Mode}
            \item Run $\mathsf{OPMem}(id, u_1, \dots, u_{\cosettcount}, m_1)$. If it outputs $0$, output $\perp$ and terminate.
            \item Parse $y || K_2' = G_1(F(\textcolor{red}{K\{z^0, z^1\}}, m))$ with $|y| = n(\lambda)$.
            \item Output $\mathsf{IBE.Enc}(cpk, id, y; F_2(K_2', id))$.
        \end{enumerate}
    \end{mdframed}

By the punctured key correctness of $F$, the functionality of $\mathsf{PVer}$ can only possibly change if $m = z^0$ or $m = z^1$. If $m = z^0$, then the original program would execute the hidden trigger mode and hence would execute $\mathsf{OQ}^*$ due to the way we prepared $z^0$, hence the functionality would be the same. If $m = z^1$, then the original program would execute the normal mode, since with overwhelming probability a random input would not satisfy the hidden trigger condition\footnote{As discussed in \cref{sec:digsigcorrect}, any sequence of inputs that satisfy the hidden trigger condition with non-negligible probability gives us a way of breaking the security of PRF $F_4$. Since an adversary can easily sample random inputs, we could break the security of $F_4$ if random inputs satisfied the hidden trigger condition with non-negligible probaiblity.}. In that case, the original program's output would again be the same as the output of $\mathsf{OQ}^*$ due to the way $\mathsf{OQ}^*$ is prepared. Hence,  $\hyb_2 \approx \hyb_3$ follows by the security of $\io$.

\paragraph{$\underline{\hyb_4}$:} We now sample $y^*$ and $K_2^*$ uniformly at random instead of computing them as $y^* || K^*_2 = G_1(F(K, x^*))$. Since the adversary only has the punctured keys $K\{z^0, z^1\}$ and $K\{x_1^*||\cdot\}$ where $z^1 = x^*$ and $x^*_1$ is a prefix of $x^*$, we have that $F(K, x^*)$ is pseudorandom given the adversary's view, by the punctured key security of $F$. Then, we invoke the security of $G_1$ and conclude $\hyb_3 \approx \hyb_4$.

\paragraph{$\underline{\hyb_5}$:} We now sample $\mathsf{OPVer}$ as $\mathsf{OPVer} \samp \io(\mathsf{PVer}'')$ and $\mathsf{OPEval}$ as $\mathsf{OPEval} \samp \io(\mathsf{PEval}'')$.

\begin{mdframed}
        {\bf $\underline{\mathsf{PVer}''(m, sig)}$}
        
        {\bf Hardcoded: ${K\{z^0, z^1\}}, K_3, \textcolor{red}{K_4\{x_2^*,x_2^{'*}\}}, K_5$}
        \begin{enumerate}[label=\arabic*.]
        \textbf{Hidden Trigger Check}
        {If $m = z^0$ or $m = z^1$, output $\mathsf{OQ}^*(\textsf{mode}=\textsf{verify}, sig || 0^{\cosettcount \cdot \lambda})$ and terminate.}
        
        \item Parse $m_1 || m_2 || m_3 = m$ with $|m_i| = s_i$.
        \textcolor{red}{\item If $m_2 = x_2^{'*}$ or  $m_2 = x_2^{*}$, jump to Normal Mode.}
        \item Compute $m_1' || \mathsf{OQ}' || r'  = F_4(\textcolor{red}{K_4\{x_2^*,x_2^{'*}\}}, m_2) \oplus m_3$.
        \item Check if $m_1' = m_1$ and $m_2 = F_3(K_3, m_1' || \mathsf{OQ}' || r' )$. If so, treat $\mathsf{OQ}'$ as a classical circuit, output $\mathsf{OQ}'(\textsf{mode}=\textsf{verify}, sig || 0^{\cosettcount \cdot \lambda})$ and terminate.\\
        \textbf{Normal Mode}
        \item Parse $y || K_2' = G_1(F({K\{z^0, z^1\}}, m))$ with $|y| = n(\lambda)$.
        \item Output $1$ if $f(sig) = f(y)$. Otherwise, output $0$.
        \end{enumerate}

    \end{mdframed}
    
    \begin{mdframed}
        {\bf $\underline{\mathsf{PEval}''(m, id, u_1, \dots, u_{\cosettcount})}$}
        
        {\bf Hardcoded: $\mathsf{OPMem}, cpk, {K\{z^0, z^1\}}, K_3, \textcolor{red}{K_4\{x_2^*,x_2^{'*}\}}, K_5$}
        \begin{enumerate}[label=\arabic*.]
        \textbf{Hidden Trigger Check}
        {\item If $m = z^0$ or $m = z^1$, output $\mathsf{OQ}^*(\textsf{mode}=\textsf{eval}, id, u_1, \dots, u_{\cosettcount})$ and terminate.}
        \item Parse $m_1 || m_2 || m_3 = m$ with $|m_i| = s_i$.
    \textcolor{red}{\item If $m_2 = x_2^{'*}$ or  $m_2 = x_2^{*}$, jump to Normal Mode.}
        \item Compute $m_1'||  \mathsf{OQ}' || r'  = F_4(\textcolor{red}{K_4\{x_2^*,x_2^{'*}\}}, m_2) \oplus m_3$.
        \item Check if $m_1' = m_1$ and $m_2 = F_3(K_3, m_1'||\mathsf{OQ}'  || r' )$. If so, treat $\mathsf{OQ}'$ as a classical circuit, output $\mathsf{OQ}'(\textsf{mode}=\textsf{eval}, id, u_1, \dots, u_{\cosettcount})$ and terminate.\\
        \textbf{Normal Mode}
            \item Run $\mathsf{OPMem}(id, u_1, \dots, u_{\cosettcount}, m_1)$. If it outputs $0$, output $\perp$ and terminate.
            \item Parse $y || K_2' = G_1(F({K\{z^0, z^1\}}, m))$ with $|y| = n(\lambda)$.
            \item Output $\mathsf{IBE.Enc}(cpk, id, y; F_2(K_2', id))$.
        \end{enumerate}
    \end{mdframed}

    We will first consider the modified versions of $\mathsf{PEval}''$ and $\mathsf{PVer}''$ where the PRF key $K_4$ is not punctured at $x^*_2,x_2^{'*}$ and argue that these versions have the same functionality as $\mathsf{PEval}'$ and $\mathsf{PVer}'$. Then, it is easy to see that puncturing $K_4$  at $x^*_2,x_2^{'*}$  does not change their functionalities by the punctured key correctness of $F_4$.
    
We argue that the newly added skip conditions $m_2 = x_2^{'*}$ or  $m_2 = x_2^{*}$ does not change the functionalities of $\mathsf{PVer}, \mathsf{PEval}$ except with negligible probability. First, let us consider the check  $m_2 = x_2^{*}$. This new check can only possibly change the functionalities of the programs if we also have $m_2 = F_3(K_3, m_1'||\mathsf{OQ}'  || r' )$ along with $m_2 = x_2^{*}$, since then the new program jumps to the normal mode while the old program would possibly execute the hidden trigger mode. However, note that $x_2^{*}$ is sampled independently uniformly at random from $\zo^{s_2(\lambda)}$, whereas for any fixing of $K_3$, the image set of $F_3(K_3, \cdot)$ has size $2^{s_3(\lambda)} \leq 2^{(s_2(\lambda)-\lambda)/2}$. Hence, even the probability that $x_2^{*}$ is in the image of $F_3(K_3, \cdot)$ is at most $2^{-(s_2(\lambda)-\lambda)/2}$.

Now, let us consider the check  $m_2 = x_2^{'*}$. As above, this new check can only possibly change the functionalities of the programs if we also have $m_2 = F_3(K_3, m_1'||\mathsf{OQ}'  || r' )$ along with $m_2 = x_2^{'*}$ and $m_1 = m_1'$ where $m_1'||  \mathsf{OQ}' || r'  = F_4(K_4, m_2) \oplus m_3$. This implies $F_3(K_3, x^*_1 || \mathsf{OQ}^* || \Tilde{r}) = F_3(K_3, F_4(K_4, m_2) \oplus m_3)$. Assume $F_3(K_3, \cdot)$ is an injective function, which is indeed true with probability $1-2^{-\lambda}$ since $F_3$ is a statistically injective PRF. Then, we get
\begin{equation*}
    m_3 = (x^*_1 || \mathsf{OQ}^* || \Tilde{r}) \oplus F_4(K_4, m_2) = (x^*_1 || \mathsf{OQ}^* || \Tilde{r}) \oplus F_4(K_4, x_2^{'*}) = x_3^{'*}.
\end{equation*}
Further, $m_1'||  \mathsf{OQ}' || r'  = F_4(K_4, m_2) \oplus m_3$ along with $m_3 = x_3^{'*}$ and $x_2^{'*} = m_2$ implies $m_1' = m_1 = x_1^*$. In summary, we get $m_1 = x_1^*, m_2 = x_2^{'*}, m_3 = x^{'*}$, meaning that $m = z^{0}$. However, at the beginning of the program we check if $m = z^{0}$ and jump to normal mode if so. Hence, if $m = z^{0}$, the program would not even come to this newly added check $m_2 = x_2^{'*}$.

By above, we get that except with negligible probability, the functionalities of the obfuscated programs did not change. Thus, $\hyb_4 \approx \hyb_5$ by the security of $\io$.

\paragraph{$\underline{\hyb_6}$:} 
We now sample $\Tilde{r}$ uniformly at random from $\zo^{2\lambda}$. By the security of $G_3$, we get  $\hyb_5 \approx \hyb_6$.
\paragraph{$\underline{\hyb_7}$:} We now sample $\mathsf{OGenTrigger}$ as $\mathsf{OGenTrigger} \samp \io(\mathsf{GenTrigger}'')$.

\begin{mdframed}
        {\bf $\underline{\mathsf{GenTrigger}''_{{K\{x_1^*||\cdot\}}, K_3, K_4\{x_2^{'*}\}, \mathsf{OPMem},cpk}(r_1, r_2, r_3)}$}
        
        {\bf Hardcoded: ${K\{x_1^*||\cdot\}}, K_3,  \textcolor{red}{K_4\{x_2^*,x_2^{'*}\}}, \mathsf{OPMem}, cpk$}
        \begin{enumerate}
        
            \item Parse $x_1 || x_2 || x_3 = G_2(r_1)$ with $|x_i| = s_i$.
            \item Parse $y || K_2' = F({K\{x_1^*||\cdot\}}, x)$ with $|y| = n(\lambda)$.
            \item $\mathsf{OQ} \samp \io(Q_{cpk, \mathsf{OPMem}, x_1, K'_2, y}; r_3)$.
            \item $x_2' = F_3({K_3}, x_1|| \mathsf{OQ} || G_3(r_2)  )$.
            \item $x_3' = F_4( \textcolor{red}{K_4\{x_2^*,x_2^{'*}\}}, x_2') \oplus (x_1|| \mathsf{OQ} ||G_3(r_2))$.
            \item Output $x_1 || x_2' || x_3'$.
        \end{enumerate}
        
    \end{mdframed} 
By the punctured key correctness of $F_4$, the functionality can only possibly change if $x_2' = x_2^{'*}$ or if $x_2' = x_2^{*}$. 
Assume that $F_3(K_3, \cdot)$ is an injective function, which is indeed true except with probability $2^{-\lambda}$ since $F_3$ is a statistically injective PRF. Then, $x_2' = x_2^{'*}$ implies $(x_1^*|| \mathsf{OQ}^* || \Tilde{r}) =  x_1|| \mathsf{OQ} || G_3(r_2)$, and in particular, $\Tilde{r} = G_3(r_2)$. However, observe that the image set of $G_3$ has size at most $2^\lambda$, whereas $\Tilde{r}$ is sampled uniformly at random from $\zo^{2\lambda}$. Thus, with overwhelming probability, $\Tilde{r}$  will be outside the image set of $G_3$, and hence we will not have $x_2' = x_2^{'*}$ for any input to $\mathsf{GenTrigger}$. For the case of $x_2' = x_2^{*}$, observe that $x^*_2$ is independently sampled uniformly at random and is not used anywhere else. Hence,  $x_2' = x_2^{*}$ can only occur with exponentially small probability. Thus, $\hyb_6 \approx \hyb_7$ by the security of $\io$.

\paragraph{$\underline{\hyb_8}$:} 
We now sample $\mathsf{OGenTrigger}$ as $\mathsf{OGenTrigger} \samp \io(\mathsf{GenTrigger}''')$.

\begin{mdframed}
        {\bf $\underline{\mathsf{GenTrigger}'''_{{K\{x_1^*||\cdot\}}, K_3\{x_1^* || \mathsf{OQ}^* || \Tilde{r}\}, K_4\{x_2^{'*}\}, \mathsf{OPMem},cpk}(r_1, r_2, r_3)}$}
        
        {\bf Hardcoded: ${K\{x_1^*||\cdot\}}, \textcolor{red}{K_3\{x_1^* || \mathsf{OQ}^* || \Tilde{r}\}}, K_4\{x_2^*,x_2^{'*}\}, \mathsf{OPMem}, cpk$}
        \begin{enumerate}
        
            \item Parse $x_1 || x_2 || x_3 = G_2(r_1)$ with $|x_i| = s_i$.
            \item Parse $y || K_2' = F({K\{x_1^*||\cdot\}}, x)$ with $|y| = n(\lambda)$.
            \item $\mathsf{OQ} \samp \io(Q_{cpk, \mathsf{OPMem}, x_1, K'_2, y}; r_3)$.
            \item $x_2' = F_3(\textcolor{red}{K_3\{x_1^* || \mathsf{OQ}^* || \Tilde{r}\}}, x_1|| \mathsf{OQ} || G_3(r_2)  )$.
            \item $x_3' = F_4(K_4\{x_2^*,x_2^{'*}\}, x_2') \oplus (x_1|| \mathsf{OQ} ||G_3(r_2))$.
            \item Output $x_1 || x_2' || x_3'$.
        \end{enumerate}
        
    \end{mdframed} 
As above, with overwhelming probability, $\Tilde{r}$  will be outside the image set of $G_3$, and $x_1|| \mathsf{OQ} || G_3(r_2)  = x_1^* || \mathsf{OQ}^* || \Tilde{r}$ will be not be satisfied by any input to $\mathsf{GenTrigger}$. Hence, the functionality of $\mathsf{GenTrigger}$ does not change with overwhelming probability by the punctured key correctness of $F_3$, and thus $\hyb_7 \approx \hyb_8$ follows by the security of $\io$.

\paragraph{$\underline{\hyb_9}$:} We now sample $\mathsf{OPVer}$ as $\mathsf{OPVer} \samp \io(\mathsf{PVer}''')$ and $\mathsf{OPEval}$ as $\mathsf{OPEval} \samp \io(\mathsf{PEval}''')$.

\begin{mdframed}
        {\bf $\underline{\mathsf{PVer}'''(m, sig)}$}
        
        {\bf Hardcoded: ${K\{z^0, z^1\}}, K_3, {K_4\{x_2^{'*}, x_2^*\}}, K_5$}
        \begin{enumerate}[label=\arabic*.]
        \textbf{Hidden Trigger Check}
        {If $m = z^0$ or $m = z^1$, output $\mathsf{OQ}^*(\textsf{mode}=\textsf{verify}, sig || 0^{\cosettcount \cdot \lambda})$ and terminate.}
        
        \item Parse $m_1 || m_2 || m_3 = m$ with $|m_i| = s_i$.
\item If $m_2 = x_2^{'*}$ or  $m_2 = x_2^{*}$, jump to Normal Mode.
        \item Compute $m_1' || \mathsf{OQ}' || r'  = F_4(K_4\{x_2^*,x_2^{'*}\}, m_2) \oplus m_3$.
        \item \textcolor{red}{Check if $ m_1'||\mathsf{OQ}'  || r' = x_1^* || \mathsf{OQ}^* || \Tilde{r}$. If so, jump to Normal Mode.}
        \item Check if $m_1' = m_1$ and $m_2 = F_3(K_3, m_1' || \mathsf{OQ}' || r' )$. If so, treat $\mathsf{OQ}'$ as a classical circuit, output $\mathsf{OQ}'(\textsf{mode}=\textsf{verify}, sig || 0^{\cosettcount \cdot \lambda})$ and terminate.\\
        \textbf{Normal Mode}
        \item Parse $y || K_2' = G_1(F({K\{z^0, z^1\}}, m))$ with $|y| = n(\lambda)$.
        \item Output $1$ if $f(sig) = f(y)$. Otherwise, output $0$.
        \end{enumerate}

    \end{mdframed}
    
    \begin{mdframed}
        {\bf $\underline{\mathsf{PEval}'''(m, id, u_1, \dots, u_{\cosettcount})}$}
        
        {\bf Hardcoded: $\mathsf{OPMem}, cpk, {K\{z^0, z^1\}}, K_3, {K_4\{x_2^{'*},x_2^*\}}, K_5$}
        \begin{enumerate}[label=\arabic*.]
        \textbf{Hidden Trigger Check}
        {\item If $m = z^0$ or $m = z^1$, output $\mathsf{OQ}^*(\textsf{mode}=\textsf{eval}, id, u_1, \dots, u_{\cosettcount})$ and terminate.}
        \item Parse $m_1 || m_2 || m_3 = m$ with $|m_i| = s_i$.
    \item If $m_2 = x_2^{'*}$ or  $m_2 = x_2^{*}$, jump to Normal Mode.
        \item Compute $m_1'||  \mathsf{OQ}' || r'  = F_4(K_4\{x_2^*,x_2^{'*}\}, m_2) \oplus m_3$.
        \item \textcolor{red}{Check if $ m_1'||\mathsf{OQ}'  || r' = x_1^* || \mathsf{OQ}^* || \Tilde{r}$. If so, jump to Normal Mode.}
        \item Check if $m_1' = m_1$ and $m_2 = F_3(K_3, m_1'||\mathsf{OQ}'  || r' )$. If so, treat $\mathsf{OQ}'$ as a classical circuit, output $\mathsf{OQ}'(\textsf{mode}=\textsf{eval}, id, u_1, \dots, u_{\cosettcount})$ and terminate.\\
        \textbf{Normal Mode}
            \item Run $\mathsf{OPMem}(id, u_1, \dots, u_{\cosettcount}, m_1)$. If it outputs $0$, output $\perp$ and terminate.
            \item Parse $y || K_2' = G_1(F({K\{z^0, z^1\}}, m))$ with $|y| = n(\lambda)$.
            \item Output $\mathsf{IBE.Enc}(cpk, id, y; F_2(K_2', id))$.
        \end{enumerate}
    \end{mdframed}

    We claim that the newly added skip condition does not change the functionality of the programs. First, note that the functionality of these programs can only possibly change if $m_1'||\mathsf{OQ}'  || r' = x_1^* || \mathsf{OQ}^* || \Tilde{r}$,  $m_1' = m_1$ and $m_2 = F_3(K_3, m_1'||\mathsf{OQ}'  || r' )$, since then the old programs execute the hidden trigger mode whereas the new program jumps to the normal mode. However, if these conditions are satisfied, then so is $m_2 = x^{'*}_2$ due to the way we prepare $x^{'*}_2$. Observe that if $m_2 = x^{'*}_2$, then the programs will not come to the newly added check since at the beginning we check if $m_2 = x^{'*}_2$ and jump to normal mode if so. Hence,  the functionality of the programs did not change and we have $\hyb_8 \approx \hyb_9$ by the security of $\io$.

\paragraph{$\underline{\hyb_{10}}$:} We now sample $\mathsf{OPVer}$ as $\mathsf{OPVer} \samp \io(\mathsf{PVer}'''')$ and $\mathsf{OPEval}$ as $\mathsf{OPEval} \samp \io(\mathsf{PEval}'''')$.

\begin{mdframed}
        {\bf $\underline{\mathsf{PVer}''''(m, sig)}$}
        
        {\bf Hardcoded: ${K\{z^0, z^1\}}, \textcolor{red}{K_3\{x_1^* || \mathsf{OQ}^* || \Tilde{r}\}}, {K_4\{x_2^{'*}, x_2^*\}}, K_5$}
        \begin{enumerate}[label=\arabic*.]
        \textbf{Hidden Trigger Check}
        {If $m = z^0$ or $m = z^1$, output $\mathsf{OQ}^*(\textsf{mode}=\textsf{verify}, sig || 0^{\cosettcount \cdot \lambda})$ and terminate.}
        
        \item Parse $m_1 || m_2 || m_3 = m$ with $|m_i| = s_i$.
\item If $m_2 = x_2^{'*}$ or  $m_2 = x_2^{*}$, jump to Normal Mode.
        \item Compute $m_1' || \mathsf{OQ}' || r'  = F_4(K_4\{x_2^*,x_2^{'*}\}, m_2) \oplus m_3$.
        \item {Check if $ m_1'||\mathsf{OQ}'  || r' = x_1^* || \mathsf{OQ}^* || \Tilde{r}$. If so, jump to Normal Mode.}
        \item Check if $m_1' = m_1$ and $m_2 = F_3(\textcolor{red}{K_3\{x_1^* || \mathsf{OQ}^* || \Tilde{r}\}}, m_1' || \mathsf{OQ}' || r' )$. If so, treat $\mathsf{OQ}'$ as a classical circuit, output $\mathsf{OQ}'(\textsf{mode}=\textsf{verify}, sig || 0^{\cosettcount \cdot \lambda})$ and terminate.\\
        \textbf{Normal Mode}
        \item Parse $y || K_2' = G_1(F({K\{z^0, z^1\}}, m))$ with $|y| = n(\lambda)$.
        \item Output $1$ if $f(sig) = f(y)$. Otherwise, output $0$.
        \end{enumerate}

    \end{mdframed}
    
    \begin{mdframed}
        {\bf $\underline{\mathsf{PEval}''''(m, id, u_1, \dots, u_{\cosettcount})}$}
        
        {\bf Hardcoded: $\mathsf{OPMem}, cpk, {K\{z^0, z^1\}}, \textcolor{red}{K_3\{x_1^* || \mathsf{OQ}^* || \Tilde{r}\}}, {K_4\{x_2^{'*},x_2^*\}}, K_5$}
        \begin{enumerate}[label=\arabic*.]
        \textbf{Hidden Trigger Check}
        {\item If $m = z^0$ or $m = z^1$, output $\mathsf{OQ}^*(\textsf{mode}=\textsf{eval}, id, u_1, \dots, u_{\cosettcount})$ and terminate.}
        \item Parse $m_1 || m_2 || m_3 = m$ with $|m_i| = s_i$.
    \item If $m_2 = x_2^{'*}$ or  $m_2 = x_2^{*}$, jump to Normal Mode.
        \item Compute $m_1'||  \mathsf{OQ}' || r'  = F_4(K_4\{x_2^*,x_2^{'*}\}, m_2) \oplus m_3$.
        \item {Check if $ m_1'||\mathsf{OQ}'  || r' = x_1^* || \mathsf{OQ}^* || \Tilde{r}$. If so, jump to Normal Mode.}
        \item Check if $m_1' = m_1$ and $m_2 = F_3(\textcolor{red}{K_3\{x_1^* || \mathsf{OQ}^* || \Tilde{r}\}}, m_1'||\mathsf{OQ}'  || r' )$. If so, treat $\mathsf{OQ}'$ as a classical circuit, output $\mathsf{OQ}'(\textsf{mode}=\textsf{eval}, id, u_1, \dots, u_{\cosettcount})$ and terminate.\\
        \textbf{Normal Mode}
            \item Run $\mathsf{OPMem}(id, u_1, \dots, u_{\cosettcount}, m_1)$. If it outputs $0$, output $\perp$ and terminate.
            \item Parse $y || K_2' = G_1(F({K\{z^0, z^1\}}, m))$ with $|y| = n(\lambda)$.
            \item Output $\mathsf{IBE.Enc}(cpk, id, y; F_2(K_2', id))$.
        \end{enumerate}
    \end{mdframed}
The functionality of these programs stay the same by the punctured key correctness of $K_3$. Thus, $\hyb_9 \approx \hyb_{10}$ follows by the security of $\io$.

    \paragraph{$\underline{\hyb_{11}}$:} We now sample $x_2^{'*}$ uniformly at random. $\hyb_{10} \approx \hyb_{11}$ follows by the punctured key security of $F_3$.

    \paragraph{$\underline{\hyb_{12}}$:}We now sample $x_3^{'*}$ uniformly at random. $\hyb_{11} \approx \hyb_{12}$ follows by the punctured key security of $F_4$.

Finally, we claim that $z^0$ and $z^1$ are symmetric in $\hyb_{12}$. First note that they are both sampled uniformly at random. Further, any check we have in the obfuscated programs compare variables to both $z^0$ and $z^1$, which are coded in lexicographical order to keep the symmetry. Finally, any PRF key that is punctured is either punctured at both $z^0, z^1$, or both $x_2^{'*}, x_2^{*}$, or (prefix) punctured at $x^*_1$, or punctured at $x_1^* || \mathsf{OQ}^* || \Tilde{r}$ where $\mathsf{OQ}^*$ depends only on $y^*, K_2^*$ and $x_1^*$. Crucially note that $y^*, K_2^*, \Tilde{r}$ are sampled independently uniformly at random, and  $x^*_1$ is the first $s_1$ bits of both $z^0$ and $z^1$. Hence, $z^0$ and $z^1$ are indeed symmetric in $\hyb_{12}$, and thus we have $\Pr[\hyb_{12} = 1] \leq 1/2$, concluding the proof.
    
\section{Pseudorandom Function Family with Copy-Protected Keys}\label{sec:cpprf}
In this section, we define PRF schemes with copy-protected keys. Then, we give our construction based on coset states and prove it secure.
\subsection{Definitions}
\begin{definition}[PRF Scheme with Copy-Protected Secret Keys]\label{defn:cpprf}
A PRF scheme with copy-protected secret keys consists of the following efficient algorithms.

\begin{itemize}
\item $\keygen(1^\lambda)$: Takes in the security parameter, output a classical key $K$.
    \item $\qkeygen(K)$: Takes as input the classical key and outputs a quantum key.

    \item $\Eval(\reg_{K}, m)$: Takes in a quantum key and an input $x$, outputs a classical value.
\end{itemize}

We require correctness.
\paragraph{Correctness} For all inputs $x$, \begin{equation*}
    \Pr[val = F(K, x) : \begin{array}{c}
         K \samp \mathsf{Setup}(1^\lambda)  \\
         \reg_{K} \samp \qkeygen(K) \\
         val \samp \mathsf{Eval}(\reg_K, x)
    \end{array}] \geq 1 - \negl(\lambda).
\end{equation*}
\end{definition}
As observed by \cite{CLLZ21}, correctness, along with \cref{lem:asgoodasnew}, means that we can implement the evaluation in a way such that the quantum  key is only negligibly disturbed. Thus, we can reuse the key to evaluate the PRF any polynomial number of times.

\begin{definition}[Anti-Piracy Security for PRF Schemes]\label{defn:unclonprf}
Consider the following game between the challenger and an adversary $\adve$.
\paragraph{$\underline{\unclonprfgame(\lambda, \adve)}$}
\begin{enumerate}
    \item The challenger runs $K \samp \mathsf{KeyGen}(1^\lambda)$.
    \item For multiple rounds, $\adve$ makes quantum key queries. For each query, the challenger generates a key as $\reg \samp \qkeygen(K)$ and submits $\reg$ to the adversary.
    \item $\adve$ outputs a $(k + 1)$-partite register $\regi{adv}$ and freeloader unitaries $\{U_{\ell}\}_{\ell \in [k + 1]}$ where $k$ is the number of queries it made.
    \item The challenger executes the following for each $\ell \in [k + 1]$.
    \begin{enumerate}[label*=\arabic*.]
\item Sample $b_\ell \samp \zo$.
        \item $x^\ell \samp \zo^{m(\lambda)}$.
        \item Set $ch^{0, \ell} = F(K, x^\ell)$ and sample $ch^{1, \ell} \samp \zo^{n(\lambda)}$.
        \item $b'_\ell \samp \qunivcla(U_\ell, \regi{adv}[\ell], x^\ell, ch^{b_\ell, \ell})$.
        \item Check if $b'_\ell = b_\ell$.
    \end{enumerate}
    \item The challenger outputs 1 if and only if all the checks pass.
\end{enumerate}
    We say that the PRF scheme satisfies anti-piracy security if for any QPT adversary $\adve$,
    \begin{equation*}
    \Pr[\unclonprfgame(\lambda, \adve) = 1] \leq \frac{1}{2} + \negl(\lambda).
    \end{equation*}
\end{definition}

\subsection{Construction}\label{sec:cpprfcons}
In this section, we present our construction copy-protecting a particular PRF family $F$. Our construction is the same as our copy-protected signature construction (\cref{sec:unclondigsig}), with the verification key removed. We give it in full for completeness.

Assume the existence of following primitives where we set $\nu(\lambda) = 2^{-6\lambda}\cdot2^{-8\lambda^{0.3\constmoecoll}}$.
\begin{itemize}
    \item $F$, prefix puncturable extracting PRF (\cref{defn:prepuncpf})  with error $2^{-\lambda - 1}$ for min-entropy $s_2(\lambda) + s_3(\lambda)$, with input length $m(\lambda)$ and output length $n(\lambda)$, 
    \item $\io$, indistinguishability obfuscation scheme that is $\nu(\lambda)$-secure against $2^{5\lambda}\cdot2^{8\lambda^{0.3\constmoecoll}}$-time adversaries,
    \item $\mathsf{IBE}$, identity-based encryption scheme for the identity space $\mathcal{ID} = \zo^\lambda$ (\cref{defn:ibe}) that is $\nu(\lambda)$-secure against $2^{5\lambda}\cdot2^{8\lambda^{0.3\constmoecoll}}$-time adversaries,
    \item $F_1$, puncturable PRF family with input length $\lambda$ and output length same as the size of the randomness used by $\mathsf{CosetGen}$ (\cref{defn:cosetgen}), that is $\nu(\lambda)$-secure against $2^{5\lambda}\cdot2^{8\lambda^{0.3\constmoecoll}}$-time adversaries,
    \item $F_2$, puncturable PRF family with input length $\lambda$ and output length same as the size of the randomness used by $\mathsf{IBE.Enc}$ that is $\nu(\lambda)$-secure against $2^{5\lambda}\cdot2^{8\lambda^{0.3\constmoecoll}}$-time adversaries,\footnote{We also assume that $F_2$ has uniformly random keys (when not punctured), that is, the key generation algorithm $F_2.\mathsf{KeyGen}$ simply samples and outputs a uniformly random string. This is satisfied by the puncturable PRF constructions based on one-way functions we are using.}
    \item $\mathsf{CCObf}$, compute-and-compare obfuscation for $2^{-\lambda^{0.2\cdot {\constmoecoll}}}$-unpredictable distributions that is $2^{-2\lambda - 1}\cdot 2^{-2\lambda^{0.3\constmoecoll}}$-secure against $2^{3\lambda}\cdot 2^{2\lambda^{0.3\constmoecoll}}$-time adversaries,
    \item $F_3$, puncturable statistically injective PRF with error probability $2^{-\lambda}$ with input length $s_3(\lambda)$ and output length $s_2(\lambda)$,
    \item $F_4$, puncturable PRF with input length $s_2(\lambda)$ and output length $s_3(\lambda)$, 
        \item $G_1$, a pseudorandom generator with input length $n(\lambda)$ and output length $n(\lambda)$ plus the key size of the PRF $F_2$,

    \item $G_2$, a pseudorandom generator with input length $s_1(\lambda)/2$ and output length 
    $s_1(\lambda)$,
    \item $G_3$, a pseudorandom generator with input length $\lambda$ and output length 
    $2\cdot\lambda$,

\item $f$, a subexponentially secure injective one-way function with input space $\zo^{n(\lambda)}$.
\end{itemize}

We also set the parameters from above as follows:
\begin{itemize}
\item $n(\lambda) = \lambda$,
\item $s_1(\lambda) = \cosettcount$,
\item $s_3(\lambda) - s_1(\lambda) - 2\lambda$ to be large enough to contain obfuscations of the program $Q$ defined in \cref{defn:prfhiddentrigger},
\item $s_2(\lambda) \geq 2\cdot s_3(\lambda) + \lambda$,
\item $s_2(\lambda) + s_3(\lambda) \geq n(\lambda) + 2\lambda + 4$,
\item $m(\lambda) = s_1(\lambda) + s_2(\lambda) + s_3(\lambda)$.
\end{itemize}

While some of our security assumptions above are exponential with specific exponents, these assumptions can be based solely on subexponential hardness for any exponent, since we can always scale the security parameter by a polynomial factor when instantiating the underlying primitives.

Set $L(\lambda) = \lambda$ and therefore $c_L(\lambda) = 24\cdot\lambda^3$ (see \cref{defn:strmoecoll}). We also assume that all obfuscated programs in the construction and in the proof are appropriately padded.

We now give our PRF scheme with copy-protected keys, for the input space $\zo^{m(\lambda)}$.
\paragraph{$\underline{\mathsf{KeyGen}(1^{\lambda})}$}
\begin{enumerate}
    \item Sample PRF keys $K_0 \samp F.\mathsf{KeyGen}(1^\lambda)$ and $K_i \samp F_i.\mathsf{KeyGen}(1^\lambda)$ for $i \in \{1, 3, 4\}$.
    \item Sample $cpk, csmk \samp \mathsf{IBE.Setup}(1^\lambda)$.
  
 \item Sample $\mathsf{OPMem} \samp \io(\mathsf{PMem}_{K_1})$, where $\mathsf{PMem}_{K_1}$ is the following program.
    
\begin{mdframed}
        {\bf $\underline{\mathsf{PMem}_{K_1}(id, u_1, \dots, u_{\cosettcount}, x)}$}
        
        {\bf Hardcoded: $K_1$}
        \begin{enumerate}[label=\arabic*.]
            \item $(A_i, s_i, s_i')_{i \in [\cosettcount]} \samp \mathsf{CosetGen}(1^{L(\lambda) + \lambda}; F_1(K_1, id))$.
            \item For each $i \in [\cosettcount]$, check if $u_i \in A_i + s_i$ if $(x)_i = 0$ and check if $u_i \in A^{\perp}_i + s'_i$ if $(x)_i = 1$. If any of the checks fail, output $0$ and terminate.
            \item Output $1$.
        \end{enumerate}

    \end{mdframed}
    
    \item Sample $\mathsf{OPEval} \samp \io(\mathsf{PEval})$, where $\mathsf{PEval}$ is the following program.\footnote{Note that it is also possible to put the coset generation PRF key $K_1$ directly inside $\mathsf{OPEval}$ due to the $\io$ security. However, we elect to use $\mathsf{OPMem}$ to preserve the similarities to our PKE construction.}

    \begin{mdframed}
        {\bf $\underline{\mathsf{PEval}(x, id, u_1, \dots, u_{\cosettcount})}$}
        
        {\bf Hardcoded: $\mathsf{OPMem}, cpk, K_0, K_3, K_4$}
        \begin{enumerate}[label=\arabic*.]
        \textbf{Hidden Trigger Check}
        \item Parse $x_1 || x_2 || x_3 = x$ with $|x_i| = s_i$.
        \item Compute $x_1'||  \mathsf{OQ}' || r'  = F_4(K_4, x_2) \oplus m_3$.
        \item Check if $x_1' = x_1$ and $x_2 = F_3(K_3, x_1'||\mathsf{OQ}'  || r' )$. If so, treat $\mathsf{OQ}'$ as a classical circuit, output $\mathsf{OQ}'(id, u_1, \dots, u_{\cosettcount})$ and terminate.\\
        \textbf{Normal Mode}
            \item Run $\mathsf{OPMem}(id, u_1, \dots, u_{\cosettcount}, x_1)$. If it outputs $0$, output $\perp$ and terminate.
            \item Parse $y || K_2' = G_1(F(K_0, x))$ with $|y| = n(\lambda)$.
            \item Output $\mathsf{IBE.Enc}(cpk, id, y; F_2(K_2', id))$.
        \end{enumerate}
    \end{mdframed}
    \item Set $K = (cmsk, cpk, K_1, \mathsf{OPEval})$.
    \item Output $K$.
\end{enumerate}

\paragraph{$\underline{\mathsf{QKeyGen}(K)}$}
\begin{enumerate}
    \item Parse $ (cmsk, cpk, K_1, \mathsf{OPEval}) = K$.

    \item Sample $id \samp \zo^{\lambda}$.
    \item $(A_i, s_i, s_i')_{i \in [\cosettcount]} = \mathsf{CosetGen}(1^{L(\lambda) + \lambda}; F_1(K_1, id))$.
    \item $ck \samp \mathsf{IBE.KeyGen}(cmsk, id)$.
    \item Output $\left(\ket{A_{i, s_i, s'_i}}\right)_{i \in [\cosettcount]}, ck, id,  \mathsf{OPEval}$.
\end{enumerate}

\paragraph{$\underline{\mathsf{Eval}(\regi{key}, x)}$}
\begin{enumerate}
    \item Parse $((\reg_i)_{i \in [\cosettcount]}, ck, id,  \mathsf{OPEval}) = \regi{key}$.
    \item Parse $x_1 || x_2 || x_3 = x$ with $|x_i| = s_i$.
    \item For indices $i \in [\cosettcount]$ such that $(x_0)_i = 1$, apply $H^{\otimes \kappa(L(\lambda) + \lambda)}$ to $\reg_i$.
    \item Run the program $\mathsf{OPEval}$ coherently on $x, id$ and $(\reg_i)_{i \in [\cosettcount]}$.
    \item Measure the output register and denote the outcome by $cct$.
    \item Output $\mathsf{IBE.Dec}(ck, cct)$.
\end{enumerate}

We claim that the construction is correct and secure.
\begin{theorem}\label{thm:prfcorrect}
The PRF scheme satisfies correctness and hence reusability.   
\end{theorem}
\begin{proof}
    Since our construction is the same as our signature scheme, follows from \cref{sec:digsigcorrect}.
\end{proof}

\begin{theorem}
The PRF scheme satisfies PRF security.
\end{theorem}
\begin{proof}
  Our PRF family is a truncation of $G_1(F(K_0, \cdot))$ where $G_1$ is a PRG and $F$ is a PRF. Therefore, it is easy to see that the resulting function family also satisfies the PRF security game.
\end{proof}

\begin{theorem}\label{thm:prfantipiracy}
 The PRF scheme satisfies anti-piracy security.
\end{theorem}
\begin{proof}
    The proof closely follows the anti-piracy security proof our signature scheme, the major difference being that in the PRF case we have a CPA-style game where the adversary is trying to guess a challenge bit and we require that it wins with probability at most $1/2+\negl(\lambda)$, whereas in the signature game we required negligible winning probability. See \cref{sec:prfcpproof} for the full proof.
\end{proof}

When we instantiate the assumed building blocks with known constructions, we get the following corollary.
\begin{corollary}\label{thm:cpprfexists}
Assuming subexponentially secure $i\mathcal{O}$ and subexponentially secure LWE, there exists a PRF scheme that satisfies anti-piracy security against unbounded collusion.
\end{corollary}

\section{Copy-Protection for All Unlearnable Functionalities}\label{sec:allunlearn}
In this section, we first reproduce the generalized copy-protection definitions from \cite{ALLZZCONF20}, and then we show how to copy-protect any unlearnable functionality with respect to a classical oracle.
\subsection{Definitions}
We now reproduce the relevant definitions from \cite{ALLZZCONF20}.
\begin{definition}[Testing an Oracle-Aided Quantum Program]
    Let $\mathcal{F}$ be a  family of functions  with input length $m(\lambda)$ and output length $n(\lambda)$. Fix some program $f$ from this family, an oracle-aided unitary $U$ and some value $st$ of a classical state maintained by the challenger (which will be defined later). Let $\mathcal{D}$ be an efficient challenge input distribution (over $\zo^{m(\lambda)}$), and let $\mathcal{O}$ be a quantumly accessible classical oracle that can depend on $st$.
    Consider the following mixture $\mathcal{P}$ of binary \emph{projective} \emph{measurements}, induced by $\mathcal{D}$ and $f, U, st$, applied on a state $\rho$.

    \begin{enumerate}
        \item Sample $r \samp \mathcal{R}$.
        \item Run $x \samp \mathcal{D}^{st}(1^\lambda; r)$.
        \item Execute $U$ on $(\rho, ct)$, let $y'$ be the output.
        \item Output 1 if $y' = f(x)$. Otherwise, output $0$.
    \end{enumerate}

    Observe that we can efficiently execute the above measurement\footnote{More formally, we are actually talking about the measurement where $r, b$ are fixed} for arbitrary given superpositions of $r$ and $b$ values. Therefore, by \cref{sec:piti}, there exists both exact and approximated projective and threshold implementations for $\mathcal{P}$. We write $\mathsf{PI}^{\mathcal{O}}_{\mathcal{D}}$ and $\mathsf{API}^{\mathcal{O},\eps, \delta}_{\mathcal{D}}$ to denote the projective implementation and approximate projective implementation of $\mathcal{P}$, respectively. Similarly, let $\mathsf{TI}^{\mathcal{O}}_{\mathcal{D}, \eta}$ and $\mathsf{ATI}^{\mathcal{O},\eps, \delta}_{\mathcal{D}, \eta}$ denote the threshold and efficient approximate threshold implementations of $\mathcal{P}$ for a threshold value $\eta$. 
    
    While the fixed values $f, U, st$ are omitted from the notation, they will be clear from the context.
\end{definition}

\begin{definition}[$\gamma$-Quantum Unlearnability \cite{ALLZZCONF20}]
    Let $\mathcal{F}$ be a  family of functions  with input length $m(\lambda)$, and let $\mathcal{D}$ be an input distribution over $\zo^{m(\lambda)}$. Consider the following game between the challenger and an adversary $\adve$.
    \paragraph{$\underline{\unlearngame(\lambda, \gamma(\lambda), \adve)}$}
\begin{enumerate}
    \item The challenger samples a function $f$ from $\mathcal{F}$.
    \item $\adve$ gets oracle access to $f$.
    \item $\adve$ outputs a quantum register $\regi{adv}$ and a unitary $U$.
    \item The challenger applies $\mathsf{TI}_{\mathcal{D}, \gamma}$ to $\regi{adv}$, outputs the measurement result.
\end{enumerate}
We say that $(\mathcal{F}, \mathcal{D})$ is $\gamma$-unlearnable if for any QPT polynomial $\adve$, \begin{equation*}
    \Pr[\unlearngame(\lambda, \gamma(\lambda), \adve) = 1] \leq \negl(\lambda).
\end{equation*}
\end{definition}

\begin{definition}[Quantum Copy-Protection Scheme \cite{ALLZZCONF20}]\label{defn:gencp}
Let $\mathcal{F}$ be a  family of functions  with input length $m(\lambda)$ and output length $n(\lambda)$. A copy-protection scheme for $\mathcal{F}$ consists of the following efficient algorithms.
\begin{itemize}
    \item $\mathsf{Setup}(1^\lambda)$: Takes as input  a security parameter and outputs a classical secret key $sk$,

    \item $\mathsf{QGen}(sk, f)$: Takes in the secret key and a function $f \in \mathcal{F}$, outputs a copy-protected program as a quantum state.
    
    \item $\mathsf{Eval}(\regi{key}, x)$: Takes in a copy-protected program and an input, outputs a value from $\zo^{n(\lambda)}$.
\end{itemize}
    We require correctness.\paragraph{Correctness} For all functions $f \in \mathcal{F}$ and inputs $x \in \zo^{m(\lambda)}$, \begin{equation*}
    \Pr[\mathsf{Eval}(\regi{key}, x) = f(x) : \begin{array}{c}
        sk \samp \mathsf{Setup}(1^\lambda) \\
         \reg_f \samp \mathsf{QGen}(sk, f)
    \end{array}] = 1.
\end{equation*}
\end{definition}

\begin{definition}[$\gamma$-Anti-Piracy Security \cite{ALLZZCONF20}]
    Let $\mathcal{F}$ be a  family of functions  with input length $m(\lambda)$, and let $\mathcal{D}$ be an input distribution over $\zo^{m(\lambda)}$. Consider a copy-protection scheme (\cref{defn:gencp}) for $\mathcal{F}$ and the following game between the challenger and an adversary $\adve$.
    \paragraph{$\underline{\antipirgame(\lambda, \gamma(\lambda), \adve)}$}
\begin{enumerate}
\item The challenger samples a copy-protection key $sk \samp \mathsf{Setup}(1^\lambda)$.
    \item The challenger samples a function $f$ from $\mathcal{F}$.
    \item For multiple rounds, $\adve$ makes quantum key queries. For each query, the challenger generates a key as $\reg \samp \mathsf{QGen}(sk, f)$ and submits $\reg$ to the adversary.
    \item $\adve$ outputs a $(k+1)$-partite  quantum register $\regi{adv}$ and freeloader unitaries $\{U\ell\}_{\ell \in [k + 1]}$ where $k$ is the number of key queries it made.
    \item The challenger applies the test
    \begin{equation*}
    \bigotimes_{\ell \in [k+1]} \mathsf{TI}_{\mathcal{D}, \gamma}
    \end{equation*}
    to $\regi{adv}$ and outputs $1$ if and only if the measurement result is all $1$.
\end{enumerate}
We say that the copy-protection scheme satisfies $\gamma$-anti-piracy if for any QPT polynomial $\adve$, \begin{equation*}
    \Pr[\antipirgame(\lambda, \gamma(\lambda), \adve) = 1] \leq \negl(\lambda).
\end{equation*}
\end{definition}
\subsection{Construction}
In this section, we present our copy-protection construction for a family of functions $\mathcal{F}$ with input length $m(\lambda)$ and output length $n(\lambda)$. Assume the existence of following primitive.
\begin{itemize}
    \item $F_1$,  PRF family with input length $\lambda$ and output length same as the size of the randomness used by $\mathsf{CosetGen}$ (\cref{defn:cosetgen}) that is $2^{-2\lambda}$-secure against QPT adversaries.
\end{itemize}
While we assume exponential security of the above primitive for specific exponents, this assumption can be based solely on subexponential hardness for any exponent, since we can always scale the security parameter by a polynomial factor when instantiating the underlying primitives.

We now give our construction.
\paragraph{$\underline{\mathsf{Setup}(1^{\lambda})}$}
\begin{enumerate}
    \item Sample PRF key $K_1 \samp F_1.\mathsf{KeyGen}(1^\lambda)$.
    \item Output $K_1$.
\end{enumerate}

\paragraph{$\underline{\mathsf{QGen}(sk, f)}$}
\begin{enumerate}
\item Parse $K_1 = sk$.
\item Generate the oracle $\mathcal{O}_f$.
    \begin{mdframed}
        {\bf $\underline{\mathcal{O}_f(id, x, (v_i)_{i \in [m(\lambda)]})}$}
        
        {\bf Hardcoded: $K_1, f$}
        \begin{enumerate}[label=\arabic*.]
            \item $(A_i, s_i, s_i')_{i \in [m(\lambda)]} \samp \mathsf{CosetGen}(1^{\lambda}, m(\lambda), \lambda; F_1(K_1, id))$.
            \item For each $i \in [m(\lambda)]$, check if $v_i \in A_i + s_i$ if $(x)_i = 0$ and check if $v_i \in A^{\perp}_i + s'_i$ if $(x)_i = 1$. If any of the checks fail, output $\perp$ and terminate.
            \item Output $f(x)$.
        \end{enumerate}
    \end{mdframed}
    \item Sample $id \samp \zo^{\lambda}$.
    \item $(A_i, s_i, s_i')_{i \in [m(\lambda)]} \samp \mathsf{CosetGen}(1^{\lambda},  m(\lambda), \lambda; F_1(K_1, id))$.
        \item Output $\left(\ket{A_{i, s_i, s'_i}}\right)_{i \in [m(\lambda)]}, id, \mathcal{O}$.

    \end{enumerate}
    
\paragraph{$\underline{\mathsf{Eval}(\regi{key}, x)}$}
\begin{enumerate}
    \item Parse $((\reg_i)_{i \in [m(\lambda)]}, id, \mathcal{O}) = \regi{key}$.
    \item For indices $i \in [m(\lambda)]$ such that $(x)_i = 1$, apply $H^{\lambda}$ to $\reg_i$.
    \item Run the oracle $\mathcal{O}$ coherently on $id, x$ and $(\reg_i)_{i \in [m(\lambda)]}$.
    \item Measure the output register and output the measurement outcome.
\end{enumerate}

Correctness with probability $1$ follows in a straightforward manner. We claim that the construction is also secure.
\begin{theorem}\label{thm:allunl}
 For any inverse polynomial $\gamma$ and any function family and challenge input distribution $(\mathcal{F}, \mathcal{D})$ that is $\gamma$-unlearnable , the scheme above satisfies strong $\gamma$-anti-piracy.
\end{theorem}
\begin{proof}
    The proof follows in a similar manner to the anti-piracy security proof of our PKE scheme (\cref{sec:pkeproof}). See \cref{appn:allunlproof} for the full proof.
\end{proof}

\section{Impossibility of Hyperefficient Shadow Tomography}\label{sec:hypertom}
In this section, as a corollary of results, we rule out existence of hyperefficient shadow tomography.
\begin{definition}[Hyperefficient Shadow Tomography \cite{Aar18}]\label{defn:hypertom}
    Let $E$ denote a uniform quantum circuit family with classical binary output that takes as input $i \in [M]$ and an $n$-qubit quantum state $\rho$.\footnote{That is, $E$ on input $i, \rho$ measures $\rho$ with respect to a binary measurement, which we can denote $E_i$} Then, a shadow tomography procedure takes as input $E$ and $\rho^{\otimes k}$ where $k$ denotes the number of copies, and outputs a quantum circuit\footnote{More precisely, classical description of a quantum circuit, since otherwise we can just hardwire the state $\rho$ into the circuit.} $C$ such that $\Pr[\forall i \in [M]~ \left|C(i) - \Pr[E(i, \rho) = 1]\right| < \eps] > 1 - \delta$. The procedure is said to be \emph{hyperefficient} if the number of copies $k$ and the runtime are both $\poly(n, \log M, \frac{1}{\eps})$.
\end{definition}
\cite{Aar18} shows that shadow tomography can be performed using polynomially many copies of $\rho$, however, the procedure takes exponential time. They leave it as an open question to give a hyperefficient shadow tomography procedure or rule out its existence. \cite{aaronson2007quantum, Aar09, Kre21} rule it out in oracle models, where the procedure has only black-box query access to the measurement circuit $E(i, \rho)$. 

\cite{Aar18} also shows that shadow tomography gives a generic attack on copy-protection schemes, and combined with their own sample-efficient shadow tomography procedure, they show that collusion-resistant copy-protection cannot exist without computational assumptions. Later, \cite{sattath2022uncloneable} adapts this attack to the case of unclonable decryptors, i.e., copy-protected secret keys for PKE, to conclude its impossibility without computational assumptions.

By \cref{thm:cppkeexists}, we obtain the following result.
\begin{corollary}
    Assuming post-quantum subexponentially secure indistinguishability obfuscation and LWE, there cannot exist a hyperefficient shadow tomography algorithm.    
\end{corollary}
\begin{proof}
    We prove the result by showing that shadow tomograpghy breaks PKE with copy-protected keys, which we construct in \cref{thm:cppkeexists}. Our attack is exactly the same as the one given by \cite{sattath2022uncloneable}, we merely observe that the attack is efficient when the shadow tomography procedure is efficient. We describe it below for completeness.

    Let $\pke$ be the PKE scheme with collusion-resistant copy-protected secret keys given in \cref{sec:pkecons}, for $1$-bit messages. Define the measurement circuit $E$ to be $\pke.\Dec(ct, \rho)$, where the measurement $E_{ct}$ outputs $1$ if the decryption procedure outputs $1$ when $\rho$ is given as the input to the key register. Note that $E$ is uniform. Suppose there exists a hyperefficient shadow tomography algorithm. Then, consider the following adversary for the anti-piracy game for $\pke$. We obtain $k$ keys where $k$ is the number of copies needed by the shadow tomography procedure, which is $\poly(\lambda)$ by assumption. We perform the procedure with $\eps = 1/8$ and $\delta = \frac{1}{2(k+1)}$ to obtain the estimation circuit $C$. We pick $0, 1$ as our challenge messages, and output $C$ to all $k + 1$ freeloaders. When presented with a challenge ciphertext $ct$, a freeloader runs $C(ct)$ and outputs $1$ if outputs a value $> 3/4$, and outputs $0$ otherwise. Note that if $ct$ is an encryption of $1$, we will have $\Pr[\pke.\Dec(ct, \rho) = 1] \leq 1 - \negl(\lambda)$, hence $C(ct) > 3/4$ with probability $1-\delta$, and we will correctly decrypt. By the same argument, all the freeloaders will simultaneously correctly decrypt with probability $> 1/2$. Note that the whole attack is efficient by assumption. This breaks the security of $\pke$, which is a contradiction.
 \end{proof}
\begin{remark}
    We note that to rule out hyperefficient shadow tomography, unbounded collusion-resistant schemes are needed - bounded collusion-resistance (\cite{LLQZ22}) is not sufficient, for the following reason. Since the number of copies required by hyperefficient shadow tomography procedure can depend on $\log M = \log 2^{|ct|} = |ct|$, if the ciphertext size grows with the collusion bound $k$, so does the number of copies needed. Hence, the hyperefficient shadow tomography procedure might need $k + 1$ or more copies to work, in which case we cannot arrive at a contradiction as we did above.
\end{remark}

We also obtain the following result.
\begin{corollary}
    There exists a quantumly accessible classical oracle relative to which there does not exist a hyperefficient shadow tomography algorithm.
\end{corollary}
\begin{proof}
    In this setting, the set of measurements $\{E_i\}_i$ is given by a quantumly accessible classical oracle $\mathcal{O}$ such that on input $i$, the oracle outputs the description of the measurement $E_i$. 
    
    By the same argument as above, our collusion-resistant copy-protection scheme for all unlearnable functionalities given in \cref{sec:allunlearn} implies the stated result.
\end{proof}

\section{Acknowledgements}
We thank Jiahui Liu for helpful discussions. Alper Çakan was supported by the following grants of Vipul Goyal: NSF award 1916939, DARPA SIEVE program, a gift from Ripple, a DoE NETL award, a JP Morgan Faculty Fellowship, a PNC center for financial services innovation award, and a Cylab seed funding award.

\bibliographystyle{alpha}
\bibliography{refs}

\appendix

\section{Proofs from \cref{sec:prelim}}
\subsection{Proof of \cref{thm:impindep}}\label{appn:impindep}
    We can write $\rho$ as $\rho = \sum_{j,k}\alpha_{j,k}\ketbra{j}{j}\otimes\ketbra{k}{k}$. Then, $(M_i\otimes I)\rho(M_i^\dagger\otimes I) = \sum_{j,k}\alpha_{j,k}(M_i\ketbra{j}{j}M_i^\dagger)\otimes\ketbra{k}{k}$ and therefore $(\Tr\otimes I)(M_i\otimes I)\rho(M_i^\dagger\otimes I) = \sum_{j,k}\alpha_{j,k}\bra{j}M_i^\dagger M_i\ket{j}\otimes\ketbra{k}{k}$. Note that this summation only depends on the POVM element $M_i^\dagger M_i$. The same argument applies to $\Lambda'$. Hence, the result follows by POVM equivalence of $\Lambda, \Lambda'$.
\subsection{Proof of \cref{lem:postmesdistlem}}\label{appn:postmesdistlem}
Let $\norm{\cdot}_1$ be the trace norm, and we have $\trd{\rho}{\sigma} =\frac{1}{2}\norm{\rho - \sigma}_1$. Define $q_i = {\Tr{M_i\sigma M_i^\dagger}}$ and we also have $p_i = {\Tr{M_i\sigma M_i^\dagger}}$. Then,
\begin{align*}
    \norm{p_i\rho' - q_i\sigma'}_1 &= \norm{p_i(\rho' - \sigma') - (q_i - p_i)\sigma'}_1 \\
    &\geq \left|p_i\norm{(\rho' - \sigma')}_1 - |q_i - p_i|\norm{\sigma'}_1\right| \\
    &\geq p_i\norm{(\rho' - \sigma')}_1 - \eps
\end{align*}
Last part follows from $|q_i - p_i| \leq  \trd{\rho}{\sigma} \leq \eps$ ad $\norm{\sigma'}_1 = 1$.
We also have by Schatten norm duality \begin{align*}
    \norm{p_i\rho' - q_i\sigma'}_1 &= \norm{M_i(\rho - \sigma)M_i^\dagger}_1 \\
    &= \sup_{-\mathbb{I} \leq E \leq \mathbb{I}}{\Tr{EM_i(\rho - \sigma)M_i^\dagger}} \\
    &= \sup_{-\mathbb{I} \leq E \leq \mathbb{I}}{\Tr{M_i^\dagger E M_i(\rho - \sigma)}} \\
    &\leq \sup_{-\mathbb{I} \leq E \leq \mathbb{I}}{\Tr{E(\rho - \sigma)}} \\
    &= \norm{\rho - \sigma}_1 \leq 2\eps.
\end{align*}
Above we also used the fact that when $-\mathbb{I} \leq E \leq \mathbb{I}$, we also have $-\mathbb{I} \leq M_i^\dagger E M_i \leq \mathbb{I}$. This is because $M_i^\dagger E M_i$ is positive semidefinite and $\bra{v} (I - M_i^\dagger E M_i) \ket{v} = \bra{v}\ket{v} - \bra{v} M_i^\dagger E M_i \ket{v} \geq \bra{v}\ket{v} - \bra{v} M_i^\dagger M_i \ket{v} \geq 0$  since $\bra{v} M_i^\dagger E M_i \ket{v} \leq \bra{v} M_i^\dagger M_i \ket{v}$ and $\sum_i M_i^\dagger M_i = \mathbb{I}$.

Combining the above yields the result.
    
\subsection{Proof of \cref{thm:simulproj}}\label{appn:simulproj}
First, we will prove the case where $\eps = 0$. We will prove it only for pure states, but the general case follows from purification. Let $\ket{\psi}$ be any state of appropriate dimension. We can write $\ket{\psi} = \sum_{j\in\mathcal{J},k\in \mathcal{K}} \alpha_{j,k}\ket{{v_j}}\otimes\ket{w_k}$ where $\{\ket{v_j}\}_{j\in\mathcal{J}}, \{\ket{w_k}\}_{k \in \mathcal{K}}$ are orthonormal eigenbases of $\Pi_1$ and $\Pi'_1$ respectively. We have $\Pi_1 = \sum_{j \in \mathcal{J}'}\ket{{v_j}}\bra{{v_j}}$ and $\Pi'_1 = \sum_{k \in \mathcal{K}'}\ket{{w_k}}\bra{{w_k}}$ for some subsets $\mathcal{J}' \subseteq \mathcal{J}, \mathcal{K}' \subseteq \mathcal{K}$. Since $\Tr{\Pi_1\otimes \Pi'_1 \rho} = 1$, we get $\alpha_{j,k} = 0$ if $(j,k) \not\in \mathcal{J}'\times\mathcal{K}'$.

    We can write the post-measurement state conditioned on outcome $i$ as $\ket{\phi}/\norm{\ket{\phi}}$ where we define the subnormalized state $\ket{\phi} = \sum_{j\in\mathcal{J}',k\in \mathcal{K}'} \alpha_{j,k}(M_i\ket{{v_j}})\otimes\ket{w_k}$.  When we apply $I \otimes \Pi'_1$ to $\ket{\phi}$, we get $\ket{\phi}$ again. Hence, $\Tr{\Pi'_1\frac{\ketbra{\phi}}{\norm{\ket{\phi}}^2}} = 1$, completing the first part of the proof.

Now, we move onto any $\eps \in (0, 1]$. Let $\rho'$ denote the post-measurement state obtained after applying $\Lambda\otimes\Lambda'$ to $\rho$ and obtaining the outcome $(1, 1)$. Note that $\rho'$ satisfies the claim with $\eps = 0$ since the measurement $\Lambda\otimes\Lambda'$  is projective. Then, by \cref{lem:gentlemes}, $\trd{\rho}{\rho'} \leq \sqrt{\eps}$ since canonical square root implementation of a projective measurement is the original measurement itself. Hence, applying the measurement $M$ on the first register and conditioning the outcome $i$, the post-measurement states of the second registers will have trace distance at most $3\sqrt{\eps}/2p_i$ by \cref{lem:postmesdistlem}. Hence, invoking the sub-claim for $\rho'$ with $\eps = 0$ and using the trace distance bound, we get the result.
\subsection{Proof of \cref{thm:qub}}\label{appn:qubproof}
We will prove the result only for pure states $\rho = \ketbra{\phi}{\phi}$ and the general case follows from convexity.

    We will first prove the case $n = 2$. Since $\Pi_1, \Pi_2$ are commuting projectors, there exists an orthonormal basis $\{\ket{v_i}\}_{i \in \mathcal{I}}$ and $\mathcal{I}_1, \mathcal{I}_2 \subseteq \mathcal{I}$ such that $\Pi_1 = \sum_{i \in \mathcal{I}_1} \ketbra{v_i}{v_i}$ and $\Pi_2 = \sum_{i \in \mathcal{I}_2} \ketbra{v_i}{v_i}$. We also have $\ket{\phi} = \sum_{i \in \mathcal{I}}c_i \ket{v_i}$ for some $\{c_i\}_{i \in \mathcal{I}}$ with $\sum_{i \in \mathcal{I}} |c_i|^2 = 1$. Then, 
    \begin{align*}
    \Tr[\Pi_1\rho] + \Tr[\Pi_2\rho] - \Tr[(I - \Pi_1\Pi_2)\rho] &=  
        \sum_{i \in \mathcal{I}_1 }|c_i|^2 + \sum_{i \in  \mathcal{I}_2}|c_i|^2 - \sum_{i \in \mathcal{I}_1 \cap \mathcal{I}_2}|c_i|^2 \\&=  \sum_{i \in \mathcal{I}_1\cup \mathcal{I}_2}|c_i|^2 \\&\leq \sum_{i \in \mathcal{I}} |c_i|^2  = 1.
    \end{align*}
    Hence, $\Tr[(I - \Pi_1\Pi_2) \rho] \leq \Tr[(I - \Pi_1)\rho]+\Tr[(I - \Pi_2)\rho]$. The general case follows by repeatedly applying this case and observing that $\Pi_i$ commutes with $\Pi_{i+1}\cdots\Pi_n$.

\subsection{Proof of \cref{thm:distti}}\label{appn:proofdistti}
First, we note that the result does not directly follow from the efficiency of $\mathsf{API}^{\eps,\delta}$. The reason is that while it is indeed efficient, it obtains superpositions of (exponentially many) outputs from the underlying distributions.

The proof will closely follow the proof of \cite[Theorem~6.5]{Z20}. We first state some of technical results that will be needed in the proof.

\subsubsection{Technical Lemmata}
\begin{lemma}\label{lem:apiequiv}
    Let $\mathcal{D}_0, \mathcal{D}_1$ be two efficient distributions with the same support and $\mathcal{P}$ be a collection of projective measurements indexed by this support. Suppose $\mathcal{D}_0 \equiv \mathcal{D}_1$. Then, $\mathsf{API}^{\eps,\delta}_{\mathcal{P},\mathcal{D}_0}\rho \equiv \mathsf{API}^{\eps,\delta}_{\mathcal{P},\mathcal{D}_1}\rho$ for any state $\rho$ of appropriate dimension.
\end{lemma}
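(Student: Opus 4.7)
The plan is to observe that $\mathsf{API}^{\eps,\delta}_{\mathcal{P},\mathcal{D}}$, viewed as a measurement acting on the input state $\rho$, depends on $\mathcal{D}$ only through the mixture POVM $E_\mathcal{D} := \sum_{i \in \mathcal{I}} \mathcal{D}(i)\, P_i$. Since we are told $\mathcal{D}_0 \equiv \mathcal{D}_1$ as distributions, we have $E_{\mathcal{D}_0} = E_{\mathcal{D}_1}$, and the desired equality of outcome distributions on any $\rho$ will follow immediately.

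To make this precise, I would first recall the construction of $\mathsf{API}$ from \cref{thm:apiprop}: one prepares the auxiliary state $U_\mathcal{D}\ket{0} = \sum_i \sqrt{\mathcal{D}(i)}\,\ket{i}\otimes\ket{\mathsf{aux}_i}$ using a coherent sampler $U_\mathcal{D}$, and then iteratively applies (Marriott--Watrous style) the binary projective measurement
\begin{equation*}
\Lambda = \left\{\sum_i \ketbra{i}{i}\otimes I \otimes P_i,\; I - \sum_i \ketbra{i}{i}\otimes I \otimes P_i\right\}
\end{equation*}
together with the reflection $U_\mathcal{D}\ketbra{0}{0}U_\mathcal{D}^\dagger$, in order to estimate an eigenvalue $p$ of the canonical $\mathsf{PI}$. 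The canonical projective implementation $\mathsf{PI}(E_\mathcal{D})$ is itself uniquely determined by the eigendecomposition of $E_\mathcal{D}$, so at the ideal (inefficient) level the equality $E_{\mathcal{D}_0}=E_{\mathcal{D}_1}$ already yields equal measurements.

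Even at the efficient approximate level, the marginal measurement induced on the input register by the Marriott--Watrous procedure is determined by the singular value decomposition of the product of the above projectors restricted to $\cH_{\mathsf{input}}\otimes\ket{0}_{\mathsf{aux}}$, which coincides with the spectrum of $\sqrt{E_\mathcal{D}}$. Hence the POVM $\{M^{\mathsf{API}}_p\}_p$ induced on $\cH_{\mathsf{input}}$ by running $\mathsf{API}^{\eps,\delta}_{\mathcal{P},\mathcal{D}}$ depends only on $E_\mathcal{D}$. A clean way to finish is to invoke \cref{thm:impindep}: any two measurements with the same POVM produce the same outcome distribution on every input state, so $\mathsf{API}^{\eps,\delta}_{\mathcal{P},\mathcal{D}_0}\rho \equiv \mathsf{API}^{\eps,\delta}_{\mathcal{P},\mathcal{D}_1}\rho$.

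The main (and essentially only) obstacle is verifying carefully that the particular implementation details of $\mathsf{API}$ do not covertly depend on the sampler $U_\mathcal{D}$ beyond the distribution $\mathcal{D}$ itself. To handle this cleanly, I would give an equivalent coupling-style argument: fix any coupling of the internal randomness/samples used by the two API procedures so that each sample drawn from $\mathcal{D}_0$ equals the corresponding sample from $\mathcal{D}_1$ (which exists because $\mathcal{D}_0 \equiv \mathcal{D}_1$). Under this coupling, every coherent or classical query to $\mathcal{D}$ produces an identically distributed auxiliary state, and all subsequent operations act on registers of the same dimension with the same unitaries, so the executions are indistinguishable at every step, yielding the same outcome distribution.
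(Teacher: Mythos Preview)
Your primary argument is correct and is essentially the paper's approach at a finer level of detail. The paper simply cites the structural fact from \cite[Theorem~6.2]{Z20} that the output distribution of $\mathsf{API}^{\eps,\delta}_{\mathcal{P},\mathcal{D}}\rho$ is equivalent to: sample $p \samp \mathsf{PI}(\mathcal{P}_\mathcal{D})\rho$, flip $2T$ independent coins each with bias $p$, and output a fixed deterministic function of the coin outcomes. Since $\mathcal{D}_0\equiv\mathcal{D}_1$ gives $E_{\mathcal{D}_0}=E_{\mathcal{D}_1}$ and hence $\mathsf{PI}(\mathcal{P}_{\mathcal{D}_0})=\mathsf{PI}(\mathcal{P}_{\mathcal{D}_1})$, the result follows. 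Your Jordan-decomposition sketch is exactly the content of that cited theorem. (The appeal to \cref{thm:impindep} is unnecessary: equal POVMs trivially give equal outcome distributions on every state.)

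Your ``coupling-style'' backup, however, is not a valid alternative and is precisely the naive argument the paper warns against before the proof. $\mathsf{API}$ does not draw classical samples from $\mathcal{D}$; it applies the sampler coherently to a superposition over random coins $r$. Two different algorithms realizing the same distribution yield different maps $\ket{r}\mapsto\ket{r}\ket{\mathcal{D}_b(r)}$, hence different purifications of the auxiliary register, so the intermediate states of the Marriott--Watrous iteration for $\mathcal{D}_0$ and $\mathcal{D}_1$ are genuinely different step by step. There are no classical samples to couple, and the auxiliary states are not ``identically distributed'' --- they are merely related by an isometry on the purifying register. The equality of outcome distributions only emerges after tracing out that register, which is the content of the Jordan/structural argument you already gave, not of any coupling. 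So keep your first argument and drop the second.
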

While the claim might seem obvious, it needs to be proven formally since $\mathsf{API}^{\eps,\delta}_{\mathcal{P},\mathcal{D}}$ does not work by obtaining random samples from $\mathcal{D}$ but instead runs the algorithm $\mathcal{D}$ on various choices on random coins.
\begin{proof}
    When we inspect the actual implementation of $\mathsf{API}^{\eps,\delta}$ and proof of \cite[Theorem~6.2]{Z20}, we see that the output distribution of $\mathsf{API}^{\eps,\delta}_{\mathcal{P},\mathcal{D}}\rho$ is equivalent to the following:
    \begin{enumerate}
        \item Sample $p \samp \mathsf{PI}_{\mathcal{P}_\mathcal{D}}\rho$.
        \item Flip $2T$ independent biased coins, where each coin has expected value $p$.
        \item Output some deterministic function of all coin flips.
    \end{enumerate}
    Since $\mathcal{D}_0 \equiv \mathcal{D}_1$ implies  $\mathsf{PI}({\mathcal{P}}_{\mathcal{D}_0}) = \mathsf{PI}({\mathcal{P}}_{\mathcal{D}_1})$, the result follows.
\end{proof}

\begin{lemma}\label{lem:multsamp}
    Let $\mathcal{D}_0, \mathcal{D}_1$ be two distributions with sampling time $p(\lambda)$ such that $\mathcal{D}_0 \approx^c_{\nu(\lambda)} \mathcal{D}_1$ for all adversaries that run in time $t(\lambda)$. Define $\mathcal{D}^{s(\lambda)}_b$ to be the distribution where we sample $s$ independent samples from $\mathcal{D}_b$. Then, $\mathcal{D}^s_0 \approx^c_{s(\lambda)\cdot\nu(\lambda)} \mathcal{D}^s_1$ for all adversaries that run in time $t(\lambda)/(p(\lambda)\cdot s(\lambda))$.
\end{lemma}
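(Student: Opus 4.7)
The plan is to proceed by a standard hybrid argument. For $i \in \{0, 1, \dots, s(\lambda)\}$, define the hybrid distribution $\mathcal{H}_i$ as the distribution that samples its first $i$ coordinates independently from $\mathcal{D}_1$ and its remaining $s(\lambda) - i$ coordinates independently from $\mathcal{D}_0$. By construction $\mathcal{H}_0 = \mathcal{D}^s_0$ and $\mathcal{H}_{s(\lambda)} = \mathcal{D}^s_1$, so by the triangle inequality it suffices to bound the distinguishing advantage between each adjacent pair $\mathcal{H}_i$ and $\mathcal{H}_{i+1}$ for $i \in \{0, \dots, s(\lambda) - 1\}$.

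Next, I would argue that any adversary $\adve$ running in time $t(\lambda)/(p(\lambda)\cdot s(\lambda))$ distinguishes $\mathcal{H}_i$ from $\mathcal{H}_{i+1}$ with advantage at most $\nu(\lambda)$. The reduction is the usual one: given a single challenge sample $x$ (purportedly from $\mathcal{D}_0$ or $\mathcal{D}_1$), the reduction algorithm independently samples $i$ values from $\mathcal{D}_1$ and $s(\lambda) - i - 1$ values from $\mathcal{D}_0$, places $x$ in the $(i+1)$-th coordinate, and feeds the resulting $s(\lambda)$-tuple to $\adve$, outputting whatever $\adve$ outputs. The total running time of this reduction is bounded by $s(\lambda) \cdot p(\lambda) + t(\lambda)/(p(\lambda)\cdot s(\lambda))$, which is at most $t(\lambda)$ for all sufficiently large $\lambda$ (using that $p(\lambda), s(\lambda) \geq 1$ and that the single-sample assumption is stated as an asymptotic bound). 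By the single-sample hypothesis, this reduction distinguishes $\mathcal{D}_0$ from $\mathcal{D}_1$ with advantage at most $\nu(\lambda)$, which equals the advantage of $\adve$ against $\mathcal{H}_i$ vs.\ $\mathcal{H}_{i+1}$.

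Summing over the $s(\lambda)$ adjacent pairs via the triangle inequality yields the stated bound $s(\lambda)\cdot \nu(\lambda)$ on the advantage of $\adve$ distinguishing $\mathcal{D}^s_0$ from $\mathcal{D}^s_1$. The only mildly delicate point in this argument is the accounting for the runtime of the reduction: one needs that the $s(\lambda) \cdot p(\lambda)$ overhead of sampling the other coordinates, combined with the adversary's own runtime budget $t(\lambda)/(p(\lambda)\cdot s(\lambda))$, still fits within the $t(\lambda)$ bound guaranteed by the single-sample indistinguishability hypothesis. Since the statement is an asymptotic one (with an $\approx^c_{\nu}$ relation understood as holding for all sufficiently large $\lambda$), this loose bound is not an obstacle, and I do not anticipate any other difficulty in the proof.
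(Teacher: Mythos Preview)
Your proposal is correct and follows essentially the same approach as the paper: define hybrids $\mathcal{H}_i$ with the first $i$ coordinates from $\mathcal{D}_1$ and the rest from $\mathcal{D}_0$, reduce adjacent hybrids to a single-sample distinguisher by embedding the challenge at coordinate $i{+}1$ and sampling the others, account for the $s(\lambda)\cdot p(\lambda)$ sampling overhead so the reduction runs in time $O(t(\lambda))$, and conclude via the triangle inequality. The paper's proof is the same standard hybrid argument, with only cosmetic differences in indexing.
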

\begin{proof}
The result follows from a standard hybrid argument. We give in full detail for completeness.

For all $i \in \{0, 1, \dots,s(\lambda)\}$, we define the hybrid distribution $\hyb_i$ as the distribution where the first $i$ components are sampled from $\mathcal{D}_1$ and the rest are sampled from $\mathcal{D}_0$. Observe that $\hyb_0$ is $\mathcal{D}^s_0$ and $\hyb_s$ is $\mathcal{D}^s_1$.

Now, we claim $\hyb_{i - 1} \approx^c_{\nu(\lambda)} \hyb_i$ for adversaries that run in $t(\lambda)/s(\lambda)$, for all $i \in [s]$. Suppose otherwise, for a contradiction. Let $\adve$ be the that distinguishes them. Then, we can create an adversary $\adve'$ for distinguishing $\mathcal{D}_0$ versus $\mathcal{D}_1$ as follows.
\paragraph{$\underline{\adve'(a)}$}
\begin{enumerate}
    \item For $j \in [s]$, sample $a_j \samp \mathcal{D}_1$ if $j \leq i - 1$ and $a_j \samp \mathcal{D}_0$ if $j > i$.
    \item Set $a_i = a$.
    \item Output $\adve((a_j)_{j \in [s]})$.
\end{enumerate}
It is easy to see that $\adve'$ runs in time $O(t(\lambda)/(p(\lambda)\cdot s(\lambda))\cdot p(\lambda)\cdot s(\lambda))$ and has advantage $\geq \nu(\lambda)$, which is a contradiction.

Finally, triangle inequality yields the claim.
\end{proof}

\begin{lemma}[\cite{Z20}]\label{lem:smallrange}
Let $A$ be a set.
Sample $\Pi$ to be a random permutation on $A$. Sample random functions $G: [s] \to A$ and $F: A \to [s]$. Then, for any quantum algorithm $B$ making $Q$ quantum queries to its oracle, we have \begin{equation*}
    \left|\Pr[B^{\Pi}() = 1] - \Pr[B^{G \circ F}() = 1] \right| \leq O(Q^3/s + Q^3/|A|).
\end{equation*}
\end{lemma}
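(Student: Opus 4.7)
}
The plan is to introduce an intermediate oracle, namely a uniformly random function $H : A \to A$, and then apply triangle inequality:
\begin{equation*}
\left|\Pr[B^{\Pi}() = 1] - \Pr[B^{G\circ F}() = 1]\right| \leq \left|\Pr[B^{\Pi}() = 1] - \Pr[B^{H}() = 1]\right| + \left|\Pr[B^{H}() = 1] - \Pr[B^{G\circ F}() = 1]\right|.
\end{equation*}
Each of the two summands is handled by a separate known quantum oracle indistinguishability result.

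For the first term, I would invoke Zhandry's \emph{permutation-to-function} lemma: a uniformly random permutation $\Pi$ on $A$ is indistinguishable from a uniformly random function $H : A \to A$ against any $Q$-query quantum distinguisher, with distinguishing advantage at most $O(Q^3/|A|)$. The standard proof of this bound goes through the compressed-oracle / polynomial method machinery and shows that the behaviour on $Q$ queries can only detect collisions, which occur with probability $O(Q^2/|A|)$ classically and contribute an $O(Q^3/|A|)$ term in the quantum setting. I would simply cite this lemma.

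For the second term, the idea is that $G \circ F$ is exactly a draw from a \emph{small-range distribution} with parameter $s$: for each $x$, first $F$ hashes $x$ uniformly into a bucket index in $[s]$, and then $G$ decorates each bucket with an independent uniformly random value in $A$. The key step is to show that the statistics of $Q$ quantum queries to a random function $H$ and to such a small-range function $G \circ F$ differ by $O(Q^3/s)$. This is again a known result of Zhandry (the small-range distribution lemma), proven by expressing the state after $Q$ queries as a symmetric function of the oracle's values, and arguing via a polynomial/type argument that the only detectable feature is the presence of a $3$-collision pattern among the queried inputs, which occurs with probability $O(Q^3/s)$ under $G \circ F$.

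Combining the two bounds via triangle inequality yields $O(Q^3/|A|) + O(Q^3/s)$, which is the claimed bound. The main non-trivial obstacle here is not really in the reduction itself (triangle inequality is immediate), but in the invocation of the small-range distribution lemma: one has to be careful that the distribution $G\circ F$ as stated matches the small-range distribution definition, and that the quantum query bound is applied to the right oracle. Since both ingredients are standard results cited from \cite{Z20} (and its precursor \cite{Z12}), I would simply state the two lemmata and combine them, with no novel analysis required.
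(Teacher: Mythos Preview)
The paper does not give its own proof of this lemma; it is stated as a cited result from \cite{Z20} and used as a black box in the proof of \cref{thm:distti}. Your proposal is correct and is exactly the standard derivation of this bound: interpolate through a uniformly random function $H:A\to A$, apply Zhandry's random-permutation-versus-random-function bound ($O(Q^3/|A|)$) for the first term and the small-range distribution lemma ($O(Q^3/s)$) for the second, then combine by triangle inequality.
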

\begin{lemma}[\cite{Z12b}]\label{lem:qindep}
    Sample a random function $F: \mathcal{A} \to \mathcal{B}$ and a $2Q$-wise independent function $E: \mathcal{A} \to \mathcal{B}$.
     Then, for any quantum algorithm $B$ making $Q$ quantum queries to its oracle, we have \begin{equation*}
    \Pr[B^{F}() = 1] = \Pr[B^{E}() = 1].
\end{equation*}
\end{lemma}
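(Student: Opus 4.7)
The plan is to follow Zhandry's polynomial method from \cite{Z12b}. The key structural fact to establish is that the acceptance probability of any $Q$-query quantum algorithm $B^O$ is a linear combination of terms, each of which reads $O$ at at most $2Q$ distinct inputs. Once this is in hand, the conclusion is immediate: $2Q$-wise independence of $E$ means that the joint distribution of its values at any $2Q$ points matches that of a truly random $F$, so each term has the same expectation under $E$ and $F$, and summing gives equality of the acceptance probabilities.

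To set up this structural fact, I would expand $B^O$ as an alternating product of query-independent unitaries $V_0, V_1, \dots, V_Q$ and query unitaries $U_O \colon \ket{x,y} \mapsto \ket{x, y \oplus O(x)}$. Writing each query as $U_O = \sum_{x,v} \ketbra{x,v}{x, v \oplus O(x)} \otimes I_{\mathrm{work}}$ and expanding the $Q$-fold product, the amplitude $\bra{w} B^O \ket{\psi_0}$ becomes a sum over length-$Q$ query histories $(\vec{x}, \vec{v}) = ((x_i, v_i))_{i=1}^Q$, with each summand a coefficient depending only on $V_0, \dots, V_Q, w, \psi_0$ (not on $O$), multiplied by the indicator $\prod_{i=1}^{Q} \mathbf{1}[O(x_i) = v_i]$. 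Forming $\sum_{w \in \mathsf{acc}} |\bra{w} B^O \ket{\psi_0}|^2$ then gives a double sum over pairs of query histories $(\vec{x}, \vec{v})$ and $(\vec{x}', \vec{v}')$, each summand proportional to $\prod_{i} \mathbf{1}[O(x_i) = v_i] \cdot \prod_{i} \mathbf{1}[O(x_i') = v_i']$, and therefore referencing $O$ at at most $2Q$ input points.

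Taking expectation over $O$, each such term's expectation depends only on the joint distribution of $O$ at the (at most $2Q$) inputs appearing in that term. By definition of $2Q$-wise independence, this joint distribution is identical for $E$ and for a uniformly random $F$. Summing over all terms gives $\E_F[\Pr[B^F() = 1]] = \E_E[\Pr[B^E() = 1]]$, as desired.

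The main delicate step is verifying that the $Q$-fold query expansion genuinely factors into at most $Q$ indicator factors on the ket side and at most $Q$ on the bra side, even when the interleaved unitaries $V_i$ entangle the query registers with the workspace. Once everything is written in the computational basis the factorization is direct bookkeeping, but it is the only place where any real work happens; the rest of the argument is just linearity of expectation and the definition of $2Q$-wise independence.
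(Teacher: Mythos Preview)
The paper does not prove this lemma; it merely cites \cite{Z12b} and uses the result as a black box. Your proposal correctly reconstructs the standard polynomial-method argument from that reference: expand the amplitudes of a $Q$-query algorithm as degree-$Q$ polynomials in the indicator variables $\mathbf{1}[O(x)=b]$, so that the acceptance probability is degree $2Q$, and conclude by linearity of expectation and the definition of $2Q$-wise independence. This is exactly the intended proof, so there is nothing to compare against in the paper itself.
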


\subsubsection{Proof of the Theorem}
Now we move onto the proof of the theorem. We will construct a sequence of hybrid distributions, starting with $\Vec{p_0}$ and ending with $\Vec{p_1}$, that are obtained by modifying the previous one. Without loss of generality, assume that all $\mathcal{B}_\ell^b$ have the same random coin set $\mathcal{R}$. Let $s$ be a parameter that we will set later. We assume that $|\mathcal{R}|$ is at least $s$, which is without loss of generality since we can pad the random coins (and later ignore the padding when using the coins).

\paragraph{$\hyb_0$:} Same as $\Vec{p_0}$.
\paragraph{$\hyb_1$:} We now sample $\rho, pp$ as $\mathcal{S}^1(1^\lambda)$.

\paragraph{$\hyb_2$:} For all $\ell \in [k]$, sample a random permutation $\Pi_\ell: \mathcal{R} \to \mathcal{R}$. Then, instead of applying $\bigotimes_{\ell \in [k]}\mathsf{API}^{\eps,\delta}_{\mathcal{P}_{\ell}, \mathcal{B}^0_\ell(pp)}$ to $\rho$, now apply  $\bigotimes_{\ell \in [k]}\mathsf{API}^{\eps,\delta}_{\mathcal{P}_{\ell},\mathcal{B}^{'0}_\ell(pp))}$ where we define $\mathcal{B}^{'0}_\ell(pp; r) = \mathcal{B}^{0}_\ell(pp; \Pi_\ell(r))$.

\paragraph{$\hyb_{3,i}$ for $i \in [k]$:}  Sample random functions $G_\ell: [s] \to \mathcal{R}$ and $F_\ell: \mathcal{R} \to [s]$ for all $\ell \in [i]$. We now apply  $\left(\bigotimes_{\ell \in [i]}\mathsf{API}^{\eps,\delta}_{\mathcal{P}_{\ell},\mathcal{B}^{''0}_\ell(pp))}\right) \otimes \left(\bigotimes_{\ell \in \{i+1,\dots,k\}}\mathsf{API}^{\eps,\delta}_{\mathcal{P}_{\ell},\mathcal{B}^{'0}_\ell(pp))}\right)$ where we define $\mathcal{B}^{''0}_\ell(pp; r) = \mathcal{B}^{0}_\ell(pp; G(F(r)))$.

\paragraph{$\hyb_{4,i}$ for $i \in [k]$:}  Sample $2Q$-wise independent function $E_\ell: \mathcal{R} \to [s]$ for all $\ell \in [i]$. We now apply  $\left(\bigotimes_{\ell \in [i]}\mathsf{API}^{\eps,\delta}_{\mathcal{P}_{\ell},\mathcal{B}^{'''0}_\ell(pp))}\right) \otimes \left(\bigotimes_{\ell \in \{i+1,\dots,k\}}\mathsf{API}^{\eps,\delta}_{\mathcal{P}_{\ell},\mathcal{B}^{''0}_\ell(pp))}\right)$ where we define $\mathcal{B}^{'''0}_\ell(pp; r) = \mathcal{B}^{0}_\ell(pp; G(E(r)))$.

\paragraph{$\hyb_{5,i}$ for $i \in \{0,1,\dots,k-1\}$:}  We now apply  $\left(\bigotimes_{\ell \in [i]}\mathsf{API}^{\eps,\delta}_{\mathcal{P}_{\ell},\mathcal{B}^{'''1}_\ell(pp))}\right) \otimes \left(\bigotimes_{\ell \in \{i+1,\dots,k\}}\mathsf{API}^{\eps,\delta}_{\mathcal{P}_{\ell},\mathcal{B}^{'''0}_\ell(pp))}\right)$.

\paragraph{$\hyb_{6, k - i + 1}$ for $i \in [k]$:}   We now apply  $\left(\bigotimes_{\ell \in [i]}\mathsf{API}^{\eps,\delta}_{\mathcal{P}_{\ell},\mathcal{B}^{'''1}_\ell(pp))}\right) \otimes \left(\bigotimes_{\ell \in \{i+1,\dots,k\}}\mathsf{API}^{\eps,\delta}_{\mathcal{P}_{\ell},\mathcal{B}^{''1}_\ell(pp))}\right)$ where we define $\mathcal{B}^{'''1}_\ell(pp; r) = \mathcal{B}^{1}_\ell(pp; G(E(r)))$ and  $\mathcal{B}^{''1}_\ell(pp; r) = \mathcal{B}^{1}_\ell(pp; G(F(r)))$.

\paragraph{$\hyb_{7,k-i+1}$ for $i \in [k]$:}  We now apply  $\left(\bigotimes_{\ell \in [i]}\mathsf{API}^{\eps,\delta}_{\mathcal{P}_{\ell},\mathcal{B}^{''1}_\ell(pp))}\right) \otimes \left(\bigotimes_{\ell \in \{i+1,\dots,k\}}\mathsf{API}^{\eps,\delta}_{\mathcal{P}_{\ell},\mathcal{B}^{'1}_\ell(pp))}\right)$ where we define $\mathcal{B}^{'1}_\ell(pp; r) = \mathcal{B}^{1}_\ell(pp; \Pi_\ell(r))$.

\paragraph{$\hyb_8$:} We now apply $\bigotimes_{\ell \in [k]}\mathsf{API}^{\eps,\delta}_{\mathcal{P}_{\ell},\mathcal{B}^{'1}_\ell(pp))}$ to $\rho$.
\paragraph{$\hyb_9$:} Same as $\vec{p_1}$.
\begin{lemma}
    $\statdist{\hyb_0}{\hyb_1} \leq \nu(\lambda)$
\end{lemma}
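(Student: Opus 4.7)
The plan is to observe that $\hyb_0$ and $\hyb_1$ differ in a single, well-isolated way: the source of the classical-quantum pair $(\rho, pp)$ is switched from $\mathcal{S}^0(1^\lambda)$ to $\mathcal{S}^1(1^\lambda)$, while the measurement applied to $\rho$ is literally the same expression $\bigotimes_{\ell \in [k]}\mathsf{API}^{\eps,\delta}_{\mathcal{P}_{\ell}, \mathcal{B}^0_\ell(pp)}$ in both hybrids (note that we still use $\mathcal{B}^0_\ell$, not $\mathcal{B}^1_\ell$, in $\hyb_1$). So the lemma should be immediate from the hypothesis of the theorem, provided we can marginalize that hypothesis down to a statement about $(\rho, pp)$ alone, and provided the measurement used is efficient enough to be folded into a distinguisher.

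The first step is marginalization. The assumption $(\mathcal{S}^0, \mathcal{B}^0_\ell) \approx (\mathcal{S}^1, \mathcal{B}^1_\ell)$ (statistically in the first bullet, or computationally with advantage $\nu$ against $(k/\mu^2)\cdot\poly(\lambda)$-time adversaries in the second) talks about the joint distribution of $(\rho, s, pp)$. Any distinguisher for the marginal $(\rho, pp)$ gives a distinguisher for the joint by receiving $(\rho, s, pp)$ and discarding $s$; in the computational case this discarding takes constant overhead, so the time budget is preserved. Hence the $(\rho, pp)$-marginals of $\mathcal{S}^0$ and $\mathcal{S}^1$ are indistinguishable with the same parameters.

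The second step is to promote indistinguishability of $(\rho, pp)$ into indistinguishability of the classical outcome vectors $\vec p_0, \vec p_1$. A distinguisher is constructed in the obvious way: on input $(\rho, pp)$, compute the measurement $\bigotimes_{\ell \in [k]}\mathsf{API}^{\eps,\delta}_{\mathcal{P}_{\ell}, \mathcal{B}^0_\ell(pp)}$ and output its classical outcome tuple. By \cref{thm:apiprop}, each $\mathsf{API}^{\eps,\delta}_{\mathcal{P}_{\ell}, \mathcal{B}^0_\ell(pp)}$ runs in time polynomial in $1/\eps$, $\log(1/\delta)$ and the runtimes of $\mathcal{P}_\ell$ and $\mathcal{B}^0_\ell$ (the hypothesis of efficient implementability in the theorem statement is exactly what justifies invoking \cref{thm:apiprop} here). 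The $k$ components act on disjoint registers and can be run in parallel, so the total runtime is $k$ times a polynomial in the above quantities. In the statistical case this efficiency is irrelevant and we immediately get $\statdist{\hyb_0}{\hyb_1}\leq \nu(\lambda)=\negl(\lambda)$. In the computational case, one has to check that this runtime fits within the $(k/\mu^2)\cdot\poly(\lambda)$ budget under which the computational indistinguishability hypothesis is assumed; this is the only substantive (though still routine) point, and will follow from the standing convention that $\eps$ is inverse polynomial in $\lambda$ and $k$ while $\delta$ is inverse exponential, so $\poly(k, 1/\eps, \log(1/\delta))$ indeed lies in $(k/\mu^2)\cdot\poly(\lambda)$.

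Combining these two steps, any distinguisher of $\hyb_0$ and $\hyb_1$ with advantage exceeding $\nu(\lambda)$ yields a same-advantage distinguisher of the $(\rho, pp)$-marginals of $\mathcal{S}^0, \mathcal{S}^1$ within the allotted time budget, contradicting the hypothesis. Therefore $\statdist{\hyb_0}{\hyb_1}\leq\nu(\lambda)$, as claimed. The only subtle point is bookkeeping on runtimes in the computational branch; everything else is a direct marginalization and the efficiency of $\mathsf{API}$.
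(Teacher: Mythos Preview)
The paper states this lemma without proof, presumably treating it as immediate from the hypothesis of \cref{thm:distti}. Your argument correctly supplies the missing details and is exactly what the authors must have had in mind: the only change between $\hyb_0$ and $\hyb_1$ is the source of $(\rho,pp)$, the subsequent measurement $\bigotimes_{\ell}\mathsf{API}^{\eps,\delta}_{\mathcal{P}_\ell,\mathcal{B}^0_\ell(pp)}$ is an efficient procedure on $(\rho,pp)$ (by the efficient-implementability hypothesis and \cref{thm:apiprop}), and the marginal of the assumed indistinguishability $(\mathcal{S}^0,\mathcal{B}^0_\ell)\approx(\mathcal{S}^1,\mathcal{B}^1_\ell)$ on the $(\rho,pp)$ component is precisely $\mathcal{S}^0\approx\mathcal{S}^1$. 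Your observation that $\hyb_1$ still uses $\mathcal{B}^0_\ell$ (not $\mathcal{B}^1_\ell$) is the key point that makes the reduction one-line; and your runtime bookkeeping in the computational branch is the right check to make, even if the paper elides it.
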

\begin{lemma}
    $\hyb_1 \equiv \hyb_2$ and $\hyb_8 \equiv \hyb_9$.
\end{lemma}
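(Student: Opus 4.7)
The plan is to observe that $\hyb_2$ differs from $\hyb_1$ only by pre-composing each $\mathcal{B}^0_\ell$'s random-coin argument with a freshly sampled uniform permutation $\Pi_\ell : \mathcal{R} \to \mathcal{R}$. For any fixed permutations $\Pi_1, \dots, \Pi_k$ and any fixed $pp$, the bijectivity of each $\Pi_\ell$ implies that the distribution of $\mathcal{B}^{'0}_\ell(pp; r) = \mathcal{B}^0_\ell(pp; \Pi_\ell(r))$ over uniform $r \samp \mathcal{R}$ equals the distribution of $\mathcal{B}^0_\ell(pp; r)$ over uniform $r$ --- these are literally the same random variable up to a relabelling of the coin set. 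Hence, for each $\ell$, \cref{lem:apiequiv} gives that $\mathsf{API}^{\eps, \delta}_{\mathcal{P}_\ell, \mathcal{B}^{'0}_\ell(pp)}$ and $\mathsf{API}^{\eps, \delta}_{\mathcal{P}_\ell, \mathcal{B}^0_\ell(pp)}$ produce the same outcome distribution on every input state.

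The joint measurement $\bigotimes_\ell \mathsf{API}^{\eps, \delta}_{\mathcal{P}_\ell, \mathcal{B}^{(\cdot)0}_\ell(pp)}$ acts on disjoint subsystems of the $k$-partite register, so its overall statistics are determined by the tensor product of the marginal measurement operators. Since the marginals coincide across the two hybrids by the argument above, the joint measurement-outcome distribution is the same on any $k$-partite state, for every fixed $(pp, \Pi_1, \dots, \Pi_k)$. Since the sampling of $(\rho, pp) \samp \mathcal{S}^1(1^\lambda)$ is identical in $\hyb_1$ and $\hyb_2$, averaging over $(\rho, pp)$ and over the independent uniform $\Pi_\ell$'s (which appear only in $\hyb_2$ but do not affect the outcome distribution) yields $\hyb_1 \equiv \hyb_2$ exactly.

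The proof of $\hyb_8 \equiv \hyb_9$ is verbatim the same, with $\mathcal{B}^1, \mathcal{B}^{'1}$ in place of $\mathcal{B}^0, \mathcal{B}^{'0}$ (and with $\hyb_9 = \vec{p_1}$ playing the role of $\hyb_1$ and $\hyb_8$ the role of $\hyb_2$). There is no real obstacle in this step: the whole point of introducing $\Pi_\ell$ here is to set up the small-range and $2Q$-wise-independence machinery of \cref{lem:smallrange} and \cref{lem:qindep}, which is then invoked in the subsequent hybrids $\hyb_{3,i}, \hyb_{4,i}, \dots$ where the true random permutation is gradually replaced by efficiently sampleable functions. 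For the present lemma the only ingredients needed are the bijectivity of $\Pi_\ell$ (so that composing with it preserves the uniform distribution on $\mathcal{R}$) and \cref{lem:apiequiv} (to transfer this distributional equality into equality of the measurement outcome distributions of $\mathsf{API}$), and both are immediate.
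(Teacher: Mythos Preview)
Your proof is correct and follows the same approach as the paper: observe that $\mathcal{B}^{'b}_\ell(pp)$ and $\mathcal{B}^{b}_\ell(pp)$ are identical as distributions (since $\Pi_\ell$ is a bijection on the coin set), then invoke \cref{lem:apiequiv}. The paper's proof is essentially your first paragraph compressed to two sentences; your additional remarks about the tensor product structure and the role of $\Pi_\ell$ in later hybrids are correct elaborations but not needed for the argument.
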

\begin{proof}
    Observe that $\mathcal{B}^{'b}_\ell(pp)$ and $\mathcal{B}^{b}_\ell(pp)$ are exactly the same distribution. The result follows from \cref{lem:apiequiv}.
\end{proof}
\begin{lemma}
\begin{itemize}
    \item $\statdist{\hyb_2}{\hyb_{3,1}} \leq O(Q^3/s).$
    \item $\statdist{\hyb_{3,i}}{\hyb_{3,i+1}} \leq O(Q^3/s)$ for all $i \in [k - 1]$.
    \item   $\statdist{\hyb_{7,i}}{\hyb_{7,i+1}} \leq O(Q^3/s)$ for all $i \in [k - 1]$.
    \item $\statdist{\hyb_{7,k}}{\hyb_{8}} \leq O(Q^3/s).$
\end{itemize}
    
\end{lemma}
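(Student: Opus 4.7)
All four bounds are instances of the same small-range distribution argument, applied register-by-register. The plan is to view each step where we swap a random permutation $\Pi_\ell$ for the composition $G_\ell \circ F_\ell$ (or vice versa) as a quantum oracle algorithm making polynomially many quantum queries, and then invoke \cref{lem:smallrange} to bound the statistical distance by $O(Q^3/s + Q^3/|\mathcal{R}|) = O(Q^3/s)$ (using our standing assumption $|\mathcal{R}| \geq s$).

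Concretely, to bound $\statdist{\hyb_{3,i}}{\hyb_{3,i+1}}$, I would define a quantum oracle algorithm $B^{\mathcal{O}}$ that internally samples $\rho, pp$ via $\mathcal{S}^0$, samples the permutations $\Pi_{\ell'}$ for $\ell' \neq i+1$, samples the auxiliary random functions $G_\ell, F_\ell$ for $\ell \leq i$, and then runs the full tensor product of $\mathsf{API}^{\eps,\delta}$ measurements, using its oracle $\mathcal{O}$ in place of the $(i+1)$-st permutation whenever $\mathcal{B}^0_{i+1}(pp; \cdot)$ is invoked. Since each $\mathsf{API}^{\eps,\delta}$ runs in time polynomial in $1/\eps, \log(1/\delta)$ (and the runtimes of the underlying procedures), the total number of quantum queries $B$ makes to $\mathcal{O}$ is some polynomial $Q$. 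When $\mathcal{O} = \Pi_{i+1}$ we obtain exactly $\hyb_{3,i}$, and when $\mathcal{O} = G_{i+1} \circ F_{i+1}$ we obtain exactly $\hyb_{3,i+1}$; applying \cref{lem:smallrange} immediately gives the desired $O(Q^3/s)$ bound. The bound $\statdist{\hyb_2}{\hyb_{3,1}}$ is the $i=0$ case of the same argument, and the two bounds on the $\hyb_7$ side are completely symmetric (just with $\mathcal{B}^1$ in place of $\mathcal{B}^0$ and with the roles of $\Pi$ versus $G \circ F$ swapped).

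The main conceptual point to be careful about is that $B$ needs to internally simulate the other random permutations $\Pi_{\ell'}$, which act on the exponential-size set $\mathcal{R}$; strictly speaking $B$ is not efficient as stated. However, this is not an issue for the claim, since \cref{lem:smallrange} is a purely information-theoretic statement about query complexity, not runtime, so we can give $B$ unbounded classical preprocessing (or, equivalently, provide the other permutations as (exponentially long) nonuniform advice). If an efficient reduction were needed elsewhere, one could instead invoke \cref{lem:qindep} to replace each other $\Pi_{\ell'}$ with a $2Q$-wise independent function without changing any output distribution against a $Q$-query algorithm; but this refinement is unnecessary for the statistical bound.

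The only potential obstacle is verifying that $\mathsf{API}^{\eps,\delta}$ truly interacts with the randomness of $\mathcal{B}^b_\ell$ only through quantum oracle queries of the form $r \mapsto \mathcal{B}^b_\ell(pp; r)$, so that substituting $r \mapsto \mathcal{B}^b_\ell(pp; \Pi_\ell(r))$ versus $r \mapsto \mathcal{B}^b_\ell(pp; G_\ell(F_\ell(r)))$ is captured by oracle access to $\Pi_\ell$ versus $G_\ell \circ F_\ell$; this follows from the efficient implementation of $\mathsf{API}^{\eps,\delta}$ given in \cref{thm:apiprop}, which is defined via coherent evaluation of the mixture measurement over superpositions of the randomness $r$. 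Once this bookkeeping is done, each of the four displayed bounds is a single application of \cref{lem:smallrange}.
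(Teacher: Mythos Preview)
Your proposal is correct and follows exactly the approach the paper intends: the paper's proof is the one-liner ``Follows from \cref{lem:smallrange},'' and you have unpacked precisely the reduction that makes this work. One small slip: from $\hyb_1$ onward the state $\rho, pp$ is sampled via $\mathcal{S}^1$, not $\mathcal{S}^0$, but this is immaterial to the argument since \cref{lem:smallrange} is information-theoretic and applies regardless of how the non-oracle parts of $B$ are instantiated.
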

\begin{proof}
Follows from \cref{lem:smallrange}.
\end{proof}

\begin{lemma}
    \begin{itemize}
        \item $\hyb_{3,k} \equiv \hyb_{4,1}$.
        \item $\hyb_{4,i} \equiv \hyb_{4,i+1}$ for all $i \in [k -  1]$.
        \item $\hyb_{6,k} \equiv \hyb_{7,1}$.
        \item $\hyb_{6,i} \equiv \hyb_{6,i+1}$ for all $i \in [k -  1]$.

    \end{itemize}
\end{lemma}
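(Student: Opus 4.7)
The plan is to apply \cref{lem:qindep} directly for each of the four equivalences. Consider for instance the first claim $\hyb_{3,k} \equiv \hyb_{4,1}$. The only structural difference between these two hybrids is that in the first register, $\hyb_{3,k}$ uses $\mathcal{B}^{''0}_1$, which routes its random coins through $G_1 \circ F_1$ where $F_1$ is a uniformly random function $\mathcal{R} \to [s]$, whereas $\hyb_{4,1}$ uses $\mathcal{B}^{'''0}_1$, which routes its random coins through $G_1 \circ E_1$ where $E_1$ is a $2Q$-wise independent function $\mathcal{R} \to [s]$ (for a $Q$ we identify below). Every other component --- the state $\rho$ and classical string $pp$ sampled from $\mathcal{S}^1$, the functions $F_j, G_j$ for $j > 1$, the permutations $\Pi_j$, and the APIs acting on the other $k-1$ registers --- is identically distributed in the two hybrids and independent of the choice $F_1$ versus $E_1$. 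The same structural observation applies to each of the other three equivalences: $\hyb_{4,i} \equiv \hyb_{4,i+1}$ differs only in register $i+1$ (replacing $F_{i+1}$ by $E_{i+1}$), and the two $\hyb_6/\hyb_7$ claims are symmetric with $b=1$ in place of $b=0$.

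To formalize each equivalence, I would package the rest of the hybrid experiment as a quantum algorithm $B$ that is given oracle access to a single function $H: \mathcal{R} \to [s]$ at the index in question. This algorithm samples all other classical and quantum randomness internally and then applies the tensor product of APIs, routing coins of the designated register through $G \circ H$ wherever $F$ (respectively $E$) appeared. The output of $B$ is the tuple of measurement outcomes $\vec{p}$. By \cref{thm:apiprop}, $\mathsf{API}^{\eps, \delta}$ can be implemented as a QPT procedure that coherently queries the underlying distribution $\mathcal{B}^b_\ell$ on superpositions of its random coins, with the total number of queries bounded by some fixed polynomial $Q = Q(\lambda, 1/\eps, \log(1/\delta))$. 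Each such coherent query translates into at most one quantum query to $H$, so $B$ makes at most $Q$ quantum queries to its oracle.

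By \cref{lem:qindep} applied with this $Q$, the output distribution of $B^H$ is identical whether $H$ is a uniformly random function or a $2Q$-wise independent function. This yields $\hyb_{3,k} \equiv \hyb_{4,1}$ exactly. The claims $\hyb_{4,i} \equiv \hyb_{4,i+1}$ follow by the same argument applied to register $i+1$, and $\hyb_{6,k} \equiv \hyb_{7,1}$ together with $\hyb_{6,i} \equiv \hyb_{6,i+1}$ follow symmetrically using the $b=1$ variants $\mathcal{B}^{''1}, \mathcal{B}^{'''1}$. The only point requiring care is to fix the polynomial $Q$ uniformly for all the $\mathsf{API}$ calls appearing across the hybrid chain and to choose the wise-independence parameter of each $E_\ell$ to be (at least) $2Q$; since $\eps, \delta$ are the same throughout the proof, this is achievable with a single polynomial $Q$, and no substantive obstacle arises.
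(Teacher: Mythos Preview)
Your proposal is correct and takes essentially the same approach as the paper, whose entire proof is the one-liner ``Follows from \cref{lem:qindep}.'' You have simply fleshed out the details the paper leaves implicit: isolating the single register where $F_\ell$ is swapped for $E_\ell$, packaging the rest of the experiment as a $Q$-query oracle algorithm, and invoking \cref{lem:qindep}.
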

\begin{proof}
    Follows from \cref{lem:qindep}.
\end{proof}

\begin{lemma}
\begin{itemize}
\item $\statdist{\hyb_{5,k-1}}{\hyb_{6,0}} \leq \nu(\lambda)\cdot s(\lambda)$.
        \item $\statdist{\hyb_{5,i-1}}{\hyb_{5,i}} \leq \nu(\lambda)\cdot s(\lambda)$ for $i \in [k]$
\end{itemize}

\end{lemma}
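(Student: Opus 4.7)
The plan is to reduce both bounds to a hybrid argument over the $s$ entries in the truth table of the random function $G_i$ (respectively $G_k$) at the single coordinate whose measurement actually changes. Focus on the second bullet; the first is essentially identical with $i=k$, since $\hyb_{6,0}$ refers to the ``all $\mathcal{B}^{'''1}$'' configuration that is $\hyb_{5,k-1}$ with the $k$-th factor also flipped. The hybrids $\hyb_{5,i-1}$ and $\hyb_{5,i}$ coincide on every coordinate $\ell\ne i$ and differ on coordinate $i$ only in whether one applies $\mathsf{API}^{\eps,\delta}_{\mathcal{P}_i,\mathcal{B}^{'''0}_i(pp)}$ or $\mathsf{API}^{\eps,\delta}_{\mathcal{P}_i,\mathcal{B}^{'''1}_i(pp)}$, where by construction $\mathcal{B}^{'''b}_i(pp;r)=\mathcal{B}^b_i(pp;G_i(E_i(r)))$ with $G_i:[s]\to\mathcal{R}$ uniform and independent of everything else. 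The $i$-th measurement is therefore a deterministic quantum procedure in $(\rho,pp,E_i,Y^b)$, where $Y^b=(y^b_1,\dots,y^b_s)$ with $y^b_j=\mathcal{B}^b_i(pp;G_i(j))$. Since the $G_i(j)$ are i.i.d.\ uniform in $\mathcal{R}$, the entries of $Y^b$ are, conditional on $pp$, $s$ i.i.d.\ samples from $\mathcal{B}^b_i(pp)$.

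By the data-processing inequality the target statistical distance is at most $\statdist{(\rho,pp,Y^0)}{(\rho,pp,Y^1)}$ with $(\rho,pp)\samp\mathcal{S}^1$ (as fixed by $\hyb_1$). Introduce sub-hybrids $\hyb_{5,i-1,t}$ for $t\in\{0,\dots,s\}$ in which the first $t$ entries of $Y$ come from $\mathcal{B}^1_i(pp)$ and the remaining from $\mathcal{B}^0_i(pp)$, so that $\hyb_{5,i-1,0}=\hyb_{5,i-1}$, $\hyb_{5,i-1,s}=\hyb_{5,i}$, and consecutive pairs differ in a single sample. Then
\begin{equation*}
\statdist{\hyb_{5,i-1,t}}{\hyb_{5,i-1,t+1}}\le\E_{(\rho,pp)\samp\mathcal{S}^1}\!\left[\statdist{\mathcal{B}^0_i(pp)}{\mathcal{B}^1_i(pp)}\right]=\statdist{(\mathcal{S}^1,\mathcal{B}^0_i)}{(\mathcal{S}^1,\mathcal{B}^1_i)}.
\end{equation*}
A triangle inequality through $(\mathcal{S}^0,\mathcal{B}^0_i)$, combined with the marginal bound $\statdist{\mathcal{S}^0}{\mathcal{S}^1}\le\statdist{(\mathcal{S}^0,\mathcal{B}^0_i)}{(\mathcal{S}^1,\mathcal{B}^1_i)}\le\nu$, upper-bounds this by $2\nu$, so summing over $t$ yields $\statdist{\hyb_{5,i-1}}{\hyb_{5,i}}\le 2\nu s=O(\nu s)$, matching the claim up to an absolute constant.

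The main obstacle is the second, computational case of the surrounding theorem, where only $(\mathcal{S}^0,\mathcal{B}^0_\ell)\approx^c_\nu(\mathcal{S}^1,\mathcal{B}^1_\ell)$ is assumed; then the data-processing step above is unavailable since replacing one $\mathcal{B}^b_i(pp)$ sample by another is merely a computational statement, while the lemma asserts a statistical one. Here the plan is to invoke a Zhandry-style projective amplification (in the spirit of \cite[Theorem~6.5]{Z20}) that exploits the $\mathsf{API}$--$\mathsf{PI}$ connection on the downstream test: any statistical gap $\mu$ between the two hybrids' outcome distributions can be converted, via two independent runs of the experiment and a comparison of the resulting $\mathsf{PI}$ values, into a computational distinguisher of advantage $\Omega(\mu^2/\poly(\lambda))$ against the single-sample experiment, running in time $O((k/\mu^2)\poly(\lambda))$. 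The hypothesis $\nu<\mu^2\poly(\lambda)$ rules out a statistical gap larger than the claimed $\nu s$. I would defer the fine book-keeping of this amplification to the main theorem and use in the lemma only the hybrid-over-$s$-samples structure above to describe the reduction in both cases.
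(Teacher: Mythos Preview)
Your core argument is the same as the paper's: both recognize that $\mathcal{B}^{'''b}_i(pp;r)=\mathcal{B}^b_i(pp;G_i(E_i(r)))$ is determined by the $s$ values $G_i(1),\dots,G_i(s)$, hence by $s$ i.i.d.\ samples from $\mathcal{B}^b_i(pp)$, and then run a hybrid over those $s$ samples. The paper simply invokes \cref{lem:multsamp} for this step, while you spell the hybrid out explicitly. Your triangle inequality through $(\mathcal{S}^0,\mathcal{B}^0_i)$ to pass from the assumed closeness of $(\mathcal{S}^0,\mathcal{B}^0_i)$ and $(\mathcal{S}^1,\mathcal{B}^1_i)$ to closeness of $(\mathcal{S}^1,\mathcal{B}^0_i)$ and $(\mathcal{S}^1,\mathcal{B}^1_i)$ is a detail the paper elides; the factor of two it costs is harmless.

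Where you diverge is the computational case. The paper does not use any projective-amplification trick here; instead it makes the single observation that the entire experiment producing $\vec{p}$ from the $s$ samples ``can be computed in time $\poly(\lambda)\cdot k\cdot s$''. That is the whole point: because the downstream processing is efficient, any distinguisher for the output distributions is, by composition, a distinguisher (of comparable runtime) for the $s$-sample distributions, and \cref{lem:multsamp} bounds the latter's advantage by $s\cdot\nu$. This is precisely why the surrounding theorem's hypothesis specifies indistinguishability against $(k/\mu^2)\cdot\poly(\lambda)$-time adversaries and sets $s=1/\mu$. Your proposed ``two independent runs and compare $\mathsf{PI}$ values'' amplification is more machinery than the paper uses and is not needed for this lemma; the direct efficiency observation plus \cref{lem:multsamp} is the intended route.
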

\begin{proof}
Observe that  $\mathcal{B}^{b}_\ell(pp; G(E(\cdot)))$ can be interpreted as $s$ samples from $\mathcal{B}^{b}$ with the input selecting which sample to use. Also, both experiments can be computed in time $\poly(\lambda)\cdot k \cdot s$. The result then  follows from \cref{lem:multsamp}.
\end{proof}

Combining the above, we get $\statdist{\Vec{p_0}}{\Vec{p_1}} < O(k\cdot(Q^3/s + \nu(\lambda)\cdot s(\lambda)))$. We set $s = 1/\mu(\lambda)$, which yields the result since $Q = \poly(\lambda)$.
\subsection{Proofs of \cref{thm:multiatiprop}, \cref{thm:multiapiprop},\cref{thm:multiapismallproof}}\label{appn:multiatiprop}
\paragraph{Proof of \cref{thm:multiatiprop}}
    See \cite[Corollary~3]{ALLZZ20} for the proofs of the first two points. Note that while they consider the same threshold value $\eta_\ell$ for all indices $\ell$, an inspection of their proof easily shows that the results still hold for any $\eta_\ell$.

    Combining the first two bullet points yields the third bullet point in a straightforward manner.
    Fourth point follows similarly to arguments below for \cref{thm:multiapiprop}.
\paragraph{Proof of \cref{thm:multiapiprop}}
    We will only prove the first claim, and the second prove follows by the same argument. 

Fix any $\ell \in [k]$. Consider the projective measurement $\mathcal{M}_\ell$ where we apply  $ \mathsf{PI}({\mathcal{P}_\ell}_{\mathcal{D}_\ell})$ to the $\ell$-th register and apply $I$ to the other registers. Then, we have $\Pr[(\mathcal{M}_\ell \rho) \leq (\vec{p})_\ell + 2\eps] \geq  1 - 2\delta$ since $\mathsf{API}^{\eps,\delta}$ is $(\eps,\delta)$-almost projective and since it $\delta$-approximates $\mathsf{PI}$ in $\eps$-shift distance. Note while $\rho'$ is obtained after a measurement on all registers, we can assume that $\mathsf{API}^{\eps,\delta}_{\mathcal{P}_\ell, \mathcal{D}_\ell}$ was applied last since measurements on disjoint registers commute.

Now, observe that $\mathcal{M}_1\mathcal{M}_2\cdots\mathcal{M}_k = \left(\bigotimes_{\ell \in [k]} \mathsf{PI}({\mathcal{P}_\ell}_{\mathcal{D}_\ell})\right)$ and that $\mathcal{M}_\ell$ commute. We define $\mathcal{M}'_\ell$ to be the binary projective measurement where we apply $\mathcal{M}_\ell$ and output $1$ if the outcome is $\leq \vec{p}_\ell + 2\eps$. Then, by above we have $\Pr[\mathcal{M}'_\ell \rho' = 1] \geq 1 - 2\delta$. We get $\Pr[\forall \ell \in [k]~~(\vec{p'})_\ell \leq (\vec{p})_\ell + 2\eps] \geq 1 - 2\cdot k \cdot \delta$ by \cref{thm:qub}.
\subsection{Proof of \cref{thm:multiapismallproof}}\label{appn:multiapismallproof}
Follows similarly to arguments above for \cref{thm:multiapiprop}.

\section{Connections of the Signature Scheme to the Public-key Encryption Scheme}\label{appn:dstopke}
The only difference between the scheme in $\hyb_5$ in the proof of \cref{sec:digsigcpproof} and our PKE scheme is the hidden trigger mechanism and the associated extra programs and keys, $K, K_3, K_4, \mathsf{OPEval}, \mathsf{OPVer}$, which is independent of the actual encryption mechanism and can be generated by the adversary itself during the reduction to the security of PKE scheme; and the fact that the ciphertext programs $\mathsf{OQ}$ now include two branches along with a mode parameter. The mode $\mathsf{mode}=\mathsf{eval}$ is the same as the original ciphertext program of our PKE scheme, and $\mathsf{mode}=\mathsf{verify}$ is a point function that checks if the input has the same image as the encrypted message under a one-way function $f$. However, we can see that the scheme is still secure with this addition.

First argument is as follows. We can first invoke the security argument (\cref{sec:pkeproof}) without $\mathsf{mode}=\mathsf{verify}$, and show that any adversary can win the game (i.e. correctly predict the encrypted messages) with subexponential probability. Then, adding back the $\mathsf{verify}$ mode, we can consider the copy-protected keys obtained by the adversary as quantum auxiliary information, and we can replace the $\mathsf{verify}$ mode with a compute-and-compare obfuscation. Since by the security of the scheme without $\mathsf{verify}$ mode we have that the adversary cannot predict the messages correctly given its auxiliary information, we can finally invoke the security of the compute-and-compare obfuscation to conclude that the adversary still cannot predict the messages correctly given $\mathsf{verify}$ mode.

An alternative way of concluding the security of the scheme is by repeating the security proof in \cref{sec:pkeproof}. It is easy to see that the only part that will be affected by $\mathsf{verify}$ mode is the final extraction argument using compute-and-compare obfuscation. However, it is easy to see that by the security of the injective one-way function $f$ that the compute-and-compare obfuscation arguments still works. In this case, we will modify our compute-and-compare programs so that the compute program also accepts a $\mathsf{mode}$ parameter, and either performs coset vector verification or compares the image of the input to $f(m)$. However, any adversary that can succesfully find an input that passes the compare condition must have correct coset vectors, since it cannot have a correct preimage for $f(m)$ by the security of the one-way function $f$. Thus, the security proof in \cref{sec:pkeproof} still works.
\section{Proof of Anti-Piracy Security of the General Copy-Protection Scheme}\label{appn:allunlproof}
In this section, we prove \cref{thm:allunl}.

Define $\hyb_0$ to the original anti-piracy game. Define $\hyb_1$ by modifying $\hyb_0$ by changing the way we sample the identity strings during each quantum copy-protected function generation as follows. Let the challenger record each sampled identity when answering each query, and when answering a new query, it samples uniformly at random an identity value from the set $\{1, \dots, 2^\lambda - 1\}$ \emph{that has not appeared before}. That is, we sample unique identity strings for each query to $\mathsf{QGen}$. Also, we define the following notation. Let $id_{\alpha(i)}$ be the $i^{th}$ value sampled where $\alpha(\cdot)$ is the permutation $[k] \to [k]$ such that $0 < id_1 < \dots < id_k < 2^{\lambda}$. That is, $id_{\alpha(i)}$ is the identity string that is sampling during the $i^{th}$ query of the adversary. For simplicity of notation, we also set $id_0 = 0$ and $id_{k + 1} = 2^\lambda$.

Define $\hyb_2$ by modifying $\hyb_1$ as follows. At the end of the game, instead of using threshold implementations $\mathsf{TI}^{\mathcal{O}}_{\mathcal{D}, \gamma}$, the challenger uses approximate threshold implementations $\mathsf{ATI}^{\eps, \delta, \mathcal{O}}_{\mathcal{D}, \frac{31\gamma}{32}}$ with $\eps = \frac{\gamma}{32k}$ and $\delta = 2^{-4\lambda}$. It outputs $1$ if and only if all $\mathsf{ATI}$ output $1$.

\begin{claim}
    $\Pr[\hyb_2 = 1] > \Pr[\hyb_0 = 1] - {\exp(-\lambda)}$.
\end{claim}
\begin{proof}
Follows from the same argument as in \cref{sec:pkeproof}.
\end{proof}

Therefore, $\adve$ wins in $\hyb_2$ with probability $\frac{1}{p(\cdot)}$ for some polynomial $p(\cdot)$ and infinitely many values of $\lambda > 0$. Note that in $\hyb_2$, now the challenger is also efficient by \cref{thm:singleatiprop} and our choice of $\eps, \delta$.

We define the following notation and the monogamy-of-entanglement type game.
\begin{definition}
    For all $j \in [k]$, let $(A_i^j, s_i^j, s_i^{'j})_{{i \in [\cosettcount]}}$ denote the tuple of subspaces and vectors sampled during the sampling of the $(\alpha^{-1}(j))$-th key. That is, it is the coset tuple associated with $id_j$.
\end{definition}
\paragraph{$\underline{\mathcal{G}(\lambda, (\adve'_0, \adve'_1, \adve'_2))}$}
\begin{enumerate}
\item The challenger instantiates the copy-protection scheme as $sk \samp \mathsf{Setup}(1^\lambda)$.
\item The challenger samples a function $f$ from $\mathcal{F}$.
    \item For multiple rounds, $\adve$ makes quantum key queries. For each query, the challenger generates a key as $\reg \samp \mathsf{QGen}(sk, f)$ and submits $\reg$ to the adversary.
    \item The adversary outputs a \emph{bipartite} register $\regi{bip}$ and an index $j^* \in [k]$, where $k$ is the number of queries it made.
    \item For $\ell \in \{1, 2\}$, the challenger does the following.
    \begin{enumerate}[label*=\arabic*.]
        \item Sample $r_\ell \samp \mathcal{D}$.
        \item Run $\adve'_\ell$ on $\regi{bip}[\ell]$, $(A^{j^*}_i)_{i \in [m(\lambda)]}$ and $r_\ell$ to obtain a tuple of vectors $(v_{\ell, i})_{i \in [{m(\lambda)}]}$.
        \item For all $i \in [m(\lambda)]$, check if $v_{\ell, i} \in A^{j^*}_{i} + s^{j^*}_i$ if $(r_\ell)_i = 0$ and check if $v_{\ell, i} \in {(A^{j^*})}^\perp_{i} + s^{'{j^*}}_i$ if $(r_\ell)_i = 1$.
    \end{enumerate}
    If all the checks pass, the challenger outputs $1$. Otherwise, it outputs $0$.
\end{enumerate}

 It is easy to see that an adversary $\adve'$ that wins $\mathcal{G}$ gives us a contradiction for \cref{defn:strmoecoll}, since using $\adve'$ we can give an adversary for the monogamy-of-entanglement game, which simply simulates $\adve'$ playing $\mathcal{G}$ by sampling the PRF key $K_1$, the extra oracles and so on efficiently itself. In particular note that the oracle $\mathcal{O}_f$ in $\mathcal{G}$ can be implemented using $\mathsf{PMem}$ from \cref{defn:strmoecoll}\footnote{While the original theorem statement therein uses indistinguishability obfuscation, it is trivial to see that the result still holds when we instead use ideal oracles.}.

Now, we construct a tuple of adversaries $(\adve'_0, \adve'_1, \adve'_2)$ for $\mathcal{G}$, starting with $\adve'_0$. We note that from now on, whenever we are considering the freeloaders, we assume that they no longer have access to the oracle $\mathcal{O}_f$ that $\adve$ had during the query phase, and each will have access to some modified oracle that will be clear from context.

 Let $\mathcal{O}_j$ for $j \in \{0, \dots, k + 1\}$ be efficient oracles, which we will define later. Define $\adve'_0$ as follows.
\begin{mdframed}
        {\bf $\underline{\adve_0'(1^\lambda)}$}
        \begin{enumerate}
            \item Simulate $\adve$ by interacting with $\mathsf{Samp}$ and the challenger, making a query to $\mathsf{QGen}$ whenever $\adve$ makes a query, and forwarding the obtained program to it. Let $f$ be the function determined at the end of the setup phase and let $\regi{adv}$ be the $(k + 1)$-partite register (with state $\sigma$) output by $\adve$ at the output phase.
            \item Uniformly at random sample $x, y, j^*$ such that $1 \leq x < y \leq k + 1$ and $j^* \in \{1, \dots, k\}$.
            \item Apply $\mathsf{API}^{\eps, \delta, \mathcal{O}_0}_{\mathcal{D}}$ to all registers $\regi{adv}[\ell]$ for $\ell \in [k + 1]$, let $b_{\ell, 0}$ be the measurement outcomes.
            \item Apply $\mathsf{API}^{\eps, \delta,\mathcal{O}_i}_{\mathcal{D}}$ in succession for $i = 1$ to $j^*$ to $\regi{adv}[x]$, let $b_{x, i}$ be the measurement outcomes.
            \item Apply $\mathsf{API}^{\eps, \delta,\mathcal{O}_i}_{\mathcal{D}}$ in succession for $i = 1$ to $j^*$ to $\regi{adv}[y]$, let $b_{y, i}$ be the measurement outcomes.
            \item Output  \begin{align*}
                (&(\regi{adv}[x], j^*, x, y, (b_{\ell, 0})_{\ell \in [k + 1]}, (b_{x,i})_{i \in [j^*]}, (b_{y,i})_{i \in [j^*]}),\\ &(\regi{adv}[y], j^*, x, y, (b_{\ell, 0})_{\ell \in [k + 1]}, (b_{x,i})_{i \in [j^*]}, (b_{y,i})_{i \in [j^*]}, ),\\ &j^*).
            \end{align*}
            
        \end{enumerate}
    \end{mdframed}

For $j \in \{1, \dots, k\}$, define $\mathcal{O}_j$ to be the following oracles.
 \begin{mdframed}
        {\bf $\underline{\mathcal{O}_j(id, x, (v_i)_{i \in [m(\lambda)]})}$}
        
        {\bf Hardcoded: $K_1, f, \textcolor{red}{id_j}$}
        \begin{enumerate}[label=\arabic*.]
            \item $(A_i, s_i, s_i')_{i \in [m(\lambda)]} \samp \mathsf{CosetGen}(1^{\lambda}, m(\lambda), \lambda; F_1(K_1, id))$.
            \textcolor{red}{\item If $id < id_j$, output $\perp$ and terminate.}
            \item For each $i \in [m(\lambda)]$, check if $v_i \in A_i + s_i$ if $(x)_i = 0$ and check if $v_i \in A^{\perp}_i + s'_i$ if $(x)_i = 1$. If any of the checks fail, output $\perp$ and terminate.
            \item Output $f(x)$.
        \end{enumerate}
    \end{mdframed}
We define $\mathcal{O}_{0}$ to be the original oracle $\mathcal{O}_f$ and we define $\mathcal{O}_{k + 1}$ to be the empty oracle that always outputs $\perp$. We also define some intermediary oracles. Define the following for all $j \in \{0, 1, \dots, k\}$ and $\Delta \in \{0, 1, \dots, id_{j+1}-id_{j} - 1\}$. For notational convenience, also define $\mathcal{O}_{j, {id_{j+1}-id_{j}}}$ to be $\mathcal{O}_{j+1, 0}$ for all $j \in \{0, 1, \dots, k\}$. Also note that $\mathcal{O}_{0,0}$ is exactly the same as $\mathcal{O}_0$ for $j \in [k]$.
\begin{mdframed}
        {\bf $\underline{\mathcal{O}_{j,\Delta}(id, x, (v_i)_{i \in [m(\lambda)]})}$}
        
        {\bf Hardcoded: $K_1, f, \textcolor{red}{id_j + \Delta}$}
        \begin{enumerate}[label=\arabic*.]
            \item $(A_i, s_i, s_i')_{i \in [m(\lambda)]} \samp \mathsf{CosetGen}(1^{\lambda}, m(\lambda), \lambda; F_1(K_1, id))$.
            \textcolor{red}{\item If $id < id_j  + \Delta$, output $\perp$ and terminate.}
            \item For each $i \in [m(\lambda)]$, check if $v_i \in A_i + s_i$ if $(x)_i = 0$ and check if $v_i \in A^{\perp}_i + s'_i$ if $(x)_i = 1$. If any of the checks fail, output $\perp$ and terminate.
            \item Output $f(x)$.
        \end{enumerate}
    \end{mdframed}

We now define some notation.

\begin{definition}
Consider the following experiment.
\begin{enumerate}
\item Simulate the first two steps of $\adve'_0$ and the challenger:
\begin{enumerate}[label*=\arabic*.]
             \item Simulate $\adve$ and the challenger. Let $f$ be the function determined at the end of the setup phase and let $\regi{adv}$ be the $(k + 1)$-partite register (with state $\sigma$) output by $\adve$ at the output phase.
            \item Uniformly at random sample $x, y, j^*$ such that $1 \leq x < y \leq k + 1$ and $j^* \in \{1, \dots, k\}$.
\end{enumerate}
    \item Set  $pp =  (x, y, j^*, (id_j)_{j\in[k+1]}, f)$.
    \item Output $\regi{adv}, pp$.
\end{enumerate} 
We will write $\expfromdddall{\mathcal{O}}{\ell} \approx_\nu^c \expfromdddall{\mathcal{O}'}{\ell}$ to denote that the advantage of any QPT adversary in distinguishing the oracles $\mathcal{O}, \mathcal{O}'$ (which can depend on $pp$) given the outcome of the above experiment, is $\nu$. We omit $\nu$ when $\nu$ is $\negl(\lambda)$.
\end{definition}

\begin{claim}\label{claim:alld0closed1}
    We have 
\begin{itemize}
\item $\expfromdddall{\mathcal{O}_{j, \Delta}}{\ell} \approx^c_{2^{-2\lambda}} \expfromdddall{\mathcal{O}_{j, \Delta + 1}}{\ell}$ for all $j \in \{0,1,\dots,k\}$ and $\Delta \in \{1, \dots, id_{j+1}-id_{j} - 1\}$,
\item $\expfromdddall{\mathcal{O}_{j, 1}}{\ell} \approx^c \expfromdddall{\mathcal{O}_{j+1}}{\ell}$ for all $j \in \{0,1,\dots,k\}$,

    \item $\expfromdddall{\mathcal{O}_0}{\ell} \approx^c \expfromdddall{\mathcal{O}_1}{\ell} $.
\end{itemize}
\end{claim}
\begin{proof}
We prove the first point and the rest follow by the hybrid lemma through a simple calculation.

First, note that the oracles $\mathcal{O}_{j,\Delta}$ and $\mathcal{O}_{j,\Delta + 1}$ only differ at points such that $id = id_j + \Delta$ and $v_i$ are in the correct cosets (primal or dual) for the coset tuple generated using the randomness $F_1(K_1, id_j + \Delta)$. Let $S$ denote the set of all such inputs and we claim $q_S \leq \negl(\lambda)$, that is, the adversary has negligible query weight on $S$. Suppose otherwise for a contradiction. Then, we can measure a random query of the adversary to obtain vectors as above with non-negligible probability. However, note the following:
\begin{itemize}
    \item The adversary has only oracle access to the PRF key $K$, hence, $F_1(K_1, id_j + \Delta)$ is random given the adversary's view,
    \item The adversary never obtains an actual coset state for this tuple,
    \item The oracles $\mathcal{O}_f, \mathcal{O}_{j,\Delta}$ and $\mathcal{O}_{j,\Delta + 1}$ can be simulated using a membership oracle for this tuple.
\end{itemize}
Since coset membership is unlearnable, by above we obtain a contradiction. Thus,  $q_S \leq \negl(\lambda)$ and $\expfromdddall{\mathcal{O}_{j, \Delta}}{\ell} \approx^c_{2^{-2\lambda}} \expfromdddall{\mathcal{O}_{j, \Delta + 1}}{\ell}$ by \cref{bbbv}.
\end{proof}

\begin{claim}\label{claim:allunlmesres}
Let $\tau$ be the state of the bipartite register $\regi{adv}[x,y]$ output by $\adve'_0$ in $\mathcal{G}$, and also consider the classical values $j^*, x, y, \{b_{\ell, i}\}_{\ell, i}$ contained in the output of $\adve'_0$.
    
    Suppose we apply the measurement $\mathsf{API}^{\eps, \delta, \mathcal{O}_{j^* + 1}}_{\mathcal{D}}\otimes\mathsf{API}^{\eps, \delta, \mathcal{O}_{j^* + 1}}_{\mathcal{D}}$ to $\tau$ and let $b_{x,j^*+1}, b_{y,j^*+1}$ denote the measurement outcomes we obtain. Then,
   
    \begin{equation*}
     \Pr[ b_{x,j^*}  - b_{x,j^*+1} > \frac{29\gamma}{32k}  \wedge  b_{y,j^*} - b_{y,j^*+1} > \frac{29\gamma}{32k}] > \frac{1}{\poly(\lambda)}
    \end{equation*}
    
    where the probability is taken over the randomness of the challenger, the adversary $\adve_0'$ and the measurement outcomes.
\end{claim}
\begin{proof}
First, note that $$
 \Pr[\left(\mathsf{PI}^{\mathcal{O}_{k+1}}_{\mathcal{D}}\right)\cdot\iota \geq \frac{\gamma}{32}] = 0.
 $$
 for any state $\iota$ that can be efficiently obtained during game $\mathcal{G}$, since $\mathcal{O}_{k+1}$ is the empty oracle and $(\mathcal{F}, \mathcal{D})$ is unlearnable.
    The rest follows from the same argument as in the proof \cref{claim:mesresdist} and by the fact that $\expfromdddall{\mathcal{O}_0}{\ell} \approx^c \expfromdddall{\mathcal{O}_1}{\ell}$ (\cref{claim:alld0closed1}).
\end{proof}

\begin{claim}\label{claim:allunlearnjplusone}
Let $\tau$ be the bipartite state output by $\adve'_0$ in $\mathcal{G}$. Let $p'_x, p'_y$ be the outcome of applying $\mathsf{PI}^{\mathcal{O}_{j^*}}_{\mathcal{D}}\otimes \mathsf{PI}^{\mathcal{O}_{j^*}}_{\mathcal{D}}$ to $\tau$. Similarly, let $p''_x, p''_y$ be the outcome of applying $\mathsf{PI}^{\mathcal{O}_{j^*, 1}}_{\mathcal{D}}\otimes \mathsf{PI}^{\mathcal{O}_{j^*, 1}}_{\mathcal{D}}$ to $\tau$. Then,
\begin{itemize}
    \item $\Pr[p'_x > b_{x,j^*} - \frac{3\gamma}{32k} \wedge p'_y > b_{y,j^*} - \frac{3\gamma}{32k}] \geq 1 - 2^{-3\lambda}$.
    \item $\Pr[b_{x,j^*} - p''_x > \frac{28\gamma}{32k} \wedge b_{y,j^*} - p''_y  > \frac{28\gamma}{32k}] > \frac{1}{q(\lambda)}$ for some polynomial $q(\cdot)$.
\end{itemize}
\end{claim}
\begin{proof}
    Follows from the same arguments as in the proof of \cref{claim:dprimdprim}, and by \cref{claim:allunlmesres}.
\end{proof}

Now, we claim that we can extract correct vectors from the output state $\tau$ of the adversary. First, observe that we have $\expfromdddall{\mathcal{O}_{j^*}}{\ell} \not\approx_c \expfromdddall{\mathcal{O}_{j^*1, 1}}{\ell}$, since $\expfromdddall{\mathcal{O}_{j^*}}{\ell} \approx_c \expfromdddall{\mathcal{O}_{j^*1, 1}}{\ell}$ would give us a contradiction to \cref{claim:allunlearnjplusone} by \cref{thm:apiprop}. Then, by the contrapositive of \cref{bbbv}, we get that the freeloader encoded in $\tau[1]$ has a non-negligible query weight on the set of points of where the oracles $\mathcal{O}_{j^*}$ and $\mathcal{O}_{j^*, 1}$ differ. Observe that these oracles only differ on points that satisfy $id = id_{j^*}$ and $v_i \in A_i^{j^*}+{s_i}^{j^*}$ if $(r_1)_i = 0$ and $v_i \in (A_i^{j^*})^\perp + {s^{'j^*}_i}$ if $(r_1)_i = 1$. Hence, by measuring a random query of the freeloader adversary encoded in $\tau[1]$ (which we simulate using a universal quantum circuit), we obtain correct coset vectors wtih non-negligible probability. We set this algorithm as our adversary $\adve'_1$ for the game $\mathcal{G}$.

Finally, we claim that even conditioned a successful coset vector extraction from the first freeloader, we can still extract from the second freeloader. Let $\xi$ denote the post-measurement state of the second register, obtained by extracting from the first register of $\tau$ as above using $\adve'_1$ and conditioning on a successful extraction. We claim that $\xi$ satisfies
\begin{enumerate}
\item \label{item:allunlearnpidprime2} $\Pr[\mathsf{PI}^{\mathcal{O}_{j^*}}_{\mathcal{D}}\cdot\xi \leq b_{y,j^*} - \frac{3\gamma}{32k}] \leq 2^{-\lambda}.$
    \item  \label{item:allunlearnpidprimeprime2} $\Pr[\mathsf{PI}^{\mathcal{O}_{j^*, 1}}_{\mathcal{D}}\cdot\xi < b_{y,j^*} - \frac{28\gamma}{32k}] \geq \frac{1}{\poly(\lambda)}.$
\end{enumerate}

This first claim follows from \cref{claim:allunlearnjplusone}, \cref{thm:simulproj}, and the fact that extraction on the first register succeeds with non-negligible probability. To see the second point, observe that we can imagine $\mathsf{PI}^{\mathcal{O}_{j^*, 1}}_{\mathcal{D}}$ being applied to the second register, before an extraction on the first register, and condition on obtaining a value $< b_{y,j^*} - \frac{28\gamma}{32k}$ (denote this as event $G_2$). Then, by \cref{claim:allunlearnjplusone}, it is easy to see that the first register still has a \emph{gap} between $\mathcal{O}_{j^*}$ and $\mathcal{O}_{j^*, 1}$, satisfying the properties we used to construct $\adve'_1$ for an extraction from the first register. Let $E_1$ denote the event that $\adve'_1$ succesfully extracts from the first register. Hence, by the foregoing discussion, we have $\Pr[E | G] > \frac{1}{\poly}$. We also have  $\Pr[G] > \frac{1}{\poly}$ and $\Pr[E] > \frac{1}{\poly}$ from before. Thus, we get $\Pr[G | E] > \frac{1}{\poly}$, proving the second point (\cref{item:allunlearnpidprimeprime2}).

Given \cref{item:allunlearnpidprime2} and \cref{item:allunlearnpidprimeprime2}, by the same extraction argument  we used for the first register, we conclude that there exists an adversary $\adve'_2$ that extracts correct coset vectors from the second register of the output of $\adve'_0$ with non-negligible probability conditioned on $\adve'_1$ extracting correct vectors from the first register. Hence, we have that the adversary tuple $\adve' = (\adve'_0, \adve'_1, \adve'_2)$ wins the game $\mathcal{G}$ with non-negligible probability. It is easy to see that this gives us a contradiction by \cref{defn:strmoecoll} (see \cref{claim:pkereducetome} for a similar reduction).

\section{Proof of Anti-Piracy Security of the PRF Scheme}\label{sec:prfcpproof}
In this section, we prove \cref{thm:prfantipiracy}.

First, we show that hidden trigger inputs are indistinguishable from uniformly random challenge strings, even when the adversary gets a (obfuscated) program that allows it to generate its own hidden trigger inputs.

\begin{definition}[Hidden Trigger Inputs]\label{defn:prfhiddentrigger}
    Let $\mathsf{GenTrigger}_{K_0, K_3, K_4, \mathsf{OPMem}, cpk}$ be the following program, where the hardcoded values are as in the PRF scheme construction (\cref{sec:cpprfcons}). The input format to the program will be clear from context.
    \begin{mdframed}
        {\bf $\underline{\mathsf{GenTrigger}_{K_0, K_3, K_4, \mathsf{OPMem},cpk}(r_1, r_2, r_3)}$}
        
        {\bf Hardcoded: $K_0, K_3, K_4, \mathsf{OPMem}, cpk$}
        \begin{enumerate}
        
            \item Parse $x_1 || x_2 || x_3 = G_2(r_1)$ with $|x_i| = s_i$.
            \item Parse $y || K_2' = G_1(F(K_0, x))$ with $|y| = n(\lambda)$.
            \item $\mathsf{OQ} \samp \io(Q_{cpk, \mathsf{OPMem}, x_1, K'_2, y}; r_3)$.
            \item $x_2' = F_3(K_3, x_1|| \mathsf{OQ} || G_3(r_2)  )$.
            \item $x_3' = F_4(K_4, x_2') \oplus (x_1|| \mathsf{OQ} ||G_3(r_2))$.
            \item Output $x_1 || x_2' || x_3'$.
        \end{enumerate}
        
    \end{mdframed} 
    The circuit $Q_{cpk, \mathsf{OPMem}, x_1, K'_2, y}$ used above is the following. Note that it contains hardcoded values that are computed during the execution of $\mathsf{GenTrigger}$.
    \begin{mdframed}
        {\bf $\underline{Q_{cpk, \mathsf{OPMem}, x_1, K'_2, y}(w)}$}
        
        {\bf Hardcoded: $cpk, \mathsf{OPMem}, x_1, K'_2, y$}
        \begin{enumerate}
            
           \item Parse $id, u_1, \dots, u_{\cosettcount} = w$.
           \item Run $\mathsf{OPMem}(id, u_1, \dots, u_{\cosettcount},  \textcolor{blue}{x_1})$. If it outputs $0$, output $\perp$ and terminate.
            \item Output $\mathsf{IBE.Enc}(cpk, id, \textcolor{blue}{y}; F_2( \textcolor{blue}{K'_2}, id))$.
                \end{enumerate}
        
    \end{mdframed} 
\end{definition}
\begin{lemma}\label{lem:prfhiddentrigger}
    Consider the following game for the PRF scheme from \cref{sec:cpprfcons}, where we let $\granlen$ denote the length of the randomness used by $\io$ to obfuscate $Q$ in \cref{defn:prfhiddentrigger}. Consider the following experiment, parameterized by $\ell(\lambda)$.
\paragraph{$\underline{\hiddentriggame(\lambda, \adve, \ell(\lambda), b)}$}
\begin{enumerate}
    \item The challenger runs $K \samp \mathsf{KeyGen}(1^\lambda)$.
    \item For multiple rounds, $\adve$ makes quantum key queries. For each query, the challenger generates a key as $\reg \samp \qkeygen(K)$ and submits $\reg$ to the adversary.
    \item The adversary outputs a register $\regi{adv}$.
    \item Sample $\mathsf{OGenTrigger} \samp \io(\mathsf{GenTrigger})$.
    \item For $i = 1$ to $\ell$:
    \begin{enumerate}[label=\arabic*.]
        \item Sample $r^i_1  \samp \zo^{s_1(\lambda)/2}$. 
        \item Sample $r^i_2 \samp \zo^{\lambda}$.
        \item Sample $r^i_3  \samp \zo^{\granlen}$. 
        \item Set $z^{0,i} = \mathsf{OGenTrigger}(r_1^i, r_2^i, r_3^i)$.
        \item Sample $z^{1,i} \samp \zo^{m(\lambda)}$.
        
    \end{enumerate}
    
    \item Output $((z^{b,i})_{i \in [\ell]}, \mathsf{OGenTrigger}, \regi{adv})$.
\end{enumerate}
    Then, for any polynomial $\ell(\lambda)$,
    \begin{equation*}
   \hiddentriggame(\lambda, \adve, \ell(\lambda), 0) \approx_c \hiddentriggame(\lambda, \adve, \ell(\lambda), 1).
    \end{equation*}
\end{lemma}
\begin{proof}
    Follows from \cref{lem:hiddentrigger}. Note that the only difference is that the adversary no longer gets a verification program, which only makes the adversary weaker.
\end{proof}

We will prove anti-piracy security through a series of hybrids. Define $\hyb_0$ to be the original game $\unclonprfgame(\lambda, \adve)$ from \cref{defn:unclonprf}.

\paragraph{$\underline{\hyb_1}$}: The challenger now computes the PRF challenge outputs $ch^{0,\ell}$ by sampling a new key using $\mathsf{QKeyGen}$ for each $\ell$ and using $\mathsf{OPEval}$. By the correctness of the copy-protected PRF scheme, we will have $ch^{0,\ell} = F(K_0, x^\ell)$ for all $\ell \in [k+1]$ with overwhelming probability. Hence,  $\hyb_0 \approx \hyb_1$.

\paragraph{$\underline{\hyb_2}$}: Instead of sampling $ch^{1,\ell}\samp \zo^{m(\lambda)}$, we will now compute $ch^{1,\ell}$ by sampling a new key using $\mathsf{QKeyGen}$ for each $\ell$ and running $\mathsf{OPEval}$ on input $q^\ell$, where we sample $q^\ell \samp \zo^{m(\lambda)}$. With overwhelming probability, we will have $ch^{1,\ell} = F(K_0, q^\ell)$, and this is indistinguishable from a random string since $q^\ell$ is not given to the adversary and $F$ is an extracting PRF. Hence, $\hyb_1 \approx \hyb_2$.

\paragraph{$\underline{\hyb_3}$}: We now sample $x^\ell, q^\ell$ for all $\ell \in [k + 1]$ as hidden triggers (\cref{defn:prfhiddentrigger}). We get $\hyb_2 \approx \hyb_3$ by \cref{lem:prfhiddentrigger}. Crucially note that at the challenger no longer directly uses the PRF key $K$ and instead computes the challenges using $\mathsf{QKeyGen}$ queries and $\mathsf{OPEval}$, which is in adversary's view in \cref{lem:prfhiddentrigger}. Hence, the adversary can indeed simulate $\hyb_2, \hyb_3$ in the reduction to \cref{lem:hiddentrigger}.  Hence,  $\hyb_2 \approx \hyb_3$.

\paragraph{$\underline{\hyb_4}$}: We now sample $x^\ell, q^\ell$ for $\ell \in [k+1]$ as follows.

    \begin{enumerate}
        
        \item Sample $r_1^\ell \samp \zo^{s_1(\lambda)/2}$. 
        \item Sample $r_2^\ell \samp \zo^{\lambda}$.
        \item Sample $r_3^\ell  \samp \zo^{\granlen}$. 
        \item Let $w^\ell = G_2(r_1^\ell)$.
        \item Parse $w_1^\ell || w_2^\ell || w_3^\ell = w^\ell$ with $|w^\ell_i| = s_i$.
        \item Parse $y^\ell || K_2^\ell = G_1(F(K_0, w^\ell))$ with $|y^\ell| = n(\lambda)$.
        \item $\mathsf{OQ}^\ell \samp \io(Q_{cpk, \mathsf{OPMem}, w_1^\ell, K^\ell_2, y^\ell}; r^\ell_3)$.
        \item $w_2^{'\ell} = F_3(K_3, w_1^\ell|| \mathsf{OQ}^\ell || G_3(r_2^\ell))$.
        \item $w_3^{'\ell} = F_4(K_4, w_2^{'\ell}) \oplus (w_1^\ell|| \mathsf{OQ}^\ell || G_3(r_2^\ell))$.
        \item Set $x^\ell =  w_1^\ell || w_2^{'\ell} || w_3^{'\ell}$.

        \item Sample $qr_1^\ell \samp \zo^{s_1(\lambda)/2}$. 
        \item Sample $qr_2^\ell \samp \zo^{\lambda}$.
        \item Sample $qr_3^\ell  \samp \zo^{\granlen}$. 
        \item Let $qw^\ell = G_2(qr_1^\ell)$.
        \item Parse $qw_1^\ell || qw_2^\ell || qw_3^\ell = qw^\ell$ with $|qw^\ell_i| = s_i$.
        \item Parse $qy^\ell || qK_2^\ell = G_1(F(K_0, qw^\ell))$ with $|qy^\ell| = n(\lambda)$.
        \item $\mathsf{qOQ}^\ell \samp \io(Q_{cpk, \mathsf{OPMem}, qw_1^\ell, qK^\ell_2, qy^\ell}; r^\ell_3)$.
        \item $qw_2^{'\ell} = F_3(K_3, qw_1^\ell|| \mathsf{qOQ}^\ell || G_3(qr_2^\ell))$.
        \item $qw_3^{'\ell} = F_4(K_4, qw_2^{'\ell}) \oplus (qw_1^\ell|| \mathsf{qOQ}^\ell || G_3(qr_2^\ell))$.
        \item Set $q^\ell =  qw_1^\ell || qw_2^{'\ell} || qw_3^{'\ell}$.
    \end{enumerate}

Note that this is only a syntactic change, we only unwrapped the sampling of the inputs $x^\ell, q^\ell$. Hence, $\hyb_3 \equiv \hyb_4$.

\paragraph{$\underline{\hyb_5}$}:
We now sample $w^\ell, qw^\ell$ for all $\ell \in [k+1]$  uniformly at random. We get  $\hyb_4 \approx \hyb_5$ by the security of the PRG $G_2$.

\paragraph{$\underline{\hyb_5}$}: We now sample all $y^\ell, qy^\ell$ and $K_{2}^{\ell},qK_{2}^{\ell}$ uniformly at random. First, observe that with overwhelming probability, the values $w^\ell,qw^\ell$ for all $\ell \in [k + 1]$ will be unique. Then, observe that the inputs $w^\ell,qw^\ell$ to $F$ have each min-entropy $s_2 + s_3$ given the view of the adversary, since only the first part $w^\ell_1,qw^\ell_1$ is used during the hidden trigger generation and the rest is discarded. Therefore, by our choice of parameters for the extracting PRF $F$ and by the security of PRG $G_1$, we have $\hyb_5 \approx \hyb_6$.

Finally, observe that the adversary gets either $(x^\ell, ch^{0,\ell} = (\mathsf{OQ}^\ell, w_1^\ell))$ (for $b_\ell = 0$) or $(x^\ell, ch^{1,\ell} = (\mathsf{qOQ}^\ell, qw_1^\ell))$ (for $b_\ell = 1$) since $x^\ell, q^\ell$ are hidden trigger inputs and we obtain $ch^{0,\ell},ch^{1,\ell}$ using $\mathsf{OPEval}$.
Observe that these are the same as ciphertexts of our PKE scheme (\cref{sec:pkecons}) encrypting the random messages $y^\ell$ and $qy^\ell$ respectively. Hence, the security follows by the CPA-style anti-piracy security (see \cref{sec:pkeproof}) of our scheme and we have $\Pr[\hyb_5 = 1] \leq 1/2 + \negl(\lambda)$.

\end{document}